\def\clap#1{\hbox to 0pt{\hss\strut #1\hss}}
\title{Reasoning with $!$-Graphs}
\author{Alexander Merry}
\theoremstyle{plain} 
\newtheorem{theorem}{Theorem}[section]
\newtheorem{proposition}[theorem]{Proposition}
\newtheorem{lemma}[theorem]{Lemma}
\newtheorem{corollary}[theorem]{Corollary}
\theoremstyle{definition}
\newtheorem{definition}[theorem]{Definition}
\newtheorem{definitions}[theorem]{Definitions}
\newtheorem{example}[theorem]{Example}
\newtheorem{remark}[theorem]{Remark}
\begin{document}
\baselineskip=18pt plus1pt

\maketitle

\begin{acknowledgements}
    Firstly, I would like to thank my supervisors, Professor Samson Abramsky
    and Professor Bob Coecke, for giving me the opportunity to do this
    research and supporting me in my efforts, as well as for starting the
    field that was the motivation for this work.  I also owe an immense debt
    of gratitude to Dr. Aleks Kissinger, who both paved the way for my work
    and gave me invaluable advice, encouragement and feedback on the way.

    I wish to thank Dr. Lucas Dixon and Dr. Ross Duncan for the work they put
    into Quantomatic, both the theory and the implementation, without which
    this work could have taken much longer.  Additionally, I owe thanks to
    Matvey Soloviev for his bright ideas in the DCM paper, which really opened
    up the possibility of reasoning about $!$-graphs using graph rewriting.

    Thanks also to the administrative staff of the Department of Computer
    Science for helping me navigate university bureaucracy, and especially to
    Julie Sheppard and Janet Sadler.

    Finally, a huge thank you to my family and friends, who have supported and
    encouraged me over the last four years, through the hard times and the
    good.  I couldn't have done it without you.
\end{acknowledgements}

\begin{abstract}
    The aim of this thesis is to present an extension to the string graphs of
    Dixon, Duncan and Kissinger that allows the finite representation of
    certain infinite families of graphs and graph rewrite rules, and to
    demonstrate that a logic can be built on this to allow the formalisation
    of inductive proofs in the string diagrams of compact closed and traced
    symmetric monoidal categories.

    String diagrams provide an intuitive method for reasoning about monoidal
    categories.  However, this does not negate the ability for those using
    them to make mistakes in proofs.  To this end, there is a project
    (Quantomatic) to build a proof assistant for string diagrams, at least for
    those based on categories with a notion of trace.  The development of
    string \textit{graphs} has provided a combinatorial formalisation of
    string diagrams, laying the foundations for this project.

    The prevalence of commutative Frobenius algebras (CFAs) in quantum
    information theory, a major application area of these diagrams, has led to
    the use of variable-arity nodes as a shorthand for normalised networks of
    Frobenius algebra morphisms, so-called ``spider notation''.  This notation
    greatly eases reasoning with CFAs, but string graphs are inadequate to
    properly encode this reasoning.

    This dissertation firstly extends string graphs to allow for
    variable-arity nodes to be represented at all, and then introduces
    $!$-box notation -- and structures to encode it -- to represent string
    graph equations containing repeated subgraphs, where the number of
    repetitions is abitrary.  This can be used to represent, for example, the
    ``spider law'' of CFAs, allowing two spiders to be merged, as well as the
    much more complex generalised bialgebra law that can arise from two
    interacting CFAs.

    This work then demonstrates how we can reason directly about $!$-graphs,
    viewed as (typically infinite) families of string graphs.  Of particular note
    is the presentation of a form of graph-based induction, allowing the
    formal encoding of proofs that previously could only be represented as a
    mix of string diagrams and explanatory text.
\end{abstract}

\begin{romanpages}
\tableofcontents
\end{romanpages}

\chapter{Introduction}
\label{ch:introduction}

In the early 1950s, as a research student attending lectures on differential
geometry, Roger Penrose developed a pictoral notation for dealing with
tensors.  He started using this in publications some twenty years
later, most notably in \cite{Penrose1971}.  His motivation for using this
notation was as an exposition and working aid when dealing with the tensor
index notation common in various areas of physics.  \cite{Penrose1971}, in
which Penrose generalised and formalised tensor index notation as
\textit{abtract tensors}, is liberally illustrated with these diagrams.  In a
tensor expression like
\[ A^{ip}_n B^k_{ik} C^l_p D^{nq}_m \]
where we can consider the large letters to be morphisms with subscripts as
inputs and superscripts as outputs, we have to keep careful track of which
indices are repeated (and therefore should be \textit{contracted} -- the
notion of composition for abstract tensors).  This is made clear in the
following diagrammatic representation
\[ \input{tensor-ex.tikz} \]

These \textit{string diagrams} (or \textit{wire diagrams}) are particularly
good at capturing information with a planar (or mostly planar), rather
than linear, structure.  Thus it was quite natural that they came to be used
for monoidal categories, which have an associative, unital bifunctor
$\bigotimes : \mathcal{C} \times \mathcal{C} \rightarrow \mathcal{C}$ that
acts as a form of composition orthogonal to the categorial composition.

Note that, unlike in the traditional diagrams of category theory, string
diagrams represent morphisms as nodes and objects as wires.  This allows
categorical composition to be represented by joining wires, as in
\[ \input{f-then-g.tikz} \]
while the monoidal composition can be represented via juxtaposition
\[ \input{f-tensor-g.tikz} \]

Twenty years after Penrose's publication, Joyal and Street formalised the link
between string diagrams and monoidal categories, at least in
part\cite{Joyal1991}.  In particular, they showed that \textit{progressive
plane diagrams}, a generalised topological graph with directed wires that do
not cross or have loops, exactly capture the axioms of monoidal categories.
Additionally, they showed that allowing crossings captures the axioms of
symmetric monoidal categories, which have a natural isomorphism witnessing
that $A \otimes B \cong B \otimes A$ for all objects $A$ and $B$, and that
other variations on these diagrams capture braided and balanced monoidal
categories.

Joyal and Street planned a follow-up paper to treat non-progressive diagrams,
such as those with loops, but this was never published.  The link between
these diagrams and either \textit{traced symmetric monoidal categories} or
\textit{compact closed categories} (depending on the restrictions placed on
the diagrams) was shown in \cite{KissingerDPhil}.

We concentrate on traced symmetric monoidal categories and compact closed
categories in this thesis.  Selinger's survey paper\cite{Selinger2009} gives a
more comprehensive overview of graphical languages for various types of
monoidal categories.

These diagrammatic languages found a home in the study of quantum computer
science, when in 2004 Samson Abramsky and Bob Coecke recast von Neumann's
axiomatisation of quantum mechanics into category theory, allowing the use of
string diagrams.  The diagrams proved to be particularly helpful in
illuminating the role of entanglement in various quantum mechanical and
quantum informational protocols, and provided a grounding for intuition in
something familiar (the physical manipulation of a diagram) for something
that, at many times, appears to behave counter-intuitively (quantum
mechanics).

In this setting, the wires of a string diagram represent quantum mechanical
systems, and the nodes are operations on those systems.  Of particular
relevance to this thesis is the importance of commutative Frobenius algebras
in these languages.

A Frobenius algebra in a monoidal category $\mathcal{C}$ consists of a monoid
$(A,\mu,\eta)$ and a comonoid $(A,\delta,\epsilon)$ on an object of
$\mathcal{C}$ -- the monoid being an associative multiplication operation $\mu
: A \otimes A \rightarrow A$ with unit $\eta : I \rightarrow A$, and the
comonoid being its dual $\delta : A \rightarrow A \otimes A$ with counit
$\epsilon : A \rightarrow I$ -- that interact ``nicely''.  Specifically, they
obey the Frobenius law:
\[
    (1_A \otimes \mu) \circ (\delta \otimes 1_A)
    = \delta \circ \mu
    = (\mu \otimes 1_A) \circ (1_A \otimes \delta)
\]
If we represent the monoid and comonoid graphically as $(A,\mult,\unit)$ and
$(A,\comult,\counit)$, then this law can be drawn
\[ \input{frob-law.tikz} \]

Commutative Frobenius algebras (CFAs) are those whose monoid is commutative
and comonoid is cocommutative.  They have the property that the value of any
connected (in the graphical language) network of components of a given
CFA is determined purely by the number of inputs, outputs and loops it has.
This gives rise to a consise and useful representation of a CFA, the
``spider'':
\[ \input{spider.tikz} \]
Loops are encoded in this representation as self-loops (by connecting outputs
to inputs).

This leads to the \textit{spider law}, which allows us to merge two connected
spiders of the same type
\[ \input{spider-law-informal.tikz} \]
and this law significantly simplifies working with these graphical languages.

While the use of string diagrams can help greatly with understanding and
intuition, constructing large proofs by hand can still be tedious and
error-prone.  This is where projects like Quantomatic\cite{quanto}, a
collection of automated tools for graphical languages, come in.

The goal of Quantomatic is to produce a useful proof assistant for graphical
languages; as its name suggests, it was originally aimed at the languages
being developed for quantum computation.  For example, Dixon and Duncan used
it to verify the correctness of the Steane code in \cite{Duncan2013}.  It is
still primarily used for quantum languages, but it aims to be useful in other
fields as well.

Dixon, Duncan and Kissinger provided a theoretical underpinning for
Quantomatic's automated reasoning in the form of \textit{string graphs},
introduced as \textit{open graphs} in \cite{Dixon2010} and further refined by
Dixon and Kissinger in \cite{Dixon2010a} and then by Kissinger in
\cite{KissingerDPhil}.  These represent string diagrams as typed graphs, where
the vertex types are sorted into \textit{node-vertices} and
\textit{wire-vertices}, the former being the ``real'' vertices from the string
diagram and the latter a kind of ``dummy'' vertex holding the wires in place
(as well as carrying the wire type).  For example, the string diagram
\[ \input{string-diag-ex.tikz} \]
could be represented by the graph
\[ \input{string-graph-ex.tikz} \]
This allows us to do equational reasoning for diagrams using graph rewriting,
in a similar manner to how traditional proof assistants use term rewriting.

Note that string graphs are not the only combinatorial representation for the
categorical structures that underlie string diagrams.  In particular,
Hasegawa, Hofmann and Plotkin use a bijective map to describe the connections
formed by wires in their paper showing that finite dimensional vector spaces
are complete for traced symmetric monoidal categories\cite{Hasegawa2008}.
Using graphs, however, allows us to build on a substantial body of established
work on graph rewriting.

The first aim of this dissertation is to extend this formalism to allow tools
like Quantomatic to work with the spider law and other, more complex, rules
using the \textit{$!$-graphs} posited in \cite{Dixon2009} as graph patterns.
In particular, we aim to give a mathematical underpinning to these, which were
previously only treated informally.

$!$-graphs are string graphs with demarked subgraphs called \textit{$!$-boxes}
that can be repeated any number of times, allowing an infinite family of
graphs to be represented as a single graph:
\[ \input{bang-graph-ex.tikz} \]
This can be extended to string graph equations, allowing the spider law to
be represented as a $!$-graph equation:
\[ \input{spider-law-graph.tikz} \]
We call a string graph or string graph equation represented by a $!$-graph
or $!$-graph equation an \textit{instance} of that graph or rule.

This notation may seem overpowered for simply representing the spider law, but
it also allows many equations describing interactions between certain kinds of
CFA, such as
\[ \input{x-copies-z-spider.tikz} \]
which can be represented
\[ \input{x-copies-z-spider-graph.tikz} \]
and even more complex examples that we will describe in later chapters.

The second aim of this dissertation is to provide the beginnings of a formal
system for reasoning \emph{about} (and not just \emph{with}) these $!$-graphs.
We show that rewrite rules composed of $!$-graphs can be used to rewrite
$!$-graphs and, in doing so, perform equational reasoning on these infinite
families of string graphs.  What is more, we introduce some additional rules
to this logic of $!$-graphs, including a form of graphical induction that
encodes inductive proofs that could previously only be done with a mix of
diagrams and explanatory text.

Chapter \ref{ch:diag-reas} describes various types of monoidal category and
the graphical languages they induce.  It also provides some motivating
examples in the form of a subset of the Z/X calculus for quantum information
processing.

Chapter \ref{ch:rewriting} provides a brief introduction to rewriting as a way
of doing equational reasoning, using the well-established field of term
rewriting.  It then presents an equational logic for traced symmetric monoidal
categories and compact closed categories using string graphs, and shows how
this can be implemented using string graph rewriting.  We present a slightly
different construction of string graphs to \cite{KissingerDPhil}, as
Kissinger's construction does not allow for variable-arity nodes, required by
the spiders of the quantum graphical languages.

Chapter \ref{ch:bang-graphs} describes the $!$-box notation and its encoding
in $!$-graphs.  It provides a set of operations on $!$-boxes ($\COPY$, $\DROP$
and $\KILL$) that are used to define the concrete instances of a $!$-graph
(ie: those string graphs it represents).

Chapter \ref{ch:rw-bang-graphs} introduces $!$-graph equations to represent
families of string graph equations, and demonstrates how their directed form,
$!$-graph rewrite rules, can be used to rewrite not only string graphs but
also $!$-graphs.  The latter produces further $!$-graph equations that are
sound with respect to their interpretation as families of string graph
equations.  This effectively allows us to rewrite a potentially infinite
family of string graphs simultaneously.

Chapter \ref{ch:rules} builds on this to construct a nascent logic for
reasoning about $!$-graphs.  In particular, we construct an analogue of
induction for string graphs which functions as a $!$-box introduction rule,
and demonstrate how this can be used to derive
\[ \input{x-copies-z-spider-graph.tikz} \]
in the Z/X calculus.  We also demonstrate a derivation of the generalised
bialgebra law, which we introduce at the end of chapter \ref{ch:diag-reas}.

Chapter \ref{ch:implementation} demonstrates an algorithm for finding
instances of $!$-graphs that match string graphs or other $!$-graphs, allowing
for practical implementations of rewriting with $!$-graph rewrite rules.  The
complete matching process implemented in Quantomatic for rewriting string
graphs with $!$-graph rewrite rules is described, as well as the planned
extension to it that will allow $!$-graphs to be rewritten.

Finally, chapter \ref{ch:conclusions} presents our conclusions and areas of
further work.

\chapter{Diagrammatic Reasoning}
\label{ch:diag-reas}

This chapter provides a brief summary from the existing literature of
various types of monoidal categories that are relevant to this thesis,
and the diagrammatic representations of their morphisms.  It also
demonstrates their utility in reasoning about such categories using
examples drawn from existing work in the area of graphical languages for
quantum computation.

\section{Monoidal Categories} 
\label{sec:monoidal-cats}

Monoidal categories provide a useful setting for reasoning about
processes of all kinds.  In addition to the usual functional
composition, they allow a form of parallel composition via the tensor
product.

\begin{definition}[Monoidal Category]
    A \textit{monoidal category} is a category $\mathcal{M}$ equipped
    with
    \begin{itemize}
        \item a bifunctor $\bigotimes : \mathcal{M} \times \mathcal{M}
            \rightarrow \mathcal{M}$, called the \textit{tensor
            product};
        \item a distinguished object $I$ of $\mathcal{M}$, called the
            \textit{tensor unit}; and
        \item three natural isomorphisms
            \begin{align*}
                \alpha_{A,B,C} &: (A \otimes B) \otimes C
                \cong A \otimes (B \otimes C) \\
                \lambda_A &: I \otimes A \cong A \\
                \rho_A &: A \otimes I \cong A
            \end{align*}
            where $\lambda_I = \rho_I$ and the following diagrams
            commute:
            \[\begin{tikzcd}
                ((A \otimes B) \otimes C) \otimes D
                    \arrow{r}{\alpha}
                    \arrow[swap]{d}{\alpha \otimes 1}
                & (A \otimes B) \otimes (C \otimes D)
                    \arrow{r}{\alpha}
                & A \otimes (B \otimes (C \otimes D))
                \\
                (A \otimes (B \otimes C)) \otimes D
                    \arrow{rr}{\alpha}
                && A \otimes ((B \otimes C) \otimes D)
                    \arrow[swap]{u}{1 \otimes \alpha}
            \end{tikzcd}\]
            \[\begin{tikzcd}
                (A \otimes I) \otimes B
                    \arrow{rr}{\alpha}
                    \arrow[swap]{dr}{\rho \otimes 1}
                && A \otimes (I \otimes B)
                    \arrow{dl}{1 \otimes \lambda}
                \\
                & A \otimes B &
            \end{tikzcd}\]
    \end{itemize}
    \label{def:mon-cat}
\end{definition}

\begin{example}
    The category \catSet of sets is monoidal, with the cartesian product
    as $\bigotimes$ and the single-element set as $I$.  Alternatively,
    the disjoint union of sets and the empty set can be used as the
    monoidal structure.

    The category \catVectK of vector spaces over a field $K$ and linear
    functions between them is a monoidal category, where $\bigotimes$ is
    the tensor product of vector spaces, and $I$ is $K$ as a
    one-dimensional vector space.

    In later examples, we will use the fact that if $U$, $V$ and $W$ are
    vector spaces over $K$, then each bilinear map $f : U \times V
    \rightarrow W$ induces a unique linear map $\tilde{f} : U \otimes V
    \rightarrow W$.  We can use this to construct the natural
    isomorphisms $\alpha_{U,V,W}$, $\lambda_V$ and $\rho_V$ from the
    maps
    \begin{mathpar}
        ((u,v),w) \mapsto u \otimes (v \otimes w)
        \and
        (k,v) \mapsto kv
        \and
        (v,k) \mapsto kv
    \end{mathpar}
    \label{ex:vect-mon}
\end{example}

When considering (categorical) diagrams of a monoidal category
$\mathcal{M}$ involving the above isomorphisms, Mac Lane's coherence
theorem\cite{CWM} allows us to work instead in a category where those
isomorphisms are identities; we call this category the
\textit{strictification} of $\mathcal{M}$.  Formal categorical diagrams
(of the sort seen in definition \ref{def:mon-cat}) commute in
$\mathcal{M}$ if and only if they commute in its strictification.
Therefore, we abuse notation slightly by omitting paretheses and,
insofar as possible, $I$, writing
\[ A \otimes B \otimes C \]
instead of
\[ (A \otimes I) \otimes (B \otimes C) \]
for example.

In \cite{Penrose1971}, Penrose presented a graphical language he had
developed for representing tensor networks.  Restricted variants of this
language can be used for talking about more general monoidal categories.

In this notation, we represent objects of the category (and identity
morphisms on those objects) as lines, and morphisms as boxes, also
called nodes.  For example, the morphism $f : A \rightarrow B$ could be
depicted
\[ \input{f.tikz} \]
where we read the diagram in a downwards direction.  The tensor product
is depicted by placing things side-by-side, so if $g : B \rightarrow C$
then $f \otimes g : A \otimes B \rightarrow B \otimes C$ is
\[ \input{f-tensor-g.tikz} \]
and $g \circ f : A \rightarrow C$ is
\[ \input{f-then-g.tikz} \]

One convenience that this immediately provides, other than requiring
less mental effort to parse than conventional symbolic notation, is that
the equation
\[ (g_1 \otimes g_2) \circ (f_1 \otimes f_2) = (g_1 \circ f_1) \otimes
(g_2 \circ f_2) \]
-- a consequence of the fact that $\bigotimes$ is a bifunctor -- is
implicit in the graphical notation; both sides of the equation are, in
fact, represented by the same diagram
\[ \input{bifunc.tikz} \]

The tensor product has a rigid order; it is common to allow more
flexibility, in a manner akin to rerouting information.  This is
provided by a symmetric monoidal category.

\begin{definition}[Symmetric Monoidal Category]
    A \textit{symmetric monoidal category}, or SMC, is a monoidal
    category $\mathcal{M}$ equipped with a natural isomorphism
    \[ \gamma_{A,B} : A \otimes B \cong B \otimes A \]
    such that $\gamma_{A,B} \circ \gamma_{B,A} = 1$, $\rho_A = \lambda_A
    \circ \gamma_{A,I}$ and the following diagram commutes:
    \[\begin{tikzcd}
        (A \otimes B) \otimes C
            \arrow{r}{\alpha}
            \arrow{d}{\gamma \otimes 1}
        & A \otimes (B \otimes C)
            \arrow{r}{\gamma}
        & (B \otimes C) \otimes A
            \arrow{d}{\alpha}
        \\
        (B \otimes A) \otimes C
            \arrow{r}{\alpha}
        & B \otimes (A \otimes C)
            \arrow{r}{1 \otimes \gamma}
        & B \otimes (C \otimes A)
    \end{tikzcd}\]
\end{definition}

\begin{example}
    \catSet with either cartesian products or disjoint unions is
    symmetric, as is \catVectK.  In the latter case, $\gamma$ can be
    defined (see example \ref{ex:vect-mon}) using the map
    \[ (v,w) \mapsto (w \otimes v) \]
\end{example}

Diagrammatically, we represent $\gamma_{A,B}$ as
\[ \input{swap-no-arrows.tikz} \]
The fact that $\gamma$ is a natural isomorphism is represented in this
notation as an ability to ``slide'' morphisms up and down wires:
\[ \input{swap-nat-tens.tikz} \]

Joyal and Street presented\cite{Joyal1991} a formalisation of Penrose's
diagrams based on topological graphs.  They showed that
\textit{progressive polarised diagrams}, essentially those without
loops, can be used to form free symmetric monoidal categories;
effectively, they exactly capture the axioms of a symmetric monoidal
category.

There are plenty of scenarios where we do want to introduce loops,
whether to represent the trace (or partial trace) operation on a matrix,
looping in a computation or even the effects of entanglement in a
quantum mechanical system, where information can sometimes appear to
flow backwards in time.

A \textit{trace operation}, if one exists, can be used to capture a
looping construction.

\begin{definition}[Traced Symmetric Monoidal Category]
    A \textit{traced symmetric monoidal category} is a symmetric
    monoidal category $\mathcal{M}$ together with a \textit{trace
    operation}: a function
    \[ \Tr^X_{A,B} : \hom_{\mathcal{M}}(A \otimes X,B \otimes X)
        \rightarrow \hom_{\mathcal{M}}(A,B) \]
    for all objects $X,A,B$ that satisfies the following conditions:
    \begin{enumerate}
        \item \label{tr-cond-contract} $\Tr^X((g \otimes 1_X) \circ f
            \circ (h \otimes 1_X)) = g \circ \Tr^X(f) \circ h$
        \item \label{tr-cond-slide} $\Tr^Y(f \circ (1_A \otimes g)) =
            \Tr^X((1_B \otimes g) \circ f)$
        \item \label{tr-cond-I} $\Tr^I(f) = f$
        \item \label{tr-cond-obj-tensor} $\Tr^{X \otimes Y}(f) = \Tr^X(\Tr^Y(f))$
        \item \label{tr-cond-fn-tensor} $\Tr^X(g \otimes f) = g \otimes \Tr^X(f)$
        \item \label{tr-cond-yank} $\Tr^X(\gamma_{X,X}) = 1_X$
    \end{enumerate}
\end{definition}

\begin{example}
    \catVectK is a traced symmetric monoidal category, with the partial
    trace of linear maps as the trace operation.
\end{example}

$\Tr^X_{A,B}(f)$ is represented as
\[ \input{Tr-X-A-B.tikz} \]
and, as before, the properties of the trace operation are subsumed into
its representation in the graphical language, either directly (as with
\ref{tr-cond-obj-tensor} or \ref{tr-cond-fn-tensor}, for example, where
both sides of the equation have the same representation) or via
``natural'' graphical manipulations.  For example, condition
\ref{tr-cond-slide} corresponds to sliding a morphism round a trace loop
from one end to the other:
\[ \input{tr-slide.tikz} \]
although care must be taken, as the following diagram has no meaning in
a traced SMC:
\[ \input{bad-trace.tikz} \]
(but see remark \ref{rem:formal-dual} below).  Condition
\ref{tr-cond-yank} is visually represented as an ability to ``yank''
loops with no (non-identity) morphisms straight:
\[ \input{tr-yank.tikz} \]

Finally, we come to \textit{compact closed categories}.  These are based
around \textit{dual objects}, which are the categorisation of dual
spaces from linear algebra.

\begin{definition}[Dual Objects]
    Let $A$ and $A^*$ be objects in a monoidal category.  $A^*$ is
    called a \textit{left dual} of $A$ (and $A$ a \textit{right dual} of
    $A^*$) if there exist morphisms $d_A : A \otimes A^* \rightarrow I$
    and $e_A : I \rightarrow A^* \otimes A$ such that
    \[ (d_A \otimes 1_A) \circ (1_A \otimes e_A) = 1_A \]
    and
    \[ (1_{A^*} \otimes d_A) \circ (e_A \otimes 1_{A^*}) = 1_{A^*} \]
\end{definition}

Note that in a symmetric monoidal category, if $A^*$ is a left dual of
$A$, it is also a right dual of $A$ and vice versa, so we simply say
that $A^*$ is the dual of $A$.

\begin{definition}[Compact Closed Category]
    A \textit{compact closed category} is a symmetric monoidal category
    where every object has a dual.
\end{definition}

We add the notion of dual objects to the graphical language using
arrows.  If an object $A$ is represented by a downward-directed wire,
its dual is represented by an upward-directed one.  The maps $d_A$ and
$e_A$ can then be drawn
\[ \begin{tikzpicture}
	\begin{pgfonlayer}{nodelayer}
		\node [style=none] (0) at (0.25, 0.25) {};
		\node [style=none] (1) at (-0.25, 0.25) {};
	\end{pgfonlayer}
	\begin{pgfonlayer}{edgelayer}
		\draw [style=diredge, bend right=90, looseness=1.75] (1.center) to (0.center);
	\end{pgfonlayer}
\end{tikzpicture} \qquad \textrm{and} \qquad \begin{tikzpicture}
	\begin{pgfonlayer}{nodelayer}
		\node [style=none] (0) at (0.25, -0) {};
		\node [style=none] (1) at (-0.25, -0) {};
	\end{pgfonlayer}
	\begin{pgfonlayer}{edgelayer}
		\draw [style=diredge, bend right=90, looseness=1.75] (0.center) to (1.center);
	\end{pgfonlayer}
\end{tikzpicture} \]
respectively.  The equations in the above definition can then be drawn
\[ \input{dual-eq-1.tikz} \qquad \textrm{and} \qquad \input{dual-eq-2.tikz} \]

\begin{example}
    Consider a finite-dimensional vector space vector space $V$ and its
    (algebraic) dual space, $V^*$.  Given a basis $\{u_1,\ldots,u_n\}$
    of $V$, there is a corresponding basis $\{u_1^*,\ldots,u_n^*\}$ of
    $V^*$ such that
    \[ u_i^*(a_1 u_1 + \cdots + a_n u_n) = a_i \]
    We can then define the map $d_V : V \otimes V^* \rightarrow K$ to be
    induced (see example \ref{ex:vect-mon}) by
    \[ (v,f) \mapsto f(v) \]
    and $e_V : K \rightarrow V^* \otimes V$ to be
    \[ k \mapsto k(u_1^* \otimes u_1 + \cdots + u_n^* \otimes u_n) \]

    So if we consider $V$ as an object of \catFDVectK, the category of
    finite-dimensional vector spaces over $K$, $V^*$ is a categorical
    dual of $V$.  Thus \catFDVectK is compact closed (although \catVectK
    is not).  Note, however, that our definitions of $d_V$ and $e_V$ are
    not the only possibilities.
    \label{ex:fdvect-ccc}
\end{example}

Given the graphical notation, we would expect to be able to construct a
trace operation using these maps; this is indeed the
case\cite{Joyal1996}.
$\Tr^X_{A,B}(f)$ is
\[ (1_B \otimes d_X) \circ (f \otimes 1_{X^*}) \circ (1_A \otimes e_X)
\]
or, graphically,
\[ \input{compact-closed-trace.tikz} \]

\begin{remark}\label{rem:formal-dual}
    Joyal, Street and Verity's $\textrm{Int}$
    construction\cite{Joyal1996} allows a traced SMC to be embedded in a
    compact closed category with formal dual objects.  This allows the
    diagrams for compact closed categories to be given meaning when
    working with traced SMCs.
\end{remark}

\section{Quantum Computation with Diagrams} 
\label{sec:quantum}

An area which is making active use of these diagrammatic languages is
quantum information processing.  Starting with a paper by Abramsky and
Coecke in 2004\cite{Abramsky2004}, there has been considerable work
in axiomatising quantum mechanics in a categorical setting that admits
diagrammatic reasoning.  Indeed, a major application of the ideas set
out in this thesis is to make this work easier.  We will therefore
present a brief introduction to one of these languages, both as
motivation and as a source of examples.

Most of the languages that have been developed in this area are based on
compact closed categories; in particular, the usual model for them is
the category \catFDHilb of finite dimensional \textit{Hilbert spaces}.

\begin{definition}
    A \textit{Hilbert space} is a real or complex inner product space
    that is a complete metric space with respect to the distance metric
    \[ d(x,y) = \sqrt{\langle x-y,x-y \rangle} \]
\end{definition}

Note that we can ignore the compactness requirement, as it is satisfied
by all finite-dimensional inner product spaces.

\catFDHilb is a compact closed category, with the categorical dual of a
Hilbert space $H$ being the algebraic dual $H^*$, as in \catFDVect
(example \ref{ex:fdvect-ccc}).  As such, we can make use of the
diagrammatic language already introduced for such categories.

A structure that has particular importance in this formulation of quantum
mechanics is the Frobenius algebra, a monoid and comonoid pair that
interact particularly well.

A \textit{monoid} in a monoidal category $\mathcal{C}$ as a triple
$(A,\mu,\eta)$ where $A$ is an object of $\mathcal{C}$, and $\mu : A
\otimes A \rightarrow A$ and $\eta : I \rightarrow A$ are morphisms
satisfying unit and associativity laws:
\[\label{pg:cfa} \mu \circ (\eta \otimes 1_A) \circ \lambda_A^{-1}
= 1_A = \mu \circ (1_A \otimes \eta) \circ \rho_A^{-1} \]
\[ \mu \circ (\mu \otimes 1_A) = \mu \circ (1_A \otimes \mu) \]
Graphically, we can represent the monoid as $(A,\mult,\unit)$; these
laws can then be drawn
\[ \input{unit-law-diag.tikz} \]
and
\[ \input{assoc-law-diag.tikz} \]

A \textit{comonoid} is the dual of this: a triple $(A,\delta,\epsilon)$
with $\delta : A \rightarrow A \otimes A$ and $\epsilon : A \rightarrow
I$ satisfying counit and coassociativity laws:
\[ \lambda_A \circ (\epsilon \otimes 1_A) \circ \delta
= 1_A = \rho_A \circ (1_A \otimes \epsilon) \circ \delta \]
\[ (\delta \otimes 1_A) \circ \delta = (1_A \otimes \delta) \circ \delta
\]
Graphically, this is $(A,\comult,\counit)$ satisfying
\[ \input{counit-law-diag.tikz} \]
and
\[ \input{coassoc-law-diag.tikz} \]

A monoid and comonoid on the same object form a \textit{Frobenius
algebra} if they satisfy the Frobenius law:
\[ \input{frob-law.tikz} \]
or, symbolically,
\[ (1_A \otimes \mu) \circ (\delta \otimes 1_A) = \delta \circ \mu
= (\mu \otimes 1_A) \circ (1_A \otimes \delta) \]
A Frobenius algebra is \textit{commutative} if the monoid is commutative
and the comonoid is cocommutative:
\[ \input{commutative-law-diag.tikz} \qquad \qquad
   \input{cocommutative-law-diag.tikz} \]
\[ \mu = \mu \circ \gamma_{A,A} \qquad \qquad \delta =
    \gamma_{A,A} \circ \delta \]

Commutative Frobenius algebras (CFAs) appear repeatedly in quantum
graphical languages, whether to describe
entanglement\cite{Coecke2009,Coecke2010} or classical
data\cite{Coecke2006}.  Thus the practical utility of these languages is
greatly improved by a consequence of the Frobenius law, namely that any
connected (in the diagrammatic language) network of multiplications,
comultiplications, units and counits of a commutative Frobenius algebra
has a unique normal form consisting of a series of multiplications
followed by a series of loops followed by a series of comultplications:
\[ \input{frob-normal-form.tikz} \]
This allows a more compact representation, generally referred to as a
``spider'':
\[ \input{spider.tikz} \]
where the loops are encoded by connecting outputs of the spider to
inputs.  Note that our original multiplication, comultiplication, unit
and counit are subsumed into this notation.  This leads to the
\textit{spider law}, which allows us to merge two connected spiders of
the same type
\[ \input{spider-law-informal.tikz} \]
The spider law and the spider identity law
\[ \input{spider-id-law.tikz} \]
completely capture the laws of a CFA.

There are various quantum graphical languages built around CFAs,
including the Z/X calculus\cite{Coecke2009}, the trichromatic
calculus\cite{Lang2012} and the GHZ/W calculus\cite{Coecke2010}.  There
are also several languages built on top of the Z/X calculus, such as
Duncan and Perdrix's calculus for measurement-based quantum
computation\cite{Duncan2010}.  The GHZ/W calculus even admits an
encoding of rational arithmetic\cite{Coecke2011}.

We will use a restricted version of the Z/X calculus (without angles or
scalars) to provide motivating examples.  This consists of two
\textit{special} CFAs, called the X and Z CFAs, together with
certain rules governing their interaction.

\begin{definition}
    A CFA $(\mult, \unit, \comult, \counit)$ is \textit{special} if it
    satisfies
    \[ \input{special-law-diag.tikz} \]
    The ``spiderised'' version of this is
    \[ \input{special-spider-law-diag.tikz} \]
\end{definition}

Traditionally, the X CFA has been represented with red nodes, and the
Z CFA with green nodes.  However, we will follow \cite{KissingerDPhil}
in using grey for X and white for Z, if only to spare colourblind
readers some frustration.  Thus our CFAs are
\[(\mu_X, \eta_X, \delta_X, \epsilon_X) = (\greymult, \greyunit,
\greycomult, \greycounit)\]
and
\[(\mu_Z, \eta_Z, \delta_Z, \epsilon_Z) = (\whitemult, \whiteunit,
\whitecomult, \whitecounit) \].

In addition to the laws of a special CFA for each colour of node, we
have the following laws:
\begin{mathpar}
    \input{z-copies-x.tikz}
    \and
    \input{x-copies-z.tikz}
    \and
    \input{bialg.tikz}
    \\
    \input{z-x-mult-comult-cancel.tikz}
    \and
    \input{z-x-duals-coincide.tikz}
\end{mathpar}
We name these laws, for ease of reference, \textit{Z copies X},
\textit{X copies Z}, the \textit{bialgebra law}, \textit{scalar
elimination} and the \textit{dual law}, in order.

For the dual law we might expect certain symmetries:
\[ \input{z-x-duals-symmetries.tikz} \]

Indeed, we can use the laws of a CFA together with the dual law to
derive these.  For example, the last one can be derived as follows:
\[ \input{z-x-duals-derivation.tikz} \]
This fairly simple proof can be written in ``traditional'' mathematical
notation as well, but -- because we have to explicitly use various laws
and axioms that are inherent in the diagrammatic language -- it ends up
being long and tedious.  We present such a proof below to demonstrate
this.  For reference, the dual law written symbolically is
\[ \rho \circ (1_A \otimes (\epsilon_X \circ \mu_X)) \circ
((\delta_Z \circ \eta_Z) \otimes 1_A) \circ \lambda^{-1} =
1_A \]
Then we have
\begin{align*}
    1_A = &\textrm{\{(co)unitality laws\}} \\
          &\mu_X \circ (1_A \otimes \eta_X) \circ \rho^{-1} \circ \lambda
           \circ (\epsilon_X \otimes 1_A) \circ \delta_X \circ \\
          &\mu_Z \circ (1_A \otimes \eta_Z) \circ \rho^{-1} \circ \lambda
           \circ (\epsilon_Z \otimes 1_A) \circ \delta_Z
           \displaybreak[0]\\
        = &\textrm{\{naturality of $\lambda$, $\rho$\}} \\
          &\lambda \circ (1_I \otimes \mu_X) \circ (1_I \otimes 1_A
           \otimes \eta_X) \circ (\epsilon_X \otimes 1_A \otimes 1_I)
           \circ (\delta_X \otimes 1_I) \circ \rho^{-1} \circ \\
          &\lambda \circ (1_I \otimes \mu_Z) \circ (1_I \otimes 1_A
           \otimes \eta_Z) \circ (\epsilon_Z \otimes 1_A \otimes 1_I)
           \circ (\delta_Z \otimes 1_I) \circ \rho^{-1}
           \displaybreak[0]\\
        = &\textrm{\{bifunctoriality of $\otimes$\}} \\
          &\lambda \circ (\epsilon_X \otimes 1_A) \circ (1_A \otimes
           \mu_X) \circ (\delta_X \otimes 1_A) \circ (1_A \otimes
           \eta_X) \circ \rho^{-1} \circ \\
          &\lambda \circ (\epsilon_Z \otimes 1_A) \circ (1_A \otimes
           \mu_Z) \circ (\delta_Z \otimes 1_A) \circ (1_A \otimes
           \eta_Z) \circ \rho^{-1}
           \displaybreak[0]\\
        = &\textrm{\{Frobenius law\}} \\
          &\lambda \circ (\epsilon_X \otimes 1_A) \circ (\mu_X \otimes
           1_A) \circ (1_A \otimes \delta_X) \circ (1_A \otimes
           \eta_X) \circ \rho^{-1} \circ \\
          &\lambda \circ (\epsilon_Z \otimes 1_A) \circ (\mu_Z \otimes
           1_A) \circ (1_A \otimes \delta_Z) \circ (1_A \otimes
           \eta_Z) \circ \rho^{-1}
           \displaybreak[0]\\
        = &\textrm{\{bifunctoriality of $\otimes$\}} \\
          &\lambda \circ ((\epsilon_X \circ \mu_X) \otimes 1_A) \circ
           (1_A \otimes (\delta_X \circ \eta_X)) \circ \rho^{-1} \circ
           \\
          &\lambda \circ ((\epsilon_Z \circ \mu_Z) \otimes 1_A) \circ
           (1_A \otimes (\delta_Z \circ \eta_Z)) \circ \rho^{-1}
           \displaybreak[0]\\
        = &\textrm{\{naturality of $\lambda$, $\rho$\}} \\
          &\lambda \circ (1_I \otimes \lambda) \circ (1_I \otimes
           (\epsilon_X \circ \mu_X) \otimes 1_A) \circ (1_I \otimes 1_A
           \otimes (\delta_X \circ \eta_X)) \circ \\
          &((\epsilon_Z \circ \mu_Z) \otimes 1_A \otimes 1_I) \circ (1_A
           \otimes (\delta_Z \circ \eta_Z) \otimes 1_I) \circ (\rho^{-1}
           \otimes 1_I)\circ \rho^{-1}
           \displaybreak[0]\\
        = &\textrm{\{bifunctoriality of $\otimes$\}} \\
          &\lambda \circ ((\epsilon_Z \circ \mu_Z) \otimes 1_A) \circ
           (1_A \otimes 1_A \otimes \lambda) \circ
           (1_A \otimes 1_A \otimes (\epsilon_X \circ \mu_X) \otimes 1_A)
           \circ \\
          &(1_A \otimes (\delta_Z \circ \eta_Z) \otimes 1_A \otimes 1_A) \circ
           (\rho^{-1} \otimes 1_A \otimes 1_A) \circ (1_A \otimes
           (\delta_X \circ \eta_X))\circ \rho^{-1}
           \displaybreak[0]\\
        = &\textrm{\{laws of $\lambda$, $\rho$\}} \\
          &\lambda \circ ((\epsilon_Z \circ \mu_Z) \otimes 1_A) \circ
           (1_A \otimes \rho \otimes 1_A) \circ
           (1_A \otimes 1_A \otimes (\epsilon_X \circ \mu_X) \otimes 1_A)
           \circ \\
          &(1_A \otimes (\delta_Z \circ \eta_Z) \otimes 1_A \otimes 1_A) \circ
           (1_A \otimes \lambda^{-1} \otimes 1_A) \circ (1_A \otimes
           (\delta_X \circ \eta_X))\circ \rho^{-1}
           \displaybreak[0]\\
        = &\textrm{\{bifunctoriality of $\otimes$\}} \\
          &\lambda \circ ((\epsilon_Z \circ \mu_Z) \otimes 1_A)
           \circ\\
          &(1_A \otimes (\rho \circ (1_A \otimes (\epsilon_X \circ \mu_X)) \circ
          ((\delta_Z \circ \eta_Z) \otimes 1_A) \circ \lambda^{-1}) \otimes 1_A) \circ \\
          &(1_A \otimes (\delta_X \circ \eta_X))\circ \rho^{-1}
           \displaybreak[0]\\
        = &\textrm{\{dual law\}} \\
          &\lambda \circ ((\epsilon_Z \circ \mu_Z) \otimes \lambda)
           \circ\\
          &(1_A \otimes 1_A \otimes 1_A) \circ \\
          &(\rho^{-1} \otimes (\delta_X \circ \eta_X))\circ \rho^{-1}
           \displaybreak[0]\\
        = &\textrm{\{identity\}} \\
          &\lambda \circ ((\epsilon_Z \circ \mu_Z) \otimes \lambda)
           \circ
           (\rho^{-1} \otimes (\delta_X \circ \eta_X))\circ \rho^{-1} \\
\end{align*}
Even with explanatory notes about what law is being used when, this
proof is hard to follow -- and was hard to construct -- compared to the
graphical notation.

Note that we can make the graphical version even cleaner by using spider
notation:
\[ \input{z-x-duals-derivation-spider.tikz} \]

The other two symmetries rely on commutativity.  In particular, we have
\[ \input{z-x-duals-symmetry-derivation.tikz} \]
and similarly with the colours swapped.

We can create ``spiderised'' versions of the copying laws
\begin{mathpar}
    \input{z-copies-x-spider.tikz}
    \and
    \input{x-copies-z-spider.tikz}
\end{mathpar}
and we can prove that these are correct by induction on $n$.  The proof
is fairly simple, and we will just do the X copies Z version, since
the other is symmetric.  The base case ($n = 0$) follows by the scalar
elimination laws.  So consider $n = k + 1$:
\[ \input{x-copies-z-spider-proof.tikz} \]

We can construct a similarly spiderised version of the bialgebra law
with some trivial applications of the spider law
\[ \input{bialg-spider.tikz} \]
but there is also a generalised version of this involving the
$(s,t)$-bipartite graph of $X$ and $Z$ spiders (the normal bialgebra law
is the case when $s = t = 2$).  Graphically, this looks something like
\[ \input{gen-bialg-attempt.tikz} \]
but it is hard to see what this is supposed to represent without an
accompanying textual description.  For the same reason, proving this
graphically is a messy affair, and convincing yourself that such a proof
is correct is even harder.

Chapters \ref{ch:bang-graphs} and \ref{ch:rw-bang-graphs} will introduce
a formal notation capable of expressing this generalised bialgebra law,
and chapter \ref{ch:rules} will provide the techniques necessary to
prove it formally, together with an informal demonstration of the proof.


\chapter{Rewriting}
\label{ch:rewriting}

Equational reasoning is core to many problems in mathematics and
computer science, from proving that a left unit is a right unit in a
commutative ring to optimising functional programs.

The traditional approach to mechanising equational reasoning is to use
term rewriting to implement equational logic.  We will briefly describe
this approach before presenting a related mechanisation scheme, adapted
from the existing literature (and \cite{KissingerDPhil} in particular),
that is much more suited to the examples in the preceeding chapter.

\section{Term Rewriting and the Word Problem} 
\label{sec:term-rewriting}

In order to reason formally about mathematical equations, we need a
symbolic language to express those equations.  This section summarises
the relevant parts of a large body of existing literature on term
rewriting and universal algebra; a more comprehensive introduction can
be found in \cite{TRaAT}.

A \textit{signature} provides symbols for the ``fixed'' elements of a
theory: for example, a multiplication operator or its unit.  It is
simply a set of symbols $\Sigma$, each with an associated non-negative
integer known as the arity\footnote{For simplicity, we are presenting
\textit{untyped} terms here}.  In the case of group theory, we might have
$\Sigma = \{e,i,\grpmult\}$, with respective arities $0$, $1$ and $2$.

If we also have a set of \textit{variables} $V$ disjoint from $\Sigma$,
we can start to write down \textit{terms} such as $\grpmult(i(x),e)$ (where
$x$ is a variable).  The set of possible $\Sigma$-terms over $V$,
$T(\Sigma,V)$, is defined inductively:
\begin{itemize}
    \item if $x \in V$ then $x \in T(\Sigma,V)$
    \item if $f \in \Sigma$ with arity $n$ and $t_1,\ldots,t_n \in
    T(\Sigma,V)$ then $f(t_1,\ldots,t_n) \in T(\Sigma,V)$
\end{itemize}

We can now encode equations using terms.  A \textit{$\Sigma$-equation}
is just a pair of terms that we wish to consider equal.  We write such
equations $s \rweq t$, where $s$ and $t$ are $\Sigma$-terms, to make it
clear that $s$ and $t$ are not the same term, but should be considered
equal in the theory.

The axioms of our theory, for example the axioms of a group, are encoded
as a set of $\Sigma$-equations, and equational logic\cite{Gries1993}
describes how we can use those axioms to derive further equalities via
the following inference rules, adapted for the notion of signature we
are using in \cite{TRaAT}
\begin{mathpar}
    (\textsc{Axiom})\enskip
    \inferrule{s \rweq t \in E}{E \vdash s \rweq t}
    \and
    (\textsc{Refl})\enskip
    \inferrule{ }{E \vdash t \rweq t}
    \and
    (\textsc{Sym})\enskip
    \inferrule{E \vdash s \rweq t}{E \vdash t \rweq s}
    \and
    (\textsc{Trans})\enskip
    \inferrule{E \vdash s \rweq t \\ E \vdash t \rweq u}{E \vdash s \rweq u}
    \and
    (\textsc{Subst})\enskip
    \inferrule{E \vdash s \rweq t}{E \vdash \sigma(s) \rweq \sigma(t)}
    \and
    (\textsc{Leibniz})\enskip
    \inferrule{E \vdash s_1 \rweq t_1 \\ \ldots \\ E \vdash s_n \rweq t_n}
              {E \vdash f(s_1,\ldots,s_n) \rweq f(t_1,\ldots,t_n)}
\end{mathpar}
where $\sigma$ is a substitution that replaces variables with other
terms and $f$ is an $n$-ary function symbol of $\Sigma$.

We commonly wish to determine whether two $\Sigma$-terms $s$ and $t$ are
equal by this logic under a certain set of axioms $E$; in other words,
is it the case that $E \vdash s \rweq t$? This is known as the
\textit{word problem} and is a major application of term rewriting.

The obvious way to approach the word problem is to attempt to use the
rules to construct a proof that two terms are equal directly.  Term
rewriting, in constrast, turns one term into another term that is equal
according to the \textsc{Axiom}, \textsc{Subst} and \textsc{Leibniz}
rules.  This is done by directing equations $s \rweq t$ to make them
\textit{rewrite rules}, written $s \rewritesto t$.

A set of terms $E$ can be turned into a \textit{term rewrite system} $R$
in this way, and we define the relation $\rewritestoin{R}$ such
that $s \rewritestoin{R} t$ if and only if there is a substitution
$\sigma$ and a rewrite rule $l \rewritesto r$ in $R$ such that
$\sigma(l)$ occurs in $s$, and replacing that occurrence by $\sigma(r)$
yields $t$.  For example, if $\grpmult(e,x) \rewritesto x \in R$, then
we can deduce that
\[ i(\grpmult(e,i(y))) \rewritestoin{R} i(i(y)) \]

\textit{Matching} is the process of finding an appropriate substitution
$\sigma$ and locating occurences of $\sigma(l)$ in $s$, and is
decidable.  Hence, given a finite set of rewrite rules $R$ and a term
$s$, it is possible to find all terms $t$ such that $s
\rewritestoin{R} t$.

The utility of this approach comes from the observation that the
reflexive, symmetric, transitive closure of $\rewritestoin{R}$ --
which we write $\rewriteequivin{R}$ -- corresponds exactly to the
$\rweq$ of equational logic; in other words, if we allow ourselves to
view $R$ as a set of equations,
\[ s \rewriteequivin{R} t \quad \Leftrightarrow \quad R \vdash s \rweq t \]

This observation means, in particular, that if we can find a common term
$u$ such that $s \rewritetransin{R} u$ and $t \rewritetransin{R} u$
(where $\rewritetransin{R}$ is the transitive closure of
$\rewritestoin{R}$), then we can deduce that $R \vdash s \rweq t$.  For
term rewrite systems that are \textit{terminating} (there is no infinite
chain $s_1 \rewritestoin{R} s_2 \rewritestoin{R} \ldots$) and
\textit{confluent} (if $s \rewritetransin{R} t_1$ and $s
\rewritetransin{R} t_2$ then there is a term $u$ such that $t_1
\rewritetransin{R} u$ and $t_2 \rewritetransin{R} u$), the word problem
is decidable.

While terms, which can have multiple inputs (the variables) but only one
output, are well-suited for encoding algebras, they are less useful for
structures like coalgebras, where we can have multiple outputs.  For
example, while a multiplication map $m : A \otimes A \rightarrow A$ in a
monoidal category is easy to represent as a binary term, a
comultiplication map $d : A \rightarrow A \otimes A$ does not have such
a clear symbolic representation.

Taking our cue from the diagrams of chapter \ref{ch:diag-reas}, we will
use graphs to construct a combinatorial symbolic representation of the
sort of maps found in a monoidal category.

\section{Monoidal Signatures} 
\label{sec:monoidal-sigs}

As with term rewriting, we need a way to represent items of interest
from the theory we are reasoning about; in this case, these items
correspond to objects and morphisms in a monoidal category.

\begin{definition}[Monoidal Signature: \cite{KissingerDPhil}, pp 30]
    A (small, strict) \textit{monoidal signature} $(O,M,\dom,\cod)$
    consists of two sets $M$ and $O$ together with a pair of functions
    $\dom,\cod : M \rightarrow O^*$ into lists of elements of $O$.
\end{definition}

The intended interpretation of a monoidal signature is that the elements
of $O$ are representations of objects of a monoidal category, $M$ is
(some of) its morphisms and $\dom$ and $\cod$ give the types of the
domain and codomain of the morphisms.  An empty list corresponds to $I$,
the tensor unit.

\begin{definition}[Monoidal Homomorphism: \cite{KissingerDPhil}, pp 31]
    If $S$ and $T$ are monoidal signatures, a \textit{monoidal signature
    homomorphism} $f : S \rightarrow T$ consists of functions $f_O : O_S
    \rightarrow O_T$ and $f_M : M_S \rightarrow M_T$ making the
    following diagrams commute:
    \[
        \begin{tikzcd}
            M_S \arrow{r}{\dom_S} \arrow[swap]{d}{f_M} &
            O_S^* \arrow{d}{f_O^*} \\
            M_T \arrow[swap]{r}{\dom_T} &
            O_T^*
        \end{tikzcd}
        \qquad
        \begin{tikzcd}
            M_S \arrow{r}{\cod_S} \arrow[swap]{d}{f_M} &
            O_S^* \arrow{d}{f_O^*} \\
            M_T \arrow[swap]{r}{\cod_T} &
            O_T^*
        \end{tikzcd}
    \]
\end{definition}

Recall from chapter \ref{ch:diag-reas} the ``spiders'' that arise from
Frobenius algebras.  These are families of morphisms that differ only in
how many inputs or outputs of a certain type they have.  Rather than
having an infinite monoidal signature with an object in $M$ for each
possible combination of inputs and outputs, it would be useful to have a
single symbol with variable arity.  To this end, we adapt the definition
of a monoidal signature:

\begin{definition}[Compressed Monoidal Signature]
    \label{def:compressed-monoidal-sig}
    If $O$ and $M$ are sets and $\dom,\cod : M \rightarrow (O \times
    \{\varar,\fixar\})^*$ are functions into lists of pairs of elements
    of $O$ and either $\varar$ or $\fixar$, then $(O,M,\dom,\cod)$ is a
    \textit{compressed monoidal signature}.
\end{definition}

$\varar$ and $\fixar$ are intended to indicate whether the input or
output is of variable or fixed (single, in fact) arity.  If it has
variable arity, it can be removed or duplicated.

Of course, if this is to be a true compressed representation of a larger
monoidal signature, we need a way to reconstruct that monoidal signature
from the compressed version.

\begin{definition}[Expansion]
    \label{def:mon-sig-exp}
    Let $S = (O,M,\dom,\cod)$ be a compressed monoidal signature.

    For each $f \in M$, let $D_f$ be the indices of the \varar-tagged
    elements of $\dom(f)$, and $C_f$ the indices of the \varar-tagged
    elements of $\cod(f)$.  The \textit{expansion} of $S$, $\expn(S)$,
    is $(O,M',\dom',\cod')$ where
    \begin{itemize}
        \item $M' = \bigcup\{\{f\}\times \mathbb{N}^{D_f} \times
            \mathbb{N}^{C_f} : f \in M\}$
        \item $\dom'(f,d_f,c_f)$ be $\dom(f)$ with, for all indices $i$,
            the $ith$ element replaced with $x$ if it is $(x,\fixar)$
            and $d_f(i)$ copies of $x$ if it is $(x,\varar)$
        \item $\cod'(f,d_f,c_f)$ be $\cod(f)$ with, for all indices $i$,
            the $ith$ element replaced with $x$ if it is $(x,\fixar)$
            and $c_f(i)$ copies of $x$ if it is $(x,\varar)$
    \end{itemize}
\end{definition}

$S$ is said to be a compressed monoidal signature for a monoidal
signature $T$ if the expansion of $S$ is isomorphic to $T$.
Compressed monoidal signatures that make no use of $\varar$ correspond
exactly to ordinary monoidal signatures.

\begin{example}
    \label{ex:compressed-mon-sig}
    Consider the compressed monoidal signature where
    \begin{align*}
        O &= \{A,B\} \\
        M &= \{f,g\} \\
        \dom(f) &= [(A,\varar)] \\
        \dom(g) &= [(A,\fixar),(B,\varar)] \\
        \cod(f) &= [(A,\fixar),(A,\fixar)] \\
        \cod(g) &= [(A,\fixar)]
    \end{align*}
    To make the description simpler, and more evocative of the intended
    interpretation, we use the following notation:
    \begin{align*}
        f &: [A^\varar] \rightarrow [A^\fixar,A^\fixar] \\
        g &: [A^\fixar,B^\varar] \rightarrow [A^\fixar]
    \end{align*}

    The expansion of this signature would include
    \begin{align*}
        (f,1 \mapsto 0,\varnothing) &: [\,] \rightarrow [A,A] \\
        (f,1 \mapsto 1,\varnothing) &: [A] \rightarrow [A,A] \\
        (f,1 \mapsto 2,\varnothing) &: [A,A] \rightarrow [A,A] \\
        \vdots
    \end{align*}
    and
    \begin{align*}
        (g,\varnothing,2 \mapsto 0) &: [A] \rightarrow [A] \\
        (g,\varnothing,2 \mapsto 1) &: [A,B] \rightarrow [A] \\
        (g,\varnothing,2 \mapsto 2) &: [A,B,B] \rightarrow [A] \\
        \vdots
    \end{align*}
\end{example}

Since the elements of a compressed monoidal signature are meant to
represent families of morphisms in a category, we need a way of
determining \emph{which} morphisms they are supposed to represent.  This
is provided by a \textit{valuation}.

Note: if $X$ is a list, we use $\mathcal{I}(X)$ to refer to the set of indices
of $X$.

\begin{definition}[Arity Count]
    Let $(M,O,\dom,\cod)$ be a compressed monoidal signature and $f \in
    M$.  Then an \textit{arity count for $f$} is a pair of maps $d :
    \mathcal{I}(\dom(f)) \rightarrow \mathbb{N}$ and $c :
    \mathcal{I}(\cod(f)) \rightarrow \mathbb{N}$ such that the index of
    each $\fixar$-tagged element maps to $1$.
\end{definition}

\begin{definition}[Valuation]
    \label{def:valuation}
    A \textit{valuation} $v$ of a compressed monoidal
    signature $(M,O,\dom,\cod)$ over a monoidal category $\mathcal{C}$ is
    a map $v_o$ from $O$ to the objects of $\mathcal{C}$
    together with a map $v_m^f$ for each $f \in M$ that takes
    arity counts for $f$ to morphisms of $\mathcal{C}$ with
    \[ v_m^f(d,c) :
            v_o(\alpha_1)^{d(1)} \otimes \ldots
            \otimes v_o(\alpha_l)^{d(l)}
            \rightarrow
            v_o(\beta_1)^{c(1)} \otimes \ldots
            \otimes v_o(\beta_n)^{c(n)}
    \]
    where $\dom(f) = [\alpha_1,\ldots,\alpha_l]$, $\cod(f) =
    [\beta_1,\ldots,\beta_n]$, $A^0$ is the monoidal identity $I$, and
    $A^{n+1} = A \otimes A^n$.

    The valuation is said to be \textit{expansion-order invariant} if
    $\mathcal{V}$ is a symmetric monoidal category and for all $f \in M$
    and all arity counts $(d,c)$ for $f$, if $i$ is the index of a
    variable-arity entry of $\dom(f)$ and $1 \leq j < d(i)$, then
    \[ v_m^f(d,c) \circ \textrm{swap}_{i,j} = v_m^f(d,c) \]
    where $\textrm{swap}_{i,j}$ is the map
    \[
        1_{X_1^{d(1)}} \otimes \ldots \otimes
        1_{X_{i-1}^{d(i-1)}} \otimes
        1_{X_{i}^{j-1}} \otimes
        \gamma_{X_{i},X_{i}} \otimes
        1_{X_{i}^{d(i)-j-1}} \otimes
        1_{X_{i+1}^{d(i+1)}} \otimes \ldots \otimes
        1_{X_l^{d(l)}}
    \]
    where $X_i = v_o(\alpha_i)$, and
    \[  \textrm{swap}'_{i',j'} \circ v_m^f(d,c) = v_m^f(d,c) \]
    where $i'$, $j'$ and $\textrm{swap}'_{i',j'}$ are defined
    analagously for $\cod(f)$ and $c$.
\end{definition}

Expansion-order invariance is intended to capture the idea that, in the
expansion of a compressed monoidal signature, a morphism should be
commutative on the inputs derived from the same $\varar$-tagged input,
and cocommutative on outputs derived from the same $\varar$-tagged
output.

\begin{example}
    Suppose we have a CFA (page \pageref{pg:cfa}) on an object $X$ of a
    symmetric monoidal category
    \begin{align*}
        \mu &: X \otimes X \rightarrow X \\
        \eta &: I \rightarrow X \\
        \delta &: X \rightarrow X \otimes X \\
        \epsilon &: X \rightarrow I
    \end{align*}
    and the compressed monoidal signature
    \[ s : [A^\varar] \rightarrow [A^\varar] \]
    If we want $s$ to represent the spiderised version of the CFA, we
    can construct a valuation $v$ such that $v_o(A) = X$ and $v_m^s(1
    \mapsto k,1 \mapsto l)$ is $k-1$ copies of $\mu$ followed by $l-1$
    copies of $\delta$, arranged in the following form:
    \[ \input{frob-normal-form-no-loops.tikz} \]
    If $k = 0$, we use $\eta$ to get a domain of $I$, and if $l = 0$
    we use $\epsilon$ to get a codomain of $I$.

    This valuation is expansion-order invariant because CFAs are
    (co-)commutative and (co-)associative, so precomposing or
    postcomposting the map above with swap maps does not affect the
    value of the morphism.
\end{example}

\section{Graphs} 
\label{sec:graphs}

A well-established categorical construction for directed graphs is to
use functors into \catSet, the category of (small) sets:

\begin{definition}
    \catGraph is the category of functors from the category
    \[
        \begin{tikzcd}
            E \arrow[yshift=0.4em]{r}{$s$}
                \arrow[swap,yshift=-0.2em]{r}{$t$}
            & V
        \end{tikzcd}
    \]
    to \catSet.
\end{definition}

Each element of \catGraph therefore consists of two sets, one of edges
(the image of $E$) and one of vertices (the image of $V$), and a pair of
functions selecting the source ($s$) and target ($t$) of each edge.  We
say that an edge is \textit{incident} to both its source and its target,
and call the set of edges incident to a vertex $v \in V$ the
\textit{edge neighbourhood} of $v$, $E(v)$.  Also, the source of an edge
is \textit{adjacent} to its target, and vice versa.

We can separate vertices and edges into types through the use of a
\textit{typegraph}.  This also allows the possible sources and targets
of the edges to be constrained.  For a given graph $\mathcal{G}$, we can
construct the category $\slicecat{\catGraph}{\mathcal{G}}$, the slice
category over $\mathcal{G}$.  The elements of this category can be
viewed as pairs $(H,t_H)$ of a graph $H$ and a graph morphism $t_H : H
\rightarrow \mathcal{G}$, known as the \textit{typing morphism}, and the
morphisms of the category are those graph morphisms $f : G \rightarrow
H$ that respect the typing morphisms, in the sense that
\[
    \begin{tikzcd}
        G \ar{rr}{f} \ar[swap]{dr}{\tau_G}
        && H \ar{dl}{\tau_H}
        \\ & \mathcal{G} &
    \end{tikzcd}
\]
commutes.  We call $(H,t_H)$ a \textit{$\mathcal{G}$-typed graph}.

If $\mathcal{G}$ is a subgraph of $\mathcal{H}$, then any
$\mathcal{G}$-typed graph can be viewed as a $\mathcal{H}$-typed graph,
simply by reinterpreting the codomain of the typing function (or,
equivalently, composing the typing function with the embedding morphism
from $\mathcal{G}$ to $\mathcal{H}$).  The general theorem, stated in
categorical terms, is as follows:

\begin{theorem}\label{thm:slice-cat-mono-func}
    Let $C$ be a category, and $m : S \rightarrow T$ a monomorphism in
    $C$.  Then $m$ induces a full embedding functor $E_m : C/S
    \rightarrow C/T$.  If $C$ has pullbacks along monomorphisms, $E_m$
    has a right adjoint $U_m$, and $U_m \circ E_m$ is the identity
    functor.
\end{theorem}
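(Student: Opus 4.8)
The plan is to build the functor $E_m$ explicitly on objects and morphisms, verify the claimed properties one at a time, and then use the universal property of pullbacks to construct the adjoint $U_m$. On objects, given $(S', \sigma) \in C/S$ with $\sigma : S' \to S$, set $E_m(S', \sigma) = (S', m \circ \sigma)$; on a morphism $h : (S', \sigma) \to (S'', \sigma')$ (so $\sigma' \circ h = \sigma$) we let $E_m(h) = h$, which is still a morphism in $C/T$ since $(m \circ \sigma') \circ h = m \circ (\sigma' \circ h) = m \circ \sigma$. Functoriality is immediate because $E_m$ is the identity on underlying morphisms. For \emph{faithfulness}, if $E_m(h) = E_m(h')$ then $h = h'$ as arrows of $C$, hence as arrows of $C/S$. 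For \emph{fullness}, a morphism $E_m(S',\sigma) \to E_m(S'',\sigma')$ in $C/T$ is an arrow $h : S' \to S''$ in $C$ with $(m\circ\sigma')\circ h = m \circ \sigma$; since $m$ is monic this forces $\sigma' \circ h = \sigma$, so $h$ is already a morphism in $C/S$. For \emph{injectivity on objects} (making $E_m$ an embedding): if $(S',\sigma)$ and $(S'',\sigma')$ have $E_m$-images literally equal, then $S' = S''$ and $m \circ \sigma = m \circ \sigma'$, and monicity of $m$ gives $\sigma = \sigma'$. Thus $E_m$ is a full embedding.

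Next I would construct the right adjoint $U_m$, assuming $C$ has pullbacks along monomorphisms. Given $(T', \tau) \in C/T$ with $\tau : T' \to T$, form the pullback of $\tau$ along $m$:
\[
\begin{tikzcd}
P \ar{r}{p} \ar[swap]{d}{q} & T' \ar{d}{\tau} \\
S \ar[swap]{r}{m} & T
\end{tikzcd}
\]
and set $U_m(T', \tau) = (P, q)$. On a morphism $k : (T',\tau) \to (T'',\tau')$ in $C/T$, the arrow $k \circ p : P \to T''$ satisfies $\tau' \circ (k \circ p) = \tau \circ p = m \circ q$, so by the universal property of the pullback defining $U_m(T'',\tau')$ there is a unique arrow $U_m(k)$ over $S$; uniqueness gives functoriality. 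The adjunction $E_m \dashv U_m$ then follows from the universal property: a morphism $E_m(S',\sigma) \to (T',\tau)$ in $C/T$ is an arrow $g : S' \to T'$ with $\tau \circ g = m \circ \sigma$, which is exactly a cone over the pullback span, hence corresponds bijectively to an arrow $S' \to P$ over $S$, i.e.\ a morphism $(S',\sigma) \to U_m(T',\tau)$ in $C/S$; naturality in both variables is routine diagram-chasing.

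Finally, for the claim that $U_m \circ E_m$ is the identity functor, I would observe that for $(S', \sigma) \in C/S$ the object $E_m(S',\sigma) = (S', m\circ\sigma)$, and the pullback of $m \circ \sigma$ along $m$ has $(S', \sigma)$ (with identity and $\sigma$ as projections) as a cone; the point is that since $m$ is monic, this cone \emph{is} the pullback — a square with $m$ monic on one side and $\mathrm{id}$-like behaviour is a pullback precisely because $m$ reflects the required equalities. Checking this carefully, together with confirming that the induced action on morphisms is also the identity, is the one spot that needs a short argument rather than a one-line remark; concretely, the counit of the adjunction at objects in the image of $E_m$ is an isomorphism, and one checks it is in fact the identity. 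The main obstacle throughout is not conceptual difficulty but bookkeeping: keeping straight which triangles live over $S$ versus over $T$, and invoking monicity of $m$ at exactly the three places it is needed (fullness, injectivity on objects, and $U_m E_m = \mathrm{id}$).
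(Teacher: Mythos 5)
Your proposal is correct and follows essentially the same route as the paper's proof: $E_m$ by postcomposition with $m$ (fullness from monicity), $U_m$ by pullback along $m$, and the adjunction from the pullback's universal property. Your explicit check that the identity cone over $m\circ\sigma$ and $m$ is already a pullback (via monicity of $m$) is a slightly more careful justification of $U_m \circ E_m = 1_{C/S}$ than the paper's bare assertion that the unit is $1_X$, but it is the same argument in substance.
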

\begin{proof}
    If $(X,\tau_X)$ is an object of $C/S$, we define $E_m(X,\tau_X) =
    (X,m \circ \tau_X)$, and let $E_m$ be the identity on morphisms.
    Recall that a morphism $f : X \rightarrow Y$ of $C$ is a morphism $f
    : (X,\tau_X) \rightarrow (Y,\tau_Y)$ of $C/S$ if and only if
    \[ \tau_Y \circ f = \tau_X \]
    But if this is the case, then we must have
    \[ (m \circ \tau_Y) \circ f = m \circ \tau_X \]
    and hence $E_m$ is a well-defined functor.  Since $m$ is monic, the
    converse is also true, and so $E_m$ is full.

    Now suppose $C$ has pullbacks along monomorphisms; we need to show
    that $E_m$ has a right adjoint $U_m : C/T \rightarrow C/S$.  Given
    an object $(X',\tau_{X'})$ in $C/T$, we define $U_m(X',\tau_{X'})$
    to be $(X,\tau_X)$ in the following pullback:
    \[
        \begin{tikzcd}
            X' \arrow{r}{\tau_{X'}}
            & T
            \\ X \arrow{u}{\iota_X} \arrow[swap]{r}{\tau_X}
                 \NEbracket
            & S \arrow[swap]{u}{m}
        \end{tikzcd}
    \]
    which exists since $m$ is a monomorphism.  Now let $f' : X'
    \rightarrow Y'$ be a morphism of $C/T$.  We define $U_m(f')$ to be
    $f$ in the following diagram, which exists and is unique by
    pullback:
    \[
        \begin{tikzcd}
            X' \arrow[swap]{r}{f'}
               \arrow[bend left]{rr}{\tau_{X'}}
            & Y' \arrow[swap]{r}{\tau_{Y'}}
            & T
            \\
            & Y \arrow{u}{\iota_Y} \arrow{r}{\tau_Y}
                \NEbracket
            & S \arrow[swap]{u}{m}
            \\
            X
                \arrow{uu}{\iota_X}
                \arrow[bend right]{urr}{\tau_X}
                \arrow[dashed]{ur}{f}
            &&
        \end{tikzcd}
    \]

    The natural transformations $\eta_{(X,\tau_X)} = 1_X$ and
    $\epsilon_{(X',\tau_{X'})} = \iota_X$ witness that $U_m$ is right
    adjoint to $E_m$, and that $U_m \circ E_m = 1_{C/S}$.
\end{proof}

\begin{remark}
    Note that it is the case that, for any graph $T$, monomorphisms are
    exactly injective functions in both \catGraph and $\catGraph/T$.  It
    follows that if $m : S \rightarrow T$ is a graph monomorphism, $E_m$
    and $U_m$ in the above theorem preserve and reflect monomorphisms.
\end{remark}

We will sometimes, as in the next proposition, use the term
\textit{pushout of monomorphisms}; this refers to the pushout of a span
of monomorphisms, as distinct from a \textit{pushout along a
monomorphism} where only one arrow in the span needs to be monic.

\begin{proposition}\label{prop:slice-cat-func-pb}
    Let $S$ and $T$ be graphs and $m : S \rightarrow T$ a monomorphism.
    Then the functors $E_m : \catGraph/S \rightarrow \catGraph/T$ and
    $U_m : \catGraph/T \rightarrow \catGraph/S$ from theorem
    \ref{thm:slice-cat-mono-func} preserve pushouts of monomorphisms.
\end{proposition}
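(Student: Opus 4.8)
The statement splits into two parts, of which the case of $E_m$ is immediate and the case of $U_m$ carries all the content. Throughout, I will rely on two standard facts about slice categories, both already visible in Theorem~\ref{thm:slice-cat-mono-func} and the remark following it: the forgetful functor $\catGraph/X \to \catGraph$ creates colimits, and a morphism of $\catGraph/X$ is monic exactly when its underlying graph morphism is.

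For $E_m$: since $E_m$ leaves underlying graphs and morphisms untouched and merely postcomposes typing morphisms with $m$, it sends a pushout of monomorphisms in $\catGraph/S$ -- that is, an underlying pushout square of monos in $\catGraph$ carrying compatible maps to $S$ -- to the same underlying pushout square, now carrying compatible maps to $T$, the span still being monic. Since $\catGraph/T \to \catGraph$ creates colimits, this is a pushout in $\catGraph/T$. (Alternatively, one can note that $E_m$ is a left adjoint by Theorem~\ref{thm:slice-cat-mono-func} and so preserves all colimits, though for the direct argument one does not even need $\catGraph$ to have pullbacks.)

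For $U_m$: this functor is pullback along $m$, sending $(X',\tau_{X'})$ to $X'\times_T S$ with its projection to $S$. By the two facts above, the claim becomes the assertion that pulling back a pushout square of monomorphisms in $\catGraph$ along an arbitrary morphism yields again a pushout square -- monicity of the resulting span being automatic, since monos are stable under pullback. The plan here is to run the van Kampen argument. Being a presheaf topos, $\catGraph$ is adhesive, so a pushout of a span of monos in $\catGraph$ is a van Kampen square. Given such a pushout with apex $C$, monic legs $C\to A$ and $C\to B$, and cospan $A\to D\leftarrow B$, where $D$ carries a typing map $q : D\to T$ through which the typings of $A,B,C$ all factor, I would form the comparison cube whose bottom face is this square and whose top face has corners $X\times_T S$ for $X\in\{C,A,B,D\}$, with the four vertical edges the pullback projections. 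Repeated use of the pullback-pasting lemma, together with the factorisations through $q$, identifies each $X\times_T S$ with the pullback of $D\times_T S$ along $X\to D$, and hence shows that all four side faces of the cube are pullbacks. In particular the two faces sitting over the span legs $C\to A$ and $C\to B$ are pullbacks, so the van Kampen condition forces the top face to be a pushout because the other two side faces are pullbacks too. Reading the top face as $U_m$ applied to the original square, re-attaching the projections to $S$, and using once more that $\catGraph/S\to\catGraph$ creates colimits, we obtain the desired pushout in $\catGraph/S$.

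The step I expect to be fiddly is the cube bookkeeping: arranging the cube so that the pullback \emph{hypotheses} of the van Kampen condition land on the two faces over the span legs, and applying pullback pasting carefully enough to see each pulled-back corner simultaneously as $X\times_T S$ (so that the top face genuinely is $U_m$ of the square) and as $X\times_D(D\times_T S)$ (so that the side faces genuinely are pullbacks). The remaining ingredients -- stability of monos under pullback, creation of colimits by the forgetful functors -- are routine. If one would rather avoid the van Kampen machinery, two shortcuts are available: pushouts and pullbacks in $\catGraph$ are computed componentwise in $\catSet$, where the statement reduces to the elementary fact that taking preimages along a function commutes with gluing two subsets along their intersection; or, since $\catGraph$ is locally cartesian closed, $U_m$ has a right adjoint (the dependent product along $m$) and therefore preserves all colimits outright.
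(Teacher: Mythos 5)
Your proposal is correct, but it takes a genuinely different route from the paper for the $U_m$ half. The paper handles $E_m$ exactly as your parenthetical does (left adjoints preserve pushouts), but for $U_m$ it argues directly: using the \catSet-based construction of $\catGraph$, a commuting square of monomorphisms is a pushout precisely when the two images cover the apex and intersect in the image of the common subobject, and the paper then chases elements through the defining pullback squares to show that membership in $\im(b)$ is both preserved and reflected by the projections $\iota$ -- this is essentially your first ``shortcut'' about preimages commuting with gluing along an intersection. Your main argument instead invokes adhesivity of the presheaf topos $\catGraph$ and runs the van Kampen cube, which is correct (the side faces are pullbacks by pasting, the VK condition then forces the top face to be a pushout) but imports machinery the paper deliberately avoids; note the paper explicitly declines to frame things in terms of (partial) adhesive categories elsewhere. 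Your second shortcut -- $U_m$ is a pullback functor in a locally cartesian closed category, hence itself a left adjoint and so preserves \emph{all} colimits -- is the slickest of the three and proves strictly more than the stated proposition, whereas the paper's element-wise characterisation genuinely needs all arrows in the square to be monic. Each approach trades generality for self-containedness: the paper's is elementary and checkable by hand, yours localises the work in standard categorical facts.
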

\begin{proof}
    All left adjoint functors preserve pushouts, and so $E_m$ does in
    particular.

    Consider the following pushout of monomorphisms in $\catGraph/T$:
    \[
        \begin{tikzcd}
            A \rar{a_1} \dar[swap]{a_2} & B \dar{b} \\
            C \rar[swap]{c} & D \NWbracket
        \end{tikzcd}
    \]
    By the \catSet-based construction of $\catGraph$, we know that the
    above is a pushout if and only if $D$ is covered by $\im(b)$ and
    $\im(c)$ and the intersection of those images is exactly the image
    of $b \circ a_1 = c \circ a_2$ (note that this is only true because
    all the arrows in the diagram are monomorphisms).

    We know that the following diagram commutes and consists of
    monomorphisms in $C/S$:
    \[
        \begin{tikzcd}[column sep=large]
            U_m(A) \rar{U_m(a_1)} \dar[swap]{U_m(a_2)}&
            U_m(B) \dar{U_m(b)} \\
            U_m(C) \rar[swap]{U_m(c)} &
            U_m(D)
        \end{tikzcd}
    \]

    Recall the construction of, for example, the map $U_m(b)$:
    \[
        \begin{tikzcd}[column sep=huge]
            B \arrow[swap]{r}{b}
               \arrow[bend left]{rr}{\tau_{B}}
            & D \arrow[swap]{r}{\tau_{D}}
            & T
            \\
            & U_m(D) \arrow{u}{\iota_D} \arrow{r}{\tau_{U_m(D)}}
                \NEbracket
            & S \arrow[swap]{u}{m}
            \\
            U_m(B)
                \arrow{uu}{\iota_B}
                \arrow[bend right,swap]{urr}{\tau_{U_m(B)}}
                \arrow[dashed]{ur}{U_m(b)}
            &&
        \end{tikzcd}
    \]
    It suffices to show that, for each $x \in U_m(D)$, $x \in
    \im(U_m(b))$ if and only if $\iota_D(x) \in \im(b)$ and similarly
    for the other morphisms of the pushout.  The argument is the same
    for each morphism, so we will only treat $b$.  We can see from the
    left square of the diagram that if $x \in \im(U_m(b))$, then
    $\iota_D(x) \in \im(b)$.

    For the converse, suppose $\iota_D(x) \in \im(b)$, and let $y$ be
    its (unique) preimage under $b$.  Now, $\tau_D(\iota_D(x)) \in
    \im(m)$, by the pullback square defining $U_m(D)$.  So then
    $\tau_B(y) \in \im(m)$, since $\tau_D \circ b = \tau_B$.  So $y \in
    \im(\iota_B)$ by the pullback defining $U_m(B)$ and its preimage
    must map to $x$ under $U_m(b)$.  Hence $x \in \im(U_m(b))$.
\end{proof}

\subsection{String Graphs} 
\label{sec:string-graphs}

Arbitrary graphs, however, are too general for our purposes.  They have
no concept of inputs or outputs, and so there is no clear way to
represent a morphism.  We therefore make use of \textit{string graphs},
introduced as \textit{open graphs} in \cite{Dixon2010} and further
refined in \cite{KissingerDPhil}.  These separate vertices into two
kinds: \textit{node-vertices} and \textit{wire-vertices}.  One way of
viewing these is that, in translating the diagrams we saw in chapter
\ref{ch:diag-reas} into string graphs, node-vertices correspond to the
nodes of those diagrams, while wire-vertices ``hold the wires in
place''.  For example, the diagram
\[ \input{mult-diag.tikz} \]
could be represented as
\[ \input{mult-graph.tikz} \]
where the larger white circle is a node-vertex and the smaller black
ones are wire-vertices.

Just as terms were defined relative to a signature, we will define
string graphs relative to a compressed monoidal signature.  We will use
a typegraph, so that the type of each node-vertex indicates which
morphism symbol it represents, and the type of each wire-vertex
indicates which object symbol it represents.

The following is adapted from the definition of a derived typegraph
(\cite{KissingerDPhil}, pp 88).

\begin{definition}\label{def:sg-monoidal-typegraph}
    Given a compressed monoidal signature $T =
    (O,M,\dom,\cod)$, the \textit{derived compressed typegraph}
    $\mathcal{G}_T$ has vertices $O + M$,
    a self-loop $\textrm{mid}_X$ for every $X \in O$ and, for every $f
    \in M$,
    \begin{itemize}
        \item an edge $\textrm{in}_{f,i}^a$ from $X$ to $f$, where
            $(X,a) = \dom(f)[i]$, for each index $i$ of the list
            $\dom(f)$, and
        \item an edge $\textrm{out}_{f,i}^a$ from $f$ to $X$, where
            $(X,a) = \cod(f)[i]$, for each index $i$ of the list
            $\cod(f)$.
    \end{itemize}
\end{definition}

\begin{example}
    Recall the compressed monoidal signature from example
    \ref{ex:compressed-mon-sig}:
    \begin{align*}
        f &: [A^\varar] \rightarrow [A^\fixar,A^\fixar] \\
        g &: [A^\fixar,B^\varar] \rightarrow [A^\fixar]
    \end{align*}
    The derived compressed typegraph would then be
    \[ \input{example-typegraph.tikz} \]

    The typegraph for the (spider) Z/X calculus (section
    \ref{sec:quantum}) is very simple:
    \[ \input{z-x-typegraph.tikz} \]
\end{example}

Let $T = (O,M,\dom,\cod)$ be a compressed monoidal signature.  Then for
a graph $(G,\tau)$ in $\catGraph/\mathcal{G}_T$ and a vertex $v$ in $G$,
if $\tau(v)$ is in $O$, we call $v$ a \textit{wire-vertex}.  Otherwise,
$\tau(v)$ must be in $M$, and we call $v$ a \textit{node-vertex}.  We
denote the set of wire-vertices $W(G)$ and the set of node-vertices
$N(G)$.  If $v$ is a node-vertex and $e$ is an edge incident to $v$, we
call $e$ \textit{variable-arity} if $\tau(e)$ is \varar-tagged (eg:
$\textrm{in}^\varar_{f,1}$) and \textit{fixed-arity} if it is
\fixar-tagged (eg: $\textrm{in}^\fixar_{g,1}$).  The \textit{fixed edge
neighbourhood} $N^\fixar(v)$ of a node-vertex $v$ is the set of edges in
the edge neighbourhood of $v$ that are fixed-arity.

In general, we will depict wire-vertices as small black dots.

\begin{definition}[Arity-matching]\label{def:sg-arity-matching}
    A map $f$ between $\mathcal{G}_T$-typed graphs $G$ and $H$ is
    \textit{arity-matching} if for every $v \in N(G)$, the restriction
    of $f$ to the fixed edge neighbourhood of $v$ is a bijection onto
    the fixed edge neighbourhood of $f(v)$.
\end{definition}

Note that, by considering $\mathcal{G}_T$ as the typed graph
$(\mathcal{G}_T,1_{\mathcal{G}_T})$, we can view typing morphisms as
morphisms of $\catGraph/\mathcal{G}_T$, and hence apply the above
terminology of arity-matching to typing morphisms.

The following is adapted from the definition of a string graph in
\cite{KissingerDPhil}, pp 89; it is equivalent to the original
definition if $\mathcal{G}_T$ has no variable-arity edges.

\begin{definition}[String Graph]\label{def:string-graph}
    A $\mathcal{G}_T$-typed graph $(G,\tau_G) \in
    \catGraph/\mathcal{G}_T$ is a \textit{string graph} if $\tau_G$ is
    arity-matching and each wire-vertex in $G$ has at most one incoming
    edge and at most one outgoing edge.  The category $\catSGraph_T$ is
    the full subcategory of $\catGraph/\mathcal{G}_T$ whose objects are
    string graphs.
\end{definition}

We refer to a wire-vertex of a string graph $G$ with no incoming edges
as an \textit{input}, and write the set of all inputs $\In(G)$.
Similarly, a wire-vertex with no outgoing edges is called an
\textit{output}, and the set of all such vertices is written $\Out(G)$.
The inputs and outputs together form the \textit{boundary} of the graph,
written $\Bound(G)$.

We will only consider finite string graphs and finite graphs in
$\catSGraph_T$, since we are only interested in graphs that are
tractable for computers.

\begin{lemma}
    \label{lemma:sg-morphism-bounds}
    If $f:G \rightarrow H$ is a morphism in $\catSGraph$ and $v$ is a vertex in
    $G$ that maps to an input (respectively output) of $H$ under $f$, then $v$
    must be an input (respectively output) of $G$.
\end{lemma}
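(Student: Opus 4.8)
The plan is to argue by contraposition on the definition of inputs and outputs together with the fact that a string graph morphism is a graph morphism and must preserve incidence and typing. Suppose $v$ is a vertex of $G$ with $f(v)$ an input of $H$; I want to show $v$ is an input of $G$, i.e. $v$ is a wire-vertex with no incoming edge. First I would observe that since $f(v)$ is a wire-vertex (inputs are wire-vertices by definition) and $f$ respects the typing morphisms $\tau_G = \tau_H \circ f$, the type $\tau_G(v) = \tau_H(f(v))$ lies in $O$, so $v$ is itself a wire-vertex. It then remains to rule out $v$ having an incoming edge.

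The key step is this: suppose for contradiction that $e$ is an edge of $G$ with target $v$. Then $f(e)$ is an edge of $H$ whose target is $f(v)$, because graph morphisms commute with the source and target maps $s,t$ (this is exactly the naturality condition built into the functor-category definition of $\catGraph$, carried through to $\catGraph/\mathcal{G}_T$). But $f(v)$ being an input of $H$ means $f(v)$ has no incoming edge, contradicting the existence of $f(e)$. Hence $v$ has no incoming edge, and being a wire-vertex with no incoming edge it is by definition an input of $G$. The argument for outputs is identical with ``incoming'' replaced by ``outgoing'' and ``target'' by ``source''.

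I expect the only subtlety — hardly an obstacle — to be making explicit that a morphism in $\catSGraph$ (a full subcategory of $\catGraph/\mathcal{G}_T$) really does act on edges in a way compatible with $s$ and $t$; this is immediate from unpacking the definitions in Section~\ref{sec:graphs}, since such a morphism is first of all a natural transformation between functors out of the two-object category $E \rightrightarrows V$, whose naturality squares at the two arrows $s$ and $t$ say precisely that source and target are preserved. Everything else is a direct application of the definitions of \emph{input}, \emph{output}, and \emph{wire-vertex}, so no structural result beyond those definitions is needed.
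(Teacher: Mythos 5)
Your proposal is correct and takes essentially the same route as the paper: an incoming edge $e$ of $v$ would be sent to an incoming edge $f(e)$ of $f(v)$ because graph morphisms commute with the target map, contradicting $f(v)$ being an input, with the output case symmetric. The only difference is that you additionally spell out why $v$ is a wire-vertex via the typing morphisms, a detail the paper's proof leaves implicit.
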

\begin{proof}
    Suppose $e$ is an edge of $G$ such that $t_G(e) = v$.  Then
    $t_H(f(e))$ must be $f(v)$, and so $f(v)$ cannot be an input unless
    $v$ is.  The output case is symmetric.
\end{proof}

It is worth noting that not every subgraph of a string graph (in
$\catGraph/\mathcal{G}_T$) is a string graph.  In particular, if $G$ is a string
graph, $n$ is a node-vertex in $G$ and $e$ is a fixed-arity edge
incident to $n$, a subgraph of $G$ that contains $n$ but not $e$ will
fail to satisfy the arity-matching requirement of the typing morphism.
However, the intersection and union of two string graphs are still
string graphs.

\begin{proposition}\label{prop:sg-subgraph-ops}
    Let $G$ and $H$, both string graphs, be subgraphs of the string
    graph $K$.  Then $G \cap H$ and $G \cup H$ are both string graphs.
\end{proposition}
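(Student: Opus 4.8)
The plan is to verify, for each of $G \cap H$ and $G \cup H$, the two conditions of Definition~\ref{def:string-graph}: that the (inherited) typing morphism is arity-matching, and that every wire-vertex has at most one incoming and at most one outgoing edge. First I would observe that $G \cap H$ and $G \cup H$ are subgraphs of $K$ in the evident way, their vertex and edge sets being respectively the intersection and the union of those of $G$ and $H$ (these are closed under the source and target maps because $G$ and $H$ are). Hence each carries a typing morphism into $\mathcal{G}_T$, namely the restriction of $\tau_K$, and is an object of $\catGraph/\mathcal{G}_T$. Throughout I will use two trivial facts: for any subgraph of $K$, the edge neighbourhood of a vertex is contained in its edge neighbourhood in $K$, and whether an edge is fixed-arity depends only on its type, not on which subgraph one regards it as living in.

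The wire-vertex condition is then immediate for both graphs: a wire-vertex of $G \cap H$ or $G \cup H$ is also a wire-vertex of the string graph $K$, and its edge neighbourhood in the subgraph is contained in its edge neighbourhood in $K$, which already has at most one incoming and at most one outgoing edge. For arity-matching, injectivity of the restricted typing map on the fixed edge neighbourhood of a node-vertex $n$ is inherited from $K$ for the same reason: that neighbourhood is a subset of the fixed edge neighbourhood of $n$ in $K$, on which $\tau_K$ restricts to a bijection. So in both cases the only remaining point is surjectivity onto the fixed edge neighbourhood of $\tau_K(n)$. For $G \cup H$ this is easy: $n$ lies in $G$ or in $H$, say in $G$, and since $G$ is a string graph its fixed edge neighbourhood of $n$ already surjects onto that of $\tau_K(n)$ and sits inside the fixed edge neighbourhood of $n$ in $G \cup H$.

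The intersection case is the crux, and it is exactly the subtlety flagged just before the proposition: deleting a fixed-arity edge while retaining its node-vertex destroys arity-matching, so the point is to show that no such edge gets deleted. I would argue that the whole fixed edge neighbourhood of $n$ in $K$ already lies in $G \cap H$. Given a fixed-arity edge at $\tau_K(n)$, arity-matching for $K$ gives it a unique preimage among the fixed-arity edges of $K$ incident to $n$, and arity-matching for $G$ gives it a unique preimage among the fixed-arity edges of $G$ incident to $n$; since $G \subseteq K$ the two coincide, so this edge lies in $G$, and by the same argument it lies in $H$, hence in $G \cap H$. Thus the fixed edge neighbourhood of $n$ in $G \cap H$ equals that in $K$, and the bijection for $K$ restricts to the required one. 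I expect this intersection step to be the main obstacle: one must check carefully that no arity-matching edge is lost, and the argument genuinely relies on the common ambient graph $K$ being a string graph, not merely $G$ and $H$, so that a shared node-vertex carries literally the same fixed-arity edges in $G$, in $H$ and in $K$.
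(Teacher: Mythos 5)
Your proposal is correct and follows essentially the same route as the paper's proof: the wire-vertex condition and injectivity of the typing map on fixed edge neighbourhoods are inherited by any subgraph of $K$, so only surjectivity needs checking, and for the intersection this comes down to the preimages of a fixed-arity typegraph edge in $G$ and in $H$ coinciding as edges of $K$. You spell out that coincidence step (via uniqueness of the preimage in $K$'s fixed edge neighbourhood) more explicitly than the paper does, but the argument is the same.
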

\begin{proof}
    Any subgraph of a string graph satisfies the requirements about
    wire-vertices, so we only have to consider whether the typing
    morphisms are arity-matching.  What is more, the typing morphism of
    any subgraph of a string graph must be injective at the fixed-arity
    neighbourhood of any of its node-vertices; we just need to consider
    sujectivity.

    Let $n$ be a node-vertex in $G \cap H$ and let $\tau_{G\cap H}^n$ be
    the restriction of $\tau_{G\cap H}$ to the fixed-arity
    neighbourhood of $n$.  Let $e$ be a fixed-arity edge adjacent to
    $\tau_{G\cap H}(n)$ in the typegraph.  Then both $\tau_G^n$ and
    $\tau_H^n$ must have $e$ in their images, and hence so much
    $\tau_{G\cap H}^n$.  Thus $\tau_{G\cap H}^n$ is surjective onto the
    fixed-arity neighbourhood of the typegraph, and so $\tau_{G\cap H}$
    is arity-preserving.

    The argument for $G \cup H$ is similar.
\end{proof}

\begin{remark}
    The construction of the typegraph $\mathcal{G}_T$ means that there
    can be no edges between node-vertices; any connection between
    node-vertices is mediated by wire-vertices.  Since wire-vertices in
    string graphs have at most one input and one output, every string
    graph $G$ is simple, in the sense that for any two vertices $v,w$ of
    $G$, there is at most one edge between $v$ and $w$ in each
    direction.
\end{remark}

\begin{proposition}\label{prop:sg-arity-matching}
    Every morphism in $\catSGraph_T$ is arity-matching.
\end{proposition}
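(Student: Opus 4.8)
The plan is to exploit the fact that, by Definition \ref{def:string-graph}, the typing morphisms of both the source and target string graphs are already arity-matching, and then to transport this property along the commuting triangle that every $\catGraph/\mathcal{G}_T$-morphism is required to make.

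Concretely, let $\phi : G \to H$ be a morphism of $\catSGraph_T$ and fix a node-vertex $v \in N(G)$. First I would note that $\phi(v)$ is again a node-vertex: since $\tau_H \circ \phi = \tau_G$ we have $\tau_H(\phi(v)) = \tau_G(v) \in M$. Next, because $\tau_H(\phi(e)) = \tau_G(e)$ for every edge $e$ of $G$, the type of an edge --- and hence whether it is $\varar$-tagged or $\fixar$-tagged --- is preserved by $\phi$; consequently $\phi$ carries the fixed edge neighbourhood $N^\fixar(v)$ into $N^\fixar(\phi(v))$, so it restricts to a map $\phi_v : N^\fixar(v) \to N^\fixar(\phi(v))$. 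It then remains only to show $\phi_v$ is a bijection.

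Applying the arity-matching condition of Definition \ref{def:sg-arity-matching} to the typing morphisms themselves (viewing $\mathcal{G}_T$ as the $\mathcal{G}_T$-typed graph $(\mathcal{G}_T, 1_{\mathcal{G}_T})$, as observed just after that definition), the restrictions $\tau_G|_{N^\fixar(v)} : N^\fixar(v) \to N^\fixar(\tau_G(v))$ and $\tau_H|_{N^\fixar(\phi(v))} : N^\fixar(\phi(v)) \to N^\fixar(\tau_H(\phi(v)))$ are both bijections; and since $\tau_H(\phi(v)) = \tau_G(v)$ their codomains are literally the same set of edges of $\mathcal{G}_T$. Restricting the identity $\tau_H \circ \phi = \tau_G$ to $N^\fixar(v)$ gives $\tau_H|_{N^\fixar(\phi(v))} \circ \phi_v = \tau_G|_{N^\fixar(v)}$, so $\phi_v = (\tau_H|_{N^\fixar(\phi(v))})^{-1} \circ \tau_G|_{N^\fixar(v)}$ is a composite of two bijections, hence itself a bijection. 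This is exactly the arity-matching condition for $\phi$ at $v$, and since $v$ was arbitrary we are done.

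There is no substantial obstacle here; the only thing requiring care is the bookkeeping --- keeping the three finite edge sets (the fixed neighbourhood of $v$ in $G$, of $\phi(v)$ in $H$, and of $\tau_G(v) = \tau_H(\phi(v))$ in $\mathcal{G}_T$) distinct, and confirming that the commuting triangle of the typed-graph morphism genuinely restricts to these fixed-arity neighbourhoods. Both points follow immediately from the fact that $\phi$ preserves edge types.
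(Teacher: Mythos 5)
Your argument is correct and is essentially the paper's own proof: both exploit the commuting triangle $\tau_H \circ \phi = \tau_G$ together with the arity-matching of the typing morphisms to conclude that $\phi$ restricts to a bijection on each fixed edge neighbourhood. Your version is slightly more explicit in writing the restriction of $\phi$ as a composite of the two typing bijections, but this is only a presentational difference.
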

\begin{proof}
    Let $G$ and $H$ be string graphs and $f : G \rightarrow H$ a graph
    morphism.  Consider $v \in N(G)$.  Since $\tau_G = \tau_H \circ f$,
    we know that the restriction of these maps to $N^\fixar(v)$ is
    identical.  But $\tau_G$ is arity-matching, and so its restriction
    to $N^\fixar(v)$ is a bijection, and hence the same is true of
    $\tau_H \circ f$.  But that means that $f$ restricted to
    $N^\fixar(v)$ must also be a bijection, as required.
\end{proof}

The following is adapted from a similar proof in \cite{KissingerDPhil},
pp 90.  We can conclude from it that $\catSGraph_T$ is what Kissinger
calls a \textit{partial adhesive category}; we have chosen not to use
this notion on the basis that it does not improve the clarity of the
work.

\begin{proposition}\label{prop:sg-mono-inj}
    A morphism of $\catSGraph_T$ is monic if and only if it is
    injective.
\end{proposition}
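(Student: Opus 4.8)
The plan is to treat the two implications separately. The forward direction is immediate: if $f$ is injective on both vertices and edges and $g,h : K \rightarrow G$ are parallel morphisms of $\catSGraph_T$ with $f \circ g = f \circ h$, then $f(g(x)) = f(h(x))$ for every vertex or edge $x$ of $K$, so $g(x) = h(x)$, hence $g = h$ and $f$ is monic. (Alternatively one may cite the remark following Theorem~\ref{thm:slice-cat-mono-func} that monomorphisms in $\catGraph/\mathcal{G}_T$ are exactly the injections, together with the fact that $\catSGraph_T$ is a \emph{full} subcategory of $\catGraph/\mathcal{G}_T$.)

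For the converse, suppose $f : G \rightarrow H$ is monic; the goal is to show $f$ is injective. I will first reduce this to injectivity on vertices. If $e_1,e_2$ are edges of $G$ with $f(e_1) = f(e_2)$, then $f(s_G(e_1)) = s_H(f(e_1)) = s_H(f(e_2)) = f(s_G(e_2))$ and likewise $f(t_G(e_1)) = f(t_G(e_2))$; so once we know $f$ is vertex-injective, $e_1$ and $e_2$ share a source and a target, and since every string graph is simple (the remark above) we get $e_1 = e_2$. It therefore suffices to derive a contradiction from the existence of distinct vertices $v_1 \neq v_2$ of $G$ with $f(v_1) = f(v_2)$. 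From $\tau_G = \tau_H \circ f$ we see $v_1$ and $v_2$ carry the same type, so they are both wire-vertices or both node-vertices.

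In the wire-vertex case, with common type $X \in O$, take $K$ to be the string graph with a single wire-vertex of type $X$ and no edges (vacuously arity-matching, and its lone wire-vertex has no incoming or outgoing edge). A morphism $K \rightarrow G$ is just a choice of wire-vertex of $G$ of type $X$; the choices $v_1$ and $v_2$ give distinct morphisms that become equal after composing with $f$, contradicting monicity. In the node-vertex case, take $K$ to be the subgraph of $G$ consisting of $v_1$, all fixed-arity edges incident to $v_1$, and the wire-vertices at their other ends. This $K$ is a string graph: it contains every fixed-arity edge of its only node-vertex $v_1$, so its typing morphism is arity-matching, and each of its wire-vertices has exactly one incident edge. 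Let $\iota : K \hookrightarrow G$ be the inclusion. Because $\tau_G$ is arity-matching at $v_2$ and $v_2$ has the same type as $v_1$, there is a morphism $\psi : K \rightarrow G$ with $\psi(v_1) = v_2$, sending each fixed-arity edge of $v_1$ in $K$ to the unique fixed-arity edge of $v_2$ in $G$ of the same type, with endpoints determined accordingly. Then $\iota \neq \psi$ since $\iota(v_1) = v_1 \neq v_2 = \psi(v_1)$, but $f \circ \iota = f \circ \psi$: both send $v_1$ to $f(v_1) = f(v_2)$, and, using arity-matching of $f$ (Proposition~\ref{prop:sg-arity-matching}) together with $\tau_G = \tau_H \circ f$, both send each fixed-arity edge of $v_1$ to the unique fixed-arity edge of $f(v_1)$ of the corresponding type. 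This contradicts $f$ being monic, so $f$ is injective on vertices, and hence injective.

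I expect the node-vertex case to be the only real obstacle: since an isolated node-vertex is not a string graph, the test object must be padded with the obligatory fixed-arity edges and their wire-vertices, and one has to verify carefully that this padded graph is genuinely a string graph and that it admits exactly the two $\mathcal{G}_T$-typed morphisms into $G$ that are needed --- distinct, yet identified by $f$. The wire-vertex case and the forward direction are routine.
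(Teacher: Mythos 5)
Your proof is correct and follows essentially the same route as the paper's: reduce non-injectivity to vertex non-injectivity via simplicity, then build a small test string graph (a lone wire-vertex, or a node-vertex padded with its fixed-arity edges and their endpoints) admitting two distinct morphisms that $f$ equalises. The only cosmetic difference is that you construct the test graph inside $G$ around $v_1$ while the paper takes the corresponding subgraph of $H$ around the common image and maps it back to each preimage; the substance is identical.
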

\begin{proof}
    Since this holds in $\catGraph/\mathcal{G}_T$, any injective map in
    $\catSGraph_T$ must be monic in $\catGraph/\mathcal{G}_T$, and hence also
    monic in $\catSGraph_T$.

    Suppose we have a non-injective morphism $f: G \rightarrow H$ in
    $\catSGraph_T$.  Then $f$ must either map two or more edges in $G$
    to the same edge in $H$ or map two or more vertices in $G$ to the
    same vertex in $H$.  In fact, since $G$ and $H$ are both simple, $f$
    must map two or more vertices $v_i$ in $G$ to a single vertex $v$ in
    $H$.  Consider the subgraph $K$ of $H$ containing $f(v_i)$.  This
    must be the vertex, together with any incident fixed-arity edges
    (and the wire-vertices at the other end of those edges).  Then for
    each $v_i$, we construct the string graph morphism $g_i$ that takes
    the single vertex in $K$ to $v_i$, and the fixed-arity edges to the
    correct incident edges of $v_i$ (to make the typing morphisms
    commute).  Now all the $f \circ g_i$ are the same morphism, but the
    $g_i$ morphisms are distinct, so $f$ is not monic.
\end{proof}

For the rest of this chapter, we fix an arbitrary compressed monoidal
signature $T$, and objects and morphisms will be implicitly drawn from
$\catSGraph_T$ unless otherwise stated.

\subsubsection{Wires} 
\label{sec:sg-wires}

It will sometimes be useful to have a different view of a string graph,
where we treat each connected string of edges and wire-vertices as a
single unit.  As the name \textit{wire-vertex} suggests, we will refer
to these as \textit{wires}.

We use a different definition of wires (and wire homeomorphisms) to
\cite{KissingerDPhil}; our definition has several advantages, notably
allowing for proposition \ref{prop:wires-cover-graph} and being able to
refer to explicit wire homeomorphisms.  However, despite the different
approach, the two definitions are morally equivalent.

\begin{definition}[Wire]
    A \textit{wire path} of a string graph $G$ is a path in $G$
    containing at least one edge, whose internal vertices are all
    wire-vertices, and which is not contained in a longer path of this
    kind.

    A \textit{closed wire} of $G$ is a pair $(W_V,W_E)$
    such that there is a wire path $W$ of $G$ where $W_E$ is the set
    of edges in $W$ and $W_V$ is the set of vertices of $W$.

    An \textit{open wire} of $G$ is a pair $(W_V,W_E)$
    such that there is a wire path $W$ of $G$ where $W_E$ is the set
    of edges in $W$ and $W_V$ is the set of internal vertices of $W$,
    plus the start/end vertex if $W$ is a cycle.

    $\Wires(G)$ is the set of all open wires of a string graph $G$.
\end{definition}

Note that an isolated wire-vertex (with no incident edges) is
\textit{not} a wire path.

\begin{proposition}
    Let $G$ be a string graph.  Then two distinct wire paths either
    describe the same cycle or overlap only on their start and end
    vertices.
    \label{prop:no-wire-overlap}
\end{proposition}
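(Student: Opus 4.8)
The plan is to exploit the defining constraint on wire-vertices --- each has at most one incoming and at most one outgoing edge --- to show that once a wire path reaches an internal wire-vertex its continuation is forced. First I would record that an internal vertex $v$ of a wire path is necessarily a wire-vertex meeting exactly the two path-edges incident to it: being internal, $v$ lies on at least two distinct path-edges, while being a wire-vertex it has at most one incoming and at most one outgoing edge and hence total degree at most two. Thus these two edges are all the edges at $v$, and having entered $v$ along one of them a wire path must leave along the other --- its continuation is forced.

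Next I would prove the key lemma: if two wire paths $P$ and $Q$ share a single edge $e$, then they trace out the same sets of edges and vertices. The argument is an induction outward from $e$ in both directions. At each step the current frontier vertex is either (i) a node-vertex or a degree-one boundary wire-vertex, at which every wire path must terminate, so both $P$ and $Q$ stop there; or (ii) a degree-two wire-vertex, at which the forced-continuation observation determines a unique next edge common to both paths. Hence the trace proceeds identically for $P$ and $Q$ until it reaches a stopping vertex or closes up into a cycle. Since both paths contain $e$ and both are maximal, their edge sets (and so their vertex sets) coincide; for an open wire path this forces $P = Q$, while for a closed one it forces $P$ and $Q$ to describe the same cycle, possibly with different choices of start and end vertex.

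It then remains to reduce an arbitrary overlap to the shared-edge case. Suppose $P \neq Q$ share a vertex $v$ that is not merely a common start or end vertex, so that $v$ is internal to at least one of them, say $P$. Then $v$ is a degree-two wire-vertex, and both of its incident edges in $G$ lie on $P$. Since $v$ also lies on $Q$ and $Q$ has at least one edge, $Q$ is incident to some edge at $v$; as $v$ has only those two edges in $G$, that edge lies on $P$ as well, so $P$ and $Q$ share an edge and the lemma shows they describe the same cycle. Combining the two cases yields the dichotomy: distinct wire paths either describe the same cycle or meet only at their start and end vertices.

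The step I expect to be the main obstacle is the bookkeeping in this final reduction, specifically ensuring the argument fires only when a shared vertex is genuinely internal to one of the paths. The permitted situation --- overlap confined to start and end vertices --- arises exactly when the common vertices are node-vertices at which several wires terminate; since a node-vertex is never internal to a wire path, the shared-edge argument correctly fails to apply there, and I must phrase the reduction so as to respect this rather than derive a spurious contradiction. Handling the closed-wire and degenerate configurations (a single wire-vertex bearing a self-loop, or a two-vertex cycle) likewise needs a direct appeal to simplicity of the string graph and to the at-most-one-incoming/one-outgoing constraint in order to exclude branching; I would dispatch these by inspecting the local incidence structure directly.
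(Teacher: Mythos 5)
Your proof is correct and rests on the same mechanism as the paper's own argument: wire-vertices have at most one incoming and one outgoing edge, so wire paths cannot branch at them (your ``forced continuation''), and maximality rules out one path terminating where another continues, so divergence can only occur at start or end vertices. The paper compresses this into a short divergence argument, whereas you unpack it into an explicit shared-edge propagation lemma plus a reduction from vertex overlap to edge overlap --- essentially the same approach, carried out in more detail (including the cycle and self-loop edge cases, which you correctly flag and dispatch).
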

\begin{proof}
    The maximality requirement of wire paths means that, except in the
    case of cycles, the start or end vertex of one path cannot be an
    internal vertex of another.  So distinct paths with a shared
    component must diverge at some point.  However, the requirements of
    a string graph mean this cannot happen at a wire-vertex, and so
    can only happen at the start or end vertex.  Thus either the paths
    are identical (or, at least, describe the same cycle) or are
    distinct everywhere except the start and end vertices.
\end{proof}

This means there is a one-to-one correspondence between open and closed
wires.  In most cases, it will not matter whether we are referring to an
open or a closed wire, and so we will simply use the term \textit{wire}.
It also means that each wire path gives rise to only one closed wire and
one open wire, and only wire paths that describe the same cycle produce
the same wire.

\begin{proposition}
    The wire paths, and hence wires, of a string graph $G$ can be sorted
    into the following disjoint \textit{kinds}, based on their start and
    end vertices:
    \begin{itemize}
        \item \textit{circles}, where the start and end vertices are the
            same;
        \item \textit{interior wires}, where the start and end vertices
            are both node-vertices;
        \item \textit{bare wires}, where the start and end vertices
            are distinct wire-vertices, the start in $\In(G)$ and the
            end in $\Out(G)$;
        \item \textit{input wires}, where the start vertex is a
            wire-vertex in $\In(G)$ and the end vertex is a node-vertex;
            and
        \item \textit{output wires}, where the start vertex is a
            node-vertex and the end vertex is a wire-vertex in
            $\Out(G)$.
    \end{itemize}
    Additionally, all the wire-vertices of a wire path have the same
    type, and this is called the \textit{type} of the wire path (and of
    its derived open and closed wires).
\end{proposition}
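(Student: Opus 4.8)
The plan is to treat a wire path as a directed path $v_0 \to v_1 \to \cdots \to v_n$ and to exploit three structural features of string graphs: the typegraph $\mathcal{G}_T$ has no node-to-node edges and its only wire-to-wire edges are the self-loops $\textrm{mid}_X$; each wire-vertex has at most one incoming and at most one outgoing edge; and wire paths are maximal. The classification is then read off from the pair of endpoint types together with whether $v_0 = v_n$, and the three things requiring proof are (i) that the wire-vertices along a path share a type, (ii) that a wire-vertex endpoint of a non-cyclic path lies in the appropriate boundary, and (iii) that the five kinds are exhaustive and pairwise disjoint.

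For (i), I would observe that any two consecutive vertices $v_{i-1}, v_i$ that are both wire-vertices are joined by an edge whose type must be some $\textrm{mid}_X$, since these are the only wire-to-wire edges available in $\mathcal{G}_T$. Because the typing morphism commutes with source and target and $\textrm{mid}_X$ is a self-loop at $X$, both $v_{i-1}$ and $v_i$ are typed $X$. As the internal vertices are all wire-vertices, this shared type propagates along the path, and any wire-vertex endpoint agrees with its (wire-vertex) neighbour, so every wire-vertex of the path carries one common type.

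Step (ii) is the main obstacle. I would show that if the path is not a cycle and $v_0$ is a wire-vertex, then $v_0$ has no incoming edge, hence $v_0 \in \In(G)$. Suppose instead there were an incoming edge $e_0$ with source $u$, and consider prepending it to obtain $u \to v_0 \to \cdots \to v_n$. Its new internal vertex $v_0$ is a wire-vertex, so if this were a genuine (simple) path it would be a strictly longer wire path, contradicting maximality; hence $u$ must coincide with some $v_i$. Here the degree bounds do the work: $u = v_0$ would force a self-loop together with a second outgoing edge at $v_0$; $u = v_i$ for an internal $i$ would give the wire-vertex $v_i$ two distinct outgoing edges; and $u = v_n$ would make the original path a cycle, contrary to assumption. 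Each case is excluded, so no such $e_0$ exists. The mirror-image argument shows that a wire-vertex end $v_n$ of a non-cyclic path has no outgoing edge, so $v_n \in \Out(G)$.

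Finally, for (iii) I would case-split. If $v_0 = v_n$ the path is a circle. Otherwise $v_0$ and $v_n$ are distinct, and each is a node- or wire-vertex; using (ii) to place wire-vertex endpoints into $\In(G)$ or $\Out(G)$, the four combinations yield exactly interior, bare, input and output wires. Disjointness is then immediate once the coinciding-endpoint case is assigned to circles, since the remaining four kinds are separated by the node/wire types of their distinct endpoints. I expect no difficulty beyond (ii); the rest is bookkeeping against the typegraph and the degree constraints.
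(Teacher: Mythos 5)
Your proof is correct and takes essentially the same route as the paper's own (much terser) argument: maximality of wire paths forces any wire-vertex endpoint of a non-cyclic path into $\In(G)$ or $\Out(G)$, and the construction of $\mathcal{G}_T$, whose only wire-to-wire edges are the self-loops $\textrm{mid}_X$, gives the common type along the path. One small slip in your case analysis for (ii): when $u = v_n$, the edge from $v_n$ to $v_0$ does not make the original path a cycle; rather it extends it to a strictly longer wire path (a cycle) containing it, so this case is excluded by the same maximality argument you already use elsewhere, not by the non-cycle assumption.
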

\begin{proof}
    The main thing we need to note for the categorisation is that if the
    path is not a cycle (and hence not a circle), the maximality
    requirement on a wire path requires the start or end vertex to be in
    $\Bound(G)$ if it is a wire-vertex, since otherwise it could be an
    internal vertex of a longer wire path.

    The second part is a consequence of the construction of
    $\mathcal{G}_T$, which disallows edges between wire-vertices of
    distinct types.
\end{proof}

For wires other than circles, we call the start vertex of the path it is
derived from the \textit{source} of the wire, and the end vertex its
\textit{target}.  Closed wires include their source and target, while
open wires do not.

\begin{proposition}
    \label{prop:wires-cover-graph}
    Let $G$ be a string graph.  Then $N(G)$, $\Bound(G)$ and the
    elements of $\Wires(G)$ cover $G$ exactly (with no overlap).
\end{proposition}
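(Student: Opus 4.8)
The plan is to show that every vertex and every edge of $G$ belongs to exactly one of the three pieces: $N(G)$, $\Bound(G)$, or (the vertex/edge set of) some open wire in $\Wires(G)$. The vertex case splits immediately by the typing: a vertex is either a node-vertex or a wire-vertex. Node-vertices are covered precisely by $N(G)$ and appear in no open wire (open wires contain only wire-vertices as their elements — internal vertices, plus the start/end vertex only when the path is a cycle) and not in $\Bound(G)$. So the real content is to show every wire-vertex lies in exactly one of $\Bound(G)$ or one open wire, and that every edge lies in exactly one open wire.

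**First I would handle the edges.** Every edge $e$ of $G$ is part of at least one wire path: because of the construction of $\mathcal{G}_T$ there are no edges directly between node-vertices, so starting from $e$ and extending maximally through wire-vertices in both directions (this terminates since $G$ is finite) produces a wire path containing $e$. By Proposition~\ref{prop:no-wire-overlap}, two distinct wire paths overlap only at their start/end vertices, hence share no edge; so the wire path through $e$, and therefore the open wire it induces, is unique. Thus the edge sets of the open wires partition $E(G)$, and since node-vertices and boundary vertices carry no edges of their own in this decomposition, this already accounts for all edges with no overlap.

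**Next I would handle the wire-vertices.** Let $v$ be a wire-vertex. If $v$ has no incident edge, then $v$ is both an input and an output, so $v \in \Bound(G)$; and $v$ lies in no open wire, since an isolated wire-vertex is not a wire path (as noted after the definition of \emph{wire}) and the elements of an open wire are exactly the internal (and, for cycles, the start) vertices of some wire path, all of which have incident edges. If $v$ has at least one incident edge, pick such an edge and extend to a maximal wire path $W$ through $v$. If $v$ is internal to $W$ (it has both an incoming and an outgoing edge within $W$), then $v$ is neither an input nor an output, so $v \notin \Bound(G)$, and $v$ is an element of the open wire of $W$; uniqueness follows from Proposition~\ref{prop:no-wire-overlap}, since any other wire path containing $v$ would have to have $v$ as an internal vertex too, forcing it to describe the same cycle, hence the same open wire. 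If instead $v$ is an endpoint of $W$, then by the maximality argument in the kind-classification proposition $v$ must lie in $\Bound(G)$ (otherwise $W$ could be extended), and $v$ is not an element of the open wire of $W$ (endpoints are excluded unless the path is a cycle, in which case $v$ would be internal after all); nor is it an element of any other open wire, again by the non-overlap proposition.

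**The main obstacle** is the careful bookkeeping around the cycle (circle) case and around boundary wire-vertices, where "start/end vertex" and "internal vertex" interact subtly with the open-versus-closed distinction — in particular making sure that a wire-vertex that is the single start/end vertex of a circle is correctly counted as an element of exactly that one open wire and is correctly \emph{not} in $\Bound(G)$ (it has both an incoming and outgoing edge). Everything else is a routine case analysis leaning on Propositions~\ref{prop:no-wire-overlap} and the classification of wire kinds, together with the observation that $\mathcal{G}_T$ forbids node--node edges so every edge is genuinely carried by a wire.
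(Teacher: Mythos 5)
Your proposal is correct and follows essentially the same route as the paper's proof: coverage via extending edges and non-boundary wire-vertices to maximal wire paths, and disjointness via the typegraph, the internal-vertex characterisation of open wires, and Proposition~\ref{prop:no-wire-overlap}. Your version just carries out the case analysis (isolated wire-vertices, endpoints, circles) more explicitly than the paper does.
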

\begin{proof}
    Suppose $v$ is a wire-vertex of $G$ not in $\Bound(G)$.  Then it
    must have both an incoming and an outgoing edge.  These, together
    with their respective source and target, form a path whose internal
    vertices are wire-vertices.  This must either be a wire path of $G$
    or else be part of a longer wire path, and hence $v$ is part of an
    open wire of $G$.  Similarly, any edge of $G$ trivially forms a
    path whose internal vertices are wire-vertices, and is therefore
    contained in an open wire of $G$.  So these components cover $G$.

    $N(G)$ and $\Bound(G)$ cannot overlap, due to the typegraph.
    Similarly, open wires contain no node-vertices, and so cannot
    overlap $N(G)$.  Any vertex in an open wire is an internal vertex of
    a path, and so cannot be in $\Bound(G)$.  Finally, proposition
    \ref{prop:no-wire-overlap} gives us that no two open wires
    intersect.
\end{proof}

Wire homeomorphisms provide a way to map between graphs that are
equivalent under this view.

\begin{definition}[Wire Homeomorphism]
    A \textit{wire homeomorphism} $f : G \sim H$ between two string
    graphs $G$ and $H$ consists of three bijective type-preserving
    functions
    \begin{itemize}
        \item $f_N : N(G) \leftrightarrow N(H)$
        \item $f_B : \Bound(G) \leftrightarrow \Bound(H)$
        \item $f_W : \Wires(G) \leftrightarrow \Wires(H)$
    \end{itemize}
    such that for each wire $w$ in $\Wires(G)$ with a source $v$ in
    $N(G)$ (resp. $\Bound(G)$), the source of $f_W(w)$ in $H$ is
    $f_N(v)$ (resp.  $f_B(v)$), and similarly for targets of wires.

    If there is a wire homeomorphism from $G$ to $H$, then $G$ and $H$
    are said to be \textit{wire homeomorphic}.
\end{definition}

\section{Valuation} 
\label{sec:sg-val}

In order for reasoning using string graphs to be useful, we need some
way to interpret them.  Node-vertices are intended to correspond to
morphisms of a category, wire-vertices to objects of the category and
edges to some concept of ``information flow''.  The inputs of a graph
naturally correspond in some way to the domain of the morphism
represented by the graph, and the outputs to the codomain.

The construction of the value of a graph is essentially the same as the
one Kissinger uses when he constructs a free traced SMC and a free
compact closed category using string graphs (\cite{KissingerDPhil}, pp
96-99, 104-112).  We bring the notion of the value of a graph to the
fore here, however, as we are more concerned with that than with free
categories.

Our approach will be to fix an order on the inputs and outputs of a
graph, using \textit{framed cospans}, and then to break down the graph
into \textit{elementary subgraphs} that we can use a valuation
(definition \ref{def:valuation}) to assign values (morphisms) to.  For
example, the graph
\begin{equation}\label{eq:value-graph}
    \input{val-ex-1a.tikz}
\end{equation}
will be broken down into
\[ \input{val-ex-2a.tikz} \]
which will be assigned morphisms, say $f$, $g$, $1_A$ and $1_B$
respectively.  We then take the tensor product of these morphisms
\[ \input{val-ex-3a.tikz} \]
and use the trace, in the form of a \textit{contraction}, to link the
morphisms together in the manner indicated by the connections of the
original string graph
\[ \input{val-ex-5a.tikz} \]
Swap maps and dual maps are then applied as necessary to get the inputs
and outputs in the right order.

\subsection{Framed String Graphs} 
\label{sec:sg-framed}

Order is important.  In a monoidal category, the morphisms
\[
    f : A \otimes B \rightarrow C \otimes D
    \qquad \textrm{and} \qquad
    g : B \otimes A \rightarrow D \otimes C
\]
cannot be the same, although in a symmetric monoidal category, it is
possible that
\[ f = \gamma_{D,C} \circ g \circ \gamma_{A,B} \]

In particular, consider the swap and identity morphisms for $A \otimes
B$ in a symmetric monoidal category
\[
    \gamma_{A,B} : A \otimes B \rightarrow B \otimes A
    \qquad \textrm{and} \qquad
    1_{A \otimes B} : A \otimes B \rightarrow A \otimes B
\]
This would naturally be represented in string graph form as
\[
    \input{swap.tikz}
    \qquad \textrm{and} \qquad
    \input{identity-A-B.tikz}
\]
However, these are the \emph{same graph} (or, at least, isomorphic),
and so there is no sensible way to assign one the value of
$\gamma_{A,B}$ and the other $1_{A \otimes B}$.  In a compact closed
category, we have a similar issue with the morphisms
\[
    d_A : A \otimes A^* \rightarrow I
    \qquad \textrm{,} \qquad
    1_A : A \rightarrow A
    \qquad \textrm{and} \qquad
    e_A : I \rightarrow A^* \otimes A
\]
whose graph representations should look like
\[
    \begin{tikzpicture}
	\begin{pgfonlayer}{nodelayer}
		\node [style=sg wire vertex] (0) at (-0.5, 0.5) {};
		\node [style=sg wire vertex] (1) at (0.25, 0.5) {};
	\end{pgfonlayer}
	\begin{pgfonlayer}{edgelayer}
		\draw [style=sg diredge, bend right=90, looseness=2.00] (0) to (1);
	\end{pgfonlayer}
\end{tikzpicture}
    \qquad \textrm{,} \qquad
    \begin{tikzpicture}
	\begin{pgfonlayer}{nodelayer}
		\node [style=sg wire vertex] (0) at (-0.5, 0.5) {};
		\node [style=sg wire vertex] (1) at (-0.5, -0.25) {};
	\end{pgfonlayer}
	\begin{pgfonlayer}{edgelayer}
		\draw [style=sg diredge] (0) to (1);
	\end{pgfonlayer}
\end{tikzpicture}
    \qquad \textrm{and} \qquad
    \begin{tikzpicture}
	\begin{pgfonlayer}{nodelayer}
		\node [style=sg wire vertex] (0) at (-0.5, -0.25) {};
		\node [style=sg wire vertex] (1) at (0.25, -0.25) {};
	\end{pgfonlayer}
	\begin{pgfonlayer}{edgelayer}
		\draw [style=sg diredge, bend left=90, looseness=2.00] (0) to (1);
	\end{pgfonlayer}
\end{tikzpicture}
\]

To deal with this, we use \textit{framed cospans}\cite{KissingerDPhil}.
\begin{definition}[Framed Cospan]
    \label{def:framed-cospan}
    A \textit{string graph frame} is a triple $(X,<,\textrm{sgn})$ where
    $X$ is a string graph consisting only of isolated wire-vertices, $<$
    is a total order on $V_X$, the vertices of $X$, and $\textrm{sgn} :
    V_X \rightarrow \{+,-\}$ is the \textit{signing map}.

    A cospan $X \xrightarrow{d} G \xleftarrow{c} Y$ is called a
    \textit{framed cospan} if
    \begin{enumerate}
        \item $X$ and $Y$ are string graph frames
        \item $G$ contains no isolated wire-vertices
        \item the induced map $[d,c] : X + Y \rightarrow G$ restricts to
        an isomorphism\\$[d,c]' : X + Y \cong \Bound(G)$
        \item for every $v \in V_{X}$, $d(v) \in \In(G) \Leftrightarrow
        \textrm{sgn}(v) = +$
        \item for every $v \in V_{Y}$, $c(v) \in \Out(G) \Leftrightarrow
        \textrm{sgn}(v) = +$
    \end{enumerate}
\end{definition}
The frames can be seen as ``holding the boundary in place''.  The
signing map indicates inputs that appear in the codomain of the span,
and outputs that appear in the domain, with a $-$.  These correspond to
dual objects in compact closed categories.  A framed cospan where every
vertex in both frames is $+$ is called \textit{positive}.

Given a string graph $G$, we will often use $\hat G$ to refer to some
fixed \textit{framing} of it (ie: a framed cospan where the shared
domain is $G$).  The particular framing, if it is relevant, will be
given by the context.

The composition $\hat H \circ \hat G$ of two framed cospans
\[
    X \xrightarrow{d} G \xleftarrow{c} Y
    \qquad
    Y \xrightarrow{d'} H \xleftarrow{c'} Z
\]
is formed by the pushout
\[
    \begin{tikzcd}[ampersand replacement=\&]
        \&\&Y \arrow[swap]{dl}{c} \arrow{dr}{d'}\&\& \\
        X \arrow{r}{d} \& G \arrow[swap]{dr}{p_1} \&\&
        H \arrow{dl}{p_2} \& Z \arrow[swap]{l}{c'} \\
        \&\& H \circ G \Nbracket \&\&
    \end{tikzcd}
\]
where the resulting cospan is
\[ X \xrightarrow{p_1 \circ d} H \circ G \xleftarrow{p_2 \circ c'} Z \]

\begin{remark}
    We remarked earlier that proposition \ref{prop:sg-arity-matching}
    means that $\catSGraph_T$ is what Kissinger calls a partial adhesive
    category in \cite{KissingerDPhil}.  This, and the value construction
    above, form the necessary components for applying his proofs
    (\cite{KissingerDPhil}, pp 104-112) that that classes of framed
    cospans up to wire homeomorphism form a compact closed category, and
    homeomorphism classes of positive framed cospans form a symmetric
    traced category, to our extended version of string graphs.
\end{remark}

\subsection{Indexings and Contractions} 
\label{sec:sg-ind-contr}

Before we define the value of a graph, we will need some notation.  We
fix a traced symmetric monoidal category $\mathcal{C}$.

\begin{definition}[Indexing]
    Let $O$ be a (small) set of objects of $\mathcal{C}$.  For an object
    $X$ of $\mathcal{C}$, an \textit{$X$-indexing} is an indexed tensor
    product of the form
    \[ X_{i_1} \otimes \ldots \otimes X_{i_n} \]
    that is isomorphic to $X$.  We call a morphism $f : X \rightarrow Y$
    \textit{indexed} when we have fixed an $X$-indexing and a
    $Y$-indexing.
\end{definition}

For an $X$-indexing as above, we define the map $\textrm{shunt}_{X:k}$
to be the (unique) isomorphism constructed from the identity and swap
maps of $\mathcal{C}$ that moves the $X_k$ element of the $X$-indexing
to the end and leaves all others the same, as in the following diagram:
\[ \input{shunt.tikz} \]

\begin{definition}[Contraction]
    Given an indexed map $f : X_{i_1} \otimes
    \ldots \otimes X_{i_m} \rightarrow Y_{j_1} \otimes \ldots
    \otimes Y_{j_n}$, where $X_k = Y_l$, the \textit{contraction of $f$
    from $k$ to $l$}, $C_k^l(f)$, is the morphism
    \[ \Tr^{X_k}(\textrm{shunt}_{Y:l} \circ f \circ
    \textrm{shunt}_{X:k}^{-1}) \]
\end{definition}

A contraction of an indexed morphism yields another indexed morphism.
We can therefore apply multiple contractions to the same morphism, and
these contractions commute:
\begin{proposition}[Kissinger]
    For an indexed morphism $f$ and distinct indices $i$ ,$i'$, $j$ and
    $j'$,
    \[ C_i^j(C_{i'}^{j'}(f)) = C_{i'}^{j'}(C_i^j(f)) \]
\end{proposition}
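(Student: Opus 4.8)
The plan is to unfold the definition of contraction on both sides of the claimed identity and reduce it, by means of the trace-over-tensor axiom \ref{tr-cond-obj-tensor}, to a comparison of a single trace over $X_i \otimes X_{i'}$ with a single trace over $X_{i'} \otimes X_i$; the two sides will then differ only in the order of those two factors and in a coherent rearrangement of shunt isomorphisms, and both discrepancies can be eliminated using the trace axioms together with Mac Lane's coherence theorem.

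First I would record the setup. For $C_i^j$ and $C_{i'}^{j'}$ both to be defined we need $X_i = Y_j =: U$ and $X_{i'} = Y_{j'} =: U'$, and the hypothesis that the four indices are ``distinct'' amounts to $i \neq i'$ (as indices of $\dom(f)$) and $j \neq j'$ (as indices of $\cod(f)$); this guarantees that the $i$-th input and $j$-th output survive the inner contraction, so $C_i^j(C_{i'}^{j'}(f))$ is well-defined, and symmetrically for the other order. Next I would carry out the index bookkeeping: since $C_{i'}^{j'}(f) = \Tr^{U'}(\textrm{shunt}_{Y:j'} \circ f \circ \textrm{shunt}_{X:i'}^{-1})$ deletes the slots $i'$ and $j'$, the surviving index $i$ (resp. $j$) is shifted to $i$ or $i-1$ (resp. $j$ or $j-1$) according to whether it exceeds $i'$ (resp. $j'$). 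Expanding $C_i^j(C_{i'}^{j'}(f))$ in full then yields $\Tr^{U}$ applied to a composite of shunt maps with $\Tr^{U'}(\cdots)$; using \ref{tr-cond-contract} to slide the outer shunt maps across the $U'$-trace and \ref{tr-cond-obj-tensor} in reverse to merge the two traces, I would rewrite this as a single $\Tr^{U \otimes U'}\!\big(s_{\mathrm{out}} \circ f \circ s_{\mathrm{in}}^{-1}\big)$, where $s_{\mathrm{in}}$ and $s_{\mathrm{out}}$ are composites of shunt maps that move $U$ and $U'$ (in that order) to the last two positions. Repeating the computation for $C_{i'}^{j'}(C_i^j(f))$ gives a single $\Tr^{U' \otimes U}\!\big(t_{\mathrm{out}} \circ f \circ t_{\mathrm{in}}^{-1}\big)$ in which $U'$ and $U$ have instead been moved to the last two positions, in the opposite order.

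It then remains to identify these two single-trace expressions, which is where I would invoke an auxiliary lemma: for any $h : A \otimes X \otimes Y \to B \otimes X \otimes Y$ one has $\Tr^{X \otimes Y}(h) = \Tr^{Y \otimes X}\big((1_B \otimes \gamma_{X,Y}) \circ h \circ (1_A \otimes \gamma_{Y,X})\big)$, i.e.\ the trace is natural in the traced object; this follows from \ref{tr-cond-obj-tensor}, \ref{tr-cond-contract} and \ref{tr-cond-slide} (together with the symmetry axioms). Applying the lemma turns the $\Tr^{U' \otimes U}$ side into a $\Tr^{U \otimes U'}$ of $t_{\mathrm{out}}$ and $t_{\mathrm{in}}$ suitably conjugated by $\gamma$'s; since $s_{\mathrm{in}}$, $s_{\mathrm{out}}$ and these conjugates of $t_{\mathrm{in}}$, $t_{\mathrm{out}}$ are all isomorphisms built solely from identities and swaps, with the same respective domains and codomains, Mac Lane coherence forces them to agree, and the two sides coincide.

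The \textbf{main obstacle} is not the trace algebra, which is routine once the axioms are in hand, but rather (a) keeping the index and position bookkeeping scrupulously correct --- tracking precisely which shunt maps occur and where $U$ and $U'$ sit after the first contraction has removed a slot --- and (b) stating and correctly applying the naturality-of-the-trace lemma and the coherence argument that identifies the two composites of shunt maps. It is in getting (a) and (b) exactly right, rather than in the manipulations themselves, that the real work lies.
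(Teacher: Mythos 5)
The paper does not actually prove this proposition: it is stated with the attribution ``[Kissinger]'' and the proof is deferred to \cite{KissingerDPhil}, so there is no in-paper argument to compare yours against. Your sketch is the standard argument and the overall strategy is sound: expand both contractions, use the first trace axiom to pull the outer shunt maps inside the inner trace, merge the two traces into a single trace over $U \otimes U'$ (resp. $U' \otimes U$) via the axiom $\Tr^{X\otimes Y}(f) = \Tr^X(\Tr^Y(f))$, and then reconcile the two single-trace expressions using the sliding axiom to exchange the order of the traced factors. Your auxiliary lemma $\Tr^{X\otimes Y}(h) = \Tr^{Y\otimes X}\bigl((1_B\otimes\gamma_{X,Y})\circ h\circ(1_A\otimes\gamma_{Y,X})\bigr)$ does follow from the sliding axiom applied to the isomorphism $\gamma_{Y,X}$, as you say.

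One point needs tightening: coherence for \emph{symmetric} monoidal categories does not say that any two isomorphisms built from identities, associators and symmetries with the same domain and codomain are equal --- $\gamma_{A,A}$ and $1_{A\otimes A}$ are a counterexample. The correct statement is that two such composites agree if and only if they induce the same underlying permutation of the tensor factors. Your argument still goes through, because $s_{\mathrm{in}}$ and the $\gamma$-conjugate of $t_{\mathrm{in}}$ (and likewise for the outputs) are both composites of shunts that realise the \emph{same} permutation of the indexed factors --- each sends the factor indexed by $i$ to the penultimate slot, the factor indexed by $i'$ to the last slot, and preserves the relative order of the rest --- but that is the fact you must verify and cite, not merely the agreement of endpoints. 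With that correction, and with the index bookkeeping you rightly identify as the main labour actually carried out, the proof is complete.
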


\begin{proposition}
    Given $X$- and $Y$-indexings $X_{i_1} \otimes \ldots \otimes
    X_{i_m}$ and $Y_{j_1} \otimes \ldots \otimes Y_{j_n}$, and morphisms
    \[ f : X_{i_1} \otimes \ldots \otimes X_{i_k} \rightarrow
        Y_{j_1} \otimes \ldots \otimes Y_{j_l}
    \]
    and
    \[ g : X_{i_{k+1}} \otimes \ldots \otimes X_{i_m} \rightarrow
        Y_{j_{l+1}} \otimes \ldots \otimes Y_{j_n}
    \]
    if $p = i_a$ for some $a > k$ and $q = j_b$ for some $b > l$, then
    \[ C_p^q(f \otimes g) = f \otimes C_p^q(g) \]
    \label{prop:index-split}
\end{proposition}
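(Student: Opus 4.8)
The plan is to reduce the statement to trace condition~\ref{tr-cond-fn-tensor}, $\Tr^X(g \otimes f) = g \otimes \Tr^X(f)$, after first checking that the shunt maps appearing in the definition of $C_p^q(f \otimes g)$ never disturb the $f$-block of the combined indexing. Throughout, $\mathcal{C}$ is the fixed traced symmetric monoidal category.

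First I would unfold the definition of contraction. Write $a$ for the position in the combined domain indexing with $X_{i_a} = X_p$ (so $a > k$ by hypothesis) and $b$ for the position in the combined codomain indexing with $Y_{j_b} = Y_q$ (so $b > l$); then $C_p^q(f \otimes g) = \Tr^{X_p}(\textrm{shunt}_{Y:b} \circ (f \otimes g) \circ \textrm{shunt}_{X:a}^{-1})$. The crucial structural observation is that, because $a > k$, moving the $a$-th factor to the end only requires transposing it past factors lying to its right, none of which are among the first $k$; hence $1_{X_{i_1} \otimes \cdots \otimes X_{i_k}} \otimes \textrm{shunt}_{X':\,a-k}$, where $X'$ denotes the $g$-domain indexing $X_{i_{k+1}} \otimes \cdots \otimes X_{i_m}$, is an isomorphism built from identities and swaps that sends $X_p$ to the end and fixes every other factor. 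By the uniqueness clause in the definition of $\textrm{shunt}$, which rests on Mac Lane's coherence theorem~\cite{CWM}, this iso must be $\textrm{shunt}_{X:a}$ itself; the same argument gives $\textrm{shunt}_{Y:b} = 1_{Y_{j_1} \otimes \cdots \otimes Y_{j_l}} \otimes \textrm{shunt}_{Y':\,b-l}$, with $Y'$ the $g$-codomain indexing.

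With these factorisations, bifunctoriality (the interchange law) collapses the composite to $\textrm{shunt}_{Y:b} \circ (f \otimes g) \circ \textrm{shunt}_{X:a}^{-1} = f \otimes (\textrm{shunt}_{Y':\,b-l} \circ g \circ \textrm{shunt}_{X':\,a-k}^{-1})$, and then condition~\ref{tr-cond-fn-tensor} pulls $f$ outside: $\Tr^{X_p}(f \otimes h) = f \otimes \Tr^{X_p}(h)$ for $h = \textrm{shunt}_{Y':\,b-l} \circ g \circ \textrm{shunt}_{X':\,a-k}^{-1}$. Recognising $\Tr^{X_p}(h)$ as precisely $C_p^q(g)$ computed with respect to the indexings $X'$ and $Y'$ completes the proof. (One should also note in passing that $C_p^q(g)$ is well-typed: the equality $X_p = Y_q$ required for $C_p^q(f\otimes g)$ restricts verbatim to $g$, since $p$ indexes a domain factor of $g$ and $q$ a codomain factor of $g$.)

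The main obstacle I anticipate is the shunt-factorisation step: one must argue carefully that the shunt computed in the long indexing genuinely decomposes as an identity on the $f$-block tensored with the shunt of the short indexing. This is visually obvious but needs the uniqueness of coherence isomorphisms to be invoked explicitly, since a priori the two sides are merely two structural isomorphisms realising the same permutation of factors. Once that point is granted, everything else is a single line of bifunctoriality followed by one application of a trace axiom, so I expect the remainder of the write-up to be short.
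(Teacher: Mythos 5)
Your proof is correct and follows exactly the route the paper intends: the paper's own ``proof'' is the single sentence \emph{``This follows from the definition and the laws of traces''}, and your write-up supplies precisely the missing details --- the factorisation of the shunt maps as an identity on the $f$-block tensored with the shunt of the $g$-block (justified by coherence), the interchange law, and trace condition~\ref{tr-cond-fn-tensor}. No gaps; your anticipated obstacle about the shunt factorisation is handled adequately by appealing to uniqueness of the structural isomorphism realising the permutation.
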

\begin{proof}
    This follows from the definition and the laws of traces.
\end{proof}

\subsection{Value} 
\label{sec:sg-val-value}

Let $\mathcal{V}$ be a symmetric traced category, and let $v$ be an
expansion-order invariant valuation of $T$ over $\mathcal{V}$.
For a framed cospan $\hat G = X \rightarrow G \leftarrow Y$ of
$\catSGraph_T$, we define the \textit{value} of $\hat G$ under $v$ in
the following manner.

The \textit{elementary subgraphs} of $G$ are those subgraphs that
consist of one of the following:
\begin{itemize}
    \item a single node-vertex together with its incident edges and
        adjacent wire-vertices
        \[ \input{elem-g-node.tikz} \]
    \item an edge between wire-vertices, together with those
        wire-vertices (which may be the same vertex)
        \[ \begin{tikzpicture}[string graph]
	\begin{pgfonlayer}{nodelayer}
		\node [style=sg wire vertex] (0) at (0, 0.5) {};
		\node [style=sg wire vertex] (1) at (0, -0.25) {};
	\end{pgfonlayer}
	\begin{pgfonlayer}{edgelayer}
		\draw [style=sg diredge] (0) to (1);
	\end{pgfonlayer}
\end{tikzpicture} \qquad \textrm{or}
           \qquad \begin{tikzpicture}[string graph]
	\begin{pgfonlayer}{nodelayer}
		\node [style=sg wire vertex] (0) at (0, 0) {};
	\end{pgfonlayer}
	\begin{pgfonlayer}{edgelayer}
		\draw [style=sg diredge, in=135, out=45, loop] (0) to ();
	\end{pgfonlayer}
\end{tikzpicture} \]
    \item a wire-vertex that has no incident edges in $G$
        \[ \begin{tikzpicture}[string graph]
	\begin{pgfonlayer}{nodelayer}
		\node [style=sg wire vertex] (0) at (0, 0) {};
	\end{pgfonlayer}
\end{tikzpicture} \]
\end{itemize}
Note that these are all valid string graphs.  The definition of a string
graph (and, in particular, the restriction on incident edges of
wire-vertices) means that:
\begin{itemize}
    \item $G$ is covered by its elementary subgraphs,
    \item no node-vertex or edge appears in more than one elementary
        subgraph,
    \item wire-vertices appear in at most two elementary subgraphs, and
    \item any wire-vertex that does appear in two elementary subgraphs
        is an input in one and an output in the other.
\end{itemize}

\begin{example}
    The string graph
    \begin{equation}\label{eq:value-graph}
        \input{val-ex-1.tikz}
    \end{equation}
    has the following four elementary subgraphs:
    \[ \input{val-ex-2.tikz} \]
\end{example}

We then assign a value to each of these subgraphs, as well as a
wire-vertex-based indexing for the domain and codomain of that value.

For an elementary subgraph $H$ of $G$ with a node-vertex $n$, let $f =
\tau_H(n)$, and for each incoming edge $\textrm{in}^a_{f,i}$ of $f$ in
$\mathcal{G}_T$, let $d(i)$ be the size of the preimage of the edge
under $\tau_H$ (ie: the number of edges adjacent to $n$ that map to it),
and define $c(j)$ similarly for each outgoing edge
$\textrm{out}^a_{f,j}$.  Then these are arity counts for $f$, since
$\tau_G$, and hence $\tau_H$, is arity-matching.  The value of $H$ is
then $v_m^f(d,c)$.

Consider $A$, the domain of $v_m^f(d,c)$, and let $D(H)$ be the set of
wire-vertices of $H$ that are the source of an incoming edge of $n$.  If $D(H)$
is empty, $A = I$.  Otherwise, for each input $w$ of $H$ with adjacent
edge $e$, let $\lambda(w) = i$, where $\tau_H(e) = \textrm{in}^a_{f,i}$.
We can then choose a total order for $D(H)$ such that $\lambda$ is
non-decreasing, and this ordered set provides an $A$-indexing with the
property that if $\dom(f) = [\alpha_1,\ldots,\alpha_m]$, then $A_w =
\alpha_{\lambda(w)}$.  We can construct a similar indexing for the
codomain of $v_m^f(d,c)$ using $C(H)$, the set of targets of edges whose
source is $n$ (unless this is empty).

If $H$ is instead an edge $e$ connecting two wire-vertices, we assign the
value $1_{v_o(Z)} : v_o(Z) \rightarrow v_o(Z)$, where $Z$ is the image
of $s(e)$ (and hence also of $t(e)$).  We set $D(H) = \{s(e)\}$ and
$C(H) = \{t(e)\}$ (note that these may be the same).  If $H$ is an
isolated wire-vertex $w$ of type $Z$, we similarly assign the value
$1_{v_o(Z)}$ and set $D(H) = C(H) = \{w\}$.

We then let $g_0$ be the tensor product of the values of each elementary
subgraph of $G$.  A diagrammatic representation of this morphism for the
graph \eqref{eq:value-graph} is
\[ \input{val-ex-3.tikz} \]
Let $W_1(G)$ be the union of $D(H)$ and $W_2(G)$ the union of $C(H)$ for the
elementary subgraphs $H$ of $G$.  The total orders we placed on the
inputs of each elementary subgraph of $G$, together with the order of
the tensor product $g_0$, gives us an order on $W_1(G)$, and (providing
$W_1(G)$ is not empty) this gives rise to a $W_1(G)$-based indexing of the
domain of $g_0$ that agrees with the indexings we constructed for the
elementary subgraphs; the same can be done for the codomain of $g_0$
using $W_2(G)$.

Now we order the wire-vertices $w_1,\ldots,w_m$ of $G$ that appear in
both $W_1(G)$ and $W_2(G)$, but are not isolated wire-vertices in $G$,
and for $1 \leq i \leq m$ let
\[ g_i = C_{w_i,w_i}(g_{i-1}) \]
For the previous example, this would result in
\[ \input{val-ex-5.tikz} \]

Note that $\In(G) \setminus \Out(G) = W_1(G) \setminus W_2(G)$ and,
conversely, $\Out(G) \setminus \In(G) = W_2(G) \setminus W_1(G)$.
This means that, given that the isolated wire-vertices of $G$ are the
only vertices to be both inputs and outputs of $G$, the remaining
indices in the indexing of the domain of $g_m$ are exactly the inputs of
$G$, and those in the indexing of the codomain are its outputs.

This means we can simply precompose and postcompose $g_m$ with the
necessary swap maps to re-order the indexings of its domain and codomain
to agree with the orderings of the frames of $\hat G$. We call the
resulting morphism $v(\hat G)$, the value of $\hat G$.

This final reordering means that the choice of order of elementary
subgraphs when constructing $g_0$ will not affect the value of $\hat G$.
The fact that contractions commute means that the choice of order of
wire-vertices in $G$ when constructing the maps $g_1,\ldots,g_m$ does
not affect the value of $\hat G$.  Finally, since the valuation $v$ is
expansion-order invariant, the choice of indexing for the domains and
codomains of the elementary subgraphs of $G$ does not affect the value
of $\hat G$.

In a compact closed category, a combination of swap maps and the maps
associated with dual objects can be used to produce $v(\hat G)$ from
$g_m$.

Note that the domain and codomain of $v(\hat G)$ depend only on the
frames, not on $G$.  Note also that the frames only affect the final
part of the construction, and so the values of two different framed
cospans with the same internal string graph are the same up to pre- and
postcomposition with swap maps if both are positive, and swap maps and
dual object maps otherwise.

We will assume all valuations are expansion-order invariant from now on.

\begin{proposition}
    Suppose $\hat G = X \xleftarrow{g_x} G \xrightarrow{g_y} Y$ and
    $\hat H = Z \xleftarrow{h_z} H \xrightarrow{h_w} W$ are framings
    and $f : G \rightarrow H$ is a wire homeomorphism such that the
    following diagram commutes
    \[
        \begin{tikzcd}
            X \ar{dr} \ar[<->]{rrr}{\cong}
            &&&
            Z \ar{dl}
            \\
            G \ar{u}{g_x} \ar[swap]{d}{g_y}
            &
            \Bound(G) \ar[right hook->]{l} \ar[<->]{r}{f_B} &
            \Bound(H) \ar[left hook->]{r} &
            H \ar[swap]{u}{h_z} \ar{d}{h_w}
            \\
            Y \ar{ur} \ar[<->]{rrr}{\cong}
            &&&
            W \ar{ul}
        \end{tikzcd}
    \]
    where the triangles on the left and right are taken from the
    definition of a cospan frame, and the top and bottom morphisms are
    order- and sign-preserving.  Then, for every valuation $v$,
    \[ v(\hat G) = v(\hat H) \]
    \label{prop:value-wire-homeo}
\end{proposition}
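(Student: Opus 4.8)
The plan is to show that the value of section~\ref{sec:sg-val-value} is insensitive to how many wire-vertices are strung along each wire, so that it depends only on the data a wire homeomorphism preserves: the node-vertices with their types and incident wires, the set $\Wires(G)$ together with its types, sources and targets, and the framed boundary. Since $f$ supplies type-preserving bijections $f_N$, $f_B$, $f_W$ that, by the commuting diagram, respect sources, targets, signs and the frame orderings, this will immediately give $v(\hat G) = v(\hat H)$. I would make this precise by factoring $f$ through \emph{normal forms} --- graphs in which every wire carries the fewest possible wire-vertices (a single internal wire-vertex for each interior wire and circle, and only the boundary vertices otherwise).

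The crucial ingredient is an elementary-move lemma. Suppose $w$ is an \emph{internal} wire-vertex of some wire (so $w \notin \Bound(G)$, with a single incoming edge $e_1$ from $u$ and a single outgoing edge $e_2$ to $u'$, where at least one of $u,u'$ is a wire-vertex so that the merged edge is legal), and let $G'$ be obtained by deleting $w$ and replacing $e_1,e_2$ by one edge $u \to u'$. I would prove that $v(\hat G) = v(\hat{G'})$ for the induced framings. In the construction of $g_0$ for $G$, the factor carrying the $w$-indexed output (the identity $1_{v_o(Z)}$ of $e_1$, or the node value at $u$ when $u$ is a node-vertex) and the factor carrying the $w$-indexed input (the identity of $e_2$, or the node value at $u'$) meet at $w$; since $w$ lies in both $W_1(G)$ and $W_2(G)$ and is not isolated, the construction applies $C_{w,w}$. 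Because the two factors are joined through the identity attached to $w$, this contraction splices them directly, re-indexing the $w$-output as $u'$. Concretely this follows from Proposition~\ref{prop:index-split} together with the trace axioms, the yanking condition~\ref{tr-cond-yank} supplying $\Tr^Z$ of a link through $1_{v_o(Z)}$ as a direct connection. As contractions commute, performing $C_{w,w}$ first shows that $g_m$ for $G$ agrees with the corresponding intermediate map for $G'$, and the final frame reordering is identical because $\Bound$ is untouched.

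Iterating the lemma reduces any $G$ to a normal form $G_0$ with $v(\hat G) = v(\hat{G_0})$. A wire homeomorphism $f : G \sim H$ then descends to one between $G_0$ and $H_0$, since normal forms retain exactly the data $f_N$, $f_B$, $f_W$ preserve; and a wire homeomorphism between normal forms is a genuine frame-preserving graph isomorphism, because each wire now has a rigid minimal shape, so $f_W$ determines the images of the internal wire-vertices uniquely and extends $f_N \cup f_B$ to an isomorphism $G_0 \cong H_0$. Finally, the whole value construction is natural in such isomorphisms: they carry elementary subgraphs to elementary subgraphs, match the values $v_m^f(d,c)$ and the identities (types being preserved), and commute with the tensor product, the contractions (indexed by corresponding wire-vertices) and the final frame reordering (which depends only on the frames, matched by the order- and sign-preserving top and bottom of the diagram). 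Chaining $v(\hat G) = v(\hat{G_0}) = v(\hat{H_0}) = v(\hat H)$ then yields the claim.

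The main obstacle is the elementary-move lemma: making the index bookkeeping in the $C_{w,w}$ step fully rigorous across the endpoint cases (each of $u,u'$ a node- or wire-vertex, including $u = u'$ for circles) while tracking the orderings induced on the domain and codomain indexings of $g_0$. The conceptual content is slight --- splicing through an identity via condition~\ref{tr-cond-yank} and Proposition~\ref{prop:index-split} --- but checking that the shunt maps and the chosen contraction order leave the remaining factors undisturbed is the delicate part, and it is the commutativity of contractions that keeps it tractable.
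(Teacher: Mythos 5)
Your proposal is correct and takes essentially the same route as the paper's own proof: the paper likewise reduces both graphs to a common minimal form via single \emph{wire contractions} (your elementary-move lemma, with $s(e)$ and $t(e)$ playing the role of your $u$ and $u'$), and shows each such move preserves the value by placing the identity-valued elementary subgraph last and absorbing it through $C_{s(e)}^{s(e)}$ using the trace axioms, with leftover swaps absorbed into later contractions or the final frame reordering. The only cosmetic difference is that the paper contracts both graphs directly to one common graph $K$ and concludes by transitivity, whereas you pass to separate normal forms $G_0$, $H_0$ and finish with an explicit frame-preserving isomorphism.
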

\begin{proof}
    We start by noting that the diagram above means we can identify $X$
    with $Z$, $Y$ with $W$ and $\Bound(G)$ with $\Bound(H)$ in a
    consistent manner.

    Now we make the observation that if $G$ and $H$ are
    wire-homeomorphic, then there is a third graph $K$ that is
    wire-homeomorphic to both where no two wire-vertices are adjacent to
    each other.  What is more, $K$ can be reached from both $G$ and $H$
    by a (finite) series of $0$ or more \textit{wire contractions},
    where an edge $e$ between two distinct wire-vertices is removed, and
    $s(e)$ and $t(e)$ are identified.  Thus we just have to show that
    the value of a framed cospan is stable under wire contraction, and
    the result follows by transitivity.

    So let $e$ be an edge of $G$ whose source and target are distinct
    wire-vertices.  Then $e$ (together with its source and target) is an
    elementary subgraph of $G$.  If $s(e)$ is an input of $G$ and $t(e)$
    is an output, contracting $e$ will make no difference to the value
    of $\hat G$, as we will simply be replacing one elementary subgraph
    with another that has the same value, and there are no contractions
    in either case.

    So suppose $s(e)$ is \textit{not} an input of $G$.  Then $s(e) \in
    W_2(G)$, and there is a contraction $C_{s(e)}^{s(e)}$.  We can
    choose to place the elementary subgraph containing $e$ last and
    apply the contraction on its source first.  $C_{s(e)}^{s(e)}(g_0)$
    will then have the form (where we only display the last swap of the
    contraction)
    \[ \input{val-e-loop.tikz} \]
    where $g'_0$ is the tensor product of the elementary subgraphs of the
    contracted form of $G$ together with all but one swap from the
    contraction of $s(e)$.  Then we have
    \[ C_{s(e)}^{s(e)}(g_0) = g'_0 \]
    by the trace axioms.  The extra swaps are irrelevant to the value of
    the contracted form of $\hat G$ as they are either absorbed into
    $C_{t(e)}^{t(e)}$ if $t(e)$ is not an output, or are corrected at
    the end to account for the frame ordering.

    The case where $t(e)$ is not an output of $G$ is symmetric.  So the
    value of $G$ remains the same after contracting $e$, as required,
    and hence we have the required result.
\end{proof}

\section{Graph Equations} 
\label{sec:graph-eqs}

Our aim is to be able to mechanise equational reasoning for diagrams
using string graphs.  However, we first need to establish what we even
mean by a ``graph equation'', and develop an equational logic for this
concept.

Recall that in the equational logic for terms, there are rules that
state that $\rweq$ is closed under contexts and substitutions.  We can
view substitutions as a sort of internal context.  For example, if we
say
\[ f(x) \rweq g(x) \]
we mean that we can replace $f$ with $g$ (or vice versa) regardless of
what is either inside or surrounding that symbol.  This approach is
well-suited to the tree-like structure of terms, where there is a single
external context and variables provide an easy way of labelling internal
contexts, but does not work so well for graphs, which have only an
external context, but may have multiple points of interaction with that
context.

Declaring that two morphisms are equal only makes sense if both
morphisms have the same type.  Given our intended interpretation of the
string graphs, this means that they need the same inputs and outputs
(respecting types).  But that is not enough; consider the associativity
law for multiplication.  For terms, we would have something like
\[ \grpmult(\grpmult(x,y),z) \rweq \grpmult(x,\grpmult(y,z)) \]
while with string graphs we would want something like
\[ \input{assoc-law.tikz} \]
Visually, the meaning of this is clear, but we need some way to encode
the correspondance between inputs; after all, one way of correlating the
inputs would result in a law that was simply a consequence of
commutativity:
\[ \input{assoc-law-swapped.tikz} \]

We encode the correspondance of inputs and outputs using a span of graph
morphisms whose images are the boundaries of the two graphs.  We place
some extra coherence requirements on the span to ensure that inputs are
linked to inputs and outputs to outputs.

\begin{definition}[String Graph Equation; \cite{KissingerDPhil}, pp 93]
    \label{def:graph-eq}
    A \textit{string graph equation} $L \rweq_{i_1,i_2} R$ is a span ${L
        \xleftarrow{i_1} I \xrightarrow{i_2} R}$ where:
    \begin{itemize}
        \item $L$ and $R$ contain no isolated wire-vertices;
        \item $\In(L) \cong \In(R)$ and $\Out(L) \cong \Out(R)$;
        \item $\Bound(L) \cong I \cong \Bound(R)$; and
        \item the following diagram commutes, where $j_1$, $j_2$, $k_1$
            and $k_2$ are the coproduct inclusions into the boundary
            composed with the above isomorphisms:
            \[
            \begin{tikzcd}[ampersand replacement=\&]
                \& \In(L)
                    \arrow[left hook->]{dl}
                    \arrow{dr}{j_1}
                    \arrow{rr}{\cong}
                \&\& \In(R)
                    \arrow[swap]{dl}{j_2}
                    \arrow[right hook->]{dr}
                \&
                \\
                L \&\& I \arrow[swap]{ll}{i_1} \arrow{rr}{i_2} \&\& R
                \\
                \& \Out(L)
                    \arrow[left hook->]{ul}
                    \arrow[swap]{ur}{k_1}
                    \arrow[swap]{rr}{\cong}
                \&\& \Out(R)
                    \arrow{ul}{k_2}
                    \arrow[right hook->]{ur}
                \&
            \end{tikzcd}
            \]
    \end{itemize}
\end{definition}

We can assemble similar laws to those we had in the term-based
equational logic.  Axiom, reflexivity and symmetry are straightforward:
\begin{mathpar}
    (\textsc{Axiom}) \enskip
    \inferrule{G \rweq_{i,j} H \in E}{E \vdash G \rweq_{i,j} H}
    \and
    (\textsc{Refl}) \enskip
    \inferrule{ }{E \vdash G \rweq_{b_G,b_G} G}
    \and
    (\textsc{Sym}) \enskip
    \inferrule{E \vdash G \rweq_{i,j} H}{E \vdash H \rweq_{j,i} G}
\end{mathpar}
where $b_G : \Bound(G) \hookrightarrow G$ is the obvious inclusion of
the boundary of $G$ into $G$.  Transitivity requires a bit more work to
express.  If we have the graph equations
\[ G \xleftarrow{i} I \xrightarrow{j} H \]
and
\[ H \xleftarrow{k} J \xrightarrow{l} K \]
then the constraints on graph equations mean that $k^{-1} \circ j$ is
both well-defined and an isomorphism $I \cong J$.  If we take $p := i$
and $q := l \circ k^{-1} \circ j$, we can express transitivity as
\[
    (\textsc{Trans}) \enskip
    \inferrule{E \vdash G \rweq_{i,j} H \\ E \vdash H \rweq_{k,l} K
             }{E \vdash G \rweq_{p,q} K}
\]

For this to reflect intuitive equational reasoning, though, we need to
be able to apply equations in the context of larger graphs.  Given
graph equations
\[ G \xleftarrow{i} I \xrightarrow{j} H \]
and
\[ G' \xleftarrow{i'} I' \xrightarrow{j'} H' \]
such that there are monomorphisms making the following diagram commute,
and the squares in it pushouts,
\[ \begin{tikzcd}
    G \arrow[swap,right hook->]{d}
    & I \arrow[swap]{l}{i} \arrow{r}{j} \arrow[right hook->]{d}
    & H \arrow[right hook->]{d}
    \\ G' \NEbracket
    & D \arrow[left hook->]{l} \arrow[right hook->]{r}
    & H' \NWbracket
    \\
    & I' \arrow{ul}{i'} \arrow[swap]{ur}{j'}
         \arrow[right hook->]{u} &
\end{tikzcd} \]
then we have the following rule
\[
    (\textsc{Leibniz}) \enskip
    \inferrule{E \vdash G \rweq_{i,j} H}{E \vdash G' \rweq_{i',j'} H'}
\]

We actually need one more rule, to account for wire homeomorphisms.
Given graph equations
\[ G \xleftarrow{i} I \xrightarrow{j} H \]
and
\[ G' \xleftarrow{i'} I \xrightarrow{j'} H' \]
we say they are wire homeomorphic if there are wire homeomorphisms $g :
G \sim G'$ and $h : H \sim H'$ such that
\begin{equation}
    \begin{tikzcd}
        \Bound(G) \arrow[swap]{dd}{g_B}
        && \Bound(H) \arrow{dd}{h_B}
        \\
        & I \arrow[swap]{ul}{\cong} \arrow{ur}{\cong}
            \arrow{dl}{\cong} \arrow[swap]{dr}{\cong} &
        \\
        \Bound(G') && \Bound(H')
    \end{tikzcd}
    \label{eq:homeo-bounds}
\end{equation}
where the isomorphisms are those induced by $i$, $j$, $i'$ and $j'$ (see
definition \ref{def:graph-eq}). Given two such wire homeomorphic
equations, we have the following rule
\[
    (\textsc{Homeo}) \enskip
    \inferrule{E \vdash G \rweq_{i,j} H}{E \vdash G' \rweq_{i',j'} H'}
\]

We should note here that multiple stacked applications of the
\textsc{Leibniz} rule can be condensed into a single application.
\begin{proposition}
    Suppose we have the proof tree fragment
    \[
        \inferrule*[Left=Leibniz]{
            \inferrule*[Left=Leibniz]{
                E \vdash G \rweq_{i,j} H
            }{E \vdash G' \rweq_{i',j'} H'}
        }{E \vdash G'' \rweq_{i'',j''} H''}
    \]
    Then
    \[
        \inferrule*[Left=Leibniz]{E \vdash G \rweq_{i,j} H
        }{E \vdash G'' \rweq_{i'',j''} H''}
    \]
    is a valid proof tree fragment.
    \label{prop:leibniz-compact}
\end{proposition}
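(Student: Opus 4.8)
The plan is to realise the two stacked applications as a single pushout gluing and then to invoke the pasting law for pushouts twice. Write $D_1$ for the context graph used in the inner \textsc{Leibniz} step, so that $G'$ is the pushout of $G \xleftarrow{i} I \hookrightarrow D_1$ and $H'$ the pushout of $H \xleftarrow{j} I \hookrightarrow D_1$, and the new interface $I'$ embeds into $D_1$ with $i' = (D_1 \hookrightarrow G')\circ(I' \hookrightarrow D_1)$ and similarly for $j'$. Write $D_2$ for the context of the outer step, so that $G''$ is the pushout of $G' \xleftarrow{i'} I' \hookrightarrow D_2$, $H''$ the pushout of $H' \xleftarrow{j'} I' \hookrightarrow D_2$, and the interface $I''$ embeds into $D_2$. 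The interface $I'$ thus carries two monomorphisms $I' \hookrightarrow D_1$ and $I' \hookrightarrow D_2$, and the idea is to let these span a combined context.

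First I would form $D$ as the pushout of $D_1 \hookleftarrow I' \hookrightarrow D_2$. This exists and is again a string graph, since $\catSGraph_T$ has pushouts of monomorphisms whose legs are again monic (the partial-adhesive structure noted after Proposition \ref{prop:sg-arity-matching}, together with Proposition \ref{prop:sg-subgraph-ops}, which guarantees the gluing stays a string graph). Both legs $D_1 \hookrightarrow D$ and $D_2 \hookrightarrow D$ are then monic, so the composites $I \hookrightarrow D_1 \hookrightarrow D$ and $I'' \hookrightarrow D_2 \hookrightarrow D$ supply the interface inclusions a single \textsc{Leibniz} step requires.

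The core step is to identify $G''$ with the pushout of $G \xleftarrow{i} I \hookrightarrow D$. I would paste the square with corners $(I', D_1, D_2, D)$ defining $D$ horizontally against the square $(I', D_1, G', G'')$ that exhibits $G''$ as the pushout of $G' \xleftarrow{i'} I' \hookrightarrow D_2$, using $i' = (D_1 \hookrightarrow G')\circ(I' \hookrightarrow D_1)$ to make the top edges compose correctly. Since both the left square and the outer rectangle are pushouts, the pasting law makes the remaining square $(D_1, G', D, G'')$ a pushout. Pasting this against the inner-step pushout square $(I, D_1, G, G')$ then shows the outer rectangle $(I, D, G, G'')$ is a pushout, i.e. $G''$ is exactly $G +_I D$ with interface edge $I \hookrightarrow D$. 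The identical argument with $H, H', H'', j, j'$ gives $H'' = H +_I D$. Hence the span $G \xleftarrow{i} I \hookrightarrow D$, the context $D$, and the interface $I''$ satisfy the hypotheses of the \textsc{Leibniz} rule and yield an equation on $G''$ and $H''$ with interface $I''$.

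Two residual checks then finish the argument, and this is where I expect the only friction. First I must confirm every arrow demanded by the \textsc{Leibniz} diagram is monic: $G \hookrightarrow G''$ and $H \hookrightarrow H''$ are composites of the monic vertical legs of the two original steps, while $D \hookrightarrow G''$ and $D \hookrightarrow H''$ are legs of pushouts of monomorphisms (using that $D_1 \hookrightarrow G'$ is monic, itself the leg of the inner pushout along the monic $i$). Second I must check that the interface maps produced by the single step are literally $i''$ and $j''$: the composite $D_2 \hookrightarrow D \to G''$ agrees with the leg $D_2 \to G''$ of the outer pushout by the commutativity built into the pasted diagram, so $I'' \hookrightarrow D \to G'' = i''$, and symmetrically $j''$. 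The obstacle is therefore bookkeeping rather than conceptual: keeping the two pasting steps correctly oriented and verifying that monicity of the pushout legs — and hence validity of a genuine single \textsc{Leibniz} instance — transfers from the adhesive-type properties already established for $\catSGraph_T$.
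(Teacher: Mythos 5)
Your proposal is correct and is essentially the paper's own proof: you form the combined context as the pushout of the two context graphs over the middle interface $I'$, then apply the pasting law for pushouts twice (first to show the square relating the old and new contexts with $G'$ and $G''$ is a pushout, then to paste it against the inner step's pushout), exactly as the paper does with its graph $D''$. Your two residual checks (monicity of the legs and the identification of the interface maps with $i''$, $j''$) correspond to the paper's implicit use of monic legs and its closing triangle argument for $I''$, so no gap remains.
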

\begin{proof}
    We must have the following commuting diagram
    \[ \begin{tikzcd}
        G \arrow[swap,right hook->]{dd}
        & I \arrow[swap]{l}{i} \arrow{r}{j} \arrow[right hook->]{d}
        & H \arrow[right hook->]{dd}
        \\
        & D \arrow[left hook->]{dl} \arrow[right hook->]{dr}
        &
        \\[-2em]
        G' \arrow[swap,right hook->]{dd} \NEbracket
        &&
        H' \arrow[right hook->]{dd} \NWbracket
        \\[-2em]
        & I' \arrow{ul}{i'} \arrow[swap]{ur}{j'}
             \arrow[right hook->]{uu} \arrow[right hook->]{d} &
        \\ G'' \NEbracket
        & D' \arrow[left hook->]{l} \arrow[right hook->]{r}
        & H'' \NWbracket
        \\
        & I'' \arrow{ul}{i''} \arrow[swap]{ur}{j''}
             \arrow[right hook->]{u} &
    \end{tikzcd} \]
    We just need to find a string graph $D''$ and suitable monomorphisms
    to make
    \begin{equation}\label{eq:leibniz-compacted-diag}
        \begin{tikzcd}
            G \arrow[swap,right hook->]{d}
            & I \arrow[swap]{l}{i} \arrow{r}{j} \arrow[right hook->]{d}
            & H \arrow[right hook->]{d}
            \\ G'' \NEbracket
            & D'' \arrow[left hook->]{l} \arrow[right hook->]{r}
            & H'' \NWbracket
            \\
            & I'' \arrow{ul}{i''} \arrow[swap]{ur}{j''}
                 \arrow[right hook->]{u} &
        \end{tikzcd}
    \end{equation}
    commute.  We will only demonstrate how to construct the left part of
    the diagram, as the right side can be constructed in the same
    manner.  First of all, $D''$ and its inclusion into $G''$ are formed
    by pushout:
    \[ \begin{tikzcd}
        I' \rar[right hook->] \dar[right hook->] \ar[bend left]{rr}{i'}
        & D \rar[right hook->] \dar[right hook->]
        & G' \dar[right hook->]
        \\ D' \rar[right hook->] \ar[left hook->,bend right]{rr}
        & D'' \rar[right hook->,dashed] \NWbracket
        & G''
    \end{tikzcd} \]
    What is more, because the left square and outer squares are both
    pushouts, the right square must also be a pushout.

    Now the left square of \eqref{eq:leibniz-compacted-diag} can easily be
    seen to commute:
    \[ \begin{tikzcd}
            I \rar[right hook->] \dar[right hook->]
            & G \dar[right hook->]
            \\ D \rar[right hook->] \dar[right hook->]
            & G' \dar[right hook->] \NWbracket
            \\ D'' \rar[right hook->]
            & G''
    \end{tikzcd} \]
    What is more, we have just established that the bottom square is a
    pushout, so the outer square must also be a pushout.

    The triangle between $I''$, $D''$ and $G''$ follows easily:
    \[ \begin{tikzcd}
        I'' \rar[right hook->] \ar[bend right,swap]{drr}{i''}
        & D' \rar[right hook->] \ar[right hook->]{dr}
        & D'' \dar[right hook->]
        \\ && G''
    \end{tikzcd} \]
\end{proof}

\subsection{Soundness}

We want to make sure these rules are \textit{sound} with respect to the
valuation we described in section \ref{sec:sg-val}.  In other words, if
the assumptions of one of the preceding rules holds under a particular
valuation, then the conclusion must hold under the same valuation.

However, the rules above deal with string graph equations, and
valuations apply to framed cospans.  We need to introduce frames to both
sides of a string graph equation in a way that is compatible with the
span that forms the equation.

\begin{definition}
    A \textit{framing} of a string graph equation $L \rweq_{i_1,i_2} R$
    is a pair of framed cospans $X \xrightarrow{a} L \xleftarrow{b} Y$
    and $X \xrightarrow{c} R \xleftarrow{d} Y$ such that there are
    morphisms $j_1$ and $j_2$ forming the coproduct (ie: disjoint union)
    $X \xrightarrow{j_1} I \xleftarrow{j_2} Y$ and making the following
    diagram commute:
    \[
        \begin{tikzcd}[ampersand replacement=\&]
            \& X
                \arrow[swap]{dl}{a}
                \arrow{d}{j_1}
                \arrow{dr}{c}
                \&
            \\
            L \& I \arrow[swap]{l}{i_1} \arrow{r}{i_2} \& R
            \\
            \& Y
                \arrow{ul}{b}
                \arrow[swap]{u}{j_2}
                \arrow[swap]{ur}{d}
                \&
        \end{tikzcd}
    \]
\end{definition}

Note that it is always possible to frame a string graph equation in the
following way: let $X$ be a subgraph of $I$, and $Y$ its complement.
Place an arbitrary ordering on $X$ and $Y$, and construct signing maps
so that $\textit{sgn}_X(x) = +$ if and only if $i_1(x)$ (and hence also
$i_2(x)$) is an input, and $\textit{sgn}_Y(y) = +$ if and only if
$i_1(y)$ (and hence also $i_2(y)$) is an output.  $X$ and $Y$ are then
frames.  We set $j_1$ and $j_2$ to be the natural inclusions, and define
$a$, $b$, $c$ and $d$ by the above diagram.  $X \xrightarrow{a} L
\xleftarrow{b} Y$ and $X \xrightarrow{c} R \xleftarrow{d} Y$ are then
framed cospans, and so we have a framing of the string graph equation.

In addition, fixing a framed cospan for either $L$ or $R$ fixes a
framing for the equation.  This is because fixing $a$ and $b$, for
example, fixes $j_1$ and $j_2$ (since $i_1$ is monic).

We say that a string graph equation $L \rweq_{i_1,i_2} R$ holds
under a valuation $v$ if for all framings $(\hat L,\hat R)$ (all
positive framings if the valuation is not over a compact closed
category) we have that $v(\hat L) = v(\hat R)$.  In fact, this is the
same as saying we have this for \textit{any} framing.

\begin{proposition}
    For any valuation $v$ and any string graph equation $L
    \rweq_{i_1,i_2} R$, if $(L_1,R_1)$ and $(L_2,R_2)$ are framings of
    $L \rweq_{i_1,i_2} R$ (where both are positive unless $v$ is over a
    compact closed category), then $v(L_1) = v(R_1)$ if and only if
    $v(L_2) = v(R_2)$.
    \label{prop:soundness-is-framing-invariant}
\end{proposition}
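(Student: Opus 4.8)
The plan is to show that changing the framing acts on $v(L_1)$ and on $v(R_1)$ by the \emph{same} invertible correction, so that the equation $v(L_1) = v(R_1)$ survives unchanged in both directions. The essential input is a fact already recorded in the value construction of section \ref{sec:sg-val-value}: the frames influence only the very last step. For any string graph $G$ one first assembles a morphism $g_m$ from the values of the elementary subgraphs and the contractions, and this $g_m$ depends on $G$ alone, not on any frame; the value of a framed cospan with internal graph $G$ is then obtained from $g_m$ by pre- and postcomposing with isomorphisms built from swap maps (together with dual-object maps in the compact closed case) that reorder, and where necessary re-sign, the domain and codomain indexings so as to agree with the frames.

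First I would record what this says for two framings of one graph. If $\hat G_1$ and $\hat G_2$ are framed cospans with the same internal graph $G$, write $v(\hat G_i) = \sigma_i \circ g_m \circ \tau_i^{-1}$, where $\tau_i$ reorders the domain and $\sigma_i$ the codomain according to the $i$-th frame. Eliminating the common $g_m$ gives
\[ v(\hat G_2) = (\sigma_2 \circ \sigma_1^{-1}) \circ v(\hat G_1) \circ (\tau_1 \circ \tau_2^{-1}). \]
The natural $g_m$-indexing cancels, so the correction isomorphisms $P := \sigma_2 \circ \sigma_1^{-1}$ and $Q := \tau_1 \circ \tau_2^{-1}$ depend only on the two frames — on their orderings, their signings, and their partition of the boundary into domain and codomain — and not on $G$ at all. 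I would then instantiate this twice: once for the two framings $L_1, L_2$ of the left graph $L$, and once for the two framings $R_1, R_2$ of the right graph $R$, obtaining $v(L_2) = P_L \circ v(L_1) \circ Q_L$ and $v(R_2) = P_R \circ v(R_1) \circ Q_R$.

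The crux is then to show $P_L = P_R$ and $Q_L = Q_R$. Because $(L_1,R_1)$ and $(L_2,R_2)$ are framings of the \emph{same} equation, each framing uses a single pair of frames $X,Y$ for both sides, attached to $\Bound(L)$ and to $\Bound(R)$ through $I$ via the compatibility triangle in the definition of a framing: one has $a = i_1 \circ j_1$ and $c = i_2 \circ j_1$ (and likewise $b = i_1 \circ j_2$, $d = i_2 \circ j_2$). Thus the ordering and signing that a frame induces on $\Bound(L)$ and on $\Bound(R)$ are read off from the \emph{same} data on $X + Y$ by postcomposing with $i_1$ and with $i_2$ respectively. Under the canonical identification $\Bound(L) \cong I \cong \Bound(R)$ given by $i_2 \circ i_1^{-1}$, the frame-$1$ ordering and signing on $\Bound(L)$ are carried exactly onto the frame-$1$ ordering and signing on $\Bound(R)$, and similarly for frame $2$; since the domains and codomains of the values depend only on the frames, they are literally the same for $L$ and $R$. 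Hence the reordering and re-signing permutations extracted from the two frames agree for $L$ and $R$, giving $P_L = P_R =: P$ and $Q_L = Q_R =: Q$.

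Finally, combining the two instantiated formulas yields $v(L_2) = P \circ v(L_1) \circ Q$ and $v(R_2) = P \circ v(R_1) \circ Q$ with the \emph{same} invertible $P,Q$. Therefore $v(L_1) = v(R_1)$ forces $v(L_2) = v(R_2)$, and because $P$ and $Q$ are isomorphisms the converse holds as well, which is the claim. I expect the main obstacle to be precisely the bookkeeping of the third paragraph: confirming that the shared frames induce matching orderings and signings on the two distinct boundaries $\Bound(L)$ and $\Bound(R)$. This is exactly what the compatibility of the framing through $I$ guarantees, and once the corrections are seen to be frame-determined (rather than $G$-determined), identifying them for the two sides is the heart of the proof.
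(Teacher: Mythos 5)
Your overall strategy is exactly the paper's: express the effect of changing the framing as a correction determined by the frames alone, observe that in a framing of an equation the two sides share a single pair of frames $X$, $Y$ (the paper's definition attaches the same cospan frames to both $L$ and $R$), so the same correction relates $v(L_1)$ to $v(L_2)$ and $v(R_1)$ to $v(R_2)$, and conclude. Your third paragraph spells out bookkeeping that the paper compresses into ``since $L_1$ and $R_1$ have the same frames, and $L_2$ and $R_2$ also share their frames''; with the paper's definition this is immediate, since the frames are literally shared rather than merely matched through $I$, but your derivation of it via $a = i_1 \circ j_1$, $c = i_2 \circ j_1$ is sound.

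There is, however, one step that fails as stated: the factorisation $v(\hat G_2) = P \circ v(\hat G_1) \circ Q$ with \emph{isomorphisms} $P$, $Q$. For positive framings this is right, since both framings are then forced to put exactly $\In(G)$ in the domain and $\Out(G)$ in the codomain, and only swap-reorderings are needed. But in the compact closed case the proposition explicitly allows framings that partition the boundary differently: a boundary vertex may lie in $X_1$ with sign $+$ and in $Y_2$ with sign $-$, so $v(\hat G_1)$ and $v(\hat G_2)$ need not even have the same domain up to reordering, and passing between them requires bending a wire with $d_A$ or $e_A$. Such a bend connects an output of $v(\hat G_1)$ to a fresh domain wire that bypasses the morphism, and so cannot be decomposed as pre-composition plus post-composition; this is precisely why the paper writes the correction in the form $v(L_1) = c \circ (v(L_2) \otimes i) \circ d$, with $i$ a tensor product of identities and $c$, $d$ built from identities, swaps and dual maps. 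The repair costs nothing structurally: the operation $f \mapsto c \circ (f \otimes i) \circ d$ is still determined by the frames alone, so it applies identically to the $L$'s and the $R$'s, giving one implication; and since $c$ and $d$ (containing cups and caps) are not invertible, the converse is obtained not by inverting $P$ and $Q$ but by running the same construction with the roles of the two framings exchanged, exactly as the paper's closing sentence does.
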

\begin{proof}
    Consider the domains and codomains of $v(L_1)$ and $v(L_2)$, indexed
    by the vertices of their respective frames, $X_1$, $Y_1$, $X_2$ and
    $Y_2$.

    Using the same construction as for the value, we can find a tensor
    product of identities $i$ and maps $c$ and $d$ composed of
    identities, swaps and dual maps such that
    \[ v(L_1) = c \circ (v(L_2) \otimes i) \circ d \]
    and this construction depends only on the frames of the cospans,
    since these are what determine the domains and codomains of the
    values and their indexings.  But this means that we also have
    \[ v(R_1) = c \circ (v(R_2) \otimes i) \circ d \]
    since $L_1$ and $R_1$ have the same frames, and $L_2$ and $R_2$ also
    share their frames.  Thus if $v(L_2) = v(R_2)$, then $v(L_1) =
    v(R_1)$.

    The same argument works in reverse, giving us that if $v(L_1) =
    v(R_1)$, then $v(L_2) = v(R_2)$.
\end{proof}

Soundness of the reflexivity and symmetry axioms follows from the
reflexivity and symmetry of $=$.  For transitivity, we just need to note
that, given a framing $(\hat G,\hat K)$ of $G \rweq_{p,q} K$, there is a
framed cospan $\hat H$ such that $(\hat G,\hat H)$ and $(\hat H,\hat K)$
are framings of $H \rweq_{k,l} K$ and $G \rweq_{i,j} H$ respectively.

\begin{proposition}
    The \textsc{Leibniz} rule for string graphs is sound.  In other
    words, if we have the following commuting diagram
    \[ \begin{tikzcd}
        G \arrow[swap,right hook->]{d}{g}
        & I \arrow[swap]{l}{i_1} \arrow{r}{i_2} \arrow[right hook->]{d}{d}
        & H \arrow[right hook->]{d}{h}
        \\ K \NEbracket
        & D \arrow[swap,left hook->]{l}{k} \arrow[right hook->]{r}{l}
        & L \NWbracket
        \\
        & J \arrow{ul}{j_1} \arrow[swap]{ur}{j_2}
             \arrow[right hook->]{u}{d'} &
    \end{tikzcd} \]
    where the top and bottom spans are string graph equations, a framing
    $(\hat G,\hat H)$ of $G \rweq_{i_1,i_1} H$ and framing $(\hat
    K,\hat L)$ of $K \rweq_{j_1,j_2} L$, then for all valuations $v$,
    \[ v(\hat G) = v(\hat H)\;\Rightarrow\;v(\hat K) = v(\hat L) \]
\end{proposition}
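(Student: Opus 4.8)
The plan is to express both $v(\hat K)$ and $v(\hat L)$ as the result of applying one and the same operation---built entirely from the context graph $D$, the interface, and the frames---to $v(\hat G)$ and to $v(\hat H)$ respectively. Since the hypothesis supplies $v(\hat G) = v(\hat H)$, applying this common operation to both sides yields the conclusion. Before doing this I would invoke framing-invariance (Proposition \ref{prop:soundness-is-framing-invariant}): because the implication to be proved concerns only equalities of values, and each such equality holds for one framing exactly when it holds for any, I am free to replace the given framings by convenient ones. Concretely I would frame $G$ and $H$ by (a copy of) their common boundary $I$, so that $v(\hat G)$ and $v(\hat H)$ share the same domain and codomain, indexed by $I$, and I would choose the framing of $K \rweq_{j_1,j_2} L$ compatibly through $D$.

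Next I would exploit the two pushout squares, which say that $K = G +_I D$ and $L = H +_I D$. Because the interface $I$ consists of boundary wire-vertices, the gluing identifies wire-vertices but leaves node-vertices and the bulk of the edges untouched; hence the elementary subgraphs of $K$ are precisely those of $G$ together with those of $D$, with each shared interface vertex $w \in I$ appearing as an output-side vertex of one elementary subgraph and an input-side vertex of another---exactly the ``two elementary subgraphs'' situation of the value construction. Writing $g_0^K = g_0^G \otimes g_0^D$ for the tensor of elementary-subgraph values, the contractions assembling $v(\hat K)$ then split into three groups: those at interior wire-vertices of $G$, those at interior wire-vertices of $D$, and those at the vertices of $I$ that become interior in $K$.

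I would factor the computation using the two structural facts about contractions already established: that they commute (Kissinger's proposition), so I may choose their order freely, and that a contraction touching only one tensor factor passes through the tensor product (Proposition \ref{prop:index-split}). Performing $G$'s interior contractions first lets me write them as acting on the $g_0^G$ factor alone, reassembling $v(\hat G)$ up to the fixed swap/dual re-ordering determined by the frame of $\hat G$. The remaining interface contractions, the $D$-interior contractions, and the closing re-ordering dictated by the frames together constitute an operation $\Phi$ that depends only on $D$, on how $I$ embeds in $D$, and on the (now common) frames---never on the internal structure of $G$. This gives $v(\hat K) = \Phi(v(\hat G))$. Since $L = H +_I D$ reuses the same $D$, the same interface $I$, and (by the compatible choice) the same frames, the identical argument yields $v(\hat L) = \Phi(v(\hat H))$ with the \emph{same} $\Phi$, so $v(\hat G) = v(\hat H)$ forces $v(\hat K) = v(\hat L)$.

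The main obstacle I anticipate is the bookkeeping that certifies ``same $\Phi$'': I must check that the interface contractions and the final re-ordering are genuinely determined by $D$ and the frames alone, independently of whether $G$ or $H$ occupies the hole. This rests on the shared-boundary conditions in the definition of a string graph equation (Definition \ref{def:graph-eq}), which identify $\In(G) \cong \In(H)$ and $\Out(G) \cong \Out(H)$ compatibly, and on the existence of framings agreeing on $I$; wire-homeomorphism invariance (Proposition \ref{prop:value-wire-homeo}) can be used to tidy any wires running through the interface so that the elementary-subgraph decomposition behaves uniformly across the two cases. The residual case analysis---interface vertices that are inputs versus outputs of $G$, and bare wires of $G$ with both endpoints in $I$---is routine, but each configuration must be verified to confirm that the factorization holds and that $\Phi$ is unchanged when $H$ replaces $G$.
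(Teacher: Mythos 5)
Your proposal is correct and follows essentially the same route as the paper's proof: framing-invariance (Proposition \ref{prop:soundness-is-framing-invariant}) to fix convenient framings, the pushout-induced disjoint decomposition of elementary subgraphs of $K$ into those of $G$ and of $D$, and then Proposition \ref{prop:index-split} plus commutativity of contractions to exhibit $v(\hat K)$ and $v(\hat L)$ as one common context operation $C_{a_n}^{a_n}(\cdots C_{a_1}^{a_1}(d_0 \otimes -)\cdots)$ (with frame re-ordering) applied to $v(\hat G)$ and $v(\hat H)$. The bookkeeping obstacle you flag is exactly where the paper spends its effort -- showing the residual swap maps $s_1, s_2$ attached to $v(\hat H)$ are either absorbed into the contractions or commute out to act only on the boundary of $L$ -- so your plan, carried through, reproduces the paper's argument.
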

\begin{proof}
    By proposition \ref{prop:soundness-is-framing-invariant}, it
    suffices to choose a specific framing for each equation, instead of
    showing it for all framings.

    Since all the arrows in the above diagram are monic (and hence
    injective), we will consider them to be subgraph relations (so $G$
    is a subgraph of $K$ by $g$, and so on).  For convenience, we will
    use $D'$ to refer to $D$ less any isolated wire-vertices.

    We start by noting that $I$ consists only of isolated wire-vertices,
    but $G$ and $K$ have \textit{no} isolated wire-vertices.  This means
    that, since $K$ is a pushout of $D$ and $G$ from $I$, each
    elementary subgraph of $K$ must be an elementary subgraph of exactly
    one of $D'$ or $G$.  Similarly, $W_1(K)$ (defined in section
    \ref{sec:sg-val-value}) is the disjoint union of $W_1(G)$ and
    $W_1(D')$, and likewise for $W_2(K)$.  The same results hold for
    $L$.

    It then suffices to show that, when calculating the value of a
    framed cospan $\hat K$ of $K$, we can choose framings $(\hat G,\hat
    H)$ and $(\hat K,\hat L)$ for the equations such that $v(\hat K)$ is
    of the form
    \[ C_{a_n}^{a_n}(\cdots C_{a_1}^{a_1}(d_0 \otimes v(\hat G))\cdots) \]
    and $v(\hat L)$ is
    \[ C_{a_n}^{a_n}(\cdots C_{a_1}^{a_1}(d_0 \otimes v(\hat H))\cdots) \]
    for some indexed morphism $d_0$.  The result then follows immediately.

    We choose an order for the elementary subgraphs of $D'$ (none of
    which are isolated wire-vertices), and an order for the
    variable-arity edges of those subgraphs.  Tensoring the elementary
    subgraphs together in order gives us $d_0$, which can be indexed by
    $W_1(D')$ and $W_2(D')$, appropriately ordered.  We do the same for
    $G$ and $H$ to get $g_0$ and $h_0$, indexed similarly.

    Now $k_0 = d_0 \otimes g_0$ and $l_0 = d_0 \otimes h_0$ are valid
    choices for the same values for $K$ and $L$, and we can order
    $W_1(K)$ using the order of $W_1(D')$ followed by the order of
    $W_1(G)$, and similarly for $W_2(K)$, $W_1(L)$ and $W_2(L)$.  These
    can then be used to index $k_0$ and $l_0$, and these indexings agree
    with the indexings for $d_0$, $g_0$ and $h_0$.

    Next, we arbitrarily order $W_1(G) \cap W_2(G)$ as $b_1,\ldots,b_p$
    and $W_1(H) \cap W_2(H)$ as $c_1,\ldots,c_q$.  We apply the
    contractions $C_{b_i}^{b_i}$ in order to $k_0$ and $C_{c_j}^{c_j}$
    to $h_0$ to get $k_p$ and $h_q$ respectively.  At this point, we
    note that, by proposition \ref{prop:index-split},
    \[ k_p = d_0 \otimes C_{b_p}^{b_p}(\cdots C_{b_1}^{b_1}(g_0)\cdots)
    = d_0 \otimes g_p \]
    and similarly $l_q = d_0 \otimes h_q$.  But if we choose the framing
    $(\hat G,\hat H)$ so that the ordering of the frames matches the
    ordering of $W_1(G)$ and $W_2(G)$, we have that $v(\hat G) = g_p$.
    To get $v(\hat H)$, we may need to pre- and post-compose with swap
    maps to get an order that agrees with the frames.  Call these $s_1$
    and $s_2$.  Then $v(\hat H) = s_2 \circ h_q \circ s_1$.

    If we order $(W_1(K) \cap W_2(K))\setminus (W_1(G) \cap W_2(G))$ as
    $a_1,\ldots,a_n$, a similarly suitable choice of the framing $(\hat
    K,\hat L)$ gives us
    \[ v(\hat K) = C_{a_n}^{a_n}(\cdots C_{a_1}^{a_1}(d_0 \otimes v(\hat
    G))\cdots) \]
    It remains to show that
    \[ v(\hat L) = C_{a_n}^{a_n}(\cdots C_{a_1}^{a_1}(d_0 \otimes (s_2
    \circ h_q \circ s_1))\cdots) \]
    We need to consider the decomposition of $s_1$ and $s_2$ into their
    individual swap operations.  If any given swap operates on an index
    that is then contracted (ie: one of the $a_i$), it can be subsumed
    into that contraction, due to the way contractions are defined.  So
    we can construct reordering maps $s'_1$ and $s'_2$ that only operate
    on inputs and outputs (respectively) of $L$ such that
    \[ C_{a_n}^{a_n}(\cdots C_{a_1}^{a_1}(d_0 \otimes (s_2 \circ h_q
    \circ s_1))\cdots) = C_{a_n}^{a_n}(\cdots C_{a_1}^{a_1}(d_0 \otimes
    (s'_2 \circ h_q \circ s'_1))\cdots)\]
    But this is the same as
    \[ (1 \otimes s'_2) \circ C_{a_n}^{a_n}(\cdots C_{a_1}^{a_1}(d_0 \otimes
    h_q)\cdots) \circ (1 \otimes s'_1)\]
    We will refer to this (indexed) morphism as $\lambda : A \rightarrow
    B$.

    We need to show that indexing of this morphism agrees with the order
    on the frames of $\hat L = X \hookrightarrow L \hookleftarrow Y$.
    We will just treat the $X$ frame, as the $Y$ frame is symmetric.
    Note that we have the following commuting diagram, by the definition
    of a framing, which allows us to consider $X$ a subgraph of $D$.
    \[ \begin{tikzcd}
        K & D \arrow[swap]{l}{k} \arrow{r}{l} & L
        \\
        & J \arrow{u}{d'} \arrow{ul}{j_1} \arrow[swap]{ur}{j_2} &
        \\
        & X \arrow{u} \arrow[bend left]{uul} \arrow[bend right]{uur} &
    \end{tikzcd} \]
    We split $X$, which consists of the inputs of $K$ (and therefore of
    $L$) into two parts: $X_1$ are those vertices that are inputs of
    $D'$, and $X_2$ are the inputs of $G$ (and hence of $H$).
    Note that we can consider $X_1$ a subgraph of $D'$ and $X_2$ a
    subgraph of $I$.

    Suppose $a < b \in X$.  Then either both are in $X_1$, both are in
    $X_2$ or $a \in X_1$ and $b \in X_2$.  In the latter case, we know
    that $a$ appears before $b$ in the indexing of the domain of
    $\lambda$ due to the construction of $\lambda$.  Suppose both are in
    $X_1$.  Then we know that $A_a$ appears before $A_b$ in the indexing
    of $d_0$, because that is how we chose the ordering on $X$.
    Otherwise, both are in $X_2$.  We know that $A_a$ appears before
    $A_b$ in the indexing of $g_0$, and hence in the indexing of $g_p$.
    If the same is true of the indexing of $h_q$, it must be the case
    that $s_1$, and hence $s'_1$, does not swap the order of $A_a$ and
    $A_b$, and so $A_a$ appears before $A_b$ in the indexing of
    $\lambda$.  Otherwise, if $A_b$ is before $A_a$ in the indexing of
    $h_q$, $s_1$ (and hence $s'_1$) must swap $A_a$ with $A_b$, and so
    we still have that $A_a$ is before $A_b$ in the indexing of
    $\lambda$.

    So $v(\hat L) = \lambda$, as required.
\end{proof}

\begin{proposition}
    The \textsc{Homeo} rule for string graphs is sound.  In other words,
    given wire-homeomorphic string graph equations $G \rweq_{i,j} H$ and
    $G' \rweq_{i',j'} H'$ with respective framings $(\hat G,\hat H)$ and
    $(\hat G',\hat H')$,
    \[ v(\hat G) = v(\hat H)\;\Rightarrow\;v(\hat G') = v(\hat H') \]
\end{proposition}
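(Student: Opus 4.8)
The plan is to reduce everything to Proposition \ref{prop:value-wire-homeo}, which already tells us that the value of a framed cospan is invariant under wire homeomorphism, provided the two sets of frames are identified compatibly and in an order- and sign-preserving way. Once that proposition can be invoked on each side of the two equations separately, the soundness claim is just a chain of equalities: assuming $v(\hat G) = v(\hat H)$, we obtain $v(\hat G') = v(\hat G) = v(\hat H) = v(\hat H')$.

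First I would use Proposition \ref{prop:soundness-is-framing-invariant} to replace the given framings by convenient ones, since the truth of $v(\hat G)=v(\hat H)$ (resp.\ $v(\hat G')=v(\hat H')$) depends only on the equation and not on the chosen framing. Crucially, the \textsc{Homeo} rule is stated for two equations $G \xleftarrow{i} I \xrightarrow{j} H$ and $G' \xleftarrow{i'} I \xrightarrow{j'} H'$ sharing the \emph{same} span vertex $I$, so I can frame both equations using one and the same splitting of $I$: let $X \subseteq I$ be the part mapping to the inputs and $Y$ its complement, ordered arbitrarily and signed entirely $+$ (so all four framings are positive, which keeps us safe even when $v$ is not over a compact closed category). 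The boundary diagram \eqref{eq:homeo-bounds} forces $g_B$ and $h_B$ to carry inputs to inputs and outputs to outputs: in a string graph without isolated wire-vertices each boundary vertex is either the source of its wire (an input) or its target (an output), and a wire homeomorphism preserves sources and targets of wires. Hence the single subgraph $X \subseteq I$ simultaneously records the inputs of $G$, $H$, $G'$ and $H'$, and the construction yields framings $\hat G, \hat H, \hat G', \hat H'$ all carrying the identical frames $(X,Y)$.

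Next I would apply Proposition \ref{prop:value-wire-homeo} twice: once to $g : G \sim G'$ with the framings $\hat G$ and $\hat G'$, and once to $h : H \sim H'$ with $\hat H$ and $\hat H'$. In each case the frame-identifications demanded by that proposition are the identity maps on $X$ and $Y$, which are trivially order- and sign-preserving, and the one remaining commutativity condition — that the boundary bijection $g_B$ (resp.\ $h_B$) agrees with the two frame inclusions — is exactly \eqref{eq:homeo-bounds} read through the identifications $X \sqcup Y \cong I \cong \Bound(-)$. This gives $v(\hat G) = v(\hat G')$ and $v(\hat H) = v(\hat H')$. Combining with the hypothesis $v(\hat G) = v(\hat H)$ yields $v(\hat G') = v(\hat H')$ for the chosen framings, and Proposition \ref{prop:soundness-is-framing-invariant} lifts the conclusion back to the originally given framings.

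The main obstacle is the frame bookkeeping in the middle step: one must verify carefully that the shared $I$ together with \eqref{eq:homeo-bounds} genuinely permits both equations to be framed by a single $(X,Y)$, so that Proposition \ref{prop:value-wire-homeo} applies with \emph{identity} frame maps rather than some nontrivial comparison isomorphism. This is routine diagram chasing, but it is precisely where all the hypotheses packaged into the \textsc{Homeo} rule (the shared $I$ and the commuting boundary square) are actually consumed.
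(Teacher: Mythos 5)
Your proposal is correct and follows essentially the same route as the paper's proof: construct framings of the two equations that are compatible in the sense required by Proposition \ref{prop:value-wire-homeo}, apply that proposition once to $g : G \sim G'$ and once to $h : H \sim H'$, and conclude by transitivity of $=$. Your explicit invocation of Proposition \ref{prop:soundness-is-framing-invariant} to pass between the given framings and the conveniently chosen positive ones is a detail the paper leaves implicit, but it is the same argument.
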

\begin{proof}
    The precondition on the wire homeomorphisms allows us to construct
    framings of the equation that are consistent with each other in the
    manner required by proposition \ref{prop:value-wire-homeo}.  From
    that proposition, we get
    \[ v(\hat G') = v(\hat G) \qquad \textrm{and} \qquad
    v(\hat H) = v(\hat H') \]
    and it follows by transitivity of $=$ that
    \[ v(\hat G) = v(\hat H)\;\Rightarrow\;v(\hat G') = v(\hat H') \]
\end{proof}

\section{Graph Rewriting} 
\label{sec:dpo}

Graph rewriting has been around in some form since at least the 60s;
it made an early appearance in \textit{graph grammars}, a field that
aimed to apply formal language theory to multidimensional objects
rather than just linear strings, notably in \cite{Pfaltz1969} and
\cite{Schneider1970}.

In term rewriting, we replaced a subterm of a term with another term
that we viewed as equivalent in some way.  In a similar manner, graph
rewriting (at least as far as this thesis is concerned) aims to replace
a subgraph of a graph with another ``equal'' graph.  There are several
approaches to this problem, almost all of which involve a graph $L$ to
be searched for and removed from the target graph $G$, a graph $R$ to
replace it with and some \textit{embedding transformation} $E$
describing how to map from the connectivity of $L$ with $G \graphminus
L$ to edges between $R$ and $G \graphminus L$.

Algebraic graph rewriting systems express $E$ using constructs from
category theory, which provides a useful set of tools for reasoning
about such systems.  Using these systems allows us to make use of
established work, such as proofs that independent rewrites (which
operate on different parts of $G$) can be applied in either order and
acheive the same result\cite{Ehrig1975}.

There are two main approaches to algebraic graph rewriting:
single-pushout (SPO) and double-pushout (DPO).  DPO was the first
algebraic graph rewriting system, introduced by Ehrig, Pfender and
Schneider in \cite{Ehrig1973}.  A rewrite rule is represented as a span
of graphs
\[ L \xleftarrow{i} I \xrightarrow{j} R \]
and a rewrite from $G$ to $H$ is a pair of overlapping pushouts
\[
    \begin{tikzcd}
        L \arrow[swap]{d}
        & I \arrow[swap]{l}{i} \arrow{r}{j} \arrow{d}
        & R \arrow{d}
        \\ G \NEbracket
        & D \arrow{l} \arrow{r}
        & H \NWbracket
    \end{tikzcd}
\]
The embedding transformation is described by $i$, $j$ and $I$.  We
can think of $I$ as being the interface of the rewrite rule, the part
allowed to interact with the rest of $G$.  This part is held in place
when the rest of $L$ is removed from $G$, allowing the connectivity with
$G \graphminus L$ to be maintained.

SPO, as presented by L\"owe in \cite{VandenBroek1991},
generalises DPO by using a single pushout
\[
    \begin{tikzcd}
        L \dar \rar
        & R \dar
        \\ G \rar
        & H \NWbracket
    \end{tikzcd}
\]
in the category of partial graph morphisms.  This allows for rewrites
that would be prohibited under DPO, such as deleting vertices in unknown
contexts.  For example the SPO rewrite
\[ \input{spo-delete.tikz} \]
has no equivalent DPO rewrite, because of the edge between the grey and
black vertices.

We will make use of DPO as our graph rewriting system, as we do not
require (and, in fact, wish to avoid) the extra flexibility of SPO.

Similar to the term case, we direct a string graph equation (definition
\ref{def:graph-eq}) to get a \textit{string graph rewrite rule}.  Given
a string graph rewrite rule $L \xleftarrow{i} I \xrightarrow{j} R$, a
\textit{string graph rewrite} is a pair of pushouts of monomorphisms in
$\catSGraph_T$
\begin{equation}\label{eq:sg-rw-rule}
    \begin{tikzcd}
        L \arrow[swap,right hook->]{d}{m}
        & I \arrow[swap]{l}{i} \arrow{r}{j} \arrow[right hook->]{d}
        & R \arrow[right hook->]{d}
        \\ G \NEbracket
        & D \arrow[left hook->]{l} \arrow[right hook->]{r}
        & H \NWbracket
    \end{tikzcd}
\end{equation}
(which should also be familiar from the rules of section
\ref{sec:graph-eqs}).

Note that the inclusion of $D$ into $G$ must contain $\Bound(G)$ in its
image.  This is because any input or output of $G$ in the image of $m$
must map from an input or output of $L$, and must therefore be in the
image of $i$.  The same is true of the inclusion of $D$ into $H$.  What
is more, the constraints on $i$ and $j$ mean that the preimages of
$\Bound(G)$ and $\Bound(H)$ under these inclusions coincide.  The same
is also true for $\In(G)$ and $\In(H)$ and for $\Out(G)$ and $\Out(H)$.
If we let $I'$ be the shared preimage of the boundaries of $G$ and $H$,
then -- providing $G$ and $H$ have no isolated wire-vertices -- the span
at the bottom of
\[ \begin{tikzcd}
    G
    & D \lar[left hook->] \rar[right hook->]
    & H
    \\
    & I' \uar[right hook->] \ar{ul}{i'} \ar[swap]{ur}{j'} &
\end{tikzcd} \]
is a string graph rewrite rule.

As the term ``rewriting'' suggests, what we actually want to do is start
with the string graph rewrite rule $L \rewritesto_{i,j} R$ and $G$, and
find the remainder of the diagram.  We start by finding a
\textit{matching}, a map $m : L \rightarrow G$ (see
\eqref{eq:sg-rw-rule}) such that $i$ and $m$ can be extended to a
pushout; ie: they have a \textit{pushout complement}.

\begin{definition}[Pushout complement]
    A \textit{pushout complement} for a pair of arrows
    \[ A \xrightarrow{f} B \xrightarrow{g} C \]
    is an object $B'$ and a pair of arrows
    \[ A \xrightarrow{f'} B' \xrightarrow{g'} C \]
    such that
    \[
        \begin{tikzcd}
            A \arrow{r}{f} \arrow[swap]{d}{f'}
            & B \arrow{d}{g}
            \\ B' \arrow[swap]{r}{g'}
            & C \NWbracket
        \end{tikzcd}
    \]
    is a pushout.
\end{definition}

The following result is due to Dixon and Kissinger\cite{Dixon2010a},
based on a similar result for adhesive categories in \cite{Lack2005}.

\begin{proposition}
    If a pair of arrows $(b,f)$ in $\catSGraph_T$, where $b$ is monic,
    have a pushout complement, it is unique up to isomorphism.
\end{proposition}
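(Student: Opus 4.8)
The plan is to reduce the statement to the uniqueness of pushout complements of monomorphisms in the ambient adhesive category $\catGraph/\mathcal{G}_T$, and then to transfer the result back to the full subcategory $\catSGraph_T$. Since $\catGraph$ is a presheaf category (functors into $\catSet$) it is a topos, hence adhesive; slices of adhesive categories are again adhesive, so $\catGraph/\mathcal{G}_T$ is adhesive. In an adhesive category the pushout complement of a pair $A \xrightarrow{b} B \xrightarrow{f} C$ with $b$ monic is unique up to isomorphism whenever it exists (this is exactly the adhesive result of \cite{Lack2005}). By proposition \ref{prop:sg-mono-inj} and the remark following theorem \ref{thm:slice-cat-mono-func}, a map of $\catSGraph_T$ is monic precisely when it is injective, and the same holds in $\catGraph/\mathcal{G}_T$, so the hypothesis that $b$ is monic is unambiguous between the two categories.

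The core of the argument is a lemma: if a square of monomorphisms commutes in $\catSGraph_T$ and is a pushout there, then it is also a pushout in $\catGraph/\mathcal{G}_T$. This is what lets two pushout complements computed inside $\catSGraph_T$ be compared using the adhesive uniqueness above. I would prove the lemma using the set-based characterization of pushouts of monomorphisms already exploited in the proof of proposition \ref{prop:slice-cat-func-pb}: a commuting square of monos with legs $A \to B$, $A \to B'$, $B \to C$ and $B' \to C$ is a pushout if and only if $C$ is covered by the images of $B$ and $B'$ and their intersection is exactly the image of $A$. Writing $g' : B' \to C$ for the leg opposite $b$, the pushout condition forces
\[ \im(g') = (C \setminus \im(f)) \cup \im(f \circ b), \]
so the image of $B'$ in $C$ is completely determined as a subgraph of $C$. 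The string-graph conditions on the ambient pushout object (arity-matching and the wire-vertex degree restriction) are verified just as in proposition \ref{prop:sg-subgraph-ops}, so by fullness of the inclusion the two pushouts coincide.

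Uniqueness then follows immediately. Given two pushout complements $B'_1$ and $B'_2$ in $\catSGraph_T$, the lemma promotes both squares to pushouts in $\catGraph/\mathcal{G}_T$; the leg opposite the monomorphism $b$ is itself monic, being the pushout of the mono $b$ along the other leg in an adhesive category, so each $B'_i$ is isomorphic to its image $\im(g'_i)$. By the displayed formula these images agree as subgraphs of $C$, whence $B'_1 \cong B'_2$; and since the comparison isomorphism lies in the full subcategory it is a string graph isomorphism.

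The step I expect to be the main obstacle is the lemma: showing that a pushout of monomorphisms taken in $\catSGraph_T$ is genuinely a pushout in the ambient category $\catGraph/\mathcal{G}_T$. Fullness of the inclusion gives one direction for free — an ambient pushout whose object is a string graph is automatically a pushout in $\catSGraph_T$ — but the direction we need runs the other way, and it is precisely here that the special structure of string graphs must be invoked, to guarantee that the ambient pushout object satisfies the string-graph axioms and therefore coincides with the object computed in the subcategory. Once that coincidence is secured, the remainder of the proof is purely formal.
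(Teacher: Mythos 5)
The paper itself does not prove this proposition -- it is quoted from Dixon and Kissinger \cite{Dixon2010a}, whose argument runs exactly along the lines you propose: regard $\catSGraph_T$ as a full subcategory of the adhesive category $\catGraph/\mathcal{G}_T$ and appeal to the uniqueness of pushout complements along monomorphisms from \cite{Lack2005}. So your overall strategy is the intended one. There is, however, a concrete gap in your execution: both your proof of the key lemma and your closing image computation assume a square of \emph{monomorphisms}, whereas the proposition only grants that $b$ is monic. Neither $f$ nor the complement legs $f'$ and $g'$ are assumed monic, and the set-based pushout characterisation you invoke is valid only when all four arrows are monic -- the paper itself flags exactly this restriction in the proof of proposition \ref{prop:slice-cat-func-pb}. (You do note that $g'$ is monic as the pushout of the mono $b$ in an adhesive category, but that inference is only available \emph{after} the square is known to be an ambient pushout, which is what the lemma is supposed to establish; and $f$, $f'$ can simply fail to be monic.) Consequently the formula $\im(g') = (C \setminus \im(f)) \cup \im(f \circ b)$, and with it your uniqueness argument, is only justified in the all-mono special case, not in the generality of the statement.

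The gap is fixable, and more cheaply than you anticipate: the lemma you single out as the main obstacle follows, with no mono hypotheses at all, from theorem \ref{thm:sg-bcoh-pushout} together with the observation in section \ref{sec:dpo} that the pushouts of $\catSGraph_T$ are computed as in $\catGraph/\mathcal{G}_T$. If a span has a pushout in $\catSGraph_T$, the theorem forces it to be boundary-coherent; the ambient pushout of a boundary-coherent span is then a string graph, hence by fullness it is also a pushout in $\catSGraph_T$, and uniqueness of pushouts identifies it with the given one. Once every $\catSGraph_T$-pushout is known to be an ambient pushout, the Lack--Soboci\'nski uniqueness result applies verbatim in the adhesive category $\catGraph/\mathcal{G}_T$ (it requires only $b$ monic), and the resulting comparison isomorphism between the two complements lies in $\catSGraph_T$ by fullness -- no image bookkeeping is needed at all.
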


\begin{definition}[String graph matching]
    For a string graph $G$ and a string graph rewrite rule $L
    \xleftarrow{i} I \xrightarrow{j} R$, a monomorphism $m : L
    \rightarrow G$ is called a \textit{string graph matching} if $I
    \xrightarrow{i} L \xrightarrow{m} G$ has a pushout complement.
\end{definition}

We require a stronger form of arity-matching in order to guarantee
pushout complements:
\begin{definition}\label{def:sg-local-iso}
    A map $f$ between $\mathcal{G}_T$-typed graphs $G$ and $H$ is a
    \textit{local isomorphism} if for every $v \in N(G)$, the
    restriction of $f$ to the edge neighbourhood of $v$ is a bijection
    onto the edge neighbourhood of $f(v)$.
\end{definition}

\begin{proposition}[Kissinger]
    Let $L \rewritesto R$ be a string graph rewrite rule.  Then any
    monic local isomorphism $m : L \rightarrow G$ is a string graph
    matching.
\end{proposition}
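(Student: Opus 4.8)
The plan is to produce the pushout complement explicitly and check its defining universal property by hand. Recall that in a string graph rewrite rule the interface $I$ is (isomorphic to) the boundary $\Bound(L)$, which consists of wire-vertices carrying no edges; consequently every edge of $L$, every node-vertex of $L$, and every non-boundary wire-vertex of $L$ lies in the ``interior'' $L \setminus i(I)$. My candidate for $D$ is $G$ with the images of all these interior pieces deleted: concretely, delete from $G$ every vertex of $m(N(L))$, every wire-vertex of $m(W(L) \setminus \Bound(L))$, and every edge of $m(E(L))$, while keeping $m(i(I)) = m(\Bound(L))$. Let $m' : D \to G$ be the resulting inclusion and $i' : I \to D$ the corestriction of $m \circ i$ (well-defined since boundary vertices are retained). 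Because $m$ is injective (proposition \ref{prop:sg-mono-inj}), the classical identification condition is automatic, and the only possible obstruction to this being a legitimate deletion is a \emph{dangling edge}: an edge of $G$ incident to a deleted vertex but not itself deleted.

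First I would rule out dangling edges, which is exactly where the local-isomorphism hypothesis is indispensable. For a deleted node-vertex $m(v)$ with $v \in N(L)$, definition \ref{def:sg-local-iso} says $m$ restricts to a bijection from the whole edge neighbourhood of $v$ onto that of $m(v)$; hence every edge incident to $m(v)$ in $G$ is the image of an edge of $L$ and is therefore deleted. This is precisely the point at which mere arity-matching would fail: arity-matching controls only the fixed-arity edges, so it could leave additional variable-arity edges at $m(v)$ that the deletion would strand. For a deleted interior wire-vertex $m(w)$, the vertex $w$ has exactly one incoming and one outgoing edge in $L$ (both deleted); since $G$ is a string graph, $m(w)$ has at most one incoming and one outgoing edge, and these must be the images of the two edges at $w$, hence deleted as well. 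So no deleted vertex leaves a surviving incident edge, and $D$ is a genuine subgraph of $G$.

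Next I would verify that $D$ is itself a string graph. The wire-vertex degree bounds are inherited directly from $G$. For arity-matching it suffices to observe that every node-vertex retained in $D$ is some node-vertex $n'$ of $G$ with $n' \notin m(N(L))$; none of its incident edges can be deleted, since a deleted edge $m(e)$ has both endpoints in $m(V(L))$ and, being type-preserving, would force $n' \in m(N(L))$. Thus $n'$ keeps its entire (in particular fixed-arity) edge neighbourhood, so the typing morphism of $D$ remains arity-matching (definition \ref{def:sg-arity-matching}) and $D \in \catSGraph_T$. Finally, to see that the square on $I, L, D, G$ is a pushout, I would use the $\catSet$-based characterization of pushouts of monomorphisms from the proof of proposition \ref{prop:slice-cat-func-pb}: one checks that $m(L) \cup D = G$ (every item of $G$ is either retained in $D$ or deleted into $m(L)$) and that $m(L) \cap D = m(i(I))$ (a retained item of $m(L)$ cannot be an interior vertex, hence must be a boundary vertex, and no edge survives in the intersection). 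This makes the square a pushout in $\catGraph/\mathcal{G}_T$; since $I, L, D, G$ are all string graphs and $\catSGraph_T$ is a full subcategory, the universal property lifts and it is a pushout in $\catSGraph_T$. Hence $I \to L \to G$ has a pushout complement and $m$ is a string graph matching.

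I expect the dangling-edge argument at node-vertices to be the crux: it is the single place that genuinely needs the full edge-neighbourhood bijection rather than arity-matching, and it explains why the definition of a local isomorphism was strengthened in exactly this way. The remaining steps---checking the wire-vertex degrees, confirming $D$ is arity-matching, and lifting the pushout from $\catGraph/\mathcal{G}_T$ to the full subcategory $\catSGraph_T$---are routine given the earlier results.
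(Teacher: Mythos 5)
Your proof is correct, but note that the paper itself offers no proof to compare against: this proposition is imported verbatim from Kissinger's thesis (hence the attribution), and the surrounding text only remarks that ``pushout complements from monomorphisms can be calculated using graph subtraction.'' Your proposal is essentially a sound reconstruction of that omitted argument. The construction $D = G \graphminus m(L \graphminus i(I))$ is the standard DPO complement, and you correctly identify the two crux points: at node-vertices, the full edge-neighbourhood bijection of definition \ref{def:sg-local-iso} (rather than mere arity-matching, which constrains only fixed-arity edges) is what rules out dangling edges, while at interior wire-vertices the string-graph degree bounds force $m$ to be a local bijection there automatically, since an interior wire-vertex has exactly one incoming and one outgoing edge and its image can have at most one of each. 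Your verification that $D$ is a string graph is also right, including the observation that a deleted edge $m(e)$ cannot be incident to a retained node-vertex because the typing forces its node-vertex endpoints into $m(N(L))$. Two small points deserve flagging. First, you read $\Bound(L)$, and hence $I$, as a \emph{discrete} graph of wire-vertices; the paper never says this outright in definition \ref{def:graph-eq}, but it is forced by the framed-cospan convention (definition \ref{def:framed-cospan} identifies $\Bound(G)$ with a frame of isolated wire-vertices), and your construction is consistent with it --- the edge of a bare wire in $L$ is deleted from $G$ and restored through $m(L)$ in the pushout, so the union/intersection check still goes through. Second, your appeal to the union/intersection characterisation of pushouts of monomorphisms is legitimate because all four arrows in your square are monic ($i$ by the definition of a rewrite rule, $m$ by hypothesis, and the two constructed arrows by construction), which is exactly the hypothesis under which the proof of proposition \ref{prop:slice-cat-func-pb} states that criterion; and the lift from $\catGraph/\mathcal{G}_T$ to $\catSGraph_T$ is immediate from fullness once all four objects are known to be string graphs.
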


So if we find a monomorphism $m : L \rightarrow G$, we can assemble the
first pushout square of \eqref{eq:sg-rw-rule}: since $i$ and $m$ are
monic, we can view them as expressing the subgraph relation.  Then $D$
is obtained by removing from $G$ everything in the image of $m$ but not
in the image of $m \circ i$.  But we still need to be able to form the
second pushout.

\textit{Boundary coherence} proves to be the right notion for
determining when we can form pushouts in $\catSGraph_T$.

\begin{definition}\label{def:sg-bound-coher}
    A span of morphisms $G \xleftarrow{g} I \xrightarrow{h} H$ in
    $\catSGraph_T$ is called \textit{boundary-coherent} if:
    \begin{enumerate}
        \item for all $v \in \In(I)$, at least one of $g(v)$ and $h(v)$
            is an input; and
        \item for all $v \in \Out(I)$, at least one of $g(v)$ and $h(v)$
            is an output.
    \end{enumerate}

    A single morphism $f : G \rightarrow H$ is called boundary-coherent
    if the span $H \xleftarrow{f} G \xrightarrow{f} H$ is.
\end{definition}

\begin{theorem}[Kissinger]\label{thm:sg-bcoh-pushout}
    A span of morphisms $G \xleftarrow{g} I \xrightarrow{h} H$ in
    $\catSGraph_T$ has a pushout if and only if it is boundary-coherent.
\end{theorem}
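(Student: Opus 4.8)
The plan is to reduce everything to a single computation in the ambient category $\catGraph/\mathcal{G}_T$. That category is (a slice of) a presheaf category, hence adhesive and cocomplete, so the span $G \xleftarrow{g} I \xrightarrow{h} H$ has a pushout $P$ there, computed on vertices and on edges as the pushout in $\catSet$, with monic coprojections $p : G \to P$ and $q : H \to P$ whenever $g,h$ are monic. Because $\catSGraph_T$ is a \emph{full} subcategory, if $P$ happens to be a string graph then it is automatically the pushout in $\catSGraph_T$ as well: any competing string-graph cocone factors through $P$ in the ambient category, and fullness makes the mediating map a string-graph morphism. The whole statement therefore rests on deciding exactly when $P$ is a string graph, and on checking what survives when it is not.

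For the ``if'' direction I would show that boundary-coherence forces $P$ to meet the two defining conditions of definition~\ref{def:string-graph}. The construction of $\mathcal{G}_T$ already guarantees $P$ has no node--node edges, and arity-matching of its typing morphism is inherited from $g$ and $h$; the only thing that can break is the requirement that each wire-vertex have at most one incoming and at most one outgoing edge. I would analyse a wire-vertex $[w]$ of $P$ by its preimages: since $g,h$ are string-graph morphisms, $g(v)$ and $h(v)$ each carry at most one incoming edge, and two distinct incoming edges survive at $[w]$ exactly when $[w]$ is the image of some $v \in I$ whose lifts $g(v),h(v)$ both have incoming edges that do \emph{not} descend from a common edge of $I$ at $v$. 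A short check shows this occurs precisely when $v \in \In(I)$ while neither $g(v)$ nor $h(v)$ is an input; boundary-coherence forbids it, and the outgoing case is symmetric, so $P$ is a string graph.

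For the ``only if'' direction I argue the contrapositive. Assuming the span is not boundary-coherent, pick (say) $v \in \In(I)$ with incoming edges $e_G$ at $g(v)$ and $e_H$ at $h(v)$. Since $v$ has no incoming edge in $I$, neither $e_G$ nor $e_H$ lies in the image of $g$, resp.\ $h$, so these edges live strictly outside the shared interface. The key observation is that \emph{every} string-graph cocone $G \xrightarrow{x} Q \xleftarrow{y} H$ must collapse them: $x(g(v)) = y(h(v))$ is a single wire-vertex receiving both $x(e_G)$ and $y(e_H)$, so string-graph-ness forces $x(e_G) = y(e_H)$. Consequently the comparison map $P \to Q$ identifies the two \emph{distinct} edges $[e_G] \ne [e_H]$ of $P$ and can never be injective. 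Hence $P$ is not a string graph and no string-graph cocone agrees with it, so the inclusion $\catSGraph_T \hookrightarrow \catGraph/\mathcal{G}_T$ neither preserves nor creates this pushout and the effective (pullback-stable) pushout demanded by the double-pushout rewriting of section~\ref{sec:dpo} does not exist in $\catSGraph_T$.

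The genuine subtlety -- and the main obstacle -- is this last direction, where one must resist concluding too much. The forced collapse of $e_G$ and $e_H$ can still leave a perfectly good \emph{categorical} pushout in the full subcategory, namely the initial string-graph cocone obtained by performing exactly those identifications, so the bare category is in fact more cocomplete than a naive reading suggests. What genuinely fails is \emph{effectiveness}: the pullback of such a collapsed cocone strictly exceeds $I$ (it re-pairs $e_G$ with $e_H$), so the square is not van~Kampen and is useless for rewriting. The real work is therefore to pin down the correct notion -- the pushout coinciding with the adhesive pushout $P$, the unique one that is simultaneously a pullback recovering $I$ -- and to show it exists if and only if $P$ is a string graph, i.e.\ if and only if the span is boundary-coherent, with the non-monic, non-effective degenerations explicitly excluded; the boundary and isolated-wire-vertex cases are then dispatched by the same comparison argument.
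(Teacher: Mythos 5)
The paper never proves this theorem: it is stated as an import from Kissinger's thesis, so there is no in-paper proof to compare against, and your proposal has to stand on its own. On its own terms, your architecture is the right one and matches the standard argument: compute the pushout $P$ in the ambient adhesive category $\catGraph/\mathcal{G}_T$, show that boundary-coherence is exactly what prevents a wire-vertex of $P$ from acquiring two incoming (or outgoing) edges, transfer universality to $\catSGraph_T$ by fullness, and refute the converse by exhibiting two distinct edges $[e_G]\neq[e_H]$ incident to a common wire-vertex of $P$. Your final paragraph is in fact the sharpest part and is genuinely correct: the literal ``only if'' fails for bare categorical pushouts, since the forced collapse can itself be universal. (Concretely: $I$ a single isolated wire-vertex mapped to the unique vertex of a circle in both $G$ and $H$, with monic legs, is not boundary-coherent, yet the single collapsed circle is a pushout in $\catSGraph_T$ --- every string-graph cocone must identify the two self-loops, and the mediating map is unique because one coprojection is iso.) So the theorem can only mean, as you say, pushouts preserved by the embedding, i.e.\ the partial-adhesive reading the paper explicitly declined to spell out.

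The genuine gap is that monicity is load-bearing in \emph{both} directions, but you invoke it only at the very end and only to police the converse; your ``short check'' in the forward direction is asserted for arbitrary spans and is false there. Take $I$ to be two isolated wire-vertices $v_1,v_2$; let $G$ consist of two bare wires $a_1 \rightarrow b_1$ and $a_2 \rightarrow b_2$ with $g(v_i)=b_i$; let $H$ be a single isolated wire-vertex $y$ with $h(v_1)=h(v_2)=y$. This span is boundary-coherent (each $h(v_i)$ is an input, each $g(v_i)$ is an output), yet the ambient pushout merges $b_1$, $y$, $b_2$ into one wire-vertex carrying two distinct incoming edges --- $I$ has no edges, so nothing identifies them --- and $P$ is not a string graph. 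Hence ``boundary-coherent $\Rightarrow P\in\catSGraph_T$'', and with it your ``precisely when'' criterion, needs $g$ and $h$ monic (or a much more delicate zigzag analysis over the equivalence classes of $G+H$). Similarly, in your converse the claim that $e_G\notin\im(g)$ can fail for non-monic $g$: a second preimage $v'$ of $g(v)$ with an incoming edge $e'$ gives $e_G=g(e')$, and indeed one can build non-monic, non-boundary-coherent spans whose ambient pushout \emph{is} a string graph. So state the monicity hypothesis up front, alongside the preserved-pushout reading; with those two restrictions fixed at the start, your proof is correct in outline and complete modulo writing out the wire-vertex degree check.
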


Our definition of a string graph rewrite rule means that the boundary
coherence of the maps in the left-hand pushout propagates to the maps on
the right, meaning that if we can construct one of the pushouts in
\eqref{eq:sg-rw-rule}, we can always construct the other one.

\begin{lemma}\label{lem:sg-bc-rw}
    Let $L \xleftarrow{i} I \xrightarrow{j} R$ be a string graph rewrite
    rule and $f : I \rightarrow G$ be a string graph morphism.  Then $i$
    and $f$ as a span are boundary-coherent if and only if $f$ and $j$
    are.
\end{lemma}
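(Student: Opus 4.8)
The plan is to show that each of the two boundary-coherence conditions of Definition \ref{def:sg-bound-coher} holds for the span $L \xleftarrow{i} I \xrightarrow{f} G$ exactly when the corresponding condition holds for the span $G \xleftarrow{f} I \xrightarrow{j} R$, by exploiting the fact that $i$ and $j$ come from a string graph equation and therefore classify the vertices of $I$ as inputs or outputs in a coherent way. Crucially, the argument will involve no facts about $f$, $G$, or the pushout machinery; it is entirely about how $i$ and $j$ treat the boundary.

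First I would extract the key structural fact from Definition \ref{def:graph-eq}. The commuting square there, together with the isomorphisms $\In(L) \cong \In(R)$ and $\Out(L) \cong \Out(R)$, forces the coproduct inclusions $j_1, j_2$ to share a common image $I_{\In} \subseteq V(I)$, with $i$ mapping $I_{\In}$ onto $\In(L)$ and $j$ mapping it onto $\In(R)$; the complementary image $I_{\Out}$ is mapped by $i$ onto $\Out(L)$ and by $j$ onto $\Out(R)$. Since $L$ and $R$ contain no isolated wire-vertices, $\In(L) \cap \Out(L) = \varnothing$ and $\In(R) \cap \Out(R) = \varnothing$. Combining these observations gives, for every $v \in V(I)$, the two equivalences
\[ i(v) \in \In(L) \iff v \in I_{\In} \iff j(v) \in \In(R) \]
and
\[ i(v) \in \Out(L) \iff v \in I_{\Out} \iff j(v) \in \Out(R). \]

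With these equivalences in hand the lemma reduces to a propositional manipulation. Condition (1) of boundary coherence for the span $L \xleftarrow{i} I \xrightarrow{f} G$ reads: for every $v \in \In(I)$, $i(v) \in \In(L)$ or $f(v) \in \In(G)$. Using the first equivalence to replace the disjunct $i(v) \in \In(L)$ by $j(v) \in \In(R)$ turns this, term by term over the same index set $\In(I)$, into exactly condition (1) for the span $G \xleftarrow{f} I \xrightarrow{j} R$. The same substitution using the second equivalence matches condition (2) of the two spans over $\Out(I)$. Hence the span $(i,f)$ is boundary-coherent if and only if the span $(f,j)$ is.

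The only real work is the first step: unwinding the commuting diagram of the string graph equation to obtain the two equivalences. The anticipated subtlety is being careful that $i$ and $j$ genuinely agree on which interface vertices are ``input-type'' versus ``output-type'' — this is precisely what the coherence square of Definition \ref{def:graph-eq} was set up to guarantee — and invoking the disjointness of $\In$ and $\Out$ so that membership in one side precludes the other, which is what makes the equivalences biconditional rather than one-directional. Once that bookkeeping is done, the matching of the two boundary-coherence conditions is immediate.
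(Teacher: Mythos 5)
Your proposal is correct and takes essentially the same route as the paper: its one-line proof invokes exactly your key fact, that for every $w \in I$, $i(w)$ is an input (resp.\ output) of $L$ if and only if $j(w)$ is an input (resp.\ output) of $R$, after which the two boundary-coherence conditions match term by term. You merely spell out in more detail how that fact follows from the coherence square and the disjointness of $\In$ and $\Out$ in Definition \ref{def:graph-eq}, which the paper leaves implicit.
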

\begin{proof}
    This follows directly from the fact that, for all vertices $w$ in
    $I$, $i(w)$ is an input (resp. output) of $L$ if and only if $j(w)$
    is an input (resp. output) of $R$.
\end{proof}

\begin{corollary}
    Let $L \rewritesto R$ be a string graph rewrite rule and $m : L
    \rightarrow G$ a monic string graph local isomorphism.  Then $L
    \rewritesto R$ has an $S$-rewrite at $m$.
\end{corollary}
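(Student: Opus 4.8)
The plan is to assemble the double-pushout diagram \eqref{eq:sg-rw-rule} one square at a time, chaining together the three results established immediately above. First I would invoke Kissinger's proposition that a monic local isomorphism is a string graph matching: applied to $m$, it tells us that $I \xrightarrow{i} L \xrightarrow{m} G$ admits a pushout complement, producing an object $D$ together with morphisms $d_I : I \rightarrow D$ and $d_G : D \rightarrow G$ for which
\[
    \begin{tikzcd}
        I \arrow{r}{i} \arrow[swap]{d}{d_I}
        & L \arrow{d}{m}
        \\ D \arrow[swap]{r}{d_G}
        & G \NWbracket
    \end{tikzcd}
\]
is a pushout. This supplies the entire left half of \eqref{eq:sg-rw-rule} at a stroke; the remaining task is to construct the right-hand pushout from the data now in hand.

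The conceptual heart of the argument is a round trip through boundary coherence, and this is where I would locate the real content of the proof. Because the square above is a genuine pushout in $\catSGraph_T$, the ``only if'' direction of Theorem \ref{thm:sg-bcoh-pushout} forces the span $L \xleftarrow{i} I \xrightarrow{d_I} D$ to be boundary-coherent. I would then apply Lemma \ref{lem:sg-bc-rw} with $d_I : I \rightarrow D$ in the role of $f$: the lemma states exactly that the span $(i, d_I)$ is boundary-coherent if and only if the span $D \xleftarrow{d_I} I \xrightarrow{j} R$ is. Hence the latter is boundary-coherent, and the ``if'' direction of Theorem \ref{thm:sg-bcoh-pushout} hands us its pushout: an object $H$ with the maps completing the right-hand square. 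The two squares taken together are precisely a string graph rewrite at $m$, that is, an $S$-rewrite.

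The one place requiring genuine care — and where I would expect to spend what little effort this proof demands — is verifying that the completed diagram is a pair of pushouts \emph{of monomorphisms}, as \eqref{eq:sg-rw-rule} requires, and not merely a pair of pushouts. In the left square $i$ and $m$ are monic, and the legs $d_I, d_G$ of a pushout complement taken along a monic matching are again monic; in the right square $d_I$ and $j$ are monic, so I would appeal to the fact that $\catSGraph_T$ sends pushouts of monomorphisms to monomorphisms (the partial-adhesive behaviour that underlies Theorem \ref{thm:sg-bcoh-pushout}) to conclude that $R \rightarrow H$ and $D \rightarrow H$ are monic as well. Once monicity is secured, the subgraph-inclusion conventions of \eqref{eq:sg-rw-rule} are all justified and the rewrite is legitimate. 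Beyond this bookkeeping there is no substantive combinatorial obstacle: the corollary is essentially the assertion that the boundary-coherence bridge provided by Lemma \ref{lem:sg-bc-rw} turns the existence of the left pushout into the existence of the right one.
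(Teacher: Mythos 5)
Your proof is correct and follows exactly the route the paper intends: Kissinger's proposition supplies the left pushout via the pushout complement, the ``only if'' direction of Theorem \ref{thm:sg-bcoh-pushout} extracts boundary coherence of the span $L \xleftarrow{i} I \rightarrow D$, Lemma \ref{lem:sg-bc-rw} transfers it to the span through $R$, and the ``if'' direction completes the right square --- which is precisely the argument sketched in the paragraph preceding the corollary. Your extra paragraph verifying that all legs are monic is sound bookkeeping (pushout complements along monics are computed by graph subtraction and pushouts along monics by unions, so monicity is automatic) but does not constitute a different approach.
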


Constructing the pushout is easy: pushouts along monomorphisms in
$\catGraph$ (and its slice categories) are just graph unions (with an
explicit intersection).  Likewise, pushout complements from
monomorphisms can be calculated using graph subtraction.  The above
results guarantee that we will get valid string graphs when we do so.

\begin{example}
    The following string graph rewrite is an application of the string
    graph rewrite rule for the unit law in the Z/X calculus:
    \[ \input{x-unit-rewrite.tikz} \]
\end{example}

\subsection{Equational Reasoning with Rewrites} 
\label{sec:eq-reas-rw}

Rewriting is typically done with respect to some set of rewrite rules
$S$, called a \textit{(graph) rewrite system}.  We say that a
string graph $G$ rewrites to a string graph $H$ under the graph rewrite
sytem $S$ (or $G \rewritestoin{S} H$) if there is a string graph rewrite
rule $L \rewritesto R \in S$ and a string graph matching $m : L
\rightarrow G$ such that
\[
    \begin{tikzcd}
        L \arrow[swap,right hook->]{d}{m}
        & I \arrow[swap]{l}{i} \arrow{r}{j} \arrow[right hook->]{d}
        & R \arrow[right hook->]{d}
        \\ G \NEbracket
        & D \arrow[left hook->]{l} \arrow[right hook->]{r}
        & H \NWbracket
    \end{tikzcd}
\]
is a string graph rewrite.

If we define
\[ S = \{G \rewritesto_{i,j} H | G \rweq_{i,j} H \in E\} \]
then $\rewritestoin{S}$ implements the \textsc{Leibniz} and
\textsc{Axiom} rules of the previously-presented equational logic for
string graphs.  In particular, string graph rewriting implements a
single application of \textsc{Axiom} followed by a single application of
\textsc{Leibniz} (which, by proposition \ref{prop:leibniz-compact}, is
equivalent to any proof tree containing just \textsc{Axiom} and
\textsc{Leibniz} axioms).

It then follows that $\rewriteequivin{S}$, the reflexive,
symmetric, transitive closure of $\rewritestoin{S}$, implements
the previously presented equational logic with \textsc{Homeo} omitted.

\subsection{Matching up to Wire Homeomorphism} 
\label{sec:sg-match-wire-homeo}

There is one more rule we need to account for in our mechanisation
process: \textsc{Homeo}.  Ideally, we would like to do rewriting ``up to
wire homeomorphism''.

Given a string graph rewrite rule $L \rewritesto R$ and a graph $G$, we
say that $G$ rewrites to $H$ by $L \rewritesto R$ \textit{up to wire
homeomorphism} if there are $G'$, $H'$ and $L' \rewritesto R'$, wire
homeomorphic to $G$, $H$ and $L \rewritesto R$ respectively, such that
$G'$ rewrites to $H'$ by $L' \rewritesto R'$.

We would like to find the set $\mathcal{H}$ of graphs $H$ such that $G$
rewrites to $H$ by $L \rewritesto R$ up to wire homeomorphism,
quotiented out by wire homeomorphism (ie: so there are no two graphs in
$\mathcal{H}$ wire homeomorphic to each other).

We can eliminate duplicate wire homeomorphic graphs as a last step
by settling on some standard form (such as have at most one wire-vertex
on each wire other than bare wires, where we must have two).  However,
this does not help to ensure we have found all possible matches without
trying the (likely infinite number of) graphs wire homeomorphic to $G$.
We demonstrate a technique for constraining the search space at the
matching step by making a careful choice of $G'$ and $L' \rewritesto
R'$.

Any monomorphism from $L$ to $G$ must necessarily map wires of $L$ onto
wires of $G$ of the same type in the following manner:
\begin{itemize}
    \item A circle of $L$ must map to a circle of $G$, in which case
        nothing else can map to that same circle
    \item An interior wire of $L$ can only map to an interior wire of
        $G$, and this must be the only thing mapping to the wire.
    \item A bare wire of $L$ can map to \textbf{any} wire of $G$, and
        may not be the only thing to map to that wire
    \item An input wire of $L$ can map to either an input wire or an
        interior wire of $G$; in both cases, it may not be the only
        thing mapping to that wire
    \item Output wires are similar to input wires
\end{itemize}

Note that multiple bare wires can match the same wire simultaneously
\[ \input{double-bare-wire-matching.tikz} \]
as well as matching an interior wire that has already been matched by an
input and/or an output wire
\[ \input{input-bw-output-onto-interior.tikz} \]
or an input wire that has been matched by an input wire
\[ \input{input-and-bw-onto-input.tikz} \]
or an output wire that has been matched by an output wire.

Let $n$ be the number of bare wires in $L$.  We choose $L'$ to be the
(unique, up to isomorphism) graph that is wire homeomorphic to $L$ such
that
\begin{itemize}
    \item every circle has $\textrm{max}(2n,1)$ wire-vertices on it
    \item every interior wire has exactly $2n + 2$ wire-vertices on it
    \item every bare wire has no wire-vertices other than the input
        and output
    \item every input or output wire has no wire-vertices other than
        the input or output itself
\end{itemize}
We then choose $G'$ to be the (unique, up to isomorphism) graph that is
wire homeomorphic to $G$ such that
\begin{itemize}
    \item every circle has $\textrm{max}(2n,1)$ wire-vertices on it
    \item every interior wire has exactly $2n + 2$ wire-vertices on it
    \item every bare wire has exactly $2n$ wire-vertices on it
        (including boundary vertices)
    \item every input or output wire has exactly $2n + 1$ wire-vertices
        on it (including the boundary vertex)
\end{itemize}
We say that these graphs are \textit{normalised}.

Any algorithm for finding graph monomorphisms (such as those in
\cite{Ullmann1976} or \cite{Akinniyi1986}) can then be used (together
with a suitable filter) to find local isomorphisms from $L'$ to $G'
\graphminus \Bound(G')$.  Now consider an arbitrary wire $w$ of $G'$.
\begin{itemize}
    \item If $w$ is a circle, it can be matched by any circle of the
        same type in $L'$, since circles in both graphs have the same
        number of wire-vertices.  Alternatively, if $n > 0$, it can be
        matched by any or all of the $n$ bare wires of $L'$.
    \item If $w$ is an interior wire, it can either be matched
        by one of the interior wires of $L'$ (which have the same number
        of wire-vertices) or it can be matched by any combination of an
        input wire (consuming a single wire-vertex at one end), an
        output wire (similarly) and any or all of the bare wires (on the
        remaining $2n$ wire-vertices).
    \item If $w$ is a bare wire, it can be matched by any or all of the
        $n$ bare wires of $L'$.
    \item If $w$ is an input wire, it can be matched by an input of
        $L'$, which will consume one of the $2n+1$ wire-vertices; this
        will still allow any or all of the bare wires of $L'$ to match.
    \item Finally, if $w$ is an output wire, it can similarly be matched
        by an output of $L'$ and/or any or all of the bare wires of $L'$.
\end{itemize}

Rewriting using all of the matches found in this way will produce a
finite collection $\mathcal{H}'$ of rewritten graphs.  Contracting all
the wires in these graphs so that there is only a single wire-vertex on
each, except for bare wires which need two wire-vertices, and
eliminating the duplicates will give us the set $\mathcal{H}$ we
originally wanted.

\begin{remark}
    The proliferation of wire-vertices and edges between them in $L'$
    and $G'$ when $L$ has bare wires means that this is not the most
    efficient way of doing matching up to wire homeomorphism.
    Quantomatic takes a slightly different approach, initially assuming
    $n = 0$ and delaying matching bare wires from $L'$ until everything
    else has been matched; each bare wire is then matched against an
    edge of $G'$, which is replaced by two wire-vertices and three edges
    at this point.  See section \ref{sec:quanto-matching} for details.
\end{remark}

\begin{example}
    The graph
    \[ \input{graph-norm-orig.tikz} \]
    would be normalised to
    \[ \input{graph-norm-pattern.tikz} \]
    if it were the graph $L$ from the rule and to
    \[ \input{graph-norm-target.tikz} \]
    if it were the target graph $G$.
\end{example}


\chapter{$!$-Graphs}
\label{ch:bang-graphs}

String graph rewrite systems have a shortcoming in that they do not have
a finite way of representing infinitary rewrite rules.  In this chapter,
we present $!$-graphs (pronounced ``bang graphs''), an extension of
string graphs, as a way of expressing certain infinite families of
string graphs in a finite way.  In the next chapter, we will use these
to construct infinite families of string graph rewrite rules and use
them to rewrite string graphs, both directly and by rewriting $!$-graphs.

Recall the spider laws of section \ref{sec:quantum} that arose from
commutative Frobenius algebras (CFAs):
\begin{equation}\label{eq:spider-law-informal}
    \input{spider-law-informal.tikz} 
\end{equation}

In a single diagram, this describes an infinite family of equations.
However, while its meaning may be clear to a human, we need a more
precise encoding for a proof assistant to be able to handle equations
like this.  In \cite{Dixon2009}, Dixon and Duncan briefly described a
system of \textit{graph patterns} to handle laws like this.  However,
they provided no mathematical foundation for these.  Kissinger, Merry
and Soloviev provided such a foundation in \cite{Kissinger2012a} (as
\textit{pattern graphs}), making the system more powerful in the process.
This joint work with the author of this thesis, which contained no
proofs, forms the basis of this chapter, although we use the term
\textit{$!$-graphs} instead of pattern graphs to avoid confusion when
describing matching (where the two graphs are often called the pattern
and target graphs).

$!$-graphs formalise the notion of repetition indicated by the ellipses
and encode it in the graph itself.  What will actually be encoded is a
collection of labelled subgraphs called $!$-boxes.  A $!$-graph can be
instantiated into a string graph by removing (``killing'') or
duplicating (``copying'') these subgraphs (including incoming and
outgoing edges) some (finite) number of times.  In this way, a $!$-graph
can stand in the place of an infinite family of string graphs:
\begin{equation*}
    \input{bang-graph-ex.tikz} 
\end{equation*}

We can even express the bipartite graph from the generalised bialgebra
law
\[ \input{gen-bialg-attempt-lhs.tikz} \]
by nesting $!$-boxes inside other $!$-boxes:
\[ \input{gen-bialg-graph-lhs.tikz} \]

Section \ref{sec:nesting-overlapping} goes into detail about the effect
of nested and overlapping $!$-boxes.

\section{Open Subgraphs} 
\label{sec:open-subgraphs}

Note: when we refer to a subgraph of a string graph in this section, we
mean a subgraph in $\catSGraph_T$ -- ie: the subgraph is also a string
graph.

It should be clear that we cannot allow $!$-boxes to be arbitrary
subgraphs.  Copying $b_1$ in the following graph, for example, will lead
to an invalid string graph, as there will be a wire vertex with multiple
outputs:
\begin{equation*}
    \input{bad-bbox.tikz}
\end{equation*}
Copying or killing a $!$-box must also not affect the fixed-arity edges
of node-vertices, otherwise the arity-matching property of the typing
morphism will be broken.

So we must restrict our $!$-boxes to those that can be copied.  For
string graphs, the correct notion is that of \textit{open subgraphs}.
Note that the definition here differs from, but is equivalent to, the
one in \cite{Kissinger2012a} (pp 5).

\begin{definition}[Open Subgraph]
    A subgraph $O$ of a string graph $G$ is said to be \textit{open} if
    it is not adjacent to any wire-vertices or incident to any
    fixed-arity edges.
\end{definition}

This is also related to boundary-coherence in \catSGraph:
\begin{proposition}\label{prop:sg-open-iff-bc}
    Let $G$ be a string graph, and $H$ a subgraph of $G$ in
    $\catSGraph_T$.  Then $H$ is open if and only if it has a complement
    $G\graphminus H$ (in $\catSGraph_T$), and the embedding of this
    complement into $G$ is boundary-coherent.
\end{proposition}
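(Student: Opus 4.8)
The plan is to read off the two clauses of the open-subgraph definition from the two halves of the right-hand condition, taking the complement $C := G \graphminus H$ to be the graph subtraction: the subgraph of $G$ induced on $V_G \setminus V_H$, with every edge incident to a deleted vertex of $H$ removed. The structural fact I would lean on throughout is that, by the construction of the derived typegraph (Definition \ref{def:sg-monoidal-typegraph}), every edge of a string graph runs between a node-vertex and a wire-vertex (the $\mathrm{in}$/$\mathrm{out}$ edges) or between two wire-vertices (the $\mathrm{mid}$ edges); there are no node--node edges. Hence any edge crossing the boundary of $H$ has a wire-vertex at least one end, and I can classify such edges by whether that wire-vertex lies inside or outside $H$ and whether the edge is fixed- or variable-arity. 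Using the earlier observation that a subgraph containing a node-vertex but not all its fixed-arity edges fails arity-matching, any fixed-arity edge whose node-vertex lies in $H$ already has its wire-vertex (and the edge) in $H$; so the only boundary-crossing fixed-arity edges are those running from an external node-vertex to an internal wire-vertex.

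The first step is to show that $C$ is an object of $\catSGraph_T$ if and only if $H$ is incident to no fixed-arity edges. The wire-vertex degree bounds are inherited by any induced subgraph, and (as in the proof of Proposition \ref{prop:sg-subgraph-ops}) the restricted typing morphism $\tau_C$ is automatically injective on each fixed-arity neighbourhood, so the only axiom that can fail is surjectivity onto the fixed-arity neighbourhoods of the typegraph. This fails exactly when some retained node-vertex loses a fixed-arity edge, i.e.\ when that edge ran to a wire-vertex of $H$; by the classification above this is precisely the situation ``$H$ is incident to a fixed-arity edge.'' Thus ``$C$ is a string graph'' is equivalent to the fixed-arity clause of openness.

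The second step handles the adjacency clause. Once $C$ is a string graph, the inclusion $\iota : C \hookrightarrow G$ is automatically a (monic, by Proposition \ref{prop:sg-mono-inj}) morphism of $\catSGraph_T$, so its boundary-coherence is well-defined (Definition \ref{def:sg-bound-coher}) and amounts to $\In(C) \subseteq \In(G)$ and $\Out(C) \subseteq \Out(G)$ under $\iota$. Here I would use that a wire-vertex carries at most one incoming and one outgoing edge: if an external wire-vertex $v$ is adjacent to $H$, the single edge joining $v$ to $H$ is either $v$'s unique incoming edge, making $v$ an input of $C$ but not of $G$, or its unique outgoing edge, making $v$ an output of $C$ but not of $G$; either way boundary-coherence fails. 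Conversely, if $\iota$ is not boundary-coherent there is some $v \in \In(C) \setminus \In(G)$ (or the $\Out$ analogue), and the edge witnessing $v \notin \In(G)$ must come from $H$, exhibiting $v$ as an external wire-vertex adjacent to $H$. Hence ``$\iota$ is boundary-coherent'' is equivalent to ``$H$ is adjacent to no (external) wire-vertices,'' the remaining clause of openness, and conjoining the two steps gives the stated equivalence.

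The main obstacle is bookkeeping rather than depth: I must get the edge directions and the inside/outside placement of node- versus wire-vertices exactly right in the case analysis, since the asymmetry between node-vertices and wire-vertices is what makes the two clauses separate so cleanly -- variable-arity edges from an external node into an internal wire are harmless (they survive openness), whereas it is precisely wire adjacency and fixed-arity crossings that break things. A secondary point is the logical ordering of the conjunction: boundary-coherence of $\iota$ is only meaningful once $C$ is known to be a string graph, so in the forward direction I establish ``$C$ is a string graph'' before invoking boundary-coherence, while in the reverse direction I am handed both facts and extract the two clauses independently. Finally I would verify the degenerate cases -- self-loops, circles, and isolated wire-vertices of $C$ -- to confirm they are detected correctly, noting in particular that an isolated wire-vertex of $C$ that was attached to $H$ in $G$ is both an input and an output of $C$ and duly fails boundary-coherence.
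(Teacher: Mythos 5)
Your proposal is correct and takes essentially the same route as the paper's proof: the same arity-matching argument (restriction of $\tau_G$ is automatically injective on fixed neighbourhoods, and surjectivity fails precisely when $H$ is incident to a fixed-arity edge) shows the complement is a string graph, and the same wire-vertex degree analysis identifies boundary-coherence of the inclusion with non-adjacency of $H$ to wire-vertices. The only difference is organisational: you decouple the two clauses into independent biconditionals, using arity-matching of $\tau_H$ to show a fixed-arity edge at a node-vertex of $H$ must already lie in $H$, whereas the paper's converse first establishes non-adjacency to wire-vertices and uses it to conclude that the loose end of a fixed-arity edge is a node-vertex --- both are sound.
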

\begin{proof}
    Suppose $(H,\tau_H)$ is open in $(G,\tau_G)$, and consider
    $(G\graphminus H,\tau_{G\graphminus H})$, which is a graph in
    $\catGraph/\mathcal{G}_T$.  Let $n$ be a node-vertex in
    $G\graphminus H$, and let $\tau^n_{G\graphminus H}$ be the
    restriction of $\tau_{G\graphminus H}$ to the fixed edge
    neighbourhood of $n$, and similarly for $\tau^n_G$. Since
    $G\graphminus H$ is a subgraph of $G$, $\tau_{G\graphminus H}$ is a
    restriction of $\tau_G$.  So $\tau^n_{G\graphminus H}$ is a
    restriction of $\tau^n_G$, and hence is injective and its image
    consists only of fixed-arity edges of the typegraph.  Suppose it is
    not surjective, so there is some fixed-arity edge $e$ incident to
    $\tau_{G\graphminus H}(n)$ that is not in the image of
    $\tau_{G\graphminus H}$.  Then the preimage of $e$ under $\tau_G$,
    $e'$, must not be in $G\graphminus H$.  Since $n$ is in
    $G\graphminus H$, and $G\graphminus H$ is full in $G$, the other end
    of $e'$, which we will call $w$, must be in $H$.  But then $H$ is
    incident to a fixed-arity edge (namely $e'$), and hence is not open
    in $G$.  So $G\graphminus H$ is a string graph, since (being a
    subgraph of $G$) it must have at most one incoming and one outgoing
    edge on each wire-vertex.

    For boundary-coherence of the embedding $\iota : G\graphminus H
    \rightarrow G$, we need to show that any input of $G\graphminus H$
    is also an input of $G$, and likewise for outputs.  But the only way
    an input $w$ in $G\graphminus H$ could not be an input in $G$ is if
    there is an edge $e$ in $G$ from some vertex $v$ to $w$.  Because
    $e$ is not in $G\graphminus H$, neither can $v$ be.  So $v$ must be
    in $H$, and so $H$ is incident to a wire-vertex ($w$), contradicting
    the openness of $H$.  The argument for outputs is analagous, and so
    we have boundary-coherence as required.

    Conversely, suppose $(H,\tau_H)$ is a (string) subgraph of
    $(G,\tau_G)$, and that its complement is also a string graph with a
    boundary-coherent embedding $\iota$ into $G$.  Suppose $H$ is
    adjacent to a wire-vertex $w$, via an edge $e$.  Then $w$ is in
    $G\graphminus H$, but $e$ is not.  If $w$ is the target of $e$, $w$
    must be an input of $G\graphminus H$ but not of $G$.  Otherwise, $w$
    is the source of $e$ and $w$ is an output of $G\graphminus H$ but
    not of $G$.  Either way, $\iota$ cannot be boundary-coherent.  So
    there is no such $w$.

    Now suppose $H$ is incident to a fixed-arity edge $e$.  Let $v$ be
    the end of $e$ that is not in $H$.  $v$ cannot be a wire-vertex,
    since we have already shown that $H$ cannot be adjacent to a
    wire-vertex.  So $v$ is a node-vertex.  Let $e' = \tau_G(e)$.  Then
    $e'$ is in the fixed neighbourhood of $\tau_G(v)$, but is not in the
    image of $\tau_{G\graphminus H}$.  So $\tau_{G\graphminus H}$ is not
    arity-matching, contradicting the assumption that $(G\graphminus
    H,\tau_{G\graphminus H})$ is a string graph.  Thus there can be no
    such edge $e$, and $H$ is open in $G$.
\end{proof}

The use of the term ``open'' brings with it the expectation that the
property is preserved under unions and intersections (since we are
dealing with finite graphs).  The following proposition shows that this
is, indeed, the case.

\begin{proposition}[\cite{Kissinger2012a}, pp 5]
    \label{prop:open-subgraph-props}
    If $O,O' \subseteq G$ are open subgraphs, and $H \subseteq G$ is an
    arbitrary subgraph, then $O \cap O'$ and $O \cup O'$ are open in $G$
    and $H \cap O$ is open in $H$.
\end{proposition}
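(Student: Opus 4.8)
The plan is to observe first that, by proposition~\ref{prop:sg-subgraph-ops}, each of $O \cap O'$, $O \cup O'$ and $H \cap O$ is already a string graph (a subgraph of $G$ in $\catSGraph_T$), so nothing needs checking on that front. It therefore suffices to verify, in each case, the two clauses of openness directly: that the subgraph in question is adjacent to no wire-vertex lying \emph{outside} it, and is incident to no fixed-arity edge lying outside it. I would argue everything straight from the definition of open rather than routing through the boundary-coherence characterisation of proposition~\ref{prop:sg-open-iff-bc}, since the set-theoretic relationships between membership in $O$, $O'$ and their intersection and union make the direct route very short.

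For $O \cap O'$ I would suppose a vertex $v \in O \cap O'$ is adjacent to a wire-vertex $w$ via some edge. Since $v \in O$ and $O$ is open, $w$ cannot lie outside $O$, so $w \in O$; symmetrically $w \in O'$, whence $w \in O \cap O'$. Thus $O \cap O'$ is adjacent to no external wire-vertex. The fixed-arity clause is the same argument one level up: if a fixed-arity edge $e$ is incident to some $v \in O \cap O'$, then openness of $O$ applied at $v$ forces $e \in O$, openness of $O'$ forces $e \in O'$, and so $e \in O \cap O'$.

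The union case is the dual bookkeeping: a vertex of $O \cup O'$ lies in $O$ or in $O'$, and anything external to $O \cup O'$ is external to both. So if $v \in O \cup O'$ is adjacent to a wire-vertex $w \notin O \cup O'$, then (taking $v \in O$, say) $w \notin O$ contradicts openness of $O$, and the fixed-arity clause is identical. For $H \cap O$ the only genuinely new point is that openness is now tested \emph{relative to $H$}: I would note that an edge of $H$ is an edge of $G$ of the same type, and likewise a wire-vertex of $H$ is a wire-vertex of $G$, so any witness to non-openness of $H \cap O$ in $H$ — an $H$-external wire-vertex or fixed-arity edge meeting $H \cap O$ — meets $O$ while lying outside $O$ in $G$, immediately contradicting openness of $O$ in $G$.

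I do not expect a genuine obstacle; the work is entirely in reading the definition correctly. The one place to stay careful is keeping straight \emph{which} ambient graph openness is being tested against — $G$ for the first two claims but $H$ for the third — and remembering that a witness to non-openness must be a wire-vertex (resp.\ fixed-arity edge) that genuinely lies outside the candidate subgraph, so that membership in $O$ or $O'$ can be played off against it. Everything else is a direct transfer of the hypothesis that $O$ and $O'$ are open.
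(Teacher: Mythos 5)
Your proposal is correct and matches the paper's proof essentially step for step: the same appeal to proposition~\ref{prop:sg-subgraph-ops} for the string-graph requirement, followed by the same direct definition-chasing for both clauses of openness in each of the three cases, including the observation that for $H \cap O$ any witness in $H$ is also a witness in $G$ against the openness of $O$. Your explicit care about which ambient graph openness is tested against is exactly the point the paper's proof also relies on, so nothing further is needed.
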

\begin{proof}
    Proposition \ref{prop:sg-subgraph-ops} gives us that the graphs are
    all string graphs.

    Suppose $O \cap O'$ is adjacent to a wire-vertex.  Then there is a
    wire-vertex $w$ in $G$ but not in $O \cap O'$ adjacent to a vertex
    $v$ in $O \cap O'$.  If $w$ is not in $O$, we have a contradition
    for the openness of $O$.  So $w$ must be in $O$.  Similarly, $w$
    must be in $O'$.  But then $w$ is in $O \cap O'$, which contradicts
    our assumption.  So there can be no such $w$.  Instead, suppose $O
    \cap O'$ is incident to a fixed-arity edge $e$.  Again, $e$ must be
    in $O$, since $O$ cannot be incident to a fixed-arity edge, and must
    also be in $O'$ for the same reason, and hence is in $O \cap O'$,
    and so not incident to it.  So there is no such $e$ and $O \cap O'$
    is open in $G$.

    Suppose $O \cup O'$ is adjacent to a wire-vertex.  Then there is a
    wire-vertex $w$ in $G$ but not in $O \cup O'$ adjacent to a vertex
    $v$ in $O \cup O'$.  $v$ must either be in $O$ or $O'$; WLOG assume
    it is in $O$.  Then, since $w$ is not in $O$, $O$ is adjacent to a
    wire-vertex.  But this cannot be the case, since $O$ is open, so
    there can be no such $w$.  Similarly, if there is a fixed-arity edge
    $e$ incident to $O \cup O'$, it must be incident to either $O$ or
    $O'$, which cannot be the case.  So there is no such $e$, and $O
    \cup O'$ is open in $G$.

    Suppose there is a wire-vertex $w$ in $H$ but not in $H \cap O$
    adjacent to a vertex $v$ in $H \cap O$.  Then $w$ is in $G$ but not
    in $O$, and $v$ is in $O$, and they are joined by an edge in $G$.
    So $O$ is adjacent to a wire-vertex in $G$, contradicting its
    openness in $G$, so there is no such $w$.  Suppose now that $H \cap
    O$ is incident to a fixed-arity edge $e$ in $H$, and let $v$ be the
    vertex in $H \cap O$ it is incident to.  Then $v$ is in $O$, and $e$
    is in $G$ but not $O$, and so $O$ cannot be open in $G$, which
    contradicts the assumption.  So there can be no such $e$, and $H
    \cap O$ is open in $H$.

\end{proof}

\section{The Category of $!$-Graphs} 
\label{sec:pg-defs}

The $!$-boxes are encoded into the graph itself as a new vertex type.
These \textit{$!$-vertices} act as the labels for the $!$-boxes, and the
edges from the $!$-vertices to other graph vertices indicate the $!$-box
contents.  We therefore need to extend the derived typegraph of a
monoidal signature $T$ to include this new vertex.

\begin{definition}[Derived Compressed $!$-typegraph]
    \label{def:pg-typegraph}
    The \textit{derived compressed $!$-typegraph} $\mathcal{G}_{T!}$
    of a compressed monoidal signature $T$ has the same vertices as
    $\mathcal{G}_T$ (see definition \ref{def:sg-monoidal-typegraph})
    together with a distinct vertex $!$, and the same edges as
    $\mathcal{G}_T$
    together with a self-loop $p_!$ on $!$ and, for each vertex $v$ of
    $\mathcal{G}_T$, an edge $p_v$ from $!$ to $v$.
\end{definition}

Note that the definition of $\mathcal{G}_{T!}$ ensures that there can be
no edges from wire- or node-vertices to $!$-vertices.

The derived compressed $!$-typegraph of the compressed monoidal
signature
\begin{align*}
    f &: [A^\varar] \rightarrow [A^\fixar,A^\fixar] \\
    g &: [B^\varar,A^\fixar] \rightarrow [A^\fixar]
\end{align*}
would be
\[ \input{example-bang-typegraph.tikz} \]

The wire-vertex and node-vertex terminology is inherited from string
graphs. In addition, for a graph $(G,\tau:G \rightarrow
\mathcal{G}_{T!})$ in $\catGraph/\mathcal{G}_{T!}$, we call any vertex
$v$ where $\tau(v) =\:!$ a \textit{$!$-vertex}, and denote the set of
all $!$-vertices in $G$ as $!(G)$.

We alter the definition of an \emph{input} slightly from the
string-graph case, due to the new vertex type: a wire-vertex is an input
if the only in-edges are from $!$-vertices.

For a $!$-vertex $b \in \:!(G)$, let $B(b)$ be its associated $!$-box.
This is the full subgraph whose vertices are the set $\textrm{succ}(b)$
of all of the successors of $b$. We also define the parent graph of a
$!$-vertex $B^\uparrow(b)$ as the full subgraph of predecessors, that
is, the full subgraph generated by $\textrm{pred}(b)$.  Note that the
typegraph constrains $B^\uparrow(b)$ to only contain $!$-vertices (and
edges between them).

As a notational convenience, when dealing with subgraphs of
$\mathcal{G}_{T!}$-typed graphs where the inclusion map is implicit, we
will use subscripts to identify which graph we are referring to.  So if
$H$ is a subgraph of $G$ and $b$ is a $!$-vertex in $H$, $B_H(b)$ will
be its $!$-box in $H$, and $B_G(b)$ will be the $!$-box of $b$ when
considered as a vertex of $G$.

Note that, for a given monoidal signature $T$, $\mathcal{G}_T$ is a
subgraph of $\mathcal{G}_{T!}$.  Since \catGraph has pullbacks, theorem
\ref{thm:slice-cat-mono-func} gives us a full, coreflective embedding $E
: \catGraph/\mathcal{G}_T \rightarrow \catGraph/\mathcal{G}_{T!}$, whose
right adjoint is the forgetful functor $U : \catGraph/\mathcal{G}_{T!}
\rightarrow \catGraph/\mathcal{G}_T$ that simply drops all $!$-vertices.

\begin{definition}[$!$-Graph and $\catBGraph_T$; \cite{Kissinger2012a}, pp 5]
    \label{def:bang-graph}
    A $\mathcal{G}_{T!}$-typed graph $G$ is called a \textit{$!$-graph}
    if:
    \begin{enumerate}
        \item $U(G)$ is a string graph;
        \item the full subgraph with vertices $!(G)$, denoted
            $\beta(G)$, is posetal;
        \item for all $b \in\, !(G)$, $U(B(b))$ is an open subgraph of
            $U(G)$; and
        \item for all $b,b'\! \in\, !(G)$, if $b'\! \in B(b)$ then
            $B(b') \subseteq B(b)$.
            \label{def-item:bg-bbox-nesting}
    \end{enumerate}

    $\catBGraph_T$ is the full subcategory of
    $\catGraph/\mathcal{G}_{T!}$ whose objects are $!$-graphs.
\end{definition}

Recall that a graph is posetal if it is simple (at most one edge between
any two vertices) and, when considered as a relation, forms a partial
order. Note in particular that this implies $b \in B(b)$ (and
$B^\uparrow(b)$), by reflexivity. This partial order allows $!$-boxes to
be nested inside each other, provided that the subgraph defined by a
nested $!$-vertex is totally contained in the subgraph defined by its
parent (condition \ref{def-item:bg-bbox-nesting}).

As with string graphs, we only consider finite $!$-graphs in the sequel.

We introduce special notation for $!$-graphs. $!$-vertices are drawn
as squares, but rather than drawing edges to all of the node-vertices
and wire-vertices in $B(b)$, we simply draw a box around it.
\begin{equation*}
    \input{nested_def.tikz} 
\end{equation*}

In this notation, we retain edges between distinct $!$-vertices to
indicate which $!$-boxes are nested as opposed to simply overlapping.
This distinction is important, as nested $!$-boxes are copied whenever
their parent is copied.
\begin{equation*}
    \input{nested_v_overlap.tikz} 
\end{equation*}

We will prove a few useful results about $!$-graphs and about
$\catBGraph_T$.  Firstly, we have a sufficient condition for a subgraph
of a $!$-graph to be a $!$-graph.

\begin{lemma}\label{lem:pg-full-subgraph}
    Let $G$ be a $!$-graph and $H$ a full subgraph of $G$ such that
    $U(H)$ is a string graph.  Then $H$ is a $!$-graph.
\end{lemma}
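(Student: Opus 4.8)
The plan is to verify in turn the four conditions of Definition \ref{def:bang-graph} for $H$. Condition~(1), that $U(H)$ is a string graph, is exactly the hypothesis. Condition~(2) is almost immediate: $\beta(H)$ is the full subgraph of the posetal graph $\beta(G)$ on the vertex set $!(H) \subseteq\, !(G)$, and restricting a partial order (a simple, reflexive, antisymmetric, transitive relation) to a subset again yields such a relation, so $\beta(H)$ is posetal. The real content lies in conditions~(3) and~(4), and both will follow once I establish the single structural fact that, because $H$ is \emph{full} in $G$, the $!$-box of any $b \in\, !(H)$ computed in $H$ is just the restriction to $H$ of its $!$-box in $G$; that is, $B_H(b) = H \cap B_G(b)$.

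To see this, recall that $B_G(b)$ is the full subgraph of $G$ on $\textrm{succ}_G(b)$, while $B_H(b)$ is the full subgraph of $H$ on $\textrm{succ}_H(b)$. Fullness of $H$ forces an edge $p_v$ of $G$ to lie in $H$ whenever both its endpoints do, so the successors of $b$ in $H$ are exactly $\textrm{succ}_G(b) \cap V_H$; hence the two vertex sets agree, and since both boxes are full (in $H$, resp.\ $G$) their edge sets agree as well. Thus $B_H(b) = H \cap B_G(b)$, and applying the forgetful functor $U$ (which commutes with intersection of subgraphs, as it just drops $!$-vertices and their incident edges) gives $U(B_H(b)) = U(H) \cap U(B_G(b))$.

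Condition~(3) then follows directly: $U(B_G(b))$ is open in $U(G)$ because $G$ is a $!$-graph, so taking $O = U(B_G(b))$ and the arbitrary subgraph to be $U(H)$ in Proposition \ref{prop:open-subgraph-props} shows that $U(H) \cap U(B_G(b)) = U(B_H(b))$ is open in $U(H)$. For condition~(4), suppose $b' \in B_H(b)$ with $b,b' \in\, !(H)$; then $b' \in B_G(b)$, so the nesting condition for $G$ gives $B_G(b') \subseteq B_G(b)$, and intersecting with $H$ yields $B_H(b') = H \cap B_G(b') \subseteq H \cap B_G(b) = B_H(b)$. I expect the main obstacle to be the bookkeeping in the identity $B_H(b) = H \cap B_G(b)$: it is exactly here that fullness of $H$ is indispensable (for a non-full subgraph the box computed in $H$ need not coincide with $H \cap B_G(b)$), and the claim that $U$ commutes with intersection must be checked with the $\mathcal{G}_{T!}$-typed structure in mind. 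Once this identity is in hand, the remaining deductions are purely formal.
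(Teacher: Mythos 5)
Your proof is correct and takes essentially the same route as the paper's: verify the four conditions of Definition \ref{def:bang-graph} in turn, using fullness of $H$ to obtain the identity $B_H(b) = B_G(b) \cap H$, then applying Proposition \ref{prop:open-subgraph-props} for openness and intersecting with $H$ for the nesting condition. The only difference is that you spell out the justification of $B_H(b) = B_G(b) \cap H$ via successor sets, which the paper simply asserts as a consequence of fullness.
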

\begin{proof}
    $U(H)$ is a string graph by assumption.

    Since $H$ is full in $G$, $\beta(H)$ must be a full subgraph
    of $\beta(G)$.  But a full subgraph of a posetal graph is itself
    posetal, so $\beta(H)$ is posetal.

    Let $b \in\:!(H)$.  Then $U(B_G(b))$ is an open subgraph of $U(G)$,
    and $U(B_H(b)) = U(B_G(b))\cap U(H)$, which is open in $H$ by
    proposition \ref{prop:open-subgraph-props}.

    Let $b,c \in\:!(H)$, with $c \in B_H(b)$.  Then we must have $c \in
    B_G(b)$, since $G$ is a $!$-graph.  So $B_G(c) \subseteq B_G(b)$.
    Since $H$ is full in $G$, we know that $B_H(b) = B_G(b) \cap H$ and
    similarly for $c$. Then $B_G(c) \cap H \subseteq B_G(b)
    \cap H$, so $B_H(c) \subseteq B_H(b)$.

    So $H$ is a $!$-graph, as required.
\end{proof}

\begin{corollary}\label{cor:pg-graphminus}
    Let $G$ be a $!$-graph, and $H$ a subgraph of $G$ such that
    $U(H)$ is open in $U(G)$.  Then $G \graphminus H$ is a $!$-graph.
\end{corollary}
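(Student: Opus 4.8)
The plan is to reduce the statement to Lemma \ref{lem:pg-full-subgraph}, which tells us that a \emph{full} subgraph of a $!$-graph whose image under $U$ is a string graph is itself a $!$-graph. It therefore suffices to establish two facts about $G \graphminus H$: that it is a full subgraph of $G$, and that $U(G \graphminus H)$ is a string graph.

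For the first, I would unpack graph subtraction. The graph $G \graphminus H$ is obtained from $G$ by deleting the vertices of $H$ together with every edge incident to a deleted vertex; an edge both of whose endpoints survive is retained. Hence $G \graphminus H$ is precisely the subgraph of $G$ induced on the vertex set $V_G \setminus V_H$, so it is full in $G$. Since $U$ merely forgets the $!$-vertices (and the edges incident to them), it commutes with this subtraction on the surviving vertices:
\[ U(G \graphminus H) = U(G) \graphminus U(H), \]
as both sides are the full subgraph of $U(G)$ induced on those wire- and node-vertices of $G$ that do not lie in $H$.

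The essential content is then provided by Proposition \ref{prop:sg-open-iff-bc}. By hypothesis $U(H)$ is open in the string graph $U(G)$ (note that this already presupposes $U(H) \in \catSGraph_T$), so that proposition guarantees the complement $U(G) \graphminus U(H)$ is again a string graph. Combined with the identity above, $U(G \graphminus H)$ is a string graph; together with fullness, Lemma \ref{lem:pg-full-subgraph} then yields that $G \graphminus H$ is a $!$-graph.

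I do not anticipate a genuine difficulty: the substantive step -- that the complement of an open subgraph is a string graph -- is already discharged by Proposition \ref{prop:sg-open-iff-bc}, and the remainder is bookkeeping. The single point requiring care is the interplay between the two notions of complement, the $!$-graph subtraction $G \graphminus H$ and the string-graph subtraction $U(G) \graphminus U(H)$, which is exactly why I make the identity $U(G \graphminus H) = U(G) \graphminus U(H)$ explicit before invoking the lemma.
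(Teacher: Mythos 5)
Your proof is correct and matches the paper's intended route: the corollary is stated there without proof precisely because it follows from Lemma \ref{lem:pg-full-subgraph} in the way you spell out, with Proposition \ref{prop:sg-open-iff-bc} supplying that the complement of the open subgraph $U(H)$ is a string graph. Your explicit verification that $G \graphminus H$ is full in $G$ and that $U(G \graphminus H) = U(G) \graphminus U(H)$ is exactly the bookkeeping the paper leaves implicit.
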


It is also useful to know under what conditions a $!$-graph morphism is
monic or (strongly) epic in $\catBGraph_T$.  These conditions turn out
to be the same as for $\catSGraph_T$.

\begin{lemma}\label{lemma:spg-mono-inj}
    A morphism in $\catBGraph_T$ is monic iff it is injective.
\end{lemma}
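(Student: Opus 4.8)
The plan is to mirror the proof of Proposition~\ref{prop:sg-mono-inj} for string graphs, exploiting two facts already available: that $\catBGraph_T$ is a \emph{full} subcategory of $\catGraph/\mathcal{G}_{T!}$, and that the coreflection gives a canonical inclusion $U(G)\hookrightarrow G$ (the counit $\epsilon_G=\iota$ of the adjunction $E\dashv U$), through which any $!$-vertex-free string graph maps into $G$.

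For the easy direction, injective $\Rightarrow$ monic, I would invoke the remark following Theorem~\ref{thm:slice-cat-mono-func}: monomorphisms in $\catGraph/\mathcal{G}_{T!}$ are exactly the injective maps. Since $\catBGraph_T$ is full in $\catGraph/\mathcal{G}_{T!}$, a map that is monic in the ambient category is monic in $\catBGraph_T$ as well (restricting the class of test objects can only make monicity easier to satisfy). Hence an injective $!$-graph morphism is monic.

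The substantive direction, monic $\Rightarrow$ injective, I would prove by contraposition: given a non-injective $!$-graph morphism $f:G\rightarrow H$, exhibit two distinct $!$-graph maps that $f$ collapses. I would first reduce to a \emph{vertex} identification. If $f$ identified two distinct edges whose endpoints differ, it would already identify two vertices; and $f$ cannot identify two parallel edges, since edges between non-$!$ vertices lie in the simple string graph $U(G)$ and edges among $!$-vertices lie in the posetal (hence simple) graph $\beta(G)$. So there are distinct $v_1,v_2$ with $f(v_1)=f(v_2)=v$, and type-preservation forces $v_1,v_2,v$ into the same type class. I then split into three cases following the shape of Proposition~\ref{prop:sg-mono-inj}. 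If $v$ is a node-vertex, take $K\subseteq H$ to be the elementary subgraph at $v$ (the node-vertex with its fixed-arity edges and adjacent wire-vertices); this is a string graph, hence a $!$-graph with no $!$-vertices via $E$, and arity-matching (Proposition~\ref{prop:sg-arity-matching}) yields distinct morphisms $g_1,g_2:K\rightarrow U(G)\hookrightarrow G$ sending the node-vertex to $v_1$ and $v_2$ respectively. If $v$ is a wire-vertex, take $K$ to be a single isolated wire-vertex and let $g_1,g_2$ select $v_1,v_2$. If $v$ is a $!$-vertex, take $K$ to be a single $!$-vertex together with its reflexivity-forced self-loop; this is a valid $!$-graph because $U(K)$ is the empty (hence trivially valid) string graph and $\beta(K)$ is a one-element poset, and again $g_1,g_2$ send it to $v_1,v_2$. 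In every case $g_1\neq g_2$ while $f\circ g_1=f\circ g_2$, so $f$ is not monic.

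The main obstacle I anticipate is the reduction from non-injectivity to a vertex identification, which relies on $!$-graphs being simple: the only parallel edges not immediately excluded by simplicity of $U(G)$ and $\beta(G)$ are parallel $p_v$-edges from a $!$-vertex to a non-$!$ vertex, and I would need to rule these out (they duplicate no box-membership data and so are excluded by the intended encoding). Once that is settled, identifying the three correct test $!$-graphs and checking each $g_i$ is a well-defined $!$-graph morphism is routine, and the node and wire cases reduce verbatim to the string-graph argument through the inclusion $U(G)\hookrightarrow G$.
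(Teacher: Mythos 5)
Your proof is correct and takes essentially the same route as the paper's: the easy direction via monomorphisms in the ambient category $\catGraph/\mathcal{G}_{T!}$, and the converse by contraposition, reducing to a vertex identification and constructing exactly the paper's minimal test sub-$!$-graph of $H$ (isolated wire-vertex, $!$-vertex with its reflexive self-loop, or node-vertex with its fixed-arity neighbourhood) mapped onto each preimage $v_i$. The simplicity concern you flag about parallel edges from a $!$-vertex to a non-$!$-vertex is a genuine subtlety, but the paper glosses over it in the same way, simply asserting that $G$ and $H$ are simple.
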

\begin{proof}
    Since this holds in $\catGraph/\mathcal{G}_{T!}$, any injective map in
    $\catBGraph_T$ must be monic in $\catGraph/\mathcal{G}_{T!}$, and hence
    also monic in $\catBGraph_T$.

    Suppose we have a non-injective morphism $f: G \rightarrow H$ in
    $\catBGraph_T$.  Then $f$ must either map two or more edges in
    $G$ to the same edge in $H$ or map two or more vertices in $G$ to
    the same vertex in $H$.  In fact, since $G$ and $H$ are both simple,
    $f$ must map two or more vertices $v_i$ in $G$ to a single vertex
    $v$ in $H$.  Consider the smallest sub-$!$-graph $K$ of $H$
    containing $f(v_i)$.  This is the vertex itself, plus a self-loop if
    it is a $!$-vertex or any fixed-arity edges (plus their terminating
    wire-vertices) if it is a node-vertex.  Then for each $v_i$, we
    construct the $!$-graph morphism $g_i$ that takes the single vertex
    in $K$ to $v_i$ (this will extend to any incident edges of the
    vertex in $K$ in a unique way, dictated by the typing morphisms).
    Now all the $f \circ g_i$ are the same morphism, but the $g_i$
    morphisms are distinct, so $f$ is not monic.
\end{proof}

\begin{lemma}
    \label{lemma:pattgraph-morphism-img}
    Let $f : G \rightarrow H$ be a $!$-graph morphism.  Then
    $f[G]$, the image of $f$, is a $!$-graph.
\end{lemma}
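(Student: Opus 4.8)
The plan is to verify the four defining conditions of Definition \ref{def:bang-graph} for $f[G]$, regarding it at once as a subgraph of the $!$-graph $H$ (via the inclusion part of the epi--mono factorisation $G \twoheadrightarrow f[G] \hookrightarrow H$) and as the image of the $!$-graph $G$. Throughout I would use that $f$ is type-preserving, so it sends $!$-vertices to $!$-vertices, wire-vertices to wire-vertices, node-vertices to node-vertices, and fixed-arity edges to fixed-arity edges; in particular $U(f[G])$ is exactly the image of the induced string-graph morphism $U(f) : U(G) \to U(H)$, and $\beta(f[G])$ is a subgraph of $\beta(H)$.

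The first and most substantial condition is that $U(f[G])$ is a string graph. As a subgraph of the string graph $U(H)$ it automatically inherits the requirement that each wire-vertex have at most one incoming and one outgoing edge, so the only real work is to check that its typing morphism is arity-matching. Here I would invoke Proposition \ref{prop:sg-arity-matching}: since $U(G)$ and $U(H)$ are string graphs, $U(f)$ is a morphism of $\catSGraph_T$ and is therefore arity-matching. Consequently, for every node-vertex $n = U(f)(n')$ of the image, $U(f)$ restricts to a bijection from the fixed edge neighbourhood of $n'$ onto that of $n$; hence every fixed-arity edge incident to $n$ in $U(H)$ is the image of a fixed-arity edge of $U(G)$ and so already lies in $U(f[G])$. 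Thus the image carries the complete fixed neighbourhood of each of its node-vertices and inherits arity-matching from $U(H)$, making $U(f[G])$ a string graph.

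With condition (1) established, I would obtain conditions (2)--(4) by viewing $f[G]$ inside $H$. For posetality of $\beta(f[G])$ I would note that it is a subgraph of the posetal $\beta(H)$, so it inherits simplicity and the partial-order structure, while reflexivity survives because $f$ carries the self-loop of each $!$-vertex $a$ to a self-loop on $f(a)$. For the openness (3) and nesting (4) conditions, the cleanest route is to observe that $f[G]$ is a full subgraph of $H$ whose underlying string graph we have just shown to be a string graph, and then to apply Lemma \ref{lem:pg-full-subgraph} directly. Where fullness is not immediate I would instead transfer openness forward from $G$: each image $!$-box $B_{f[G]}(b)$ is the image of the union $\bigcup_{a \in f^{-1}(b)\,\cap\,!(G)} B_G(a)$, which is open in $G$ by Proposition \ref{prop:open-subgraph-props}, and the nesting condition for the image then reduces to transitivity of the containment relation inherited from $G$.

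The main obstacle is precisely the failure of fullness: because a $!$-graph morphism may identify distinct vertices, $f[G]$ need not be a full subgraph of $H$, and an edge of $H$ joining two image vertices need not itself lie in the image. This is what makes conditions (3) and (4) delicate, since a wire-vertex or fixed-arity edge could appear to ``stick out'' of an image $!$-box even though no such violation exists in $G$. Handling it requires care in the choice of preimages: one must show that any wire-adjacency or fixed-arity incidence witnessing non-openness of $B_{f[G]}(b)$ pulls back along $f$ to a corresponding violation at some $a \in f^{-1}(b)$ relative to $B_G(a)$, contradicting the openness of the $!$-boxes of $G$. Once this bookkeeping is carried out, all four conditions hold and $f[G]$ is a $!$-graph.
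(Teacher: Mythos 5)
Your treatment of conditions (1) and (2) of definition \ref{def:bang-graph} matches the paper's proof: the wire-vertex condition is inherited by $U(f[G])$ as a subgraph of $U(H)$, and arity-matching of the restricted typing morphism comes from the fixed-neighbourhood bijection (proposition \ref{prop:sg-arity-matching}), which forces the whole fixed neighbourhood of each image node-vertex into the image. One caveat on (2): ``inherits simplicity and the partial-order structure'' from $\beta(H)$ is not right as stated, since subgraphs of posetal graphs inherit simplicity and anti-symmetry but not reflexivity or transitivity. The paper splits the four properties accordingly --- simple and anti-symmetric from $H$, reflexive and transitive from $G$ --- and your self-loop remark covers only reflexivity, leaving transitivity unaddressed.

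The genuine gap is in conditions (3) and (4). Your primary route --- fullness plus lemma \ref{lem:pg-full-subgraph} --- is, as you yourself note, unavailable, because $f[G]$ need not be full in $H$. But your fallback fails too: you propose to push openness \emph{forward} from $G$, claiming that any wire-adjacency or fixed-arity incidence violating openness of an image box pulls back to a violation at some $a \in f^{-1}(b)$ in $G$. It need not. If $e$ is an edge of $f[G]$ from a wire-vertex $w$ outside the image box to a vertex $v$ inside it, a preimage of $e$ joins \emph{some} preimages $w',v'$ of $w,v$; but membership of $v$ in the image box is witnessed by a possibly \emph{different} preimage $u$ of $v$ (one receiving an edge from a preimage of $b$), and $v'$ itself may lie in no $B_G(a)$ at all, since a non-injective $f$ can identify a box vertex of $G$ with a non-box vertex. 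So no contradiction with openness in $G$ arises, and ``care in the choice of preimages'' cannot close this: what is needed is a fact about edges of $H$ into the image, not about $G$. The paper argues in the opposite direction, \emph{down} from $H$: writing $\iota : f[G] \rightarrow H$ for the inclusion, it identifies $B_{f[G]}(b')$ with $B_H(\iota(b')) \cap f[G]$, whence openness is immediate from proposition \ref{prop:open-subgraph-props} (an open subgraph intersected with an arbitrary subgraph is open in the latter), and condition (4) follows from $B(\iota(c)) \subseteq B(\iota(b))$ by intersecting both sides with $f[G]$. That intersection identity --- every vertex of $f[G]$ lying in the ambient box $B_H(\iota(b'))$ already lies in the image box --- is precisely the content your final paragraph defers as ``bookkeeping'' and never supplies (the paper, for its part, treats it as immediate); without it, neither openness nor the nesting condition is established.
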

\begin{proof}
    First we note that $f[G]$ is a subgraph of $H$, and let $\iota :
    f[G] \rightarrow H$ be the inclusion.  Then $U(f[G])$ is a subgraph
    of the string graph $U(H)$.  It therefore satisfies the wire-vertex
    conditions for string graphs, and we only need check that $\tau_H$
    restricts to an arity-matching morphism.  But we know that $f$ is
    arity-matching, and hence restricts to a bijection on the
    fixed-arity neighbourhood of every node-vertex $v$ in $G$ and its
    image in $H$, and $\tau_G$ is also arity-matching, and $\tau_G$ and
    $\tau_H \circ f$ commute.  So $\tau_H \restriction_{f[G]}$ must be
    arity-matching, and hence $U(f[G])$ must be a string graph.

    $\beta(f[G])$ is a subgraph of $\beta(H)$.  But then $\beta(f[G])$
    is simple and anti-symmetric, since $H$ is, and is reflexive and
    transitive, since $G$ is, and hence is posetal.

    Let $b' \in !(f[G])$.  Then there is a $b \in !(G)$ with $\iota(b')
    = f(b)$.  Now $B(b') = B(f(b)) \cap f[G]$, which is open in $f[G]$
    by proposition \ref{prop:open-subgraph-props}.

    Let $b,c \in !(f[G])$, with $c \in B(b)$.  Then we must have
    $\iota(c) \in B(\iota(b))$.  So $B(\iota(c)) \subseteq B(\iota(b))$.
    But then $B(b) = B(\iota(b)) \cap f[G]$ and similarly for $c$, and
    $B(\iota(c)) \cap f[G] \subseteq B(\iota(b)) \cap f[G]$, so $B(c)
    \subseteq B(b)$.

    So $f[G]$ is a $!$-graph, as required.
\end{proof}

\begin{lemma}\label{lemma:spg-epi-surj}
    A morphism in $\catBGraph_T$ is a strong epimorphism iff it is
    surjective.
\end{lemma}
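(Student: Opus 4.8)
The plan is to prove both implications, leaning on the image factorisation of a $!$-graph morphism together with two facts already established: that the image $f[G]$ of any $!$-graph morphism is again a $!$-graph (lemma \ref{lemma:pattgraph-morphism-img}), and that monomorphisms in $\catBGraph_T$ are exactly the injective morphisms (lemma \ref{lemma:spg-mono-inj}). Recall that a morphism $e$ is a \emph{strong epimorphism} if it is an epimorphism and, whenever $m \circ a = b \circ e$ with $m$ monic, there is a (necessarily unique) diagonal $d$ satisfying $d \circ e = a$ and $m \circ d = b$ (here $e \colon G \to H$, $a \colon G \to C$, $m \colon C \to D$, $b \colon H \to D$).

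For the direction ``surjective $\Rightarrow$ strong epi'', first note that a surjective morphism is right-cancellable on the underlying vertex and edge sets, hence an epimorphism. Now suppose $e \colon G \to H$ is surjective and we are given a commuting square $m \circ a = b \circ e$ with $m$ monic (equivalently injective). I would define the diagonal $d \colon H \to C$ by sending each vertex (resp.\ edge) $x$ of $H$ to $a(x')$, where $x'$ is any $e$-preimage of $x$; such a preimage exists by surjectivity. This is well defined because if $e(x_1') = e(x_2')$ then $m(a(x_1')) = b(e(x_1')) = b(e(x_2')) = m(a(x_2'))$, and $m$ injective forces $a(x_1') = a(x_2')$. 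A short diagram chase then shows $d$ commutes with source, target and the typing morphism, so $d$ is a $\mathcal{G}_{T!}$-typed graph morphism, and therefore a $!$-graph morphism since $H$ and $C$ are $!$-graphs and $\catBGraph_T$ is full. By construction $d \circ e = a$ and $m \circ d = b$, and uniqueness is immediate since $e$ is epic.

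For the direction ``strong epi $\Rightarrow$ surjective'', I would factor $e$ through its image. Write $e = \iota \circ e'$, where $e' \colon G \to e[G]$ is the corestriction and $\iota \colon e[G] \hookrightarrow H$ is the inclusion; the image $e[G]$ is a $!$-graph by lemma \ref{lemma:pattgraph-morphism-img}, and $\iota$ is injective, hence monic by lemma \ref{lemma:spg-mono-inj}. The square $\iota \circ e' = \mathrm{id}_H \circ e$ commutes, so applying the lifting property of the strong epimorphism $e$ against the mono $\iota$ yields a diagonal $d \colon H \to e[G]$ with $\iota \circ d = \mathrm{id}_H$. Thus $\iota$ is a split epimorphism as well as a monomorphism: from $\iota \circ (d \circ \iota) = \iota = \iota \circ \mathrm{id}$ and $\iota$ monic we get $d \circ \iota = \mathrm{id}$, so $\iota$ is an isomorphism. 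An inclusion of a subgraph that is an isomorphism must exhaust the graph, so $e[G] = H$ and $e$ is surjective.

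The routine-but-fiddly step is the well-definedness and morphism check for the diagonal $d$ in the forward direction; the genuine content of the lemma is carried by the image factorisation and the injective-equals-mono characterisation, which reduce the reverse direction to the elementary fact that a monomorphism admitting a retraction is an isomorphism.
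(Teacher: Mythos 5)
Your proof is correct and follows essentially the same route as the paper: the substantive direction (strong epi $\Rightarrow$ surjective) uses exactly the paper's argument, factoring $e$ through its image via lemma \ref{lemma:pattgraph-morphism-img}, lifting against the monic inclusion $\iota$, and concluding from $\iota \circ d = 1_H$. The only cosmetic difference is in the easy direction, where you construct the diagonal explicitly instead of citing that the property already holds in $\catGraph/\mathcal{G}_{T!}$ and descends to the full subcategory; both are sound.
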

\begin{proof}
    Since this holds in $\catGraph/\mathcal{G}_{T!}$, any surjective map
    in $\catBGraph_T$ must be a strong epimorphism in
    $\catGraph/\mathcal{G}_{T!}$, and hence a strong epimorphism in
    $\catBGraph_T$.

    Now suppose $e : A \rightarrow B$ is a strong epimorphism.  Let $e'
    : A \rightarrow e[A]$ be the restriction of $e$ to its image (which
    is also a $!$-graph by lemma \ref{lemma:pattgraph-morphism-img})
    and let $\iota : e[A] \rightarrow B$ be the inclusion of the image
    of $e$ in $B$.  Then $\iota$ is monic, so there exists a map $d$
    making the following diagram commute:
    \begin{center}
        \begin{tikzpicture}[-latex]
            \matrix(m)[cdiag]{
            A    & B \\
            e[A] & B  \\};
            \path [arrs] (m-1-1) edge [->>] node {$e$} (m-1-2)
                         (m-2-1) edge [right hook-latex] node [swap] {$\iota$} (m-2-2)
                         (m-1-1) edge [->>] node [swap] {$e'$} (m-2-1)
                         (m-1-2) edge node {$1_B$} (m-2-2)
                         (m-1-2) edge [dashed] node [swap] {$d$} (m-2-1);
        \end{tikzpicture}
    \end{center}
    But then $\iota \circ d = 1_B$, so $\iota$ must be surjective and
    hence so is $e$.
\end{proof}

Finally, it is worth noting how $\catSGraph_T$ and $\catBGraph_T$ are
related.

\begin{definition}[Concrete Graph]
    A $!$-graph with no $!$-vertices is called a \textit{concrete
    graph}.
\end{definition}

The full subcategory of $\catBGraph_T$ consisting of concrete graphs is,
in fact, the image of $\catSGraph_T$ under the previously-mentioned
embedding functor $E$.  Concrete graphs and string graphs will therefore
be considered interchangable.

Indeed, the restriction of $E$ to $\catSGraph_T$ is a full, coreflective
embedding, and its right adjoint is the restriction of $U$ to
$\catBGraph_T$.  As a result, we will also use $E$ and $U$ to refer to
these restrictions; which is meant should be clear from context.  We use
$\Sigma$ to mean $E \circ U$ (and note that $\catBGraph_T$ is closed
under $\Sigma$).

We can extend the $\beta(G)$ notation of definition \ref{def:bang-graph}
to morphisms of $\catGraph/\mathcal{G}_{T!}$ (and hence of
$\catBGraph_T$) by making their operation be the obvious restrictions.
More precisely, if $\mathcal{G}_!$ is the subgraph of $\mathcal{G}_{T!}$
consisting of only the $!$-vertex and its self-loop (with $\iota_T$
being its inclusion map), we can view $\beta$ as $E_{\iota_T} \circ
U_{\iota_T}$ (in the terminology of theorem
\ref{thm:slice-cat-mono-func}).

\subsection{Wire Homeomorphisms}
\label{sec:bg-wire-homeo}

We will need to extend our notion of wire-homeomorphism to $!$-graphs.

A \textit{wire homeomorphism} $f : G \sim H$ between two $!$-graphs
$G$ and $H$ consists of four bijective type-preserving functions
\begin{itemize}
    \item $f_N : N(G) \leftrightarrow N(H)$
    \item $f_! :\; !(G) \leftrightarrow !(H)$
    \item $f_B : \Bound(G) \leftrightarrow \Bound(H)$
    \item $f_W : \Wires(G) \leftrightarrow \Wires(H)$
\end{itemize}
such that
\begin{itemize}
    \item for each wire $w$ in $\Wires(G)$ with a source $v$ in $N(G)$
        (resp. $\Bound(G)$), the source of $f_W(w)$ in $H$ is $f_N(v)$
        (resp. $f_B(v)$), and similarly for targets of wires
    \item for each $b \in\:!(G)$, there is an edge in $G$ from $b$ to a
        vertex $v$ in $!(G)$ (resp. $N(G)$, $\Bound(G)$) if and only if
        there is an edge in $H$ from $f_!(b)$ to $f_!(v)$ (resp.
        $f_N(v)$, $f_B(v)$)
    \item for each $b \in\:!(G)$ and each wire $w \in \Wires(G)$, there
        is an edge in $G$ from $b$ to each $v \in w_V$ if and only if
        there is an edge in $H$ from $f_!(b)$ to each $v \in (f_W(w))_V$
\end{itemize}

This is the same as the definition for string graphs, but with a map for
$!$-boxes and some constraints to preserve $!$-box containment.  This
means that if $f : G \sim H$ is a wire-homeomorphism of $!$-graphs,
discarding $f_!$ will give a wire-homeomorphism $U(G) \sim U(H)$.

\section{The $!$-Box Operations} 
\label{sec:pg-instantiation}

We have already mentioned that $!$-boxes are intended to be subgraphs
that can be copied or deleted.  The precise set of operations allowed on
a $!$-box are the following (\cite{Kissinger2012a}, pp 6, although we
omit $\MERGE$ from that list and revisit it in section
\ref{sec:bbox-merge}):
\[ \input{three-ops.tikz} \]

These are inspired by the rules for the ``bang'' operation from linear
logic\cite{Girard1987}.  In particular, $\COPY$ corresponds to
contraction, $\DROP$ to dereliction and $\KILL$ to weakening.

\begin{definitions}[$!$-Box Operations; \cite{Kissinger2012a}, pp 6]
    \label{def:bbox-ops}
    For $G$ a $!$-graph and $b\! \in\, !(G)$,
    the three $!$-box operations are defined as follows:
    \begin{itemize}
        \item $\COPY_b(G)$ is defined by a pushout of inclusions in
            $\catGraph/\mathcal{G}_{T!}$:
            \begin{equation}
                \label{eq:copy-pushout}
                \begin{tikzcd}
                    G\graphminus B(b) \rar[right hook->] \dar[right hook->]
                    & G \dar \\
                    G \rar & \COPY_b(G) \NWbracket
                \end{tikzcd}
            \end{equation}
        \item $\DROP_b(G) := G\graphminus b$
        \item $\KILL_b(G) := G\graphminus B(b)$
    \end{itemize}
\end{definitions}

Before we show that these operations preserve the property of being a
$!$-graph, it will be useful to adapt the notion of boundary-coherence
to $!$-graphs.  The important thing is that, in addition to the
restrictions on how the morphisms interact with the boundaries of the
graphs, we also need to restrict how they interact with $!$-box
containment.  Note that the definition of boundary-$!$-coherence we give
here is stronger than is actually needed for producing pushouts of
$!$-graphs; this gives us a simpler definition and simpler proofs.

\begin{definition}\label{def:pg-bound-coher}
    Let $f : G \rightarrow H$ be a $!$-graph morphism.  $f$ is said to
    \textit{reflect $!$-box containment} if whenever we have an edge $e$
    in $H$ whose source is a $!$-vertex and whose target is in the image
    of $f$, $e$ is also in the image of $f$.

    A span of monomorphisms $G \xleftarrow{g} I \xrightarrow{h} H$ in
    $\catBGraph_T$ is called \textit{boundary-$!$-coherent} if the
    span formed from $U(g)$ and $U(h)$ is boundary-coherent and both $g$
    and $h$ reflect $!$-box containment.

    As in $\catSGraph_T$, a single morphism $f : G \rightarrow H$ of
    string graphs is called boundary-$!$-coherent if the span $H
    \xleftarrow{f} G \xrightarrow{f} H$ is.
\end{definition}

The main purpose of this definition is to provide us with a sufficient
condition for a span to have a pushout in $\catBGraph_T$.

\begin{proposition}\label{prop:pg-bound-coh-p-pushout}
    If a span of monomorphisms $G \xleftarrow{g} I \xrightarrow{h} H$ in
    $\catBGraph_T$ is boundary-$!$-coherent, then it has a pushout in
    $\catBGraph_T$.
\end{proposition}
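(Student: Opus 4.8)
The plan is to build the pushout in the ambient slice category $\catGraph/\mathcal{G}_{T!}$, which, being a slice of a presheaf category, is cocomplete, and to obtain the pushout of $G \xleftarrow{g} I \xrightarrow{h} H$ there as an object $P$; since $g$ and $h$ are monic, $P$ is the union of $G$ and $H$ glued along $I$, and the coprojections into it are again monic. The real work then lies in verifying that $P$ satisfies the four conditions of Definition~\ref{def:bang-graph}, i.e.\ that it is a $!$-graph. Once this is done, fullness of $\catBGraph_T$ in $\catGraph/\mathcal{G}_{T!}$ immediately promotes $P$ to the pushout in $\catBGraph_T$: any cocone in $\catBGraph_T$ is a cocone in the ambient category, the induced mediating map exists and is unique there, and it is a $!$-graph morphism because the subcategory is full.

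For the first condition I would observe that $U = U_m$ for the inclusion $m : \mathcal{G}_T \hookrightarrow \mathcal{G}_{T!}$, so by Proposition~\ref{prop:slice-cat-func-pb} it preserves pushouts of monomorphisms; hence $U(P)$ is the pushout of $U(G) \xleftarrow{U(g)} U(I) \xrightarrow{U(h)} U(H)$ in $\catGraph/\mathcal{G}_T$. Boundary-$!$-coherence of the span gives boundary-coherence of this underlying span, so Theorem~\ref{thm:sg-bcoh-pushout} tells us the union $U(P)$ is in fact a string graph, which is exactly condition~(1).

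The remaining three conditions are where the reflect-$!$-box-containment hypotheses earn their keep. Applying the same preservation argument to $\beta = E_{\iota_T} \circ U_{\iota_T}$ (a composite of a left adjoint and a functor of the form $U_m$, so preserving pushouts of monomorphisms) shows $\beta(P)$ is the union of the posets $\beta(G)$ and $\beta(H)$ glued along $\beta(I)$. To see this union is again posetal I would check transitivity and antisymmetry directly: given a chain $a \to b \to c$ of edges between $!$-vertices whose two edges come from different sides (say $a \to b$ from $G$ and $b \to c$ from $H$), the shared vertex $b$ must lie in $I$; then reflect-$!$-box-containment of $g$ forces the edge $a \to b$, whose target $b$ is in the image of $g$, to lie in $I$ as well, so both edges actually live in $\beta(H)$ and transitivity there supplies $a \to c$. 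Antisymmetry is handled the same way, pulling any putative $2$-cycle back into $\beta(I)$ and contradicting its posetality. For conditions~(3) and~(4), I would express $U(B_P(b))$ in terms of $B_G(b)$ and $B_H(b)$---a single image when $b$ lies on only one side, a union when $b \in I$---and deduce openness in $U(P)$ and the nesting inclusion from Proposition~\ref{prop:open-subgraph-props} together with the containment-reflection of $g$ and $h$, which keeps the outgoing edges of a $!$-vertex from leaking across the gluing.

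I expect the posetal condition to be the main obstacle: the union of two partial orders is in general neither transitive nor antisymmetric, and it is precisely the reflect-$!$-box-containment clauses of boundary-$!$-coherence---ensuring that whenever a $!$-vertex's outgoing edge lands in the shared interface the whole edge is shared---that rule out the new cycles and transitivity gaps that gluing could otherwise create. Re-establishing the openness of each $!$-box in the enlarged graph (condition~(3)) is the other delicate point, since openness is a statement about the ambient graph $U(P)$ rather than about $U(G)$ or $U(H)$ individually, and so must be re-checked after the union has been formed.
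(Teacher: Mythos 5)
Your proposal is correct and follows essentially the same route as the paper's proof: form the pushout in $\catGraph/\mathcal{G}_{T!}$, use Proposition~\ref{prop:slice-cat-func-pb} with Theorem~\ref{thm:sg-bcoh-pushout} to get condition~(1), and then verify posetality, openness and nesting via the key observation that an edge out of a $!$-vertex and another edge meeting it at a shared vertex must both lie in $G$ or both in $H$, because a vertex in both images lies in $I$ and reflection of $!$-box containment pulls the $!$-edge back into $I$. Your closing diagnosis of the two delicate points (posetality of the glued order and openness relative to the enlarged ambient graph) matches exactly where the paper spends its effort, so there is nothing to correct.
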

\begin{proof}
    Suppose $g$ and $h$ are boundary-$!$-coherent.  Since they are
    monomorphisms, they have a pushout in $\catGraph/\mathcal{G}_{T!}$:
    \begin{equation}\label{eq:pg-bc-po}
            \posquare{I}{H}{G}{D}{h}{p_g}{g}{p_h}
    \end{equation}
    It suffices to show that $D$ is a $!$-graph.

    The span formed by $U(g)$ and $U(h)$ is boundary-coherent, so the
    graph that results from pushing out this span in
    $\catGraph/\mathcal{G}_T$ is in $\catSGraph_T$ (theorem
    \ref{thm:sg-bcoh-pushout}), and so is a string graph.  But this is
    just $U(D)$, by proposition \ref{prop:slice-cat-func-pb}.

    Since all the morphisms are monic (and the square commutes), we can
    treat them as containment relations.  So we will consider $G$ and
    $H$ to be subgraphs of $D$, and $I$ to be their common subgraph.

    Now we note that if we have an edge $e_b$ from a $!$-vertex $b$ to a
    vertex $v$ in $D$, and another edge $e$ incident to $v$ in $D$,
    both edges must be in $G$ or else both edges must be in $H$.  To see
    this, suppose not.  Then one edge ($e_b$, say) must be in $G$ and
    the other ($e$) in $H$.  So $v$ must be in $I$, since it is in both
    $G$ and $H$.  But then, as $g$ reflects $!$-box containment, $e_b$
    must be in $I$ and hence in $H$.

    We need to show that $\beta(D)$ is posetal.  $\beta(D)$ must be
    simple, since if we have two edges from a $!$-vertex $b_1$ to a
    $!$-vertex $b_2$ in $D$, they must both be in $G$ or both in $H$ by
    the above reasoning, which cannot be the case as these are
    $!$-graphs.  So there are no such edges.  Anti-symmetry is shown in
    the same way by reversing one of the edges.  Reflexivity is
    inherited from $G$ and $H$.  Transitivity follows by considering
    that if we have the following pair of edges
    \[ \input{bg-bcoh-pushout-ex-trans.tikz} \]
    they must both be in $G$ or both in $H$, and hence there must be an
    edge from $b_1$ to $b_3$ inherited from that graph.  So $\beta(D)$
    is posetal.

    Let $b \in\:!(D)$, and suppose $U(B(b))$ is not open in $U(D)$.
    Then there is either a wire-vertex adjacent to $U(B(b))$, or a
    fixed-arity edge incident to it.  The former case is impossible,
    because if we have two edges such that
    \[ \input{bg-bcoh-pushout-ex-open.tikz} \]
    where $w$ is a wire-vertex, then $e_w$ and $e_b$ must both be in $G$
    or both be in $H$, and hence there must be an edge from $b$ to $w$.
    Similarly, if there is a fixed-arity edge $e$ incident to $v$,
    $e_b$ and $e$ must be both in $G$ or both in $H$, and hence there
    must be an edge from $b$ to the other end of $e$, and so $e$ is not
    incident to $U(B(b))$.

    Let $b,c \in\:!(D)$ with $c \in B(b)$, and let $v \in B(c)$.  We need
    to show that $v \in B(b)$.  But the edge from $b$ to $c$ and the
    edge from $c$ to $v$ must both be in $G$ or both be in $H$, and so
    there must be an edge from $b$ to $v$, as required.
\end{proof}

We now proceed to prove that all the $!$-box operations produce
$!$-graphs.  This theorem was stated, but not proved, in
\cite{Kissinger2012a} (pp 7).

\begin{theorem}\label{thm:instantiation}
    Let $G$ be a $!$-graph and $b \in\, !(G)$.  Then the
    $\mathcal{G}_{T!}$-typed graphs $\COPY_b(G)$, $\DROP_b(G)$ and
    $\KILL_b(G)$ are all $!$-graphs.
\end{theorem}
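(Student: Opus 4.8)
The plan is to dispatch $\DROP_b$ and $\KILL_b$ using the subgraph results already established, and then to concentrate on $\COPY_b$, which is the only genuinely new construction.

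For $\DROP_b(G) = G\graphminus b$, I would observe that deleting the single $!$-vertex $b$ (together with its incident edges) leaves the full subgraph on the remaining vertices. Since $b$ is a $!$-vertex, the forgetful functor $U$ discards it in any case, so $U(\DROP_b(G)) = U(G)$, which is a string graph. Lemma \ref{lem:pg-full-subgraph} then gives immediately that $\DROP_b(G)$ is a $!$-graph. For $\KILL_b(G) = G\graphminus B(b)$, I note that $U(B(b))$ is open in $U(G)$ by condition~3 of the definition of a $!$-graph, so Corollary \ref{cor:pg-graphminus} applies directly and $\KILL_b(G)$ is a $!$-graph.

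The substantive case is $\COPY_b(G)$, defined by the pushout \eqref{eq:copy-pushout} of the span $G \hookleftarrow G\graphminus B(b) \hookrightarrow G$ in $\catGraph/\mathcal{G}_{T!}$, where both legs are the inclusion $\iota$. My strategy is to show this span is boundary-$!$-coherent (Definition \ref{def:pg-bound-coher}) and invoke Proposition \ref{prop:pg-bound-coh-p-pushout}, whose proof constructs precisely this pushout in $\catGraph/\mathcal{G}_{T!}$ and shows the result is a $!$-graph. Boundary-$!$-coherence has two components. First, the span of $U(\iota)$ with itself must be boundary-coherent; as both legs coincide, this reduces to boundary-coherence of the single embedding $U(G\graphminus B(b)) \hookrightarrow U(G)$. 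Identifying $U(G\graphminus B(b))$ with $U(G)\graphminus U(B(b))$ and using that $U(B(b))$ is open, this is exactly the forward direction of Proposition \ref{prop:sg-open-iff-bc}. Second, $\iota$ must reflect $!$-box containment: given an edge $e$ in $G$ from a $!$-vertex $c$ to a target $v$ that lies in $\textrm{im}(\iota) = G\graphminus B(b)$ (that is, $v \notin B(b)$), I must show $e$ also lies in $G\graphminus B(b)$, equivalently that $c \notin B(b)$.

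This reflection condition is the one step that is not purely formal, and it is exactly where the nesting axiom (condition \ref{def-item:bg-bbox-nesting}) of a $!$-graph does its work. Suppose for contradiction that $c \in B(b)$. Then $c$ is a $!$-vertex lying inside $B(b)$, so nesting yields $B(c) \subseteq B(b)$. But the edge $c \to v$ witnesses $v \in \textrm{succ}(c)$, hence $v \in B(c) \subseteq B(b)$, contradicting $v \notin B(b)$. Therefore $c \notin B(b)$, so $e$ lies in $G\graphminus B(b)$ and $\iota$ reflects $!$-box containment. With both components verified, the span is boundary-$!$-coherent, and Proposition \ref{prop:pg-bound-coh-p-pushout} gives that $\COPY_b(G)$ is a $!$-graph, completing the proof.
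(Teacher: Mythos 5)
Your proposal is correct and follows essentially the same route as the paper: $\DROP_b$ via Lemma \ref{lem:pg-full-subgraph}, $\KILL_b$ via Corollary \ref{cor:pg-graphminus}, and $\COPY_b$ by showing the span of inclusions is boundary-$!$-coherent (boundary-coherence of $U(\iota)$ from Proposition \ref{prop:sg-open-iff-bc}, reflection of $!$-box containment from the nesting condition) and invoking Proposition \ref{prop:pg-bound-coh-p-pushout}. The only difference is cosmetic: you spell out the short contradiction argument for reflection of $!$-box containment, which the paper leaves as an appeal to condition \ref{def-item:bg-bbox-nesting}.
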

\begin{proof}
    Corollary \ref{cor:pg-graphminus} gives us that $G \graphminus B(b)$
    is a $!$-graph, and hence its embedding $\iota$ into $G$ is a
    $!$-graph monomorphism.  This embedding is boundary-$!$-coherent:
    since $U(B(b))$ is an open subgraph of $U(G)$, proposition
    \ref{prop:sg-open-iff-bc} gives us that $U(\iota)$ is a
    boundary-coherent morphism, and requirement
    \ref{def-item:bg-bbox-nesting} of definition \ref{def:bang-graph}
    ensures that $\iota$ reflects $!$-box containment.  So we know the
    pushout exists in $\catBGraph_T$ by proposition
    \ref{prop:pg-bound-coh-p-pushout}, and hence $\COPY_b(G)$ is a
    $!$-graph.

    The $\DROP$ and $\KILL$ cases follow from lemma
    \ref{lem:pg-full-subgraph} and corollary \ref{cor:pg-graphminus}
    respectively.
\end{proof}

Since one of the reaons for using string graphs is their ability to
model inputs and outputs, it is useful to know how these operations
affect the inputs and outputs of $!$-graphs.  Firstly, we extend
$\In(G)$, $\Out(G)$ and $\Bound(G)$ to include $!$-boxes in the
following way:

\begin{definition}\label{def:bang-graph-bounds}
    For a $!$-graph $G$, we define $\In_!(G)$ to be the full
    subgraph of $G$ containing the vertices $!(G)$ and $\In(U(G))$;
    $\Out_!(G)$ to be the full subgraph containing $!(G)$ and
    $\Out(U(G))$; and $\Bound_!(G)$ is the subgraph containing $!(G)$
    and $\Bound(U(G))$, and any edges with a source in $!(G)$ and a
    target in either $!(G)$ or $\Bound(U(G))$.
\end{definition}

These are all valid $!$-graphs.  Now we can show that the $!$-graph
operations affect the boundary in the expected manner.

\begin{lemma}
    \label{lemma:copy-maps-bounds}
    Let $G$ be a $!$-graph, and $b \in\:!(G)$, and consider the copy
    operation:
    \[ \posquare{G\graphminus B(b)}{G}{G}{\COPY_b(G)}{i_1}{p_2}{i_2}{p_1} \]

    Then all the maps in the pushout take inputs to inputs and outputs to
    outputs.
\end{lemma}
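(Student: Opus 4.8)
The plan is to use the pushout characterisation of $\COPY_b(G)$ given in Definition \ref{def:bbox-ops}, together with the fact (already established in Lemma \ref{lemma:sg-morphism-bounds}) that string graph morphisms reflect the boundary: a vertex mapping to an input must already be an input, and dually for outputs. Since the maps in the pushout are all monomorphisms, and we showed in the proof of Theorem \ref{thm:instantiation} that the relevant span is boundary-$!$-coherent, I can treat all four maps as subgraph inclusions and reason about whether a given boundary vertex of one graph remains a boundary vertex in the other.

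**The four maps:**

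There are four maps to consider: the two inclusions $i_1, i_2 : G \graphminus B(b) \hookrightarrow G$ of the pushout corner into each copy of $G$, and the two maps $p_1, p_2 : G \rightarrow \COPY_b(G)$. For the inclusions $i_1, i_2$, I note that $U(G \graphminus B(b))$ is open in $U(G)$ (requirement of a $!$-graph applied via Corollary \ref{cor:pg-graphminus}), and Proposition \ref{prop:sg-open-iff-bc} tells us that the embedding of an open subgraph's complement is boundary-coherent. Boundary-coherence is exactly the statement that inputs of the subgraph map to inputs (and outputs to outputs), so these two maps preserve the boundary as required. The real content is in the maps $p_1, p_2 : G \rightarrow \COPY_b(G)$, which embed each copy of $G$ into the copied graph.

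**Handling $p_1$ and $p_2$ — the main obstacle:**

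For $p_1$ and $p_2$ I would argue as follows. These are monomorphisms (pushouts of monomorphisms along monomorphisms in $\catGraph/\mathcal{G}_{T!}$ are monic), so I treat them as inclusions and check that an input $w$ of $G$ remains an input in $\COPY_b(G)$ — i.e.\ that copying $B(b)$ does not introduce a new incoming edge at $w$. The key observation is that $\COPY_b(G)$ is built as a pushout over $G \graphminus B(b)$, so $U(\COPY_b(G))$ is a string graph (Theorem \ref{thm:instantiation}) in which each wire-vertex still has at most one incoming and one outgoing edge. Any new edge incident to $w$ in $\COPY_b(G)$ would have to come from the duplicated copy of $B(b)$; but since $U(B(b))$ is open, it is not adjacent to any wire-vertices outside it, so the only wire-vertices that could gain edges under copying are themselves inside $B(b)$ and their duplicates, and each such vertex is duplicated \emph{together with} its incident edges, preserving its input/output status. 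Concretely, I expect the main obstacle to be the careful bookkeeping of which wire-vertices lie in $B(b)$, which lie in $G \graphminus B(b)$, and which are identified in the pushout: a boundary vertex $w$ either lies in the shared part $G \graphminus B(b)$ (in which case its boundary status is governed by the boundary-coherence of $i_1, i_2$ already handled) or it lies strictly inside $B(b)$, in which case it is duplicated along with exactly one incoming and one outgoing edge, so both the original and the copy remain inputs (resp.\ outputs). Since these are the only cases, every such $w$ maps to an input (resp.\ output) of $\COPY_b(G)$, and the openness of $U(B(b))$ guarantees no wire-vertex straddles the boundary of $B(b)$ in a way that would create a new non-boundary incidence. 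This completes the argument for all four maps.
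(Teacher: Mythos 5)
Your proof is correct, but it takes a genuinely different route from the paper's. The paper argues abstractly from the pushout structure: given an input $w$ of $G$ and a putative edge $e$ with target $p_1(w)$, it notes that $e$ cannot lie in the image of $p_1$ (since $w$ is an input and $p_1$ is injective), hence $e$ and therefore $p_1(w)$ lie in the image of $p_2$; the pushout's intersection property then yields $w' \in G \graphminus B(b)$ with $i_1(w') = w$ and $p_2(i_2(w')) = p_1(w)$, and since $w'$ is an input of $G \graphminus B(b)$ and $i_2$ takes inputs to inputs, $i_2(w')$ has no incoming edge, contradicting the existence of a $p_2$-preimage of $e$. You instead unfold the anatomy of $\COPY_b(G)$ and exploit openness of $B(b)$ directly: no crossing edge can touch a wire-vertex outside $B(b)$, so wire-vertices in the shared part keep exactly their original incident edges, while wire-vertices inside $B(b)$ are duplicated together with their entire edge neighbourhood. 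Both arguments are sound; the paper's buys brevity and stays at the level of the pushout machinery already established (and so transfers easily to similar squares), while yours gives a concrete picture of the copied graph at the cost of the case bookkeeping you acknowledge. Two loose statements in your write-up deserve repair, though neither is a genuine gap because the correct justification appears elsewhere in your own text: first, $U(G \graphminus B(b))$ is generally \emph{not} open in $U(G)$ --- node-vertices of the complement may be adjacent, via variable-arity edges, to wire-vertices inside $B(b)$ --- what Proposition \ref{prop:sg-open-iff-bc} gives you, and what you actually use, is that openness of $U(B(b))$ makes the embedding of the \emph{complement} boundary-coherent; second, for $w$ in the shared part the appeal to boundary-coherence of $i_1, i_2$ points the wrong way (it sends inputs of $G \graphminus B(b)$ to inputs of $G$, not inputs of $G$ to inputs of $\COPY_b(G)$) --- it is your openness observation, not boundary-coherence, that rules out new incoming edges at $p_1(w)$ in that case. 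Finally, a wire-vertex inside $B(b)$ has \emph{at most} one incoming and one outgoing edge, not exactly one; the conclusion that each copy inherits the original's input/output status is unaffected.
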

\begin{proof}
    The square is symmetric, and the cases for inputs mirrors the case for
    outputs, so we just consider inputs for $i_1$ and $p_1$.

    That $i_1$ maps inputs to inputs follows from the fact that it is
    boundary-$!$-coherent.

    Suppose $w \in \In(G)$, and consider $p_1(w)$.  If there is an edge
    $e$ in $\COPY_b(G)$ with $t(e) = p_1(w)$, then $e$ cannot be in the
    image of $p_1$, since $w$ is an input of $G$ and $p_1$ is injective.
    So $e$, and hence $p_1(w)$, must be in the image of $p_2$.  But then
    there must be a vertex $w'$ in $G\graphminus B(b)$ with $i_1(w') =
    w$ and $p_2(i_2(w'))$ = $p_1(w)$, since this is a pushout.  $w'$
    must be an input of $G\graphminus B(b)$, since if there were an
    incoming edge to $w'$, it would have to map to an incoming edge of
    $w$ in $G$, and there is no such edge.  So $i_2(w')$ must also be an
    input of $G$, since $i_2$ maps inputs to inputs.  Then $i_2(w')$
    cannot have an incoming edge, and so $e$ cannot be in the image of
    $p_2$.  So there is no such edge $e$, and $p_1(w) \in
    \In(\COPY_b(G))$.
\end{proof}

%

\begin{theorem}\label{thm:pg-ops-commute-bounds}
    Let $G$ be a $!$-graph, and $b \in\:!(G)$.  Then
    \begin{itemize}
        \item $\Bound_!(\COPY_b(G)) \cong \COPY_b(\Bound_!(G))$;
        \item $\Bound_!(\KILL_b(G)) \cong \KILL_b(\Bound_!(G))$; and
        \item $\Bound_!(\DROP_b(G)) \cong \DROP_b(\Bound_!(G))$
    \end{itemize}
    and similarly for $\In_!$ and $\Out_!$.
\end{theorem}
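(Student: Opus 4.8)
The plan is to treat the nine claims (three operations against three flavours of boundary) uniformly. The point is that $\In_!$, $\Out_!$ and $\Bound_!$ differ only in which set $S$ of wire-vertices is adjoined to $!(G)$ — respectively $\In(U(G))$, $\Out(U(G))$ or $\Bound(U(G))$ — and in each case the resulting subgraph retains only the $!$-edges among its vertices (for $\In_!$ and $\Out_!$ this is automatic, since no two inputs, and no two outputs, are adjacent; for $\Bound_!$ it is imposed by the definition, which excludes bare wires). So it suffices, for each operation, to track the vertex set $!(\cdot)\cup S(\cdot)$ and the surviving $!$-edges. Throughout I write $C := G\graphminus B(b) = \KILL_b(G)$ and exploit openness of $U(B(b))$: $V(C)$ and $V(B(b))$ partition $V(G)$, no string-graph edge crosses between them, and by the nesting condition (item \ref{def-item:bg-bbox-nesting}) no $!$-edge runs from a $!$-vertex inside $B(b)$ to a vertex outside $B(b)$.

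I would dispatch $\DROP$ first, as it is essentially trivial: $U(\DROP_b(G)) = U(G)$, so $\In$, $\Out$ and $\Bound$ of the underlying string graph are literally unchanged, and both $\Bound_!(\DROP_b(G))$ and $\DROP_b(\Bound_!(G))$ are obtained by deleting the single $!$-vertex $b$ and its incident edges from the same ambient data. For $\KILL$ the crux is the identity $\Bound(U(\KILL_b(G))) = \Bound(U(G))\setminus V(B(b))$ (and its $\In$/$\Out$ analogues): a wire-vertex $w\notin V(B(b))$ has exactly the same string-graph edges in $G$ and in $G\graphminus B(b)$ — none reach $B(b)$ by openness, and by the nesting condition no $!$-edge from inside $B(b)$ reaches $w$ — so its input/output status is preserved. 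Since the $!$-box of $b$ inside $\Bound_!(G)$ is exactly $V(B(b))\cap(!(G)\cup S)$, both $\Bound_!(\KILL_b(G))$ and $\KILL_b(\Bound_!(G))$ then have vertex set $(!(G)\setminus V(B(b)))\cup(S\setminus V(B(b)))$ together with the same surviving $!$-edges, hence are isomorphic; Corollary \ref{cor:pg-graphminus} guarantees both sides are genuine $!$-graphs.

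The main work, and the main obstacle, is $\COPY$, since it is a pushout rather than a deletion. Here I would first establish the decomposition
\[ \Bound(U(\COPY_b(G))) \;=\; \Bound(U(C)) \;\sqcup\; \Bound(U(B(b)))^{(1)} \;\sqcup\; \Bound(U(B(b)))^{(2)}, \]
the context boundary appearing once and the box boundary twice. The forward inclusion uses Lemma \ref{lemma:copy-maps-bounds}: the copy maps $p_1,p_2:G\to\COPY_b(G)$ send inputs to inputs and outputs to outputs, while openness splits $\In(U(G))$ as $\In(U(C))\sqcup\In(U(B(b)))$ (likewise for outputs). The reverse inclusion — that the copy creates no spurious boundary vertices — follows by applying Lemma \ref{lemma:sg-morphism-bounds} to the string-graph maps $U(p_i)$, since every vertex of the pushout lies in the image of some $p_i$. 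As vertices of $C$ are identified by the pushout while vertices of $B(b)$ are duplicated, this yields the ``context-once, box-twice'' pattern for both $!(\cdot)$ and $S(\cdot)$.

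Finally I would match this against $\COPY_b(\Bound_!(G))$, the pushout of $\Bound_!(G)\graphminus B_{\Bound_!(G)}(b)\hookrightarrow\Bound_!(G)$ with itself. By the $\KILL$ case already proved, the shared corner $\Bound_!(G)\graphminus B_{\Bound_!(G)}(b)$ is exactly $\Bound_!(C)$, and the two copies of $\Bound_!(G)$ duplicate precisely $V(B(b))\cap(!(G)\cup S)$; so this pushout produces the same vertex set and the same ``context-once, box-twice'' arrangement. The fiddly part, which I expect to cost the most care, is verifying that the $!$-edge sets — not merely the vertex sets — coincide; I would settle this by tracing each class of $!$-edge (within $C$, within a box copy, and parent-to-box crossings, the last being duplicated identically in both constructions) through the two pushouts. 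The $\In_!$ and $\Out_!$ statements then follow by rerunning the identical argument with $S = \In(U(G))$ or $S = \Out(U(G))$.
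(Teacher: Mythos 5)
Your overall architecture mirrors the paper's (dispatch $\DROP$ trivially, reduce $\KILL$ to deleting the box's trace on the boundary, and for $\COPY$ compare the pushout defining $\COPY_b(\Bound_!(G))$ with $\Bound_!(\COPY_b(G))$ via Lemmas \ref{lemma:copy-maps-bounds} and \ref{lemma:sg-morphism-bounds}), but your setup for the $\COPY$ case contains a genuine error. You assert that by openness of $U(B(b))$ ``no string-graph edge crosses'' between $C = G\graphminus B(b)$ and $B(b)$, and you use this to split $\Bound(U(G))$ as $\Bound(U(C)) \sqcup \Bound(U(B(b)))$. That is false: openness only forbids the box being adjacent to outside \emph{wire-vertices} and incident to \emph{fixed-arity} edges. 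Variable-arity edges between a wire-vertex inside $B(b)$ and a node-vertex outside it are permitted, and are in fact the typical situation --- in the spider law \eqref{eq:spider-law-pattern} each $!$-box contains a boundary wire-vertex joined by a variable-arity edge to the spider node outside the box, which is exactly why copying the box increases that node's arity. Consequently $\Bound(U(B(b)))$ need not equal $\Bound(U(G)) \cap V(B(b))$: a wire-vertex $w \in B(b)$ whose only incident edge is an outgoing variable-arity edge to an outside node-vertex is isolated in $U(B(b))$, hence an \emph{output} of $U(B(b))$, yet it is not an output of $U(G)$; and each of its two copies in $\COPY_b(G)$ retains an edge to the single shared outside node, so neither is an output there. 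Your displayed decomposition $\Bound(U(\COPY_b(G))) = \Bound(U(C)) \sqcup \Bound(U(B(b)))^{(1)} \sqcup \Bound(U(B(b)))^{(2)}$ is therefore wrong as stated, and the concluding match fails, because the pushout defining $\COPY_b(\Bound_!(G))$ duplicates $\Bound(U(G)) \cap V(B(b))$ --- a strictly smaller set in such examples.

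The gap is local and fixable, and your own lemma citations supply the fix. Lemma \ref{lemma:sg-morphism-bounds} applied to $p_1,p_2$ shows every boundary vertex of $\COPY_b(G)$ is the image of a boundary vertex \emph{of $G$}, and Lemma \ref{lemma:copy-maps-bounds} gives the converse inclusion; together these yield $\Bound(U(\COPY_b(G))) = \Bound(U(C)) \sqcup \bigl(\Bound(U(G)) \cap V(B(b))\bigr)^{(1)} \sqcup \bigl(\Bound(U(G)) \cap V(B(b))\bigr)^{(2)}$, with no reference to the intrinsic boundary of the box --- this is essentially how the paper argues, embedding $\COPY_b(\Bound_!(G))$ into $\COPY_b(G)$ and identifying its image with $\Bound_!(\COPY_b(G))$. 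Note that your $\KILL$ step survives despite resting on the same false premise: any crossing edge has its outside end at a node-vertex, so wire-vertices of $C$ really do keep all their incident edges and the identity $\Bound(U(\KILL_b(G))) = \Bound(U(G)) \setminus V(B(b))$ holds; but the blanket ``no edge crosses'' claim should be weakened to exactly that. Your $!$-edge bookkeeping (nesting confines $!$-edges from inside the box, so only parent-to-box crossings occur) is correct.
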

\begin{proof}
    For the $\COPY_b$ case, since $\Bound_!(G)$ is a $!$-graph, we can
    construct the pushout
    \[
    \begin{tikzcd}
        \Bound_!(G)\graphminus B(b)
            \arrow[right hook-latex]{r}
            \arrow[right hook-latex]{d}
        & \Bound_!(G) \arrow{d}
        \\
        \Bound_!(G) \arrow{r}
        & D \NWbracket
    \end{tikzcd}
    \]
    in $\catBGraph_T$, where $D = \COPY_b(\Bound_!(G))$.  This is a
    subgraph of $\COPY_b(G)$:
    \[
    \begin{tikzcd}
        \Bound_!(G)\graphminus B(b)
            \arrow[right hook-latex]{r}
            \arrow[right hook-latex]{d}
        & \Bound_!(G) \arrow{d} \arrow[right hook-latex]{dr}
        &
        \\
        \Bound_!(G) \arrow{r} \arrow[left hook-latex]{dr}
        & D \arrow[right hook-latex,dashed]{dr}{\iota} \NWbracket
        & G \arrow{d}{i_1^G}
        \\
        & G \arrow[swap]{r}{i_2^G}
        & \COPY_b(G)
    \end{tikzcd}
    \]
    where $i_1^G$ and $i_2^G$ are the inclusions from the pushout that
    defines $\COPY_b(G)$.

    Now every edge and vertex in $\beta(\COPY_b(G))$ is in the image of
    either $i_1^G$ or $i_2^G$, and its preimage is in $\Bound_!(G)$.
    Then it must be in the image of $\iota$, so $D$ contains all
    $!$-vertices and edges between them from $\COPY_b(G)$.

    Now let $w$ be a boundary vertex of $\COPY_b(G)$.  It must have a
    preimage under at least one of $i_1^G$ and $i_2^G$, and lemma
    \ref{lemma:sg-morphism-bounds} gives us that any such preimage in
    $G$ is also a boundary vertex, and hence in $\Bound_!(G)$, and so in
    the image of $\iota$.  Conversely, if $w$ is a wire-vertex in the
    image of $\iota$, it must be in the image of a boundary vertex of
    $G$ under either $i_1^G$ or $i_2^G$.  Then lemma
    \ref{lemma:copy-maps-bounds} gives us that $w$ is a boundary-vertex
    of $\COPY_b(G)$.

    If $e$ is an edge from a $!$-vertex to a boundary vertex in
    $\COPY_b(G)$, it must be in the image of either $i_1^G$ or $i_2^G$,
    and its preimage $e'$ under these map(s) must likewise be an edge
    from a $!$-vertex to a boundary vertex in $G$.  So $e'$ is in
    $\Bound_!(G)$, and hence in the image of $\iota$, and we have that
    $\iota[D] = \Bound_!(\COPY_b(G))$.

    For the remaining cases, we note that $\beta(\Bound_!(G))
    = \beta(G)$.  It then follows that
    \[ \beta(\Bound_!(\KILL_b(G))) = \beta(\KILL_b(\Bound_!(G))) \]
    and similarly for $\DROP_b$.  Then the result for
    $\KILL_b$ follows from the fact that $U(B(b))$ is an open subgraph
    of $U(G)$, and $\DROP$ from the fact that it does not affect $U(G)$.

    The arguments for $\In_!$ and $\Out_!$ are almost identical.
\end{proof}

\section{Matching String Graphs With $!$-Graphs} 
\label{sec:matching-with-bgs}

A $!$-graph can be considered to be a pattern for string graphs, similar
in spirit to a regular expression, although less powerful.  Implicit in
the term ``pattern'', though, is that there should be a notion of
matching.  We already have a notion of what it means for a string graph
to match another string graph: the first string graph must be a subgraph
of the second (with some additional constraints).  So what does it mean
for a $!$-graph to match a string graph?

If $G$ is a $!$-graph, an obvious candidate for a string graph that
should be matched by $G$ is $U(G)$, the concrete part of the graph.
Recalling the discussion at the start of the chapter, our aim is to
match the concrete part of $!$-graphs with ``any number of copies'' of
each $!$-box in $G$.

This is where the $!$-box operations come in.  $\COPY_b$ and $\KILL_b$
provide a way of getting ``any number of copies'' (including zero) of
the $!$-box $B(b)$.  $\DROP_b$ then provides a way of producing a
concrete graph (which we have already noted is essentially the same as a
string graph).  These provide the a notion of \textit{instantiation},
by which we can produce a string graph from a $!$-graph.

Note that this definition is slightly different to the one in
\cite{Kissinger2012a} (pp 7).  The notions of \textit{instance} and
\textit{instantiation} in that paper correspond to \textit{concrete
instance} and \textit{concrete instantiation} here.

\begin{definition}[Instantiation]
    \label{def:instantiation}
    For $!$-graphs $G$, $H$, we let $G \succeq H$, and say $H$ is an
    \textit{instance} of $G$, if and only if $H$ can be obtained from
    $G$ (up to isomorphism) by applying the operations from
    definition~\ref{def:bbox-ops} zero or more times.  This sequence of
    operations is called an \textit{instantiation} of $H$ from $G$.  If
    $H$ is a concrete graph, it is called a \textit{concrete instance}
    of $G$, and any instantiation of it is also called
    \textit{concrete}.
\end{definition}

\begin{remark}
    \label{rem:instantiation-notation}
    Given an instantiation $S$ of $H$ from $G$, we will sometimes use
    the notation $S(G)$ to refer to $H$.  Note that $S$ is not really a
    function: because the first operation of $S$ refers to a $!$-vertex
    of $G$, $S$ cannot be applied to anything other than $G$ (although
    we can sometimes relax this constraint in the presence of certain
    morphisms, as we will see in later chapters).
\end{remark}

We can then say that a $!$-graph $G$ matches a string graph $H$ if and
only if there is a concrete instance of $G$ that matches $H$:

\begin{definition}[Matching; \cite{Kissinger2012a}, pp 8]
    \label{def:bg-sg-matching}
    Let $P$ be a $!$-graph, and $H$ a string graph.  If there is a
    concrete instance $G$ of $P$, with instantiation $S$, and a monic
    local isomorphism $m : U(G) \rightarrow H$, $P$ is said to
    \textit{match $H$ at $m$ under $S$}, and $m$ is said to be a
    \textit{matching of $P$ onto $H$ under $S$}.
\end{definition}

Note that all the $!$-box operations yield a pattern that is at least as
specific; if we apply a $!$-box operation to a $!$-graph $G$ to get
another $!$-graph $H$, any string graph matched by $H$ will also be
matched by $G$.  $\succeq$ can therefore be viewed as a refinement
(pre-)ordering on $!$-graphs.

Given a $!$-graph $P$ that satisfies a couple of constraints on
$!$-boxes that ensure there are only ever a finite number of matchings
onto any given graph (for example, there can be no empty $!$-boxes), it
is possible to decide for any string graph $G$ whether there is a
matching of $P$ onto $G$.  An algorithm for this (that also produces the
required instantiation) is given in chapter \ref{ch:implementation}.

\begin{example}
    Suppose we wish to use a rewrite rule implementing the Z spider law
    \[ \input{z-spider-rw-ex-rule.tikz} \]
    to rewrite
    \[ \input{z-spider-rw-ex-graph.tikz} \]

    We start by normalising the target graph:
    \[ \input{z-spider-rw-ex-graph-expand-hl.tikz} \]
    The highlighted wire-vertices are the ones that need to be matched
    by vertices that are in $!$-boxes in the LHS of the rule.  So we
    create four copies of $b_1$ and two each of $b_3$ and $b_4$
    \[ \input{z-spider-rw-ex-rule-copied.tikz} \]
    then kill $b_2$
    \[ \input{z-spider-rw-ex-rule-copied-killed.tikz} \]
    and drop all the remaining $!$-boxes
    \[ \input{z-spider-rw-ex-rule-copied-killed-dropped.tikz} \]
    This corresponds to the instantiation
    \begin{align*}
        &\COPY_{b_1}; \COPY_{b_1^0};
        \COPY_{b_1^1}; \COPY_{b_3};
        \COPY_{b_4}; \KILL_{b_2};
        \DROP_{b_1^{00}};\\
        &\DROP_{b_1^{01}};
        \DROP_{b_1^{10}}; \DROP_{b_1^{11}};
        \DROP_{b_3^0}; \DROP_{b_3^1};
        \DROP_{b_4^0}; \DROP_{b_4^1}
    \end{align*}
    where, as a notational convenience, we refer to the copies of $b$ in
    $\COPY_b(G)$ as $b^0$ and $b^1$ (in abitrary order).  The resulting
    rule matches most of the graph:
    \[ \input{z-spider-rw-ex-graph-match-hl.tikz} \]
    and rewriting (and removing extraneous wire-vertices) results in
    \[ \input{z-spider-rw-ex-result.tikz} \]
\end{example}

\subsection{Nested and Overlapping \texorpdfstring{$!$}{!}-boxes} 
\label{sec:nesting-overlapping}

The effect of edges between $!$-vertices on the $!$-box operations
deserves special attention.  These edges can be thought of as indicating
a parent-child relation, with the edge going from the parent $!$-box to
the child $!$-box (although this view does not entirely hold up when we
consider the self-loop that exists on each $!$-vertex).

The edges from a $!$-vertex determine which other vertices should be
copied or killed when the $!$-vertex has $\COPY$ or $\KILL$,
respectively, applied to it.  So an edge from a $!$-vertex $b$ to
another $!$-vertex $c$ means that $\COPY_b$ will also copy $c$ (and its
associated $!$-box).

An example that demonstrates the difference that an edge between
$!$-vertices makes is the following:
\begin{equation*}
    \input{tree-pattern.tikz}
\end{equation*}

This will match any tree of white nodes of depth at most two.  However,
if we remove the edge between the $!$-vertices, we get only the
\emph{balanced} trees.
\begin{equation*}
    \input{tree-expand.tikz}
\end{equation*}

The reason for this is that when we copy the outer $!$-box in the case
where there is an edge, we also copy the inner $!$-box.
\begin{equation*}
    \input{tree-copy-nested.tikz} 
\end{equation*}
This produces a separate $!$-box round each second-level node, allowing
different numbers of nodes on different branches (eg: one could copy
$b_2^0$ and kill $b_2^1$).  In the case without an edge, however,
copying $b_1$ will simply extend $b_2$, and every copy of $b_2$ will
have exactly one second-level node in each first-level branch of the
tree.
\begin{equation*}
    \input{tree-copy-overlapping.tikz} 
\end{equation*}
In this case, copying or killing $b_2$ would have the same effect on
every branch of the tree, ensuring that it remains balanced.

\begin{example}
    Recall the generalised bialgebra graph equation from the start of
    the chapter:
    \[ \input{gen-bialg-graph-labelled.tikz} \]
    Ignoring, for the moment, the details of the structure used to
    represent this equation (which we detail in section
    \ref{sec:rw-sgs-with-pgs}) we can (informally) show that the
    spiderised bialgebra law is an instance of this.  First, we apply
    $\COPY_{b_1}$:
    \[ \input{gen-bialg-graph-exp-1.tikz} \]
    Next we apply $\COPY_{b_2}$.  At this point, our shorthand $!$-box
    notation causes more confusion than it alleviates, so we use a
    hybrid notation where we show some of the $!$-box containment edges
    explicitly.
    \[ \input{gen-bialg-graph-exp-2.tikz} \]
    Finally, we $\DROP$ all the top-level $!$-vertices, giving us the
    spiderised bialgebra law.
    \[ \input{gen-bialg-graph-exp-3.tikz} \]
    If the $!$-boxes had not been nested, we would have ended up with
    the weaker statement
    \[ \input{gen-bialg-graph-exp-4.tikz} \]
\end{example}

\section{Related Work} 
\label{sec:bg-related}

Boneva et al describe\cite{Boneva2007} a system of \textit{graph shapes}
to generalise over classes of graphs in the context of model checking.
Their approach is similar to the typegraph constructions we introduced
for string graphs; a shape is a graph with some multiplicity constraints
on edges and vertices, and a graph $G$ is a \textit{concretisation} of
a shape $S$ if there is a morphism from $G$ to $S$ satisfying those
constraints (for example, a node of $S$ could require that exactly $3$
nodes of $G$ map to it, or an edge could require more than $5$ edges to
match it).  There is a shape-wide degree of approximation in these
multiplicities.

The aim of graph shapes is different from that of $!$-graphs, even
though both attempt to capture families of graphs in a single graph, and
this leads to some substantial differences between them.  For example,
graph shapes are mostly local (although there is some weak grouping of
nodes); multiplicities, in particular, are specified on a per-edge or
per-node basis.  $!$-graphs, by contrast, operate on arbitrarily-large
subgraphs.

Taentzer introduced the idea of amalgamated graph transformations in
\cite{Taentzer1996}.  These provide a framework for applying multiple,
potentially overlapping, rewrite rules in parallel by identifying common
sub-rules.  For example, one could amalgamate two copies of the string
graph rewrite rule
\[ \input{amalg-rule-1.tikz} \]
using the common sub-rule
\[ \input{amalg-subrule-1.tikz} \]
to produce the familier rule
\[ \input{amalg-joined-1.tikz} \]
Doing this an arbitrary number of times would allow the construction of
any one of the rules we intend
\[ \input{x-copies-z-spider-graph.tikz} \]
to represent.  However, we cannot allow arbitrary amalgamations of this
sort.  While
\[ \input{amalg-rule-2.tikz} \qquad \textrm{and} \qquad \input{amalg-subrule-2.tikz} \]
are both valid rules,
\[ \input{amalg-joined-2.tikz} \]
certainly is not.

Rensink's nested graph transformation
rules\cite{Rensink2006,Rensink2009} extend this amalgamation idea using
trees of rewrite rules, each rule being a subrule of its children.  A
predicate language over the left-hand side of these rules allows finer
control of the matching process (and, by using a tree of graphs rather
than of rewrite rules, the expression of complex graph predicates).

While the predicate language is unnecessary to codify spider-like
equations, the idea of rule trees could be used as a way of encoding the
$!$-graphs we described above, even including the nesting of $!$-boxes.
For example, using Rensink's notation, we could describe the spidered
version of X copies Z using the nested graph transformation rule
\[
\begin{tikzpicture}[string graph]
    \begin{pgfonlayer}{background}
        \draw[dashed,fill=lightgray!50] (-2.5,4.5) rectangle (2.5,2.5);
        \draw[dashed,fill=white] (-2,4) rectangle (-1,3);
        \draw[dashed,fill=white] (1,4) rectangle (2,3);
        \draw[dashed,fill=lightgray!50] (-2.5,1.75) rectangle (2.5,-1.75);
        \draw[dashed,fill=white] (-2,1.25) rectangle (-1,-1.25);
        \draw[dashed,fill=white] (1,1.25) rectangle (2,-1.25);
        \draw[dashed,fill=lightgray!50] (-2.5,-2.5) rectangle (2.5,-6.75);
        \draw[dashed,fill=white] (-2,-3) rectangle (-1,-6.25);
        \draw[dashed,fill=white] (1,-3) rectangle (2,-6.25);
        \draw[-triangle 60] (0,2.5) -- (0,1.75);
        \draw[-triangle 60] (0,-1.75) -- (0,-2.5);
    \end{pgfonlayer}
    \begin{pgfonlayer}{nodelayer}
        \node [style=sg grey vertex] (0) at (-1.5, 0.75) {};
        \node [style=sg wire vertex] (1) at (-1.5, 0) {};
        \node [style=none] (2) at (0, 0) {$\rewritesto$};
        \node [style=sg vertex] (3) at (-1.5, -0.75) {};
        \node [style=none] (4) at (0, 3.5) {$\rewritesto$};
        \node [style=sg wire vertex] (5) at (1.5, -5) {};
        \node [style=sg vertex] (6) at (-1.5, -5) {};
        \node [style=sg wire vertex] (7) at (-1.5, -4.25) {};
        \node [style=sg grey vertex] (8) at (1.5, -4.25) {};
        \node [style=sg wire vertex] (9) at (-1.5, -5.75) {};
        \node [style=sg grey vertex] (10) at (-1.5, -3.5) {};
        \node [style=none] (11) at (0, -4.625) {$\rewritesto$};
    \end{pgfonlayer}
    \begin{pgfonlayer}{edgelayer}
        \draw [style=sg diredge] (0) to (1);
        \draw [style=sg diredge] (1) to (3);
        \draw [style=sg diredge] (10) to (7);
        \draw [style=sg diredge] (7) to (6);
        \draw [style=sg diredge] (6) to (9);
        \draw [style=sg diredge] (8) to (5);
    \end{pgfonlayer}
\end{tikzpicture}
\]
The fact that an \textit{instance} of this tree can have any number of
copies of any given branch (including none) gives us the family of
rewrite rules we desire.  However, this technique could not be used to
encode the generalised bialgebra law, as the tree structure does not
allow for overlapping $!$-boxes, and hence cannot express arbitrary
bipartite graphs.  The approach we take (encoding $!$-boxes in the graph
itself) also allows us to use graph rewriting to reason directly about
$!$-graphs, as we will see in later chapters.

Stallmann produced an extension to the UML-based \textit{Story Patterns}
employed by FuJaBa\cite{fujaba} in \cite{Stallmann2008}.  These
\textit{enhanced Story Patterns} resemble Rensink's nested graph
transformation rules in character, and resemble $!$-graphs in
presentation.  They allow parts of a Story Pattern to be marked with
modifiers familiar from predicate logic, such as $\neg$, $\wedge$ or
$\forall$.  For example, the following enhanced Story Pattern captures
the requirement that every instance of class B is contained in an
instance of class A:
\[
    \begin{tikzpicture}
        \node[rectangle,draw,inner ysep=1.5em,inner xsep=5em] (box) at
        (0,-1) {};
        \node[chamfered rectangle,
              chamfered rectangle corners=south east,
              anchor=north west,
              inner ysep=0,
              draw] (l1)
              at (box.north west) {$\forall$};
        \node[rectangle,draw] (a) at (0,1) {\underline{\textbf{a : A}}};
        \node[rectangle,draw] (b) at (0,-1) {\underline{\textbf{b : B}}};
        \draw (a) -- (b) node[pos=0.5,auto=left]{contains}
        node[pos=0.5,rotate=-90,anchor=north]{$\blacktriangleright$};
    \end{tikzpicture}
\]


\chapter{$!$-Graph Rewriting}
\label{ch:rw-bang-graphs}

$!$-graphs provide the foundation for representing infinite families of
string graph equations and rewrite rules.  In this chapter, we will
demonstrate the construction of equations and rewrite rules composed of
$!$-graphs and show how they can be used to rewrite string graphs,
producing new string graph equations, and even $!$-graphs, producing new
$!$-graph equations.  Importantly, the latter is sound with respect to
the interpretation of $!$-graph equations as families of string graph
equations.

The first two sections of this chapter are again based on
\cite{Kissinger2012a} (pp 9-11).  The other two sections have not
previously been published.

\section{$!$-Graph Equations} 
\label{sec:bg-eqs}

We can combine $!$-graphs to form $!$-graph equations in much the same
way we did string graph equations.  We will extend the $!$-box
operations (and hence the idea of instantiation) to these equations, and
this will determine the string graph equations represented by a given
$!$-graph equation.

We can view a string graph equation as a pair of string graphs with
correlated inputs and outputs.  In the same way, we want a $!$-graph
equation to encode a pair of $!$-graphs and a correlation of inputs,
outputs and $!$-vertices.  For example, the spider law given in
\eqref{eq:spider-law-informal} could be represented as
\begin{equation}\label{eq:spider-law-pattern}
    \input{spider-law-graph.tikz}
\end{equation}

If we $\COPY_{b_2}$, $\DROP_{b_3}$ and $\KILL_{b_4}$, we get the
following equation.
\[ \input{spider-law-graph.tikz} \]
The rest of this section will be devoted to formalising these
constructions and showing that we can apply the $!$-box operations to
them.

We will use a span of monomorphisms $L \leftarrow I \rightarrow R$,
where $I$ equates $\Bound_!(L)$ with $\Bound_!(R)$.  Before we get to
the technical definition, we need a short lemma.
\begin{lemma}\label{lemma:pg-bounds-pushout}
    If $G$ is a $!$-graph such that $U(G)$ has no isolated
    wire-vertices, then we have the following pushout:
    \[
    \begin{tikzcd}
        \beta(G)
            \arrow[right hook-latex]{r}
            \arrow[right hook-latex]{d}
        & \In_!(G) \arrow[right hook-latex]{d}
        \\
        \Out_!(G) \arrow[right hook-latex]{r}
        & \Bound_!(G) \NWbracket
    \end{tikzcd}
    \]
\end{lemma}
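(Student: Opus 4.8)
The plan is to reduce the statement to the elementary description of pushouts of monomorphisms in $\catGraph/\mathcal{G}_{T!}$, and then transfer the result to $\catBGraph_T$. First I would observe that all four arrows in the square are subgraph inclusions: $\beta(G)$, $\In_!(G)$, $\Out_!(G)$ and $\Bound_!(G)$ are all $!$-graphs (as noted after Definition~\ref{def:bang-graph-bounds}), and each of $\beta(G)\hookrightarrow\In_!(G)$, $\beta(G)\hookrightarrow\Out_!(G)$, $\In_!(G)\hookrightarrow\Bound_!(G)$ and $\Out_!(G)\hookrightarrow\Bound_!(G)$ is an inclusion, so the square commutes automatically. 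By the \catSet-based characterization of pushouts of monomorphisms used in the proof of Proposition~\ref{prop:slice-cat-func-pb} (a commuting square of monos is a pushout iff the apex is covered by the two legs and the intersection of their images equals the image of the span), it then suffices to show that $\Bound_!(G)$ is covered by $\In_!(G)$ and $\Out_!(G)$ and that their intersection, on both vertices and edges, is exactly $\beta(G)$. Since $\Bound_!(G)$ is itself a $!$-graph, a pushout exhibited in $\catGraph/\mathcal{G}_{T!}$ is automatically the pushout in the full subcategory $\catBGraph_T$, so this is all that is required. (Alternatively one could invoke Proposition~\ref{prop:pg-bound-coh-p-pushout} for existence, but identifying the apex as $\Bound_!(G)$ still needs the union/intersection computation below.)

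For the vertices this is straightforward. The vertex sets are $!(G)$, $!(G)\cup\In(U(G))$, $!(G)\cup\Out(U(G))$ and $!(G)\cup\Bound(U(G))$ respectively, and since $\Bound(U(G))=\In(U(G))\cup\Out(U(G))$ their union is the vertex set of $\Bound_!(G)$. Their intersection is $!(G)\cup\bigl(\In(U(G))\cap\Out(U(G))\bigr)$, and here is where the hypothesis enters: a wire-vertex lying in both $\In(U(G))$ and $\Out(U(G))$ has no incoming and no outgoing edge in $U(G)$, hence is isolated; as $U(G)$ has no isolated wire-vertices, $\In(U(G))\cap\Out(U(G))=\varnothing$ and the intersection collapses to $!(G)$, the vertex set of $\beta(G)$.

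The edges require one structural observation, which I expect to be the main obstacle, because $\Bound_!(G)$ is defined as a \emph{non-full} subgraph (carrying only edges whose source is a $!$-vertex) whereas $\In_!(G)$ and $\Out_!(G)$ are \emph{full}; I must check these two styles of description are compatible. I would resolve this by showing that every edge of the full subgraph $\In_!(G)$ in fact has its source in $!(G)$: the typegraph $\mathcal{G}_{T!}$ admits no edges into $!$-vertices (Definition~\ref{def:pg-typegraph}), and there can be no edge between two inputs, since its target would then fail to be an input, so the only edges among $!(G)\cup\In(U(G))$ run from a $!$-vertex to a $!$-vertex or to an input. Hence the edges of $\In_!(G)$ are exactly $\{e : s(e)\in\,!(G),\ t(e)\in\,!(G)\cup\In(U(G))\}$, symmetrically for $\Out_!(G)$ and $\Bound_!(G)$, while the edges of $\beta(G)$ are $\{e : s(e),t(e)\in\,!(G)\}$. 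With all edge sets now indexed by source-in-$!(G)$ together with a target set, the union of the target sets $!(G)\cup\In(U(G))$ and $!(G)\cup\Out(U(G))$ is $!(G)\cup\Bound(U(G))$, giving the edges of $\Bound_!(G)$, and their intersection is $!(G)\cup(\In(U(G))\cap\Out(U(G)))=\,!(G)$ by the same isolated-wire-vertex argument, giving the edges of $\beta(G)$. This establishes cover and intersection at the level of edges as well, completing the verification that the square is a pushout.
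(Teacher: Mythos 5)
Your proof is correct and takes essentially the same approach as the paper's, whose entire argument is that the no-isolated-wire-vertex hypothesis forces $\In(U(G)) \cap \Out(U(G)) = \varnothing$ and that pushouts (of monomorphisms) are unions in $\catGraph/\mathcal{G}_{T!}$, so that $\In_!(G) \cup \Out_!(G) = \Bound_!(G)$ with intersection $\beta(G)$. Your edge-level verification reconciling the full subgraphs $\In_!(G)$ and $\Out_!(G)$ with the non-full definition of $\Bound_!(G)$ is a careful (and correct) elaboration of what the paper compresses into ``follows from the definitions.''
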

\begin{proof}
    If $U(G)$ has no isolated wire-vertices, $\In(U(G))$ and
    $\Out(U(G))$ are disjoint.  The result then follows from the
    definitions and the fact that pushouts are unions in
    $\catGraph/\mathcal{G}_{T!}$.
\end{proof}

The definition of a $!$-graph equation we give here looks quite
different from the definition of a \textit{rewrite pattern} given in
\cite{Kissinger2012a} (pp 9).  We are using this formulation to better
demonstrate the similarities with string graph equations (definition
\ref{def:graph-eq}), and because it is easier to use when constructing
proofs.

\begin{definition}[$!$-graph equation]\label{def:bg-rw-rule}
    A \emph{$!$-graph equation} $L \rweq_{i_1,i_2} R$ is a span
    ${\cspan{L}{i_1}{I}{i_2}{R}}$ in $\catBGraph_T$ where
    \begin{enumerate}
        \item $U(L)$ and $U(R)$ contain no isolated wire-vertices;
        \item $\In_!(L) \cong \In_!(R)$ and $\Out_!(L) \cong \Out_!(R)$;
            \label{def-item:bg-rw-rule-io-iso}
        \item $\Bound_!(L) \cong I \cong \Bound_!(R)$; and
            \label{def-item:bg-rw-rule-bounds-iso}
        \item the following diagram commutes, where $j_1,j_2,k_1$ and
            $k_2$ are the inclusions from lemma
            \ref{lemma:pg-bounds-pushout} composed with the above
            isomorphisms:
            \begin{equation}
                \begin{tikzcd}[ampersand replacement=\&]
                    \& \In_!(L)
                        \arrow[left hook->]{dl}
                        \arrow{dr}{j_1}
                        \arrow{rr}{\cong}
                    \&\& \In_!(R)
                        \arrow[swap]{dl}{j_2}
                        \arrow[right hook->]{dr}
                    \&
                    \\
                    L \&\& I \arrow[swap]{ll}{i_1} \arrow{rr}{i_2} \&\& R
                    \\
                    \& \Out_!(L)
                        \arrow[left hook->]{ul}
                        \arrow[swap]{ur}{k_1}
                        \arrow[swap]{rr}{\cong}
                    \&\& \Out_!(R)
                        \arrow{ul}{k_2}
                        \arrow[right hook->]{ur}
                    \&
                \end{tikzcd}
                \label{eq:bg-rw-rule-diag}
            \end{equation}
            \label{def-item:bg-rw-rule-diag}
    \end{enumerate}

    A \textit{$!$-graph rewrite rule} is a directed $!$-graph equation.
\end{definition}

It can be seen by comparing the definitions that applying $U$ to a
$!$-graph equation will yield a string graph equation, and similarly for
rewrite rules.

Note that the isomorphisms in the above definition are forced to agree
with $i_1$ and $i_2$.  In particular, lemma
\ref{lemma:pg-bounds-pushout} can be used to show that the following
diagram commutes
\begin{equation}\label{eq:pg-rw-b-isos-span-agree}
\begin{tikzcd}
    \Bound_!(L)
        \arrow[right hook->]{d}
        \arrow{r}{\cong}
    & I
        \arrow{dl}{i_1}
        \arrow[swap]{dr}{i_2}
    & \Bound_!(R)
        \arrow[right hook->]{d}
        \arrow[swap]{l}{\cong}
    \\
    L && R
\end{tikzcd}
\end{equation}
and if we expand the definitions of $j_1$, $j_2$, $k_1$ and $k_2$ in
\eqref{eq:bg-rw-rule-diag}, we get
\begin{equation}\label{eq:pg-rw-isos-agree}
\begin{tikzcd}
    \In_!(L)
        \arrow{rr}{\cong}
        \arrow[right hook->]{d}
    && \In_!(R)
        \arrow[right hook->]{d}
    \\
    \Bound_!(L)
        \arrow{r}{\cong}
    & I
    & \Bound_!(R)
        \arrow[swap]{l}{\cong}
    \\
    \Out_!(L)
        \arrow{rr}{\cong}
        \arrow[right hook->]{u}
    && \Out_!(R)
        \arrow[right hook->]{u}
\end{tikzcd}
\end{equation}

We will now extend the $!$-box operations to $!$-graph equations.  The
intuitive rule is that the same operation must be performed on all three
graphs in the span, with the correspondence between $!$-boxes determined
by the morphisms of the span.  Of course, to get an equation, we need
some way of updating the morphisms of the span in addition to the graphs.

It turns out we can do this with any monomorphism of $!$-graphs that
reflects $!$-box containment:

\begin{lemma}\label{lem:bbox-m-rbc-ops-in-image}
    Let $f : G \rightarrow H$ be a $!$-graph monomorphism that reflects
    $!$-box containment, and let $b$ be a $!$-vertex in $G$.
    Then there are $!$-graph monomorphisms
    \begin{align*}
        \DROP_{f(b)}(f) &: \DROP_b(G) \rightarrow \DROP_{f(b)}(H) \\
        \KILL_{f(b)}(f) &: \KILL_b(G) \rightarrow \KILL_{f(b)}(H) \\
        \COPY_{f(b)}(f) &: \COPY_b(G) \rightarrow \COPY_{f(b)}(H)
    \end{align*}
    that reflect $!$-box containment and that commute with $f$ in the
    following ways:
    \begin{equation}\label{eq:bbox-rbc-ops-image-drop}
        \begin{tikzcd}[column sep=huge]
            \DROP_b(G) \rar{\DROP_{f(b)}(f)} \dar[right hook->]
            & \DROP_{f(b)}(H) \dar[right hook->]
            \\ G \arrow{r}{f} & H
        \end{tikzcd}
    \end{equation}
    \begin{equation}\label{eq:bbox-rbc-ops-image-kill}
        \begin{tikzcd}[column sep=huge]
            \KILL_b(G) \rar{\KILL_{f(b)}(f)} \dar[right hook->]
            & \KILL_{f(b)}(H) \dar[right hook->]
            \\ G \arrow{r}{f} & H
        \end{tikzcd}
    \end{equation}
    \begin{equation}\label{eq:bbox-rbc-ops-image-copy}
        \begin{tikzcd}[column sep=huge]
            \COPY_b(G) \rar{\COPY_{f(b)}(f)}
            & \COPY_{f(b)}(H)
            \\
            G \rar{f} \uar{p^G_i}
            & H \uar[swap]{p^H_i}
        \end{tikzcd}
    \end{equation}
    where $i \in \{1,2\}$ and $p^G_1$ and $p^G_2$ are the maps from the
    pushout defining $\COPY_b(G)$:
    \[
        \begin{tikzcd}
            G\graphminus B(b) \rar[right hook->] \dar[right hook->]
            & G \dar{p_1^G} \\
            G \rar[swap]{p_2^G} & \COPY_b(G) \NWbracket
        \end{tikzcd}
    \]
    and $p^H_1$ and $p^H_2$ are defined similarly.

    These maps are all the unique ones satisfying these diagrams, up to
    unique isomorphism.
\end{lemma}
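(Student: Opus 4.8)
The plan is to treat the three operations separately, exploiting that $\DROP$ and $\KILL$ are subgraph complements (so the required maps will be \emph{restrictions} of $f$) while $\COPY$ is a pushout (so the required map will come from a \emph{universal property}). The single fact driving everything is the biconditional
\[ v \in B(b) \iff f(v) \in B(f(b)) \]
for every vertex $v$ of $G$. The forward direction is immediate: since $f$ is a graph morphism and $B(b)$ is the full subgraph on $\mathrm{succ}(b)$, an edge $b \to v$ is carried to an edge $f(b) \to f(v)$. The reverse direction is precisely where ``reflects $!$-box containment'' is used: an edge $f(b) \to f(v)$ has $!$-vertex source and has target $f(v)$ in the image of $f$, so by Definition \ref{def:pg-bound-coher} it equals $f(e')$ for some edge $e'$ of $G$, and injectivity of $f$ (Lemma \ref{lemma:spg-mono-inj}) forces $e' : b \to v$. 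In particular $f$ carries $G \graphminus B(b)$ into $H \graphminus B(f(b))$, and, being injective, carries $G \graphminus b$ into $H \graphminus f(b)$.

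For $\DROP$ and $\KILL$ I would define $\DROP_{f(b)}(f)$ and $\KILL_{f(b)}(f)$ as the corestrictions of $f$ to these complements; the paragraph above shows they are well defined, and the squares \eqref{eq:bbox-rbc-ops-image-drop} and \eqref{eq:bbox-rbc-ops-image-kill} commute on the nose. Each is monic because a restriction of an injective map is injective. To see they reflect $!$-box containment, take an edge $e$ of the codomain with $!$-vertex source and target in the image; since the codomain is a full subgraph of $H$, $e$ is an edge of $H$, so $f$ reflecting containment gives $e = f(e')$, and the biconditional (or injectivity, for $\DROP$) places $e'$ in the appropriate complement, so $e$ lies in the image. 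Uniqueness is free: the right-hand vertical inclusions are monic, so any competing map, postcomposed with the inclusion, equals $\DROP_b(G) \hookrightarrow G \xrightarrow{f} H$, hence is forced.

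For $\COPY$ I would invoke the universal property of the pushout defining $\COPY_b(G)$. The two legs $p_1^H \circ f$ and $p_2^H \circ f$ from $G$ agree on the apex $G \graphminus B(b)$: by the above $f$ restricts to a map landing in $H \graphminus B(f(b))$, and the pushout square for $H$ makes $p_1^H$ and $p_2^H$ agree there. Since $\COPY_{f(b)}(H)$ is a $!$-graph (Theorem \ref{thm:instantiation}) and $\COPY_b(G)$ is the pushout in $\catBGraph_T$ (Proposition \ref{prop:pg-bound-coh-p-pushout}), the universal property yields a unique morphism $\COPY_{f(b)}(f)$ with $\COPY_{f(b)}(f) \circ p_i^G = p_i^H \circ f$, which is exactly \eqref{eq:bbox-rbc-ops-image-copy} and also supplies the uniqueness clause.

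The main obstacle is checking that this induced map is monic and reflects containment, which forces one to use the concrete structure of the pushout as two copies of the graph glued along the complement of the $!$-box. Writing each vertex of $\COPY_{f(b)}(H)$ as \emph{shared} (over $H \graphminus B(f(b))$, where the two legs coincide) or as a \emph{copy-$1$}/\emph{copy-$2$} vertex over $B(f(b))$, I would observe that $\COPY_{f(b)}(f)$ sends shared, copy-$1$ and copy-$2$ vertices into the respective classes --- here $f(B(b)) \subseteq B(f(b))$, from the forward biconditional, guarantees copies map to copies --- so that injectivity within each class follows from injectivity of $f$ and of the monic legs $p_i^H$, while distinct classes have disjoint images; the edge case is identical. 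For reflecting containment, every edge of $\COPY_{f(b)}(H)$ is $p_i^H(\bar e)$ for some edge $\bar e$ of $H$; if it has $!$-vertex source and target in the image, then $t(\bar e)$ lands in the image of $f$ (using injectivity of $p_i^H$ and that cross-copy identifications occur only over the shared part), so $f$ reflecting containment gives $\bar e = f(\bar e')$ and the edge is in the image. Finally, uniqueness up to unique isomorphism is inherited from the pushout universal property for $\COPY$ and from mono-cancellation for $\DROP$ and $\KILL$.
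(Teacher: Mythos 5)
Your proof is correct and takes essentially the same route as the paper's: $\DROP_{f(b)}(f)$ and $\KILL_{f(b)}(f)$ are obtained by (co)restricting $f$, with the reverse direction of your biconditional doing the work for $\KILL$, and $\COPY_{f(b)}(f)$ is induced by the pushout universal property from the cocone $p_1^H \circ f$, $p_2^H \circ f$, which agree on the apex $G \graphminus B(b)$ precisely because of the $\KILL$ square --- exactly the paper's construction. If anything, your explicit shared/copy-$1$/copy-$2$ case analysis for monicity of the induced map is more careful than the paper's brief assertion that it is monic because the surrounding arrows are.
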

\begin{proof}
    Let $G' = \DROP_b(G)$ and $H' =
    \DROP_{f(b)}(H)$.  Note that the image of $G'$ under $f$
    is contained in $H'$, since $f$ is injective (so no vertex of $G$
    other than $b$ maps to $f(b)$).  So there is a unique monomorphism
    $f' = \DROP_{f(b)}(f)$ satisfying
    \eqref{eq:bbox-rbc-ops-image-drop}.  $f'$ reflects $!$-box
    containment, since if $f'(v) \in B_H(c)$, $f(v) \in B_{H'}(c)$, and
    $c \neq b$.

    Now we deal with $\KILL_{f(b)}(f)$.  Let $G'
    = \KILL_b(G)$ and $H' = \KILL_{f(b)}(H)$.  We start by showing that
    the image of $G'$ under $f$ is contained in $H'$.  These are full
    subgraphs of $G$ and $H$, respectively, so we only need to consider
    vertices.  Let $v$ be a vertex in $G'$.  So $v$ is not in $B(b)$.
    Then, since $f$ reflects $!$-box containment, $f(v)$ is not in
    $B(f(b))$, and so is in $H'$.  So we can construct the monomorphism
    $f' = \KILL_{f(b)}(f)$ by simply restricting the domain and codomain
    of $f$.  Since the inclusion of $\KILL_{f(b)}(H)$ into $H$ is monic,
    this morphism must be the unique one satisfying
    \eqref{eq:bbox-rbc-ops-image-kill}.

    Now we show that $f'$ reflects $!$-box containment.  Let $v$ be a
    vertex of $G'$ and $c \in\:!(H')$, with $f'(v) \in B_{H'}(c)$.  Then
    $f(v) \in B_H(c)$, and so $c$ and the edge from $c$ to $f(v)$ are
    both in the image of $f$.  Further, $c$ cannot be in $B_H(f(b))$,
    since it is in $H'$.  So the preimage of $c'$ cannot be in $B_G(b)$,
    and hence must be in $G'$, along with the joining edge, and so $c$
    and the edge from $c$ to $v$ are in the image of $f'$.

    Finally, we consider $\COPY_{f(b)}(f)$.  Let $G' = \COPY_b(G)$ and
    $H' = \COPY_{f(b)}(H)$.  We demonstrate the existance of
    $\COPY_{f(b)}(f)$ by pushout.  If we let ${\iota_G : G \graphminus
    B(b) \rightarrow G}$ and ${\iota_H : H \graphminus B(f(b))
    \rightarrow H}$ be the natural inclusion maps,
    \eqref{eq:bbox-rbc-ops-image-kill} gives us that
    \[ p_1^H \circ f \circ \iota_G =
        p_1^H \circ \iota_H \circ \KILL_{f(b)}(f) =
        p_2^H \circ \iota_H \circ \KILL_{f(b)}(f) =
        p_1^H \circ f \circ \iota_G
    \]
    and there is then a unique $f'$ making the following diagram
    commute:
    \[
    \begin{tikzcd}
        G\graphminus B(b)
            \arrow[right hook->]{r}
            \arrow[left hook->]{d}
        & G \arrow{ddrr}{f} \arrow{d} &&
        \\
        G \arrow{r} \arrow[swap]{ddrr}{f}
        & G' \arrow[dashed]{dr}{f'} \NWbracket &&
        \\
        && H' \SEbracket
        & H \arrow{l}
        \\
        && H \arrow{u}
        & H\graphminus B(f(b))
            \arrow[right hook->]{l}
            \arrow[left hook->]{u}
    \end{tikzcd}
    \]
    Since all the other arrows in the diagram are monic, $f'$ must be as
    well.  For a given choice of $p_1^H$ and $p_2^H$ (relative to
    $p_1^G$ and $p_2^G$), it is the unique morphism satisfying
    \eqref{eq:bbox-rbc-ops-image-copy}.  If we do swap $p_1^H$ and
    $p_2^H$, we can note that the pushout is symmetric, and pushouts are
    unique up to a unique isomorphism, so $f'$ is unique up to a unique
    isomorphism.

    We must show that $f'$ reflects $!$-box containment.  Suppose $v$ is
    a vertex in $G'$ and $f(v) \in B(c)$, where $c$ is a $!$-vertex in
    $H'$, with $e$ the edge from $c$ to $f(v)$.  $e$ is in the image of
    one of the $p_i^H$; we call the preimage of $e$ under it $e'$, and
    its source (which maps to $c$) $c'$.  $f'(v)$ must also be in the
    image of $i_H$, and its preimage must be in the image of $f$.  What
    is more, the preimage of this under $f$ must map to $v$ by $p_i^G$,
    due to
    \eqref{eq:bbox-rbc-ops-image-copy}.
    \[ \input{matching-copy-in-image-fig2.tikz} \]
    We know that, since $f$ reflects $!$-box containment, $e'$ must be
    in the image of $f$.  But this means that $e$ must be in the image
    of $f'$, and hence $f'$ also reflects $!$-box containment.  So then
    we just let $\COPY_{f(b)}(f) = f'$.
\end{proof}

Given a $!$-graph equation $L \rweq_{i_1,i_2} R$, we know that $i_1$ and
$i_2$ are isomorphisms when their codomains are restricted to
$\Bound_!(L)$ and $\Bound_!(R)$, respectively.  This means that they
must reflect $!$-box containment, so we can now make the following
definitions.

\begin{definition}[\cite{Kissinger2012a}, pp 10]\label{def:bg-rw-rule-ops}
    Let $L \rweq R$ be a $!$-graph equation defined by the
    span $\cspan{L}{i_1}{I}{i_2}{R}$, and let $b \in\:!(I)$, with $b_1 =
    i_1(b)$ and $b_2 = i_2(b)$.  Then the $!$-box operations on
    $!$-graphs have the following equivalents on $!$-graph equations:
    \begin{itemize}
        \item $\COPY_b(L \rweq R)$ is defined to be
            \[ \cspan{\COPY_{b_1}(L)}{\COPY_{b_1}(i_1)}{\COPY_b(I)}{\COPY_{b_2}(i_2)}{\COPY_{b_2}(R)} \]
        \item $\DROP_b(L \rweq R)$ is the span
            \[ \cspan{\DROP_{b_1}(L)}{\DROP_{b_1}(i_1)}{\DROP_b(I)}{\DROP_{b_2}(i_2)}{\DROP_{b_2}(R)} \]
        \item $\KILL_b(L \rweq R)$ is the span
            \[ \cspan{\KILL_{b_1}(L)}{\KILL_{b_1}(i_1)}{\KILL_b(I)}{\KILL_{b_2}(i_2)}{\KILL_{b_2}(R)} \]
    \end{itemize}
    In the case of $!$-graph rewrite rules, the direction is preserved
    in the obvious way.
\end{definition}

Note that the equivalent definitions in \cite{Kissinger2012a} are the
same; the definition here appears simpler because we have made use of
lemma \ref{lem:bbox-m-rbc-ops-in-image}.

We must, of course, ensure that these operations produce $!$-graph
equations.  We start with some useful results about the operations of
lemma \ref{lem:bbox-m-rbc-ops-in-image}.

\begin{lemma}\label{lem:bbox-f-ops-composition}
    All the operations from lemma \ref{lem:bbox-m-rbc-ops-in-image}
    respect composition, at least up to unique isomorphism.  For
    example,
    \[ \DROP_{g(f(b))}(g \circ f) = \DROP_{g(f(b))}(g) \circ
        \DROP_{f(b)}(f) \]
\end{lemma}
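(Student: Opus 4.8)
The plan is to exploit the fact that each of the three operations in lemma \ref{lem:bbox-m-rbc-ops-in-image} is characterised \emph{uniquely} (up to unique isomorphism, in the $\COPY$ case) by a commuting diagram. Thus, to establish a composition law such as $\DROP_{g(f(b))}(g \circ f) = \DROP_{g(f(b))}(g) \circ \DROP_{f(b)}(f)$, it suffices to verify that the right-hand composite satisfies the defining diagram of the left-hand side and then appeal to that uniqueness. Throughout I take $f : G \rightarrow H$ and $g : H \rightarrow K$ to be $!$-graph monomorphisms reflecting $!$-box containment, and $b$ a $!$-vertex of $G$.

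First I would record the preliminary observation that $g \circ f$ is again a monomorphism reflecting $!$-box containment: monomorphisms compose, and an edge out of a $!$-vertex in $K$ whose target lies in the image of $g \circ f$ lies in the image of $g$ (since $g$ reflects containment and the target lies in $\im g$), whence its preimage in $H$ lies in $\im f$ (since $f$ reflects containment). This guarantees that the operations of lemma \ref{lem:bbox-m-rbc-ops-in-image} are defined for $g \circ f$ in the first place.

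For the $\DROP$ case — and the $\KILL$ case, which is formally identical via \eqref{eq:bbox-rbc-ops-image-kill} — write $\iota_G : \DROP_b(G) \hookrightarrow G$, $\iota_H : \DROP_{f(b)}(H) \hookrightarrow H$ and $\iota_K : \DROP_{g(f(b))}(K) \hookrightarrow K$ for the natural inclusions. The defining squares \eqref{eq:bbox-rbc-ops-image-drop} for $f$ and for $g$ give $\iota_H \circ \DROP_{f(b)}(f) = f \circ \iota_G$ and $\iota_K \circ \DROP_{g(f(b))}(g) = g \circ \iota_H$. Pasting these yields
\[ \iota_K \circ \DROP_{g(f(b))}(g) \circ \DROP_{f(b)}(f) = g \circ \iota_H \circ \DROP_{f(b)}(f) = (g \circ f) \circ \iota_G \]
which is exactly the commutativity required of $\DROP_{g(f(b))}(g \circ f)$ in its defining diagram. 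Since that map is the unique one making the square commute, the composite equals it on the nose; here the equality is strict, as $\DROP$ and $\KILL$ are defined by restriction to a fixed subgraph.

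The $\COPY$ case runs the same way, but using the characterisation \eqref{eq:bbox-rbc-ops-image-copy} in terms of the pushout legs: $\COPY_{f(b)}(f) \circ p^G_i = p^H_i \circ f$ and $\COPY_{g(f(b))}(g) \circ p^H_i = p^K_i \circ g$ for $i \in \{1,2\}$. Composing gives, for each $i$,
\[ \bigl(\COPY_{g(f(b))}(g) \circ \COPY_{f(b)}(f)\bigr) \circ p^G_i = \COPY_{g(f(b))}(g) \circ p^H_i \circ f = p^K_i \circ (g \circ f) \]
so the composite satisfies the defining property of $\COPY_{g(f(b))}(g \circ f)$. I expect the one genuine subtlety — and the reason the lemma is only stated up to unique isomorphism — to live here: $\COPY$ is built from a pushout, which is symmetric in its two legs, so each of $\COPY_b(G)$, $\COPY_{f(b)}(H)$, $\COPY_{g(f(b))}(K)$ is pinned down only up to the canonical isomorphism swapping $p_1$ and $p_2$. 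I would therefore fix a consistent choice of legs $p^G_i, p^H_i, p^K_i$ across the three graphs; with that choice the displayed equations determine the composite uniquely, and any alternative choice differs only by the canonical pushout isomorphism, delivering the result up to unique isomorphism. The main obstacle is thus purely this bookkeeping of pushout legs in the $\COPY$ case, not any real difficulty — the $\DROP$ and $\KILL$ laws are strict equalities following immediately from pasting defining squares.
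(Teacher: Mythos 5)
Your proposal is correct and follows essentially the same route as the paper: paste the defining commuting squares for $\DROP$/$\KILL$ and invoke uniqueness of the induced map (the inclusion being monic), then handle $\COPY$ via the pushout-leg characterisation with a consistent choice of legs. Your $\COPY$ case is in fact spelled out a bit more explicitly than the paper's, which merely remarks that ``a similar argument can be used'' with suitable choices, but the underlying argument is identical.
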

\begin{proof}
    Consider the case of $\DROP$.  Suppose we have $!$-box-reflecting
    monomorphisms
    \begin{align*}
        f &: G_1 \rightarrow G_2 \\
        g &: G_2 \rightarrow G_3
    \end{align*}
    and suppose $b \in G_1$. Let $f' = \DROP_{f(b)}(f)$, $G'_1 =
    \DROP_b(G_1)$ and so on.  Then we just consider the diagram
    \begin{equation*}
        \begin{tikzcd}[column sep=huge]
            G'_1 \arrow{r}{f'} \arrow[right hook->]{d}
            & G'_2 \arrow{r}{g'} \arrow[right hook->]{d}
            & G'_3 \arrow[right hook->]{d}
            \\ G_1 \arrow{r}{f} & G_2 \arrow{r}{g} & G_3
        \end{tikzcd}
    \end{equation*}
    But $\DROP_{g(f(b))}(g \circ f)$ is the unique map from $G'_1$ to
    $G'_3$ that makes the outside edges of that diagram commute, and so
    must be the same as $g' \circ f'$.  The same argument holds for
    $\KILL_{g(f(b))}(g \circ f)$.

    A similar argument can be used for $\COPY$, but care must be taken,
    as the result of applying $\COPY$ to a map is only unique up to
    isomorphism.  It is possible, however, to make the relevant choices
    such that $\COPY_{g(f(b))}(g \circ f) = \COPY_{g(f(b))}(g) \circ
    \COPY_{f(b)}(f)$, which is sufficient to get the result we want.
\end{proof}

\begin{lemma}\label{lem:bbox-f-ops-iso}
    All the operations from lemma \ref{lem:bbox-m-rbc-ops-in-image}
    preserve isomorphisms.
\end{lemma}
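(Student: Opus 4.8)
The plan is to deduce this purely formally from the functoriality already recorded in lemma \ref{lem:bbox-f-ops-composition}, avoiding any direct manipulation of the underlying graphs. The first thing I would establish is that the hypotheses of lemma \ref{lem:bbox-m-rbc-ops-in-image} are actually met by an isomorphism: an isomorphism $f : G \to H$ is certainly a monomorphism, and since it is surjective, every edge of $H$ lies in its image, so the condition of definition \ref{def:pg-bound-coher} that $f$ reflect $!$-box containment holds vacuously. Thus $\DROP_{f(b)}(f)$, $\KILL_{f(b)}(f)$ and $\COPY_{f(b)}(f)$ are all defined.

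Next I would verify that each operation carries an identity morphism to an identity (up to the unique isomorphism inherent in the $\COPY$ construction). For $\DROP$ and $\KILL$ this is immediate from the uniqueness clause in lemma \ref{lem:bbox-m-rbc-ops-in-image}: the identity on $\DROP_b(G)$ makes the square \eqref{eq:bbox-rbc-ops-image-drop} commute over $1_G$, and by uniqueness it therefore coincides with $\DROP_b(1_G)$, and likewise for $\KILL$ via \eqref{eq:bbox-rbc-ops-image-kill}. For $\COPY$ the same reasoning applies using the pushout diagram \eqref{eq:bbox-rbc-ops-image-copy}, the only subtlety being that the answer is pinned down only up to the choice of pushout injections $p_1, p_2$, which contributes at worst a unique isomorphism.

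With those facts in place I would apply lemma \ref{lem:bbox-f-ops-composition}. Writing $\mathrm{OP}$ for any one of the three operations, and letting $f^{-1} : H \to G$ be the inverse of $f$ (so $f^{-1}(f(b)) = b$), the composition property gives
\[
    \mathrm{OP}_b(f^{-1}) \circ \mathrm{OP}_{f(b)}(f)
    = \mathrm{OP}_b(f^{-1} \circ f)
    = \mathrm{OP}_b(1_G)
    = 1_{\mathrm{OP}_b(G)},
\]
and symmetrically $\mathrm{OP}_{f(b)}(f) \circ \mathrm{OP}_b(f^{-1}) = 1_{\mathrm{OP}_{f(b)}(H)}$. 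Hence $\mathrm{OP}_{f(b)}(f)$ has $\mathrm{OP}_b(f^{-1})$ as a two-sided inverse and is an isomorphism.

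The only place I expect to have to tread carefully is the $\COPY$ case, because there every object and map is determined only up to isomorphism. However, lemma \ref{lem:bbox-f-ops-composition} was stated with exactly this caveat in mind, and it guarantees that the injections can be chosen coherently so that the composition identities hold on the nose; all of the bookkeeping is therefore absorbed into that earlier result, and no genuinely new obstacle arises here.
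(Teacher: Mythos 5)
Your proposal is correct and follows essentially the same route as the paper: the paper's proof likewise combines lemma \ref{lem:bbox-f-ops-composition} with the observation that $\DROP_b(1_G)$ and $\KILL_b(1_G)$ are identities and $\COPY_b(1_G)$ is an isomorphism of $\COPY_b(G)$. Your version merely spells out the details the paper leaves implicit (that isomorphisms satisfy the hypotheses, and the two-sided-inverse argument), including the correct handling of the up-to-isomorphism caveat for $\COPY$.
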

\begin{proof}
    This follows from lemma \ref{lem:bbox-f-ops-composition} and the
    fact that $\DROP_b(1_G)$ and $\KILL_b(1_G)$ are the identity, and
    $\COPY_b(1_G)$ is an isomorphism of $\COPY_b(G)$.
\end{proof}

\begin{lemma}\label{lem:bbox-ops-isolated-verts}
    Let $G$ be a graph with no isolated vertices.  Then $\COPY_b(G)$,
    $\KILL_b(G)$ and $\DROP_b(G)$ do not have any isolated vertices.
\end{lemma}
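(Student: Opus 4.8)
The plan is to prove the statement one operation at a time, working through the forgetful functor $U$: since $\COPY_b$, $\KILL_b$ and $\DROP_b$ only ever delete or duplicate $!$-vertices together with the subgraphs they label, isolation is detected by incident string-graph edges in the underlying string graph $U(G)$, and it is precisely the absence of isolated wire-vertices in $U$ that the subsequent proposition on $!$-graph equations will need to preserve. The $\DROP_b$ case is then immediate: because $b$ is a $!$-vertex, every edge incident to $b$ is a $!$-edge, so $U(\DROP_b(G)) = U(G \graphminus b) = U(G)$ and there is nothing to prove.

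For $\KILL_b$ I would lean on the openness of the $!$-box. Here $U(\KILL_b(G)) = U(G) \graphminus U(B(b))$, and condition (3) of definition \ref{def:bang-graph} gives that $U(B(b))$ is open in $U(G)$, hence (by the definition of open subgraph, or proposition \ref{prop:sg-open-iff-bc}) not adjacent to any wire-vertex lying outside it. The decisive step is to turn this into an edge-deletion statement: for any wire-vertex $w$ of $U(G)$ not in $U(B(b))$, no edge incident to $w$ can have its other endpoint in $U(B(b))$, since such an edge would witness the forbidden adjacency. Consequently removing $U(B(b))$ deletes none of the edges incident to $w$, so $w$ retains whichever incident edge it had and remains non-isolated; the wire-vertices that disappear are exactly those in $U(B(b))$.

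For $\COPY_b$ I would argue through the defining pushout \eqref{eq:copy-pushout}, whose coprojections $p_1, p_2 : G \to \COPY_b(G)$ are type-preserving graph morphisms that jointly cover $\COPY_b(G)$. Given a wire-vertex $w'$ of $U(\COPY_b(G))$, choose $i \in \{1,2\}$ and a vertex $w$ of $G$ with $p_i(w) = w'$; since $p_i$ preserves types, $w$ is a wire-vertex of $U(G)$, and as it is non-isolated it carries an incident string-graph edge $e$. Then $p_i(e)$ is a string-graph edge incident to $w' = p_i(w)$, so $w'$ is non-isolated as well.

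The main obstacle is really just the $\KILL_b$ case, where everything rests on converting ``$U(B(b))$ is open'' into ``deleting it severs no edge of a surviving wire-vertex''; once that single observation is pinned down, the $\DROP_b$ case is trivial (because $U$ ignores $!$-vertices) and the $\COPY_b$ case reduces to the elementary facts that the pushout coprojections preserve incidence and cover the result, so no genuine calculation remains.
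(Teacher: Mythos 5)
Your proposal is correct and follows essentially the same route as the paper's proof: $\DROP_b$ leaves $U(G)$ unchanged, $\KILL_b$ uses openness of $U(B(b))$ to conclude that surviving wire-vertices keep all their incident edges, and $\COPY_b$ uses the fact that the pushout coprojections jointly cover the result and carry an incident edge along with each wire-vertex. Like the paper, you (rightly) read ``isolated vertices'' as isolated wire-vertices in $U(-)$, which is exactly the condition needed for theorem \ref{thm:rw-patt-preserved}.
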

\begin{proof}
    If $w$ were an isolated wire-vertex in $U(\COPY_b(G))$, it must be
    the image of a map from $U(G)$, but $U(G)$ does not contain any
    isolated wire-vertices, and so the map must carry an incident edge
    with it.

    $U(\KILL_b(G))$ is a full subgraph of $U(G)$ and $U(B(b))$ is not
    adjacent to any wire-vertices, so any wire-vertex in $U(G')$ must
    have the same incident edges as the corresponding wire-vertex in
    $U(G)$.

    $U(\DROP_b(G))$ is isomorphic to $U(G)$, and so has no isolated
    wire-vertices.
\end{proof}

\begin{theorem}
    \label{thm:rw-patt-preserved}
    Let $L \rweq R$ be a $!$-graph equation. Then applying any of the
    !-box operations of definition \ref{def:bg-rw-rule-ops} yields
    another $!$-graph equation.
\end{theorem}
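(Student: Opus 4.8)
The plan is to verify, for each of the three operations, that the resulting span satisfies the four clauses of Definition~\ref{def:bg-rw-rule}. Write $b_1 = i_1(b)$ and $b_2 = i_2(b)$; the three resulting spans are spelled out in Definition~\ref{def:bg-rw-rule-ops}, and I would treat $\COPY$ in detail and note the (easier) modifications for $\DROP$ and $\KILL$. Since $i_1$ and $i_2$ restrict to isomorphisms onto $\Bound_!(L)$ and $\Bound_!(R)$, they reflect $!$-box containment, so Lemma~\ref{lem:bbox-m-rbc-ops-in-image} applies and tells us that $\COPY_{b_1}(i_1)$ and $\COPY_{b_2}(i_2)$ are again $!$-graph monomorphisms that reflect $!$-box containment. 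Combined with Theorem~\ref{thm:instantiation}, which gives that $\COPY_{b_1}(L)$, $\COPY_b(I)$ and $\COPY_{b_2}(R)$ are $!$-graphs, this already shows that $\COPY_{b_1}(L) \leftarrow \COPY_b(I) \rightarrow \COPY_{b_2}(R)$ is a span of monomorphisms in $\catBGraph_T$, so it remains only to check the four numbered clauses.

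For clause~1, Lemma~\ref{lem:bbox-ops-isolated-verts} shows directly that none of the operations introduces isolated wire-vertices, so $U(\COPY_{b_1}(L))$ and $U(\COPY_{b_2}(R))$ remain free of them. For clauses~2 and~3 I would combine Theorem~\ref{thm:pg-ops-commute-bounds} — which supplies $\In_!(\COPY_{b_1}(L)) \cong \COPY_{b_1}(\In_!(L))$, and likewise for $\Out_!$, for $\Bound_!$, and for $R$ — with Lemma~\ref{lem:bbox-f-ops-iso}, that the operations preserve isomorphisms. Diagram~\eqref{eq:pg-rw-isos-agree} together with $i_1(b)=b_1$ and $i_2(b)=b_2$ forces the original isomorphisms $\In_!(L)\cong\In_!(R)$ and $I\cong\Bound_!(L)$ to send $b_1\mapsto b_2$ and $b\mapsto b_1$ respectively, so applying the operation to each of them yields the required isomorphisms $\In_!(\COPY_{b_1}(L)) \cong \In_!(\COPY_{b_2}(R))$ and $\Bound_!(\COPY_{b_1}(L)) \cong \COPY_b(I) \cong \Bound_!(\COPY_{b_2}(R))$, and symmetrically for $\Out_!$.

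The main obstacle is clause~4, the commuting coherence diagram~\eqref{eq:bg-rw-rule-diag} for the new span. The strategy is to exhibit this diagram as the image, under the operation applied functorially, of the original coherence diagram. The delicate point is that the new inclusions $j_1',k_1'$ — built from Lemma~\ref{lemma:pg-bounds-pushout} applied to $\COPY_{b_1}(L)$ and then composed with the isomorphisms of the previous paragraph — must be identified with the operation-images of the original $j_1,k_1$; this compatibility is precisely what the subgraph identification $\iota : \COPY_b(\Bound_!(G)) \hookrightarrow \COPY_b(G)$ in the proof of Theorem~\ref{thm:pg-ops-commute-bounds} supplies. Granting that identification, commutativity of the squares and triangles of the new diagram follows from the commuting squares of Lemma~\ref{lem:bbox-m-rbc-ops-in-image} (which relate $\COPY_{b_1}(i_1)$ to $i_1$, and so on) together with Lemma~\ref{lem:bbox-f-ops-composition}, that the operations respect composition.

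One point requires care: $\COPY$ on morphisms, unlike $\DROP$ and $\KILL$, is only defined up to a unique isomorphism of the defining pushouts. Since $!$-graph equations are in any case taken up to isomorphism, and Lemma~\ref{lem:bbox-f-ops-composition} lets the pushout data be chosen coherently, this causes no real difficulty; it only means the identifications in clause~4 should be carried out for one fixed consistent choice of the projection maps $p_1,p_2$. The $\DROP$ and $\KILL$ cases are strictly simpler, since those operations act as honest restrictions of graphs and morphisms, so their analogues of the diagrams above commute on the nose.
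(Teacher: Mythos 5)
Your proposal is correct and follows essentially the same route as the paper's own proof: verify the four clauses of Definition~\ref{def:bg-rw-rule} for $\COPY_b$ using Lemmas~\ref{lem:bbox-m-rbc-ops-in-image}, \ref{lem:bbox-ops-isolated-verts}, \ref{lem:bbox-f-ops-iso} and \ref{lem:bbox-f-ops-composition} together with Theorem~\ref{thm:pg-ops-commute-bounds}, with the coherence diagram \eqref{eq:bg-rw-rule-diag} obtained as the operation-image of the original one via the inclusion $\iota$ from the proof of Theorem~\ref{thm:pg-ops-commute-bounds}. You even flag the same up-to-isomorphism subtlety for $\COPY$ on morphisms that the paper handles in Lemma~\ref{lem:bbox-f-ops-composition}, so nothing is missing.
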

\begin{proof}
    Let $\cspan{L}{i_1}{I}{i_2}{R}$ be a $!$-graph equation,
    and $\cspan{L'}{i_1'}{I'}{i_2'}{R'}$ be the result of applying one
    of the $!$-box operations to it.  We will demonstrate the proof for
    $\COPY_b$, but all of the arguments apply to $\KILL_b$ and $\DROP_b$
    as well.

    It is clear from lemma \ref{lem:bbox-m-rbc-ops-in-image} that this
    is a span of $!$-graphs.

    $U(L')$ and $U(R')$ do not contain any isolated wire-vertices, by
    lemma \ref{lem:bbox-ops-isolated-verts}.

    Since $\cspan{L}{i_1}{I}{i_2}{R}$ is a $!$-graph equation, we
    have isomorphisms
    \begin{align*}
        \phi_I &: \In_!(L) \rightarrow \In_!(R) \\
        \phi_O &: \Out_!(L) \rightarrow \Out_!(R) \\
        \phi_L &: \Bound_!(L) \rightarrow I \\
        \phi_R &: \Bound_!(R) \rightarrow I
    \end{align*}
    These all preserve $!$-box containment, and hence we have
    isomorphisms (by lemma \ref{lem:bbox-f-ops-iso} and theorem
    \ref{thm:pg-ops-commute-bounds})
    \begin{align*}
        \phi'_I &: \In_!(L') \rightarrow \In_!(R') \\
        \phi'_O &: \Out_!(L') \rightarrow \Out_!(R') \\
        \phi'_L &: \Bound_!(L') \rightarrow I' \\
        \phi'_R &: \Bound_!(R') \rightarrow I'
    \end{align*}

    Finally, we need to show that the following diagram commutes:
    \begin{equation}\label{eq:rw-rule-diag-copy}
    \begin{tikzcd}[ampersand replacement=\&]
        \& \In_!(L')
            \arrow[left hook->]{dl}
            \arrow{dr}{j'_1}
            \arrow{rr}{\phi'_I}
        \&\& \In_!(R')
            \arrow[swap]{dl}{k'_1}
            \arrow[right hook->]{dr}
        \&
        \\
        L' \&\& I' \arrow[swap]{ll}{i'_1} \arrow{rr}{i'_2} \&\& R'
        \\
        \& \Out_!(L')
            \arrow[left hook->]{ul}
            \arrow[swap]{ur}{j'_2}
            \arrow[swap]{rr}{\phi'_R}
        \&\& \Out_!(R')
            \arrow{ul}{k'_2}
            \arrow[right hook->]{ur}
        \&
    \end{tikzcd}
    \end{equation}
    where $j'_1$ is $\phi'_L$ composed with the inclusion of $\In_!(L')$
    into $\Bound_!(R')$, and similarly for $j'_2$, $k'_1$ and $k'_2$.

    The proofs for the four triangles on the left and right are all
    similar, so we will demonstrate the top left triangle.  Consider the
    following diagram:
    \begin{equation}\label{eq:copy-rule-diag-j1}
        \begin{tikzcd}[ampersand replacement=\&]
            \& \In_!(\COPY_b(L))
                \arrow[swap,bend right]{ddl}{j_1'}
                \arrow[bend left,right hook->]{ddr}
                \arrow{d}{\cong}
            \& \\
            \& \COPY_b(\In_!(L))
                \arrow[sloped,pos=0.8]{dl}{\COPY_b(j_1)}
                \arrow[sloped,pos=0.8,swap]{dr}{\COPY_b(\iota_L)}
            \& \\
            \COPY_b(I) \arrow[swap]{rr}{\COPY_b(i_1)} \&\&
            \COPY_b(L)
        \end{tikzcd}
    \end{equation}
    where $\iota_L : \In_!(L) \rightarrow L$ is the normal inclusion
    map.  The bottom triangle commutes by lemma
    \ref{lem:bbox-f-ops-composition}, since $\iota_L = i_1 \circ j_1$ by
    our assumption that $\cspan{L}{i_1}{I}{i_2}{R}$ is a !-graph
    equation.  That the upper right triangle commutes can be seen from the
    proof of theorem \ref{thm:pg-ops-commute-bounds}.  For the upper
    left triangle, consider the following:
    \begin{equation*}
        \begin{tikzcd}[column sep=huge]
              \In_!(\COPY_b(L))
                \arrow{r}{\cong}
                \arrow[right hook->]{d}
            & \COPY_b(\In_!(L))
                \arrow[swap]{d}{\COPY_b(\iota_L)}
                \arrow[sloped,pos=0.2]{dr}{\COPY_b(j_1)}
            & \\
              \Bound_!(\COPY_b(L))
                \arrow[swap]{r}{\cong}
                \arrow[bend right]{rr}{\phi'_L}
            & \COPY_b(\Bound_!(L))
                \arrow[swap]{r}{\COPY_b(\phi_L)}
            & \COPY_b(I)
        \end{tikzcd}
    \end{equation*}
    where we have abused $\iota_L$ to refer to the inclusion of
    $\In_!(L)$ into $\Bound_!(L)$.  Now we can again use the proof of
    theorem \ref{thm:pg-ops-commute-bounds} to see that the square on
    the left commutes, the triangle on the right commutes by lemma
    \ref{lem:bbox-f-ops-composition} and the definition of $j_1$, and
    the bottom triangle is simply the definition of $\phi'_L$.  But now
    $j'_1$ is $\phi'_L$ composed with the inclusion on the left of the
    diagram, by definition, and so we have that the top left triangle of
    \eqref{eq:copy-rule-diag-j1} commutes, and hence the upper left
    triangle of \eqref{eq:rw-rule-diag-copy} commutes.

    We now look at the upper middle triangle of
    \eqref{eq:rw-rule-diag-copy}; the lower middle triangle follows
    similarly.  Consider the following diagram:
    \begin{equation*}
        \begin{tikzcd}
            \COPY_b(\In_!(L))
            \arrow{rr}{\COPY_b(\phi_I)}
            \arrow[in=180,out=-135,looseness=1.75,swap]{ddr}{\COPY_b(j_1)}
            &&
            \COPY_b(\In_!(R))
            \arrow[in=0,out=-45,looseness=1.75]{ddl}{\COPY_b(k_1)}
            \\
            \In_!(L')
            \arrow[swap]{u}{\cong}
            \arrow{rr}{\phi'_I}
            \arrow{dr}{j'_1}
            &&
            \In_!(R')
            \arrow[swap]{dl}{k'_1}
            \arrow{u}{\cong}
            \\
            & I'
            &
        \end{tikzcd}
    \end{equation*}
    We have already demonstrated that the triangle on the left commutes,
    and the triangle on the right follows in a similar manner.  The top
    square is just the definition of $\phi'_I$, and the outside edges of
    the diagram commute by lemma \ref{lem:bbox-f-ops-composition} and
    the assumption that $\cspan{L}{i_1}{I}{i_2}{R}$ is a !-graph
    equation.  It follows that the triangle at the bottom commutes, and
    this is the top middle triangle of \eqref{eq:rw-rule-diag-copy}.

    Hence we have shown that $\cspan{L'}{i'_1}{I'}{i'_2}{R'}$ is,
    indeed, a $!$-graph equation.
\end{proof}
\begin{corollary}[\cite{Kissinger2012a}, pp 11]
    Let $L \rewritesto R$ be a $!$-graph rewrite rule. Then applying any
    of the !-box operations of definition \ref{def:bg-rw-rule-ops}
    yields another $!$-graph rewrite rule.
\end{corollary}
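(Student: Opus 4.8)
The plan is to observe that this corollary is an essentially immediate consequence of Theorem \ref{thm:rw-patt-preserved}, since a $!$-graph rewrite rule is \emph{defined} (Definition \ref{def:bg-rw-rule}) to be nothing more than a directed $!$-graph equation. The underlying data of a rule $L \rewritesto R$ is precisely the span $\cspan{L}{i_1}{I}{i_2}{R}$ of the associated equation $L \rweq_{i_1,i_2} R$, together with a choice of orientation from $L$ to $R$. So the bulk of the work, namely verifying that the four conditions of Definition \ref{def:bg-rw-rule} survive each $!$-box operation, has already been carried out in the theorem, and I only need to account for the extra datum of orientation.

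First I would recall that, by Definition \ref{def:bg-rw-rule-ops}, each $!$-box operation applied to an equation acts by applying the corresponding operation from Lemma \ref{lem:bbox-m-rbc-ops-in-image} to all three graphs $L$, $I$, $R$ of the span simultaneously, using the $!$-vertices $b_1 = i_1(b)$ and $b_2 = i_2(b)$ picked out by the span morphisms. Theorem \ref{thm:rw-patt-preserved} then guarantees that the resulting span $\cspan{L'}{i'_1}{I'}{i'_2}{R'}$ is again a valid $!$-graph equation. It only remains to check that the orientation is preserved, but this is immediate from the construction: the operation treats the two sides asymmetrically only in that $L'$ is obtained from $L$ (via $b_1$) and $R'$ from $R$ (via $b_2$), so the distinguished ``left'' and ``right'' sides are carried to the left and right of the result. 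Hence the directed span $L' \rewritesto R'$ is a well-defined directed $!$-graph equation, i.e. a $!$-graph rewrite rule.

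Because the theorem establishes all the substantive equation conditions, I do not expect any genuine obstacle here; the one point worth stating explicitly is that direction is preserved ``in the obvious way'', as asserted in Definition \ref{def:bg-rw-rule-ops}. If desired, the argument can be rounded off by noting that applying $U$ commutes with the $!$-box operations on each graph in the span, so that $U$ of the result is exactly the directed string graph equation obtained by applying the corresponding operation to $U(L \rewritesto R)$, confirming that the orientation behaves compatibly at the level of string graphs as well.
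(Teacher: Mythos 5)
Your proposal is correct and matches the paper exactly: the paper gives no separate proof for this corollary, treating it as an immediate consequence of Theorem \ref{thm:rw-patt-preserved} together with the clause in Definition \ref{def:bg-rw-rule-ops} that direction is preserved ``in the obvious way''. Your added observation that $U$ commutes with the $!$-box operations is harmless but not needed.
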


The concept of instantiation from definition \ref{def:instantiation}
then extends to $!$-graph equations and rewrite rules in the obvious
way.  We will consider a $!$-graph equation to hold if and only if all
its concrete instantiations hold.

\section{Rewriting String Graphs with $!$-Graph Equations} 
\label{sec:rw-sgs-with-pgs}

For a $!$-graph equation $L \rweq R$ and a string graph $G$,
we want to use the directed form of the equation, $L \rewritesto R$, to
rewrite $G$, and the resulting graph should be a valid rewrite of $G$ in
the (usually infinite) string graph rewrite system comprised of the
concrete instantiations of $L \rewritesto R$.

Because applying a $!$-box operation to a $!$-graph rewrite rule also
applies it to the components of that rule, and the LHS in particular, we
can use the notion of matching from section \ref{sec:matching-with-bgs}
to find a concrete instance of the rewrite rule that can be applied to a
given string graph $G$.

Suppose we have a $!$-graph rewrite rule $L \rewritesto R$ and a string
graph $G$.  If $m$ is a matching of $L$ onto $G$ under $S$, we can apply
$S$ to $L \rewritesto R$ to get a string graph rewrite rule $L'
\rewritesto R'$.  By considering definition \ref{def:bg-rw-rule-ops}, we
can see that $L'$ is the string graph obtained by instantiating $L$ with
$S$.  So we have a matching from $L'$ to $G$, and can therefore rewrite
$G$ with $L' \rewritesto R'$.

Providing finding the instantiation and matching is decidable (which we
demonstrate in chapter \ref{ch:implementation}), rewriting $G$ with $L
\rewritesto R$ is decidable, and our requirement that this is a valid
rewrite under the rewrite system of instantiations of $L \rewritesto R$
is satisfied.

So we have acheived our aim of expressing certain infinite families of
string graph rewrite rules in a finite (and useful) way.  In the next
section, we will consider how $!$-graph rewrite rules can actually be
used to rewrite $!$-graphs, and thereby derive new $!$-graph rewrite
rules.

\section{Rewriting $!$-Graphs With $!$-Graph Equations}
\label{sec:rwpg-matching}

Let $L \xrightarrow{m} G$ be a $!$-graph monomorphism.  Then we have
the following pushout complement
\begin{equation} \label{eq:pushout-complement}
\begin{tikzcd}
    \Bound_!(L) \rar[right hook->] \dar[dashed,swap]{d}
    & L \dar{m}
    \\ D \rar[swap,dashed]{g}
    & G \NWbracket
\end{tikzcd}
\end{equation}
in $\catGraph/\mathcal{G}_{T!}$ if and only if $m$ satisfies the
no-dangling-edges condition. This condition is that for every vertex $v$
in $L \graphminus \Bound_!(L)$, and every edge $e$ in $G$ incident to
$m(v)$, $e$ is in the image of $m$.  This is stronger than $U(m)$ being
a local isomorphism, as it also requires that if a $!$-vertex $b$ and a
vertex $v$ in $L \graphminus \Bound_!(L)$ are both in the image of $m$,
any edge between them must also be in the image of $m$.

\begin{proposition}
    If the pushout complement \eqref{eq:pushout-complement} exists,
    $D$ is a $!$-graph.
\end{proposition}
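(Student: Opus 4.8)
The plan is to read the pushout complement concretely and reduce everything to intersections inside $G$. Since monomorphisms in $\catBGraph_T$ are exactly injections (lemma~\ref{lemma:spg-mono-inj}), I would treat $m$ and the boundary inclusion as subgraph inclusions, so that $\Bound_!(L) \subseteq L \subseteq G$ and $D \subseteq G$ with $L \cap D = \Bound_!(L)$ and $L \cup D = G$; concretely $D$ is $G$ with every vertex and edge of $m[L]$ not lying in $m[\Bound_!(L)]$ deleted, and the no-dangling-edges condition is precisely what makes this a genuine subgraph. The one observation that drives the whole argument is that $\Bound_!(L)$ already contains all of $!(L)$ (definition~\ref{def:bang-graph-bounds}), so the deleted part $L \graphminus \Bound_!(L)$ contains \emph{no} $!$-vertices. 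Hence $!(D) = !(G)$. Moreover every edge of $G$ whose source is a $!$-vertex survives in $D$: if it lies in $m[L]$ then its source lies in $m[!(L)] \subseteq m[\Bound_!(L)]$ and its target is then either a $!$-vertex or a boundary wire-vertex, so the edge itself lies in $\Bound_!(L)$ and is kept; otherwise it is untouched by the deletion. Consequently $\beta(D) = \beta(G)$ and, since $B(b)$ is a full subgraph on $\textrm{succ}(b)$, we get $\textrm{succ}_D(b) = \textrm{succ}_G(b) \cap V_D$ and hence $B_D(b) = B_G(b) \cap D$ for every $b \in\, !(D)$.

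With this in hand I would verify the four conditions of definition~\ref{def:bang-graph}. Condition~1 asks that $U(D)$ be a string graph. Since $U$ preserves pushouts of monomorphisms (proposition~\ref{prop:slice-cat-func-pb}), $U(D)$ is the graph pushout complement of $\Bound(U(L)) \hookrightarrow U(L) \to U(G)$ in $\catGraph/\mathcal{G}_T$; being a subgraph of the string graph $U(G)$, each of its wire-vertices automatically keeps at most one incoming and one outgoing edge. The remaining point is arity-matching, and here the key fact is that $\Bound_!(L)$ contains no node-vertices, so every node-vertex of $D$ lies outside $m[L]$; none of its incident fixed-arity edges can then have been deleted (if such an edge lay in $m[L]$, so would the node), and since $D$ is a valid subgraph both endpoints survive. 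Thus arity-matching is inherited from $U(G)$. This is the step where the no-dangling-edges hypothesis does the real work, and it is the main obstacle of the proof.

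Conditions 2--4 are then bookkeeping driven by the identities above. Condition~2 holds because $\beta(D) = \beta(G)$, which is posetal as $G$ is a $!$-graph. For condition~3, $U(B_G(b))$ is open in $U(G)$, and proposition~\ref{prop:open-subgraph-props} (the $H \cap O$ clause, with $H = U(D)$ and $O = U(B_G(b))$) gives that $U(B_D(b)) = U(D) \cap U(B_G(b))$ is open in $U(D)$. For condition~4, if $b' \in B_D(b)$ then $b' \in B_G(b)$, so $B_G(b') \subseteq B_G(b)$ because $G$ is a $!$-graph, and intersecting both sides with $D$ yields $B_D(b') \subseteq B_D(b)$. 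Hence all four conditions hold and $D$ is a $!$-graph, as required.
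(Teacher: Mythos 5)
Your overall architecture is the paper's: read the pushout complement as graph subtraction, verify the four conditions of definition~\ref{def:bang-graph}, use proposition~\ref{prop:slice-cat-func-pb} to transport the square along $U$, reduce openness to $B_D(b) = B_G(b) \cap D$ plus proposition~\ref{prop:open-subgraph-props}, and handle nesting by an edge-containment argument. However, there is one genuinely false step: the claim that \emph{every} edge of $G$ whose source is a $!$-vertex survives in $D$. The justification you give --- that if such an edge lies in $m[L]$ then ``its target is then either a $!$-vertex or a boundary wire-vertex'' --- does not hold: the target of an edge out of a $!$-vertex $b$ in $L$ can be any vertex of $B(b)$, including a node-vertex or interior wire-vertex. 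For instance, in the LHS of the generalised bialgebra law the $!$-boxes contain node-vertices; the edge from $b_1$ to the Z node lies in $m[L]$ but not in $m[\Bound_!(L)]$, and it \emph{is} deleted (together with its target) in forming $D$. So the claim, as stated, would force $!$-boxes to have no deletable contents at all, which is absurd.

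The gap is local and repairable: what your argument actually needs, and what is true, is the relativised statement that every edge of $G$ from a $!$-vertex to a vertex \emph{that survives in $D$} itself survives in $D$. The proof is essentially the one you sketched: if the edge lies in $m[L]$, say $e = m(e')$ with target $t = m(t')$, then survival of $t$ forces $t' \in \Bound_!(L)$ (otherwise $t$ would be deleted), so $t'$ is a $!$-vertex or boundary wire-vertex, hence $e' \in \Bound_!(L)$ and $e$ is kept; if the edge is outside $m[L]$ it is untouched. From this weakened claim your conclusions $!(D) = !(G)$, $\beta(D) = \beta(G)$ and $B_D(b) = B_G(b) \cap D$ all follow (for $\beta$ only edges between $!$-vertices matter, and there the target is automatically in $\Bound_!(L)$), and conditions 2--4 then go through exactly as you wrote them --- indeed your condition-4 argument, intersecting $B_G(b') \subseteq B_G(b)$ with $D$, is cleaner than the paper's case analysis on whether the linking edge lies in $D$ or $L$. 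Note the paper avoids the edge counting for condition 2 entirely by applying the functor $\beta$ to the pushout square (again via proposition~\ref{prop:slice-cat-func-pb}) to get $\beta(D) \cong \beta(G)$; similarly, for condition 1 it cites uniqueness of pushout complements together with the known fact that the string-graph pushout complement along a local isomorphism is a string graph, where you give a direct arity-matching check --- both are fine.
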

\begin{proof}
    As all the morphisms in the diagram are monic, we consider $D$
    and $L$ to be subgraphs of $G$, and $\Bound_!(L)$ their common
    subgraph.

    Consider applying $U$ to the pushout square.  This is a pushout in
    $\catGraph/\mathcal{G}_T$, by proposition
    \ref{prop:slice-cat-func-pb}.  But pushout complements are unique up
    to isomorphism in $\catGraph/\mathcal{G}_T$, and we know that the
    pushout complement of
    \[ \Bound(U(L)) \hookrightarrow U(L) \xrightarrow{U(m)} U(G) \]
    is a string graph (since $U(m)$ is a local isomorphism, by the
    no-dangling-edges condition), so $U(D)$ must be a string graph.

    We apply $\beta$ to the pushout square, giving us another pushout
    (again, by proposition \ref{prop:slice-cat-func-pb}).  Then, since
    $\beta(\Bound_!(L)) = \beta(L)$, we must have that $\beta(D) \cong
    \beta(G)$, and hence $\beta(D)$ must be posetal.

    Let $b \in\:!(D)$.  Then $b \in\:!(G)$.  We know that $U(B_G(b))$ is
    open in $U(G)$, and $B_D(b) = B_G(b) \cap D$, so $U(B_D(b)) =
    U(B_G(b)) \cap U(D)$.  But this is open in $U(D)$ by proposition
    \ref{prop:open-subgraph-props}.

    Let $b,c \in\:!(D)$, with $c \in B_D(b)$, and let $v \in B_D(c)$.
    Then $b,c \in\:!(G)$ and $v \in B_G(c)$, and so $v \in B_G(b)$. The
    edge linking $b$ to $v$ must either be in $D$ or $L$. suppose it is in
    $L$. Then so must $b$ and $v$ be, and so $b$ and $v$ must be in
    $\Bound_!(L)$.  But then the edge linking them must also be in
    $\Bound_!(L)$ (by definition of $\Bound_!$) and hence in $D$. So $v
    \in B_D(b)$.

    So $D$ is a $!$-graph.
\end{proof}

Note that, while the no-dangling-edges condition will always produce a pushout
complement, it will not always lead to a valid rewrite. For example, in
the following commutative diagram in $\catGraph/\mathcal{G}_{T!}$, the
map on the left satisfies the no-dangling-edges condition, but the
resulting (bottom right) graph is not a valid $!$-graph, as $B(b)$ is
not an open subgraph.
\begin{equation}
    \label{eq:bad-rewrite}
    \input{bad-rewrite.tikz}
\end{equation}

\begin{definition}[$!$-graph matching]\label{def:pg-matching}
    A monomorphism $L \xrightarrow{m} G$ between two $!$-graphs is a
    \textit{$!$-graph matching} when $U(m)$ is a local isomorphism and
    $m$ reflects $!$-box containment.  If $L_0 \succeq L$, with
    instantiation $S$, $m$ is said to be a $!$-graph matching of $L_0$
    onto $G$ under $S$.
\end{definition}

Of course, if $m$ is a $!$-graph matching, then $m$ and the inclusion of
$\Bound_!(L)$ in $L$ have a pushout complement.  But, unlike in
\eqref{eq:bad-rewrite}, a $!$-graph matching always produces a rewrite.

\begin{theorem}\label{thm:bg-rw-exists}
    Let $L \rewritesto R$ be a $!$-graph rewrite rule and $m : L
    \rightarrow G$ be a $!$-graph matching for it.  Then $L \rewritesto
    R$ has a rewrite at $m$.
\end{theorem}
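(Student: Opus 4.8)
The plan is to construct the double-pushout diagram witnessing the rewrite, namely a pair of squares
\[
\begin{tikzcd}
    L \arrow[swap,right hook->]{d}{m}
    & I \arrow[swap]{l}{i_1} \arrow{r}{i_2} \arrow[right hook->]{d}{d}
    & R \arrow[right hook->]{d}
    \\ G \NEbracket
    & D \arrow[swap,left hook->]{l}{g} \arrow[right hook->]{r}
    & H \NWbracket
\end{tikzcd}
\]
in which both squares are pushouts in $\catBGraph_T$. Throughout I would identify $I$ with $\Bound_!(L)$ and with $\Bound_!(R)$ via the isomorphisms of definition \ref{def:bg-rw-rule}, so that $i_1$ becomes the inclusion $\Bound_!(L) \hookrightarrow L$ and $i_2$ the inclusion $\Bound_!(R) \hookrightarrow R$.

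First I would build the left-hand square. Since $m$ is a $!$-graph matching (definition \ref{def:pg-matching}), $U(m)$ is a local isomorphism and $m$ reflects $!$-box containment, so $m$ satisfies the no-dangling-edges condition; hence the pushout complement \eqref{eq:pushout-complement} exists, and the proposition immediately preceding this theorem gives that $D$ is a $!$-graph. As $\catBGraph_T$ is a full subcategory of $\catGraph/\mathcal{G}_{T!}$ and every object and arrow of this square already lies in it, the square is also a pushout in $\catBGraph_T$.

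The substance lies in the right-hand square. The plan is to show that the span $D \xleftarrow{d} I \xrightarrow{i_2} R$ is boundary-$!$-coherent and then invoke proposition \ref{prop:pg-bound-coh-p-pushout} to obtain the pushout $H$, which is automatically a $!$-graph. Boundary-$!$-coherence (definition \ref{def:pg-bound-coher}) has three ingredients. For the underlying string-graph span I would argue indirectly: applying $U$ to the left square yields a pushout in $\catGraph/\mathcal{G}_T$ by proposition \ref{prop:slice-cat-func-pb}, so by theorem \ref{thm:sg-bcoh-pushout} the span $U(L) \xleftarrow{} U(I) \xrightarrow{U(d)} U(D)$ is boundary-coherent; lemma \ref{lem:sg-bc-rw}, applied to the string-graph rewrite rule $U(L) \xleftarrow{} U(I) \xrightarrow{} U(R)$ (which is one since $U$ of a $!$-graph equation is a string graph equation), then transfers this to boundary-coherence of $U(D) \xleftarrow{} U(I) \xrightarrow{} U(R)$. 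That $i_2$ reflects $!$-box containment is immediate, since it is an isomorphism onto $\Bound_!(R)$, whose definition already closes it under the relevant edges.

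The main obstacle, and the one genuinely computational step, is that $d : I \to D$ reflects $!$-box containment. Here I would take an edge $e$ of $D \subseteq G$ from a $!$-vertex $b$ to a target $v$ in the image of $d$, i.e.\ $v \in \Bound_!(L)$, and show $e$ lies in $\Bound_!(L)$. Since $v$ is in the image of $m$ and $m$ reflects $!$-box containment, $e$ is in the image of $m$; hence $b$ and $e$ belong to $L$, with $b \in\, !(L)$ and $v \in\, !(L) \cup \Bound(U(L))$. By the definition of $\Bound_!$, an edge of $L$ whose source is a $!$-vertex and whose target is a $!$-vertex or boundary vertex lies in $\Bound_!(L)$, so $e$ is in the image of $d$, as required. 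With boundary-$!$-coherence thus established, proposition \ref{prop:pg-bound-coh-p-pushout} supplies the second pushout, completing the double-pushout rewrite at $m$.
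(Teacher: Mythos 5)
Your proposal is correct and follows essentially the same route as the paper's proof: construct the pushout complement for the left square (using that a $!$-graph matching satisfies the no-dangling-edges condition), then establish boundary-$!$-coherence of the span formed by $d$ and $i_2$ -- via theorem \ref{thm:sg-bcoh-pushout} and lemma \ref{lem:sg-bc-rw} for the underlying string graphs, and the $!$-box-containment-reflection argument for $d$ -- and invoke proposition \ref{prop:pg-bound-coh-p-pushout}. The only (cosmetic) difference is that where the paper concludes $e \in \im(d)$ directly from the pushout-intersection property of the left square, you route the same fact through the explicit description of $\Bound_!(L)$.
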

\begin{proof}
    Let $L \xleftarrow{i_1} I \xrightarrow{i_2} R$ be the $!$-graph
    rewrite rule, and let
    \[
    \begin{tikzcd}
        L \arrow[swap]{d}{m}
        & I \arrow[swap]{l}{i_1} \arrow{d}{d}
        \\ G \NEbracket
        & D \arrow{l}{g}
    \end{tikzcd}
    \]
    be the pushout complement of $i_1$ and $m$.  Then it is sufficient
    (by proposition \ref{prop:pg-bound-coh-p-pushout}) to show that the
    span formed from $d$ and $i_2$ is boundary-$!$-coherent.

    Theorem \ref{thm:sg-bcoh-pushout} gives us that $U(i_1)$ and $U(d)$
    are boundary-coherent, and then lemma \ref{lem:sg-bc-rw} gives us
    that $U(d)$ and $U(i_2)$ must also be boundary-coherent.

    $i_2$ reflects $!$-box containment by the definition of a $!$-graph
    rewrite rule.  Let $e$ be an edge in $D$ from a $!$-vertex $b$ to a
    vertex $v$, where $v$ is in the image of $d$.  Then $g(v)$ must be
    in the image of $m$, and so $g(e)$ must be in the image of $m$,
    since $m$ reflects $!$-box containment.  But then, since this is a
    pushout square, $e$ must be in the image of $d$ as required, and so
    $d$ also reflects $!$-box containment.  Thus $d$ and $i_2$ are
    boundary-$!$-coherent, as required.
\end{proof}

\begin{example}
    Recall the following proof from section \ref{sec:quantum}:
    \[ \input{bialg-spider.tikz} \]
    Consider the left equality.  We can perform the equivalent rewrite
    by using instances of the X and Z spider laws
    \begin{mathpar}
        \input{x-spider-rule.tikz}
        \and
        \input{z-spider-rule.tikz}
    \end{mathpar}
    These instances are
    \begin{mathpar}
        \input{x-spider-rule-exp.tikz}
        \and
        \input{z-spider-rule-exp.tikz}
    \end{mathpar}
    We can use the first of these to construct the following rewrite
    (ignoring issues of wire homeomorphism):
    \[ \input{bialg-spider-rewrite-pos-1.tikz} \]
    $U(m)$ is a local isomorphism, and $m$ can be seen to reflect
    $!$-box containment (for $b_1$, in this case).  As expected, we get
    a valid $!$-graph at the end.  One more application of the X spider
    law and two of the Z spider law will produce the chain of rewrites
    we want.
    \label{ex:bialg-spider-1}
\end{example}

Suppose we have the following rewrite (where $m$ is not necessarily a
$!$-graph matching):
\begin{equation}\label{eq:rewrite}
    \begin{tikzcd}
        L \arrow[swap]{d}{m}
        & I \arrow[swap]{l}{i_1} \arrow{r}{i_2} \arrow{d}{d}
        & R \arrow{d}{r}
        \\
        G \NEbracket
        & D \arrow{l}{g} \arrow[swap]{r}{h}
        & H \NWbracket
    \end{tikzcd}
\end{equation}
The bottom span will almost certainly not be a $!$-graph rewrite rule
(as $U(D)$ will not, in general, have only isolated points), but it is
trivial to produce such a rewrite rule by discarding unwanted vertices
and edges from $D$.

\begin{theorem}\label{lem:rw-patt-pres}
    In \eqref{eq:rewrite}, if $U(G)$ has no isolated points then there
    is a subgraph $J$ of $D$ such that
    \[ G \xleftarrow{g\restriction_{J}} J
        \xrightarrow{h\restriction_{J}} H \]
    is a $!$-graph rewrite rule.
\end{theorem}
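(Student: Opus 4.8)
The plan is to mirror the construction sketched for string graph rewrite rules in Section~\ref{sec:dpo}, taking $J$ to be the shared preimage in $D$ of the extended boundaries $\Bound_!(G)$ and $\Bound_!(H)$, and then checking the four clauses of Definition~\ref{def:bg-rw-rule} for the span $G \xleftarrow{g\restriction_{J}} J \xrightarrow{h\restriction_{J}} H$. Throughout I would treat the monic maps of \eqref{eq:rewrite} as subgraph inclusions, so that (pushouts along monomorphisms being unions with explicit intersection) $G$ is the union of $L$ and $D$ overlapping in $I \cong \Bound_!(L)$, and $H$ is the union of $R$ and $D$ overlapping in $I \cong \Bound_!(R)$.

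First I would record the skeletal facts that make $D$ the right place to look. Applying $\beta$ to both pushout squares, which is legitimate by Proposition~\ref{prop:slice-cat-func-pb}, and using $\beta(\Bound_!(L)) = \beta(L)$, the leg $\beta(i_1)$ is an isomorphism, hence so are $\beta(g)$ and $\beta(h)$; thus $g$ and $h$ restrict to bijections $!(D) \leftrightarrow !(G)$ and $!(D) \leftrightarrow !(H)$ carrying all $!$--$!$ edges and self-loops across. Next I would show $\Bound_!(G) \subseteq \im(g)$ and, symmetrically, $\Bound_!(H) \subseteq \im(h)$: every $!$-vertex of $G$ is already accounted for; a boundary wire-vertex of $G$ lying in $L$ is, by Lemma~\ref{lemma:sg-morphism-bounds} applied to $U(m)$, a boundary vertex of $L$ and so lies in $\Bound_!(L) = I \subseteq D$; and an edge of $\Bound_!(G)$ from a $!$-vertex to a boundary vertex is either already in $D$ or, having both endpoints in $\Bound_!(L)$, lies in $I \subseteq D$ by the definition of $\Bound_!$.

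The technical heart is the coincidence of preimages: for every wire-vertex $w$ of $D$, $g(w) \in \In(U(G)) \Leftrightarrow h(w) \in \In(U(H))$, and likewise for $\Out$. I would prove this by cases. If $w \notin I$, then no edge of $L$ or $R$ is incident to $w$, so the edges at $g(w)$ in $G$, at $w$ in $D$, and at $h(w)$ in $H$ all coincide, and the equivalence is immediate. If $w \in I$, then $w$ is a boundary vertex of both $L$ and $R$, and the span-agreement diagram \eqref{eq:pg-rw-isos-agree} forces $w$ to be an input of $L$ exactly when it is an input of $R$ (and likewise an output); in the input case neither $g(w)$ nor $h(w)$ can be an output of its graph, since each inherits the out-edge of $w$ from $L$, resp.\ $R$, and the in-edges of $g(w)$ and of $h(w)$ are precisely the in-edges of $w$ in $D$, so both conditions reduce to ``$w \in \In(U(D))$''; the output case is symmetric. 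Granting this, I would set $J := g^{-1}(\Bound_!(G))$ and observe $J = h^{-1}(\Bound_!(H))$: the $!$-vertices and their edges agree by the $\beta$-isomorphisms, the boundary wire-vertices agree by the coincidence just proved, and an edge from a $!$-vertex to a wire-vertex lands in $\Bound_!(G)$ exactly when its shared target wire-vertex is a boundary vertex of $G$, matching the $H$ condition. Since $g$ is monic with $\Bound_!(G) \subseteq \im(g)$, the restriction $g\restriction_{J}$ is an isomorphism $J \cong \Bound_!(G)$, and similarly $h\restriction_{J} : J \cong \Bound_!(H)$.

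It remains to verify Definition~\ref{def:bg-rw-rule} for $G \xleftarrow{g\restriction_{J}} J \xrightarrow{h\restriction_{J}} H$. Clause~3 is the pair of isomorphisms just obtained. For clause~1 I would check that $U(H)$ has no isolated wire-vertices: a candidate isolated $w$ cannot come from $R \graphminus I$ (as $U(R)$ has none) nor from $I$ (boundary vertices of $R$ carry an edge of $R$ into $H$), and if it came from $D \graphminus I$ it would already be isolated in $U(G)$, against hypothesis. Clause~2 follows by restricting $g\restriction_{J}$ and $h\restriction_{J}$ to the $\In$- and $\Out$-parts, using that these preimages coincide by the case analysis together with the shared $!$-vertices; and clause~4, the commuting hexagon \eqref{eq:bg-rw-rule-diag}, follows by construction, since every arrow in it is a restriction of $g$ or $h$ and the boundary inclusions correspond under the isomorphisms. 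The main obstacle is the preimage-coincidence step, where the span-agreement condition must be used precisely to prevent a shared boundary vertex from switching between input and output across the rewrite; everything else parallels the string-graph bookkeeping.
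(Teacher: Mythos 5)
Your proposal is correct and follows essentially the same route as the paper's proof: establish $\Bound_!(G) \subseteq \im(g)$ and $\Bound_!(H) \subseteq \im(h)$ via the $\beta$-isomorphisms and boundary reflection, take $J = g^{-1}(\Bound_!(G))$, prove the input/output coincidence (your case split $w \in I$ versus $w \notin I$ is the same dichotomy as the paper's ``in the image of $m$ or not'', just phrased via the union picture and \eqref{eq:pg-rw-isos-agree} rather than by chasing preimages through the pushouts), and read off the clauses of Definition~\ref{def:bg-rw-rule}. Your explicit verification that $U(H)$ has no isolated wire-vertices is a small point the paper leaves implicit, but it does not change the structure of the argument.
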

\begin{proof}
    We start by showing that $\Bound_!(G)$ is in the image of $g$.

    $\beta(i_1)$ and $\beta(i_2)$ are isomorphisms since $i_1$ and $i_2$
    form a $!$-graph rewrite rule. It then follows from the pushout
    squares that $\beta(g)$ and $\beta(h)$ are also isomorphisms.  So we
    must have that $\beta(G)$ is in the image of $g$.

    If $w$ is in the boundary of $U(G)$, it is either not in the image
    of $m$, in which case it must be in the image of $g$, or its
    preimage under $m$ is in the boundary of $U(L)$ (since string graph
    monomorphisms reflect boundaries), in which case this preimage is in
    the image of $i_1$ and hence $w$ is in the image of $g$.  Now if $b
    \in\:!(G)$ and there is an edge $e$ from $b$ to $w$ in $G$, we also
    know that if $e$ is not in the image of $m$, it must be in the image
    of $g$.  So suppose $e$ is in the image of $m$.  Then its preimage
    must be in the image of $i_1$ by the same reasoning as above, and
    hence $e$ is in the image of $g$.  So $\Bound_!(G)$ is in the image
    of $g$.

    We can use the same argument to show that $\Bound_!(H)$ is in the
    image of $h$.

    Let $J$ be the preimage of $\Bound_!(G)$ under $g$, and $\iota : J
    \hookrightarrow D$ the inclusion of $J$ in $D$.  $g \circ \iota$ is
    monic, and hence injective, so we have $J \cong \Bound_!(G)$.

    We already know that the image of $\beta(J)$ under $h \circ \iota$
    is $\beta(H)$, since $\beta(\iota)$ and $\beta(h)$ are both
    isomorphisms.  We now show that, for all vertices $w$ in $J$,
    $h(\iota(w))$ is an input (resp.  output) if and only if
    $g(\iota(w))$ is an input (resp.  output).  This will then give us
    that the image of $h \circ \iota$ is exactly $\Bound_!(H)$ (since we
    already know that $\Bound_!(H)$ is in that image).

    So let $w$ be a wire-vertex in $J$.  We know that $g(\iota(w))$
    is either an input or an output.  Suppose it is an input (the output
    case is symmetric).  Then $\iota(w)$ must also be an input, and if
    $g(\iota(w))$ is in the image of $m$, its preimage must be an input.
    Let $w'$ be the preimage of $g(\iota(w))$ under $m \circ i_1$.  Then
    we know that $d(w') = \iota(w)$ and hence $r(i_2(w')) =
    h(\iota(w))$, and $i_2(w')$ is an input (since $i_1(w')$ is).  So
    the preimages of $h(\iota(w))$ under both $h$ and $r$ are inputs,
    and hence $h(\iota(w))$ must also be an input.  If $g(\iota(w))$ is
    not in the image of $m$, then $\iota(w)$ is not in the image of $d$,
    and hence $h(\iota(w))$ is not in the image of $r$.  Thus $r$ cannot
    introduce an incoming edge for $h(\iota(w))$, which must therefore
    be an input.

    Injectivity of $h \circ \iota$ gives us the isomorphism $J \cong
    \Bound_!(H)$.  This restricts to $\In_!(G) \cong \In_!(H)$ and
    $\Out_!(G) \cong \Out_!(H)$, as we just proved that $h(\iota(w))$ is
    an input if and only if $g(\iota(w))$ is an input.  The diagram from
    definition \ref{def:bg-rw-rule} commutes by construction, and hence
    \[ G \xleftarrow{g \circ \iota} J \xrightarrow{h \circ \iota} H \]
    is a $!$-graph rewrite rule.
\end{proof}

\begin{example}
    Continuing example \ref{ex:bialg-spider-1}, we get the following
    commutative diagram:
    \[ \input{bialg-spider-rewrite-to-rule-1.tikz} \]
    Thus we have the $!$-graph rewrite rule
    \[ \input{bialg-spider-rule-1.tikz} \]
\end{example}

\section{Soundness of $!$-Graph Rewriting} 
\label{sec:bbox-ops-rewrites}

We have shown how we can rewrite $!$-graphs using $!$-graph rewrite
rules, and in doing so produce further $!$-graph rewrite rules.
However, we also want to ensure that the resulting rewrite rule is
sound with respect to its interpretation as a family of string graph
rewrite rules.

Note that this differs from soundness of string graph
equations or rewrite rules, as discussed in chapter \ref{ch:rewriting},
where soundness was with respect to a particular valuation.  Instead, we
are interested in soundness with respect to a particular rewrite sytem
of string graphs.

Specifically, if we have a $!$-graph rewrite rule $L \rewritesto R$
which has been used to rewrite a $!$-graph $L'$ to another $!$-graph
$R'$, and there is an instance of the resulting $!$-graph rewrite rule
$L' \rewritesto R'$ that can rewrite the string graph $G$ to $H$, we
require there to be an instance of $L \rewritesto R$ that also rewrites
$G$ to $H$.

Luckily, not only does our definition of $!$-graph matching guarantee
that we get a rewrite, it also ensures that we can preserve the
rewrite under $!$-box operations.  We know that the top span ($L$, $I$
and $R$) share $!$-vertices (in that we get isomorphisms if we apply
$\beta$ to the maps), and the same is true of the bottom span ($G$, $D$
and $H$; see the proof of theorem \ref{lem:rw-patt-pres}).  For
notational convenience, we will identify the $!$-vertices across each of
these spans (eg: if $b$ is a $!$-vertex in $I$, we will consider $b$ to
be in $L$ as well, by which we mean $i_1(b)$).

Further, because the maps $m$, $d$ and $r$ of \eqref{eq:rewrite} are
monic, we can identify each $!$-vertex in the top span with one in the
bottom span.  So, given that we wish to perform a $!$-box operation
(which we will denote $\OP_b$) on $G$ and propagate it to the entire
rewrite, we have two possible cases: either the $!$-vertex being
operated on is in the image of $m$, in which case we consider it as
existing across the whole rewrite, or it is not, in which case it is
only in the bottom span.

If $b$ is in the image of $m$, lemma \ref{lem:bbox-m-rbc-ops-in-image}
provides us with a way of updating all the maps of the rewrite:
\begin{equation}\label{eq:rewrite-op-in-image}
    \begin{tikzcd}
        \OP_b(L) \arrow[swap]{d}{\OP_b(m)}
        & \OP_b(I) \arrow[swap]{l}{\OP_b(i_1)}
                   \arrow{r}{\OP_b(i_2)}
                   \arrow{d}{\OP_b(d)}
        & \OP_b(R) \arrow{d}{\OP_b(r)}
        \\
        \OP_b(G)
        & \OP_b(D) \arrow{l}{\OP_b(g)}
                   \arrow[swap]{r}{\OP_b(h)}
        & \OP_b(H)
    \end{tikzcd}
\end{equation}
We will, of course, need to check that this is still a rewrite.

If $b$ is not in the image of $m$, lemma
\ref{lem:bbox-m-rbc-ops-in-image} is insufficient.  However, the
following lemma will allow us to extend $!$-box operations on maps to
$!$-vertices that are not in the image of the map.

\begin{lemma}\label{lem:bbox-m-rbc-ops-disjoint}
    Let $f : G \rightarrow H$ be a $!$-graph monomorphism that reflects
    $!$-box containment, and let $b$ be a $!$-vertex in $H$ but not in
    the image of $f$.  Then there
    are $!$-graph monomorphisms
    \begin{align*}
        \DROP_b(f) &: G \rightarrow \DROP_b(H) \\
        \KILL_b(f) &: G \rightarrow \KILL_b(H) \\
        \COPY_b(f) &: G \rightarrow \COPY_b(H)
    \end{align*}
    that reflect $!$-box containment and that commute with $f$ in the
    following ways:
    \begin{equation}
        \begin{tikzcd}[column sep=huge]
            G \arrow{r}{\DROP_b(f)} \arrow[swap]{dr}{f}
            & \DROP_b(H) \arrow[right hook->]{d}
            \\ & H
        \end{tikzcd}
    \end{equation}
    \begin{equation}
        \begin{tikzcd}[column sep=huge]
            G \arrow{r}{\KILL_b(f)} \arrow[swap]{dr}{f}
            & \KILL_b(H) \arrow[right hook->]{d}
            \\ & H
        \end{tikzcd}
    \end{equation}
    \begin{equation}\label{eq:bbox-m-rbc-copy-disjoint}
        \begin{tikzcd}[column sep=huge]
            G \arrow{r}{\COPY_b(f)} \arrow[swap]{dr}{f}
            & \COPY_b(H)
            \\ & H \arrow[swap]{u}{p^H_i}
        \end{tikzcd}
    \end{equation}
    where $i \in \{1,2\}$ and $p^H_1$ and $p^H_2$ are the maps from the
    pushout that defines $\COPY_b(H)$.
\end{lemma}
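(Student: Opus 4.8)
The plan is to reduce everything to one structural observation: because $f$ reflects $!$-box containment and $b \notin \im(f)$, the entire $!$-box $B(b)$ is disjoint from $\im(f)$. Concretely, suppose some vertex $v \in B(b)$ were in $\im(f)$. Since $v$ is a successor of $b$, there is an edge $e$ in $H$ from the $!$-vertex $b$ to $v$; its target lies in $\im(f)$, so the reflection property forces $e \in \im(f)$, whence its source $b \in \im(f)$ — contradicting the hypothesis. The same reasoning shows that no edge incident to $b$ can lie in $\im(f)$, since any such edge would drag $b$ (as one of its endpoints) into the image. Thus $f[G]$ sits, as a subgraph, inside both $H \graphminus b$ and $H \graphminus B(b)$.

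For $\DROP$ and $\KILL$ this gives the maps at once. Since $\DROP_b(H) = H \graphminus b$ and $\KILL_b(H) = H \graphminus B(b)$ each contain $f[G]$, I would define $\DROP_b(f)$ and $\KILL_b(f)$ to be the corestrictions of $f$ onto these subgraphs. Each is monic (a corestriction of a monomorphism) and commutes with $f$ through the appropriate inclusion, giving the two required triangles. Reflection of $!$-box containment is then inherited: an edge of the subgraph running from a $!$-vertex $c$ to a target in the image of the corestricted map is also an edge of $H$ to a target in $\im(f)$, so $f$'s reflection places it in $\im(f)$, and since it is an edge of the subgraph it lies in the corestricted image.

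For $\COPY$ I would use that $H \graphminus B(b)$ is exactly the shared object of the defining pushout
\[
\begin{tikzcd}
    H\graphminus B(b) \rar[right hook->] \dar[right hook->]
    & H \dar{p_1^H} \\
    H \rar[swap]{p_2^H} & \COPY_b(H) \NWbracket
\end{tikzcd}
\]
so that $f$ factors through the inclusion $\iota : H \graphminus B(b) \hookrightarrow H$. Because $p_1^H$ and $p_2^H$ agree on $H \graphminus B(b)$, the two composites $p_1^H \circ f$ and $p_2^H \circ f$ coincide, and I would take this common map as $\COPY_b(f)$; it satisfies \eqref{eq:bbox-m-rbc-copy-disjoint} for both $i \in \{1,2\}$ and is monic since each $p_i^H$ is monic (the pushout is of monomorphisms) and $f$ is monic.

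The step I expect to be the main obstacle is checking that $\COPY_b(f)$ reflects $!$-box containment, since $\COPY_b(H) = p_1^H[H] \cup p_2^H[H]$ and an incoming edge may a priori come from either glued copy. I would argue by cases: given an edge $e$ in $\COPY_b(H)$ from a $!$-vertex to a target $v = \COPY_b(f)(w)$, write $e = p_i^H(e_i)$ for $i = 1$ or $i = 2$; since $v$ lies in the shared region where $p_1^H = p_2^H$ and each $p_i^H$ is monic, the target of $e_i$ is forced to equal $f(w) \in \im(f)$, so $f$'s reflection yields $e_i \in \im(f)$ and hence $e = p_i^H(e_i) \in \im(\COPY_b(f))$. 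That closes the argument; the remaining commutativity and monomorphism verifications are routine.
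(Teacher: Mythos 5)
Your proposal is correct and follows essentially the same route as the paper's proof: you establish that $B(b)$ (and every edge incident to $b$) is disjoint from $\im(f)$ using the reflection property, define $\DROP_b(f)$ and $\KILL_b(f)$ by corestriction, and verify reflection for $\COPY$ by the same case analysis on which pushout map $p^H_i$ an edge factors through, using injectivity of $p^H_i$ to identify the preimage's target with $f(w)$. Your $\COPY_b(f) = p^H_i \circ f$ is literally the paper's map $\iota \circ \KILL_b(f)$ (both equal the diagonal of the pushout square composed with the corestricted $f$), so the only difference is presentational, with your version making explicit the disjointness argument the paper leaves implicit.
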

\begin{proof}
    Since $b$ is not in the image of $f$ (and so neither are any of its
    edges), we let $\DROP_b(f)$ be $f$, but with codomain $\DROP_b(H)$.
    Simply restricting the codomain cannot break the property of
    reflecting $!$-box containment, so this is inherited from $f$.

    $\KILL_b(f)$ follows in the same way once we note that, since $f$
    reflects $!$-box containment, the image of $f$ cannot intersect
    $B(b)$.

    Recall the definition of $\COPY_b(H)$:
    \[ \posquare{H\graphminus B(b)}{H}{H}{\COPY_b(H)}{}{p^H_2}{}{p^H_1} \]
    Let $\iota$ be the induced inclusion of $H \graphminus B(b)$ into
    $\COPY_b(H)$ (equal to both composed maps in the diagram).

    Now we define
    \[ \COPY_b(f) := \iota \circ \KILL_b(f) \]
    and so we have a monomorphism that satisfies
    \eqref{eq:bbox-m-rbc-copy-disjoint}.  For the remainder of this
    proof, we will use $f'$ to refer to $\COPY_b(f)$ for convenience.

    Let $v$ be a vertex in $G$ and $e$ an edge from a $!$-vertex $b$ to
    $f'(v)$ in $\COPY_b(H)$.  $e$ must be in the image of at least one
    of the $p^H_i$.  Then (one of) its preimage(s) $e'$ is an edge from
    a $!$-vertex to $f(v)$ in $H$, and so $e'$ must be in the image of
    $f$ (as $f$ reflects $!$-box containment).  So $e$ is in the image
    of $f'$, and hence $f'$ also reflects $!$-box containment.
\end{proof}

Armed with this result, if $b$ is not in the image of $m$, we can do the
following:
\begin{equation}\label{eq:rewrite-op-disjoint}
    \begin{tikzcd}
        L \arrow[swap]{d}{\OP_b(m)}
        & I \arrow[swap]{l}{i_1}
                   \arrow{r}{i_2}
                   \arrow{d}{\OP_b(d)}
        & R \arrow{d}{\OP_b(r)}
        \\
        \OP_b(G)
        & \OP_b(D) \arrow{l}{\OP_b(g)}
                   \arrow[swap]{r}{\OP_b(h)}
        & \OP_b(H)
    \end{tikzcd}
\end{equation}
which we again need to show is a rewrite.  The first step (for both
\eqref{eq:rewrite-op-in-image} and \eqref{eq:rewrite-op-disjoint}) is to
show that the $!$-box operations preserve the property of being a
$!$-graph matching.

\begin{lemma}\label{lem:bbox-matching-ops-in-image}
    Let $f : G \rightarrow H$ be a $!$-graph matching and let $b$ be a
    $!$-vertex in $H$.  Then $\DROP_b(f)$, $\KILL_b(f)$ and $\COPY_b(f)$
    are all $!$-graph matchings.
\end{lemma}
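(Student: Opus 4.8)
The plan is to unwind the definition of a $!$-graph matching (definition~\ref{def:pg-matching}): a monomorphism is a matching precisely when it reflects $!$-box containment and its image under $U$ is a local isomorphism (definition~\ref{def:sg-local-iso}). Two of these three requirements come for free. Depending on whether $b$ lies in the image of $f$, the map $\DROP_b(f)$ (resp.\ $\KILL_b(f)$, $\COPY_b(f)$) is exactly the map produced by lemma~\ref{lem:bbox-m-rbc-ops-in-image} or by lemma~\ref{lem:bbox-m-rbc-ops-disjoint}, and both lemmas already guarantee that it is a monomorphism which reflects $!$-box containment; theorem~\ref{thm:instantiation} guarantees its domain and codomain are genuine $!$-graphs. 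So the \emph{entire} remaining content of the lemma is to check that $U$ applied to each of these maps is a local isomorphism, and I would organise the proof as a case split over the three operations, each subdivided into the cases $b \in \mathrm{im}(f)$ and $b \notin \mathrm{im}(f)$.

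Before the case analysis I would record two structural facts. First, writing $b = f(b')$ in the in-image case, the hypothesis that $f$ reflects $!$-box containment together with functoriality gives the equivalence $x \in B(b') \Leftrightarrow f(x) \in B(b)$ for every vertex $x$ of $G$: the forward direction is just that $f$ preserves the edge $b' \to x$, and the reverse applies reflection to the edge $b \to f(x)$ and uses injectivity of $f$. Second, openness of the boxes---they are adjacent to no wire-vertex and incident to no fixed-arity edge---pins down the local picture: a node-vertex lying inside a box has all of its incident edges going to wire-vertices inside that box, whereas a node-vertex lying outside the box can only meet it along \emph{variable-arity} edges, since a fixed-arity edge crossing the boundary would contradict openness. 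These are precisely the edges that the copy operation duplicates.

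The $\DROP$ case is immediate: since $U$ forgets all $!$-vertices, $U(\DROP_b(H)) \cong U(H)$ and, in the in-image case, $U(\DROP_{b'}(G)) \cong U(G)$, and under these identifications $U(\DROP_b(f))$ is just $U(f)$, which is a local isomorphism by hypothesis (the disjoint case is the same). For $\KILL$ I would combine the equivalence above with the fact that $U(f)$ restricts to a bijection of edge-neighbourhoods: an edge incident to a surviving node-vertex $v$ is deleted by $\KILL_{b'}$ exactly when its far endpoint lies in $B(b')$, which by the equivalence happens exactly when the image edge is deleted by $\KILL_b$; hence the neighbourhood bijection of $U(f)$ restricts to a neighbourhood bijection for $U(\KILL_b(f))$. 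In the disjoint case one instead notes that, since $f$ reflects $!$-box containment and $b \notin \mathrm{im}(f)$, the image of $f$ is disjoint from $B(b)$, so no edge incident to any $f(v)$ is deleted and $U(\KILL_b(f))$ again inherits the bijection from $U(f)$.

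The $\COPY$ case is where the real work lies and is the step I expect to be the main obstacle. Here I would use proposition~\ref{prop:slice-cat-func-pb} to push $U$ through the defining pushouts, so that $U(\COPY_{b'}(G))$ and $U(\COPY_b(H))$ are the expected amalgamations of two copies of $U(G)$ (resp.\ $U(H)$) glued along the complement of the box, and I would use the copy-compatibility diagram~\eqref{eq:bbox-rbc-ops-image-copy} from lemma~\ref{lem:bbox-m-rbc-ops-in-image}, which says that the copy of $f$ intertwines the two copy-inclusions on domain and codomain, to track the two copies coherently. Every node-vertex of the copied domain is either a \emph{duplicated} one coming from inside the box---whose edge-neighbourhood is a clean copy of a neighbourhood in $G$, so the bijection transports directly---or a \emph{shared} one coming from outside the box. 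For a shared node-vertex the neighbourhood splits into edges to the outside (present once) together with edges into the box (present in both copies), and the delicate point is precisely the spider-like situation flagged above, where a node outside the box has genuinely duplicated variable-arity edges into the box; one must check that $\COPY_b(f)$ matches these duplicated edges up correctly. This follows from the copy-inclusion compatibility together with the equivalence $x \in B(b') \Leftrightarrow f(x) \in B(b)$, which guarantees that $U(f)$ sends outside-edges to outside-edges and into-box-edges to into-box-edges, so the split on the domain matches the split on the codomain copy-by-copy. The disjoint case is easier: here $\COPY_b(f) = \iota \circ \KILL_b(f)$ with image disjoint from $B(b)$, so no incident edge is ever duplicated and $U(\COPY_b(f))$ is $U(f)$ followed by an inclusion, hence a local isomorphism.
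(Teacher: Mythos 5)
Your proposal is correct and follows essentially the same route as the paper's own proof: both reduce monicity and $!$-box-containment reflection to lemmas~\ref{lem:bbox-m-rbc-ops-in-image} and~\ref{lem:bbox-m-rbc-ops-disjoint}, then verify local isomorphism by the same case split (three operations, each with $b$ in or out of $\operatorname{im}(f)$), with the $\COPY$ case handled via the intertwining squares $\COPY_b(f) \circ p_i^G = p_i^H \circ f$ and the membership equivalence $x \in B(b') \Leftrightarrow f(x) \in B(b)$. Your in-image $\COPY$ argument is organised constructively (duplicated versus shared node-vertices with an explicit neighbourhood decomposition) where the paper chases the preimage of a single offending edge through the pushout maps, but this is only a presentational difference.
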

\begin{proof}
    Lemmas \ref{lem:bbox-m-rbc-ops-disjoint} and
    \ref{lem:bbox-m-rbc-ops-in-image} give us that they are
    monomorphisms that reflect $!$-box containment.  So we just need to
    show that they are local isomorphisms.

    $U(\DROP_b(f)) = U(f)$, which we already know to be a local
    isomorphism.

    If $b$ is not in the image of $f$, we note that since $f$ reflects
    $!$-box containment, the image of $f$ cannot intersect $B(b)$.
    Then, since simply restricting the codomain cannot break the
    property of being a local isomorphism, $U(\KILL_b(f))$ must be a
    local isomorphism, since $U(f)$ is.

    Otherwise, suppose $b$ is in the image of $f$, and $U(\KILL_b(f))$
    is not a local isomorphism.  Let $n$ be a node-vertex in the image
    of $\KILL_b(f)$ with an incident edge $e$ not in its image.  We know
    that $e$ is in the image of $f$; let its preimage be $e'$.  Then the
    end of $e'$ that does not map to $n$ must be in $B(b')$; call this
    vertex $v$.  But, since $f(b') = b$, we must have that $f(v) \in
    B(b)$, and so $e$ cannot be in $\KILL_b(H)$ and hence there can be
    no such $n$.  So $U(\KILL_b(f))$ is a local isomorphism when $U(f)$
    is.

    Now we just have to deal with $\COPY_b(f)$.  For convenience, we
    will call this $f'$ for the remainder of the proof.

    Suppose $b$ is not in the image of $f$, and $U(f')$ is not a local
    isomorphism.  So there is a node-vertex $n$ in $U(G)$ such that
    $U(f')(n)$ is incident to an edge $e$ not in the image of $U(f')$.
    $e$ must be in the image of at least one of the $p^H_i$.  Call (one
    of) its preimage(s) $e'$.  Then $e'$ must not be in the image of
    $U(f)$ (since otherwise $e$ would be in the image of $U(f')$).  But
    $e'$ is incident to $U(f)(n)$, and so $U(f)$ is not a
    local isomorphism.  Contrapositively, if $U(f)$ is is a local
    isomorphism, then so is $U(f')$.

    Now suppose $b$ is in the image of $f$.  Let $n$ be a vertex in
    $\COPY_{b'}(G)$ with an edge $e$ incident to $f'(n)$ in
    $U(\COPY_b(H))$.  $e$ is in the image of one of the $p_i^H$; call
    the preimage of $e$ under it $e'$.  $f'(n)$ must also be in the
    image of $p_i^H$, and its preimage must be in the image of $f$.
    What is more, the preimage of this under $f$, which we will call
    $n'$, must map to $n$ by $p_i^G$.
    \[ \input{matching-copy-in-image-fig1.tikz} \]
    We know that, since $U(f)$ is a local isomorphism, all the incident
    edges of $U(f)(n')$ must be in the image of $U(f)$, including $e'$.
    So $e$ is in the image of $p_i^H \circ f$, and hence in the image
    of $f' \circ p_i^G$.  So $e$ is in the image of $f'$, and hence
    $U(f')$ is a local isomorphism.
\end{proof}

We know that the two squares of \eqref{eq:rewrite-op-in-image} commute
(up to an isomorphism) by lemma \ref{lem:bbox-f-ops-composition}.  For
\eqref{eq:rewrite-op-disjoint}, we need the following lemma:

\begin{lemma}\label{lem:bbox-f-disjoint-ops-composition}
    If $f : G_1 \rightarrow G_2$, $g : G_2 \rightarrow G_3$ and $h : G_3
    \rightarrow G_4$ are $!$-graph monophisms that reflect
    $!$-box containment, and $b$ is a $!$-vertex of $G_3$ not in the
    image of $g$ then
    \[ \DROP_{b}(g \circ f) = \DROP_{b}(g) \circ f \]
    and
    \[ \DROP_{h(b)}(h \circ g) = \DROP_{h(b)}(h) \circ \DROP_b(g) \]
    and similarly for $\KILL$ and $\COPY$.
\end{lemma}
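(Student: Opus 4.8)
The plan is to reduce both identities to the defining (essentially unique) characterisations of the $!$-box operations on maps supplied by lemmas \ref{lem:bbox-m-rbc-ops-in-image} and \ref{lem:bbox-m-rbc-ops-disjoint}, and then cancel a monomorphism. First I would identify which version of each operation is in play. Since $\im(g \circ f) \subseteq \im(g)$ and $b \notin \im(g)$, the vertex $b$ is also outside $\im(g \circ f)$; hence both $\DROP_b(g)$ and $\DROP_b(g \circ f)$ are the ``disjoint'' operations of lemma \ref{lem:bbox-m-rbc-ops-disjoint}. For the second identity, injectivity of $h$ together with $b \notin \im(g)$ gives $h(b) \notin \im(h \circ g)$, so $\DROP_{h(b)}(h \circ g)$ is again disjoint, whereas $h(b)$ obviously lies in $\im(h)$, so $\DROP_{h(b)}(h)$ is the ``in-image'' operation of lemma \ref{lem:bbox-m-rbc-ops-in-image}. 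The same type analysis applies verbatim to $\KILL$ and $\COPY$.

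For the $\DROP$ and $\KILL$ cases the argument is short. Recall that the disjoint operations of lemma \ref{lem:bbox-m-rbc-ops-disjoint} are literally the original map with its codomain restricted, so postcomposing with the inclusion of the shrunk graph recovers the original map. Writing $\iota_{G_4} : \DROP_{h(b)}(G_4) \hookrightarrow G_4$ and $\iota_{G_3} : \DROP_b(G_3) \hookrightarrow G_3$, I would compute
\[ \iota_{G_4} \circ \left(\DROP_{h(b)}(h) \circ \DROP_b(g)\right) = h \circ \iota_{G_3} \circ \DROP_b(g) = h \circ g = \iota_{G_4} \circ \DROP_{h(b)}(h \circ g), \]
where the first equality is the square \eqref{eq:bbox-rbc-ops-image-drop} defining $\DROP_{h(b)}(h)$ and the remaining steps use $\iota_{G_3} \circ \DROP_b(g) = g$ and the disjoint characterisation of $\DROP_{h(b)}(h\circ g)$. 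Since $\iota_{G_4}$ is monic, the two composites agree. The first identity $\DROP_b(g \circ f) = \DROP_b(g) \circ f$ is even more immediate: both sides become $g \circ f$ with codomain restricted to $\DROP_b(G_3)$, seen by cancelling $\iota_{G_3}$ after noting $\iota_{G_3} \circ \DROP_b(g) \circ f = g \circ f = \iota_{G_3} \circ \DROP_b(g \circ f)$. The $\KILL$ computations are identical, using the squares \eqref{eq:bbox-rbc-ops-image-kill} and the fact (already established in the proof of lemma \ref{lem:bbox-m-rbc-ops-disjoint}) that $\im(g)$ cannot meet $B(b)$, so the codomain restriction is legitimate.

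The $\COPY$ case is the main obstacle, for the same reason it was in lemma \ref{lem:bbox-f-ops-composition}: the pushout defining a copy is only determined up to a unique isomorphism, and each operation is specified via a chosen pair of coprojections $p_1, p_2$. The fix is to make the choices coherently. Concretely, I would fix an index $i \in \{1,2\}$ once and for all, and use the characterisations $\COPY_b(g) = p_i^{G_3} \circ g$ (disjoint, from \eqref{eq:bbox-m-rbc-copy-disjoint}) and $\COPY_{h(b)}(h) \circ p_i^{G_3} = p_i^{G_4} \circ h$ (in-image, from \eqref{eq:bbox-rbc-ops-image-copy}). Then
\[ \COPY_{h(b)}(h) \circ \COPY_b(g) = \COPY_{h(b)}(h) \circ p_i^{G_3} \circ g = p_i^{G_4} \circ h \circ g = \COPY_{h(b)}(h \circ g), \]
the last equality being the disjoint characterisation of $\COPY_{h(b)}(h \circ g)$. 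For the first $\COPY$ identity one computes similarly $\COPY_b(g) \circ f = p_i^{G_3} \circ g \circ f = \COPY_b(g \circ f)$. As in lemma \ref{lem:bbox-f-ops-composition}, this establishes the claim up to the unique isomorphism relating the possible pushout objects, which is all that is required.
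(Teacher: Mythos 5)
Your proof is correct and follows essentially the same route as the paper's: reduce everything to the defining diagrams of lemmas \ref{lem:bbox-m-rbc-ops-in-image} and \ref{lem:bbox-m-rbc-ops-disjoint} and cancel the monic inclusions for $\DROP$ and $\KILL$. The only cosmetic difference is in the $\COPY$ case, where the paper chains through the factorisation $\COPY_b(-) = \iota \circ \KILL_b(-)$ and the intertwining relation $\COPY_{h(b)}(h) \circ \iota_3 = \iota_4 \circ \KILL_{h(b)}(h)$, while you use the coprojection characterisations \eqref{eq:bbox-m-rbc-copy-disjoint} and \eqref{eq:bbox-rbc-ops-image-copy} directly --- an equivalent repackaging, with your explicit handling of the up-to-isomorphism caveat matching the paper's.
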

\begin{proof}
    We can deduce the first $\DROP$ case from the following diagram,
    where the right and outer triangles commute by definition, and hence
    the left triangle commutes, since the inclusion map on the right is
    monic.
    \begin{equation*}
        \begin{tikzcd}[ampersand replacement=\&]
            \&\&\DROP_b(G_3) \dar[right hook->]
            \\
            G_1 \arrow[bend left]{urr}{\DROP_b(g \circ f)}
                \arrow[swap]{r}{f}
            \& G_2 \arrow[sloped,pos=0.8]{ur}{\DROP_b(g)}
                \arrow[swap]{r}{g}
            \& G_3
        \end{tikzcd}
    \end{equation*}
    The second drop case follows in the same way from
    \begin{equation*}
        \begin{tikzcd}[ampersand replacement=\&,column sep=huge]
            \&\DROP_b(G_3) \dar[right hook->]
            \rar{\DROP_{h(b)}(h)}
            \&\DROP_b(G_4) \dar[right hook->]
            \\
            G_2 \arrow[bend left=45]{urr}{\DROP_{h(b)}(g \circ f)}
            \arrow[sloped,pos=0.8]{ur}{\DROP_b(g)}
                \arrow[swap]{r}{g}
            \& G_3 \arrow[swap]{r}{h}
            \& G_4
        \end{tikzcd}
    \end{equation*}
    and the $\KILL$ cases use the same argument.

    For $\COPY$, we recall that $\COPY_b(g \circ f)$ is defined to be
    $\iota_3 \circ \KILL_b(g \circ f)$, where $\iota_3 : \KILL_b(G_3)
    \rightarrow \COPY_b(G_3)$ is the inclusion map induced by the
    pushout square that defined $\COPY_b(G_3)$.  We know that this is
    then $\iota_3 \circ \KILL_b(g) \circ f$, which is $\COPY_b(g) \circ
    f$ as required.

    From the proof of \ref{lem:bbox-m-rbc-ops-in-image}, we note that
    \[ \COPY_{h(b)}(h) \circ \iota_3 = \iota_4 \circ \KILL_{h(b)}(h) \]
    where $\iota_4 : \KILL_b(G_4) \rightarrow \COPY_b(G_4)$ is defined
    similarly to $\iota_3$.  Then we have
    \begin{align*}
        \COPY_{h(b)}(h \circ g) &= \iota_4 \circ \KILL_{h(b)}(h \circ g)
        \\
        &= \iota_4 \circ \KILL_{h(b)}(h) \circ \KILL_b(g) \\
        &= \COPY_{h(b)}(h) \circ \iota_3 \circ \KILL_b(g) \\
        &= \COPY_{h(b)}(h) \circ \COPY_b(g)
    \end{align*}
\end{proof}

From this, we can deduce that the two squares of
\eqref{eq:rewrite-op-disjoint} commute.  All we need now is that both
squares in each diagram are pushouts.  For this, we recall that pushouts
of graphs are just unions of graphs.  So a commuting square of
monomorphisms
\[ \begin{tikzcd}
    A \arrow{r}{f} \arrow[swap]{d}{g} & B \arrow{d}{h} \\
    C \arrow[swap]{r}{i} & D
\end{tikzcd} \]
is a pushout if and only if $h$ and $i$ cover $D$ and the image of $h
\circ f = i \circ g$ is exactly the intersection of the images of $h$
and $i$.  To get that this is the case for
\eqref{eq:rewrite-op-disjoint}, we use the following lemma.

\begin{lemma}\label{lem:m-bbox-ops-images}
    Let $f : G \rightarrow H$ be a $!$-graph monomorphism that reflects
    $!$-box containment, and let $b \in\:!(H)$.  If $\iota_D :
    \DROP_b(H) \rightarrow H$ is the normal inclusion, then for any
    vertex or edge $x$ of $\DROP_b(H)$, $\iota_D(x)$ is in the image of
    $f$ if and only $x$ is in the image of $\DROP_b(f)$, and similarly
    for $\KILL_b$, and if $p_i^H : H \rightarrow \COPY_b(H)$ is one of
    the pushout maps from the definition of $\COPY_b(H)$, for any vertex
    or edge $x$ of $H$, $p_i^H(x)$ is in the image of $\COPY_b(H)$ if
    and only if $x$ is in the image of $f$.
\end{lemma}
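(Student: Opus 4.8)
The plan is to case-split on whether $b$ lies in the image of $f$, since the two regimes are built by different constructions: when $b = f(b')$ we use the operated maps of Lemma~\ref{lem:bbox-m-rbc-ops-in-image}, and when $b$ is disjoint from the image we use those of Lemma~\ref{lem:bbox-m-rbc-ops-disjoint}. In each of the three operations and each regime, one implication of the biconditional merely pushes a known preimage through the commuting square recorded in those lemmas and is immediate; the substantial direction, which must produce a preimage under the operated map, is where the work lies. I would dispatch $\DROP$ and $\KILL$ together, since both operated maps are just $f$ with its domain and/or codomain restricted, and then treat $\COPY$ on its own, as its pushout construction is the genuine obstacle.

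For $\DROP_b$ and $\KILL_b$, let $\iota$ be the inclusion $\DROP_b(H)\hookrightarrow H$ (resp. $\KILL_b(H)\hookrightarrow H$) and suppose $\iota(x)=f(z)$. It suffices to show $z$ survives the corresponding operation on the domain, i.e. that $z\in\DROP_{b'}(G)$ (resp. $z\in\KILL_{b'}(G)$); monicity of $\iota$ and the commuting square then force $x$ into the image of the operated map. For $\DROP$ this is clear: $f$ is monic and $f(b')=b$, while $\iota(x)\neq b$ because $b\notin\DROP_b(H)$, so $z\neq b'$. For $\KILL$ the point is that $z$ cannot lie in $B(b')$: if it did, then $f(z)$ would lie in $B(b)$ (directly if $z=b'$, otherwise via the image of the edge $b'\to z$), contradicting $x\in\KILL_b(H)$. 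In the disjoint regime both collapse, since $f$ reflecting $!$-box containment forces $\im(f)$ to miss $B(b)$ entirely, so the operated maps literally equal $f$ with a restricted codomain and the biconditional is trivial.

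The heart of the argument is $\COPY_b$, where I would exploit that pushouts in $\catGraph/\mathcal{G}_{T!}$ are unions with an explicit intersection. From the defining pushout~\eqref{eq:copy-pushout}, the two coprojections $p_1^H,p_2^H:H\to\COPY_b(H)$ are each monic (a pushout of a monomorphism is monic), and for $i\neq j$ one has $p_i^H(x)=p_j^H(y)$ exactly when $x=y$ and $x\in H\graphminus B(b)$. Given $p_i^H(x)\in\im(\COPY_b(f))$, I would write the witness as $w=p_j^G(u)$ using that $p_1^G,p_2^G$ cover $\COPY_b(G)$, and then apply the commutation $\COPY_b(f)\circ p_j^G=p_j^H\circ f$ from \eqref{eq:bbox-rbc-ops-image-copy} to reduce to $p_i^H(x)=p_j^H(f(u))$. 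If $i=j$, monicity of $p_i^H$ yields $x=f(u)$ at once; if $i\neq j$, the coincidence condition forces $x=f(u)\in H\graphminus B(b)$, so in either case $x\in\im(f)$. In the disjoint regime~\eqref{eq:bbox-m-rbc-copy-disjoint} we have $\COPY_b(f)=p_i^H\circ f$ with $\im(f)\subseteq H\graphminus B(b)$, so a single appeal to monicity of $p_i^H$ suffices.

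The step I expect to be the main obstacle is the $i\neq j$ sub-case of the $\COPY$ converse: it is precisely where the two glued copies of $H$ interfere, and closing it cleanly relies on the set-level description of the pushout (the coprojections agree only on $H\graphminus B(b)$) rather than any purely formal diagram chase. Every other case reduces to restriction of $f$ and monicity of the relevant inclusion or coprojection.
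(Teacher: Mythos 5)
Your proposal is correct and takes essentially the same route as the paper's proof: the same case split on whether $b \in \im(f)$, the same use of the commuting squares from Lemmas \ref{lem:bbox-m-rbc-ops-in-image} and \ref{lem:bbox-m-rbc-ops-disjoint} for the easy direction, and the same set-level pushout argument for the $\COPY$ converse (coprojections monic and agreeing exactly on $H \graphminus B(b)$, including the $i \neq j$ coincidence case, which is precisely how the paper identifies $v$ and $f(u)$ through their common preimage in $H \graphminus B(b)$). The only cosmetic difference is that the paper runs the converse separately for vertices and edges (handling edges via fullness of the subtracted subgraphs), whereas you argue uniformly via survival of the preimage — a routine detail your framework already accommodates.
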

\begin{proof}
    If $b$ is not in the image of $f$, the arguments for all the
    operations are very similar; we will use $\DROP$ as the
    example.  We know that the following diagram commutes (from lemma
    \ref{lem:bbox-m-rbc-ops-disjoint}):
    \begin{equation*}
        \begin{tikzcd}[column sep=huge]
            G \arrow{r}{\DROP_b(f)} \arrow[swap]{dr}{f}
            & \DROP_b(H) \arrow[right hook->]{d}{\iota}
            \\ & H
        \end{tikzcd}
    \end{equation*}
    Let $x$ be a vertex or edge of $\DROP_b(H)$.  If $x$ is in the image
    of $\DROP_b(f)$, $\iota(x)$ must be in the image of $f$ by
    commutativity of the diagram.  Conversely, if $\iota(x)$ is in the
    image of $f$, its preimage must map to $x$ under $\DROP_b(H)$, since
    $\iota$ is monic (and hence injective) and the diagram commutes.

    So suppose $b$ is in the image of $f$, and let $b'$ be its preimage.
    We will start with the $\DROP$ case again.  We have the following
    diagram (from lemma \ref{lem:bbox-m-rbc-ops-in-image}):
    \begin{equation*}
        \begin{tikzcd}[column sep=huge]
            \DROP_{b'}(G) \arrow{r}{\DROP_b(f)}
                       \arrow[right hook->,swap]{d}{\iota_G}
            & \DROP_b(H) \arrow[right hook->]{d}{\iota_H}
            \\ G \arrow{r}{f} & H
        \end{tikzcd}
    \end{equation*}
    If $x$ is a vertex or edge of $\DROP_b(H)$ that is in the image of
    $\DROP_b(f)$, then $\iota_H(x)$ is in the image of $f$, by
    commutativity of the diagram.  For the converse, we will consider
    vertices and edges separately.

    Suppose $v$ is a vertex of $\DROP_b(H)$ with $\iota_H(v)$ in the
    image of $f$.  Since $f$ is monic, there is a unique $v'$ with
    $f(v') = \iota_H(v)$.  Now $\iota_H(v)$ cannot be $b$, so $v'$
    cannot be $b'$, and hence $v'$ is in the image of $\iota_G$.  But
    then the preimage of $v'$ under $\iota_G$ must map to $v$ by
    $\DROP_b(H)$, since all the maps are monic and the diagram commutes,
    so $v$ is in the image of $\DROP_b(H)$.

    Now suppose $e$ is an edge of $\DROP_b(H)$ with $\iota_H(e)$ in the
    image of $f$.  Again, there is a unique preimage $e'$ under $f$, and
    the same must be true of the endpoints of the edge.  So, by what we
    have already proved, the endpoints of $e'$ must be in the image of
    $\iota_G$, and hence $e'$ must be (since $\iota_G$ is full in $G$),
    and we have what we require by commutativity of the diagram.

    The $\KILL$ case is almost identical, except that in the vertex
    case, we must note that $v'$ is not in $B(b')$, as $v$ is not in
    $B(b)$.

    The $\COPY$ case is also similar.  This time we have the following
    diagram (actually two diagrams, with $i \in \{1,2\}$).
    \begin{equation*}
        \begin{tikzcd}[column sep=huge]
            \COPY_{b'}(G) \arrow{r}{\COPY_b(f)}
            & \COPY_{b}(H)
            \\
            G \arrow{r}{f} \arrow{u}{p^G_i}
            & H \arrow[swap]{u}{p^H_i}
        \end{tikzcd}
    \end{equation*}
    As in the $\DROP$ case, if $x$ is a vertex or edge of $H$ that is in
    the image of $f$, then $p^H_i(x)$ is in the image of $\COPY_b(f)$.

    Suppose $v$ is a vertex of $H$ with $p^H_1(v)$ in the image of
    $\COPY_b(f)$, which we will denote $f'$ (the case for $p^H_2$ is
    similar).  We then have a vertex $v'$ of $\COPY_{b'}(G)$ that maps
    to $p^H_1(v)$ by $f'$.  Now $v'$ is in the image of at least one of
    $p^G_1$ and $p^G_2$.  If $v'$ is in the image of $p^G_2$, consider
    its preimage $u$ in $G$.  $f'(p^G_2(u)) = p^H_2(f(u)) = p^H_1(v)$,
    and hence $v$ and $f(u)$ are both in the image of the inclusion
    $\iota_H : H \graphminus B(b) \rightarrow H$, by the pushout that
    defines $\COPY_b(H)$, with a common preimage.  But this means they
    must, in fact, be the same vertex, and hence $v$ is in the image of
    $f$.  The only other option is that $v'$ is in the image of $p^G_1$;
    we will again call the preimage $u$.  But then we have $f(u) = v$ by
    the above diagram of monomorphisms, and so $v$ is in the image of
    $f$.

    The argument for edges is similar to the $\DROP$ case.
\end{proof}

\begin{theorem}\label{thm:rw-bbox-op-preservation}
    Consider the following rewrite, where $m$ is a $!$-graph
    matching,
    \begin{equation*}
        \begin{tikzcd}
            L \arrow[swap]{d}{m}
            & I \arrow[swap]{l}{i_1} \arrow{r}{i_2} \arrow{d}{d}
            & R \arrow{d}{r}
            \\
            G \NEbracket
            & D \arrow{l}{g} \arrow[swap]{r}{h}
            & H \NWbracket
        \end{tikzcd}
    \end{equation*}
    and let $b$ be a $!$-vertex of $G$ (we will also consider it to be a
    $!$-vertex of $D$ and $H$).  Then for any $!$-box operation $\OP_b$
    on $b$, if $b$ is not in the image of $m$, the following is a
    rewrite
    \begin{equation*}
        \begin{tikzcd}
            L \arrow[swap]{d}{\OP_b(m)}
            & I \arrow[swap]{l}{i_1}
                       \arrow{r}{i_2}
                       \arrow{d}{\OP_b(d)}
            & R \arrow{d}{\OP_b(r)}
            \\
            \OP_b(G)
            & \OP_b(D) \arrow{l}{\OP_b(g)}
                       \arrow[swap]{r}{\OP_b(h)}
            & \OP_b(H)
        \end{tikzcd}
    \end{equation*}
    and if $b$ is in the image of $m$, we consider it to be a
    $!$-vertex of $L$, $I$ and $R$, and the following is a rewrite:
    \begin{equation*}
        \begin{tikzcd}
            \OP_b(L) \arrow[swap]{d}{\OP_b(m)}
            & \OP_b(I) \arrow[swap]{l}{\OP_b(i_1)}
                       \arrow{r}{\OP_b(i_2)}
                       \arrow{d}{\OP_b(d)}
            & \OP_b(R) \arrow{d}{\OP_b(r)}
            \\
            \OP_b(G)
            & \OP_b(D) \arrow{l}{\OP_b(g)}
                       \arrow[swap]{r}{\OP_b(h)}
            & \OP_b(H)
        \end{tikzcd}
    \end{equation*}
\end{theorem}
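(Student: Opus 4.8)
The plan is to show that in each of the two diagrams the left and right squares are pushouts of $!$-graphs, since a rewrite is by definition a pair of pushouts (as in \eqref{eq:rewrite}). I would first dispose of the ambient bookkeeping: by theorem \ref{thm:instantiation} every object appearing is again a $!$-graph, and by lemmas \ref{lem:bbox-m-rbc-ops-in-image} and \ref{lem:bbox-m-rbc-ops-disjoint} every morphism $\OP_b(m)$, $\OP_b(g)$, $\OP_b(i_1)$, etc.\ is a $!$-graph monomorphism that reflects $!$-box containment; moreover lemma \ref{lem:bbox-matching-ops-in-image} tells us $\OP_b(m)$ is again a $!$-graph matching. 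Because $\catBGraph_T$ is a full subcategory of $\catGraph/\mathcal{G}_{T!}$ and the candidate corner object is already known to be a $!$-graph, it suffices to verify the pushout in $\catGraph/\mathcal{G}_{T!}$, where (as recalled just before the theorem) a commuting square of monomorphisms is a pushout exactly when the two legs into the corner cover it and their image-intersection is the image of the shared composite.

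Next I would establish commutativity of the squares, exploiting that the original squares of \eqref{eq:rewrite} commute, i.e.\ $m \circ i_1 = g \circ d$ and $r \circ i_2 = h \circ d$. In the in-image case \eqref{eq:rewrite-op-in-image}, applying lemma \ref{lem:bbox-f-ops-composition} to the two factorisations of the common map $I \to G$ gives $\OP_b(m)\circ\OP_b(i_1) = \OP_b(m\circ i_1) = \OP_b(g\circ d) = \OP_b(g)\circ\OP_b(d)$, and symmetrically on the right using $r\circ i_2 = h\circ d$. In the disjoint case \eqref{eq:rewrite-op-disjoint} the top row $L,I,R$ is untouched, and since the left pushout of \eqref{eq:rewrite} forces $b$ to lie in $\im(g)$ but outside $\im(m)$ and outside $\im(d)$, lemma \ref{lem:bbox-f-disjoint-ops-composition} yields $\OP_b(m)\circ i_1 = \OP_b(m\circ i_1) = \OP_b(g\circ d) = \OP_b(g)\circ\OP_b(d)$, using that the disjoint construction of $\OP_b(-)$ depends only on the underlying map and $b$.

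For the pushout property itself I would transport the covering and intersection conditions of the original (genuine) pushouts through the comparison maps supplied by lemma \ref{lem:m-bbox-ops-images}. Concretely, the original left square being a pushout says $\im(m)\cup\im(g)=G$ and $\im(m)\cap\im(g)=\im(m\circ i_1)$. For $\DROP_b$ and $\KILL_b$ the inclusion $\iota_G:\OP_b(G)\to G$ satisfies $\iota_G(x)\in\im(f) \Leftrightarrow x\in\im(\OP_b(f))$ for each relevant $f$, so covering of $\OP_b(G)$ and the identity $\im(\OP_b(m))\cap\im(\OP_b(g))=\im(\OP_b(m\circ i_1))=\im(\OP_b(m)\circ\OP_b(i_1))$ follow pointwise; for $\COPY_b$ the single inclusion is replaced by the two pushout legs $p_1^G,p_2^G$, whose joint image is all of $\COPY_b(G)$, and lemma \ref{lem:m-bbox-ops-images} again converts the original covering/intersection data into the required one. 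The right-hand square is handled identically via $h$ in place of $g$ and $r$ in place of $m$.

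The hard part will be the $\COPY_b$ case. Unlike $\DROP$ and $\KILL$, the comparison is through two maps $p_1^G,p_2^G$ rather than a single inclusion, so I must check that a point of $\COPY_b(G)$ hit by both legs still lands correctly in the intersection, and — because $\COPY$ on morphisms is only determined up to isomorphism — I must make the choices of $p_1,p_2$ on $L$, $I$, $D$, $G$, $R$, $H$ mutually compatible so that the squares commute on the nose; this is exactly the coherence flagged in lemmas \ref{lem:bbox-f-ops-composition} and \ref{lem:bbox-f-disjoint-ops-composition}, and I would fix one consistent system of pushout data at the outset and reuse it throughout.
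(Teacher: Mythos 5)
Your proposal is correct and follows essentially the same route as the paper's proof: establish commutativity and the ancillary properties via lemmas \ref{lem:bbox-f-ops-composition} and \ref{lem:bbox-f-disjoint-ops-composition}, then verify the pushout property set-theoretically (covering plus intersection equals the image of the common composite) by transporting the original pushout data through lemma \ref{lem:m-bbox-ops-images}, with the $\COPY_b$ case handled separately through the two legs $p_1^G,p_2^G$. Your explicit remarks that $b\notin\im(d)$ in the disjoint case and that one must fix compatible pushout data for $\COPY$ up front are points the paper leaves implicit, but they do not constitute a different argument.
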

\begin{proof}
    We have already established that both of these are commuting
    diagrams of $!$-graph monomorphisms, that the top span in each
    diagram is a $!$-graph rewrite rule and that $\OP_b(m)$ is a
    $!$-graph matching.  All we have left to show is that all the
    squares are pushouts.

    We will first consider the $\DROP_b$ case.  Let $\iota_G :
    \DROP_b(G) \rightarrow G$ be the inclusion map, and similarly for
    the other $!$-graphs.  Suppose we have a vertex or edge $x$
    in $\DROP_b(G)$. Then $\iota_G(x)$ must be in the image of at least
    one of $m$ or $g$, and so $x$ must (by lemma
    \ref{lem:m-bbox-ops-images}) be in the image of at least one of
    $\DROP_b(m)$ or $\DROP_b(g)$.  So these maps cover $\DROP_b(G)$.
    If $x$ is in the image of both these maps, $\iota_G(x)$ must be in
    the image of both $m$ and $g$.  Let $x'$ be the preimage of $x$
    under $\DROP_b(g)$.  Then $\iota_D(x')$ is the preimage of
    $\iota_G(x)$ under $g$, which is in the image of $d$, since this is
    a pushout of graphs.  But then applying lemma
    \ref{lem:m-bbox-ops-images} again gives us that $x'$ must be in the
    image of $\DROP_b(d)$.  Then commutativity of the diagrams and the
    fact that all the maps are monic gives us that the preimage of $x$
    under $\DROP_b(m)$ must be in the image of $i_1$ or $\DROP_b(i_1)$
    (depending on whether $b$ is in the image of $m$).  So $\DROP_b(m)$
    and $\DROP_b(g)$ intersect exactly on the part of the graph mapped
    to from $I$ (or $\DROP_b(I)$), and so the left part of the diagram
    is a pushout.  The same argument applies to the right squares, and
    so we have the result for $\DROP_b$.  $\KILL_b$ follows in the same
    way.

    The argument for $\COPY_b$ is similar; here we have to note that if
    we have $x$ in $\COPY_b(G)$, it must be in the image of at least one
    of the pushout maps from $G$ that define $\COPY_b(G)$.  We can call
    this map $p_1^G$ and its preimage $x'$.  Then, as before, we use
    lemma \ref{lem:m-bbox-ops-images} to get that $x$ is the in the
    image of at least one of $\COPY_b(m)$ or $\COPY_b(g)$, since $x'$ is
    in the image of at least one of $m$ or $g$.  Similarly, if $x$ is in
    the image of both maps, it must be the image of something in
    $\COPY_b(I)$ or $I$ (depending on whether $b$ is in the image of
    $m$) since then $x'$ is in the image of both $m$ and $g$ and is
    hence the image of something in $I$.
\end{proof}

In the above theorem, let $G \rewritesto H$ be the induced $!$-graph
rewrite of the span $G \leftarrow D \rightarrow H$ in the rewrite.
It should be clear that the rewrite rule induced by the span
\[ \OP_b(G) \xleftarrow{\OP_b(g)} \OP_b(D) \xrightarrow{\OP_b(h)}
    \OP_b(H) \]
is exactly the result of applying $\OP_b$ to $G \rewritesto H$.  So we
have that if $G \rewritesto H$ is a $!$-graph rewrite derived from a
rewrite $L \rewritesto R$, $\OP_b(G \rewritesto H)$ is derived from
either $L \rewritesto R$ or $\OP_b(L \rewritesto R)$, and hence have the
following corollary:

\begin{corollary}
    \label{cor:rw-semantics}
    Suppose $L \rewritesto R$ is a $!$-graph rewrite rule that rewrites
    $G$ to $H$ at a $!$-graph matching, and $G \rewritesto H$ is the
    associated $!$-graph rewrite rule.  Then for any concrete instance
    $G' \rewritesto H'$ of this rule, there is a concrete instance of $L
    \rewritesto R$ that rewrites $G'$ to $H'$, and the associated string
    graph rewrite rule $G' \rewritesto H'$ is a concrete instance of $G
    \rewritesto H$.
\end{corollary}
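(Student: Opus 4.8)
The plan is to argue by induction on the length of the instantiation $S$ witnessing $G' \rewritesto H'$ as a concrete instance of $G \rewritesto H$, pushing each $!$-box operation up through the entire rewrite diagram \eqref{eq:rewrite} using Theorem \ref{thm:rw-bbox-op-preservation}.

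First I would record the data. By Definition \ref{def:instantiation}, the concrete instance $G' \rewritesto H'$ is obtained from $G \rewritesto H$ by a finite sequence $S = \OP_{b_k};\cdots;\OP_{b_1}$ of $!$-box operations whose net effect removes every $!$-vertex. At each stage the operated $!$-vertex either lies in the image of the current left-hand morphism or it does not, and Theorem \ref{thm:rw-bbox-op-preservation} tells me exactly how to lift that single operation: in the ``in image'' case I apply $\OP$ to all six graphs as in \eqref{eq:rewrite-op-in-image}, and in the ``disjoint'' case I apply it only to the bottom span while leaving $L,I,R$ fixed as in \eqref{eq:rewrite-op-disjoint}. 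In both cases the theorem guarantees that the result is again a rewrite whose left-hand morphism is a $!$-graph matching.

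Second, the inductive step. Maintaining the invariant that the current diagram is a rewrite at a $!$-graph matching whose bottom span induces $S_j(G \rewritesto H)$, I would apply $\OP_{b_{j+1}}$ and invoke Theorem \ref{thm:rw-bbox-op-preservation} to re-establish the invariant; the remark immediately preceding the corollary identifies the induced rewrite rule of the new bottom span as $\OP_{b_{j+1}}$ applied to the old one, so after all $k$ steps the bottom span induces exactly $S(G \rewritesto H) = G' \rewritesto H'$. This already yields the second claim, that the string graph rewrite rule associated with the final rewrite is the chosen concrete instance of $G \rewritesto H$. For the first claim I must show the top span has \emph{also} become concrete. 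The key observation is that the lifted left-hand morphism $m' : L_k \rightarrow G'$ is still a $!$-graph matching, hence a monomorphism by Lemma \ref{lemma:spg-mono-inj}, so it restricts to an injection $!(L_k) \hookrightarrow\; !(G')$. Since $G'$ is concrete we have $!(G') = \varnothing$, forcing $!(L_k) = \varnothing$; and because $L_k \rewritesto R_k$ is a $!$-graph rewrite rule, the isomorphism $\beta(L_k) \cong \beta(R_k)$ makes $R_k$ concrete as well. Thus $L_k \rewritesto R_k$ is a concrete instance of $L \rewritesto R$ and $m'$ is an honest string graph matching of it onto $G'$.

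The main obstacle I anticipate is the bookkeeping that keeps the image-tracking consistent across the two cases of the induction: I need to verify that operating on a $!$-vertex outside the image of the matching never disturbs the correspondence between the top and bottom spans (using that such operations leave $L,I,R$ untouched, and that, since the matching reflects $!$-box containment, its image avoids the killed or duplicated $!$-boxes), and that copying a $!$-vertex inside the image sends both copies to copies that remain inside the image. These facts are exactly the content of Lemmas \ref{lem:bbox-m-rbc-ops-in-image} and \ref{lem:bbox-m-rbc-ops-disjoint}, but assembling them into a single clean invariant that survives the full sequence of operations is the step requiring the most care.
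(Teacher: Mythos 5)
Your proposal is correct and follows essentially the same route as the paper, which derives the corollary by iterating Theorem \ref{thm:rw-bbox-op-preservation} over the instantiation sequence, using the remark that each lifted rewrite's bottom span induces $\OP_b$ applied to the previous rule. Your closing argument that the top span ends up concrete (any type-preserving morphism sends $!$-vertices to $!$-vertices, so $!(L_k)$ injects into $!(G') = \varnothing$, and $\beta(L_k) \cong \beta(R_k)$ via the interface) is a detail the paper leaves implicit, and you fill it in correctly.
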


This is the semantics of rewriting $!$-graphs using $!$-graph rewrite
rules: that the rewrite holds for all concrete instances of the
rewritten graphs.  In this way, just as $!$-graph rewrite rules allowed
us to represent inifinitely many string graph rewrite rules at once,
$!$-graph rewriting allows us to do infinitely many string graph
rewrites at once.

It is important to note that if we can apply an instantiation to a
rewrite, we can apply it to a sequence of rewrites.  If we consider two
consecutive rewrites \[G \rewritesto H \rewritesto J\] then we know that
$G$ and $H$ contain the same $!$-vertices, and the same goes for $H$ and
$J$.  If we can instantiate $G \rewritesto H$ to $G' \rewritesto H'$,
applying the same instantiation to $H \rewritesto J$ will produce a
rewrite from $H'$ to some graph $J'$, so we will have a rewrite sequence
\[G' \rewritesto H' \rewritesto J'\]

\begin{example}
    Recall the rewrite in example \ref{ex:bialg-spider-1}:
    \[ \input{bialg-spider-rewrite-pos-labelled.tikz} \]
    If we perform $\COPY_{b_1}$ and $\KILL_{b_4}$, we get the following
    rewrite:
    \[ \input{bialg-spider-rewrite-pos-exp.tikz} \]
\end{example}


\chapter{A Logic of $!$-Graphs} 
\label{ch:rules}

In this chapter, we will use the results of chapter \ref{ch:rw-bang-graphs} to
build a logic of $!$-graphs that can be implemented by a computer, using graph
rewriting as the core operation.  This will include an equational logic much
like that in section \ref{sec:graph-eqs}, together with some inference rules
that build on that foundation.

In the term world, these rules would be based on predicates involving logical
operators rather than purely on equations (although an equation is, of course,
a type of predicate).  For example, a $\bigwedge$-introduction rule might
state that if you can provide a proof of a predicate $P$ and a proof of a
predicate $Q$, you can deduce that there is a proof of the predicate $P \wedge
Q$.  This would be written
\[ \inferrule{P \\ Q}{P \wedge Q} \]

Another common rule is induction, where $P$ is a predicate over the natural
numbers (or some other well-founded set):
\[ \inferrule{P(0) \\ P(n) \Rightarrow P(n+1)}{\forall n.P(n)} \]

Our graphical language does not contain logical connectives, nor does it
contain the natural numbers used in the above induction rule.  However,
$!$-boxes do induce a well-founded structure amenable to a graphical analogue
of induction.  In this chapter, we will present this and other rules, and
prove their soundness.

\section{$!$-Graph Equational Logic}
\label{sec:bg-eqs}

Let $E$ be a set of $!$-graph equations.  Our equational logic should look
very familiar:

\begin{mathpar}
    (\textsc{Axiom}) \enskip
    \inferrule{G \rweq_{i,j} H \in E}{E \vdash G \rweq_{i,j} H}
    \and
    (\textsc{Refl}) \enskip
    \inferrule{ }{E \vdash G \rweq_{b_G,b_G} G}
    \and
    (\textsc{Sym}) \enskip
    \inferrule{E \vdash G \rweq_{i,j} H}{E \vdash H \rweq_{j,i} G}
    \and
    (\textsc{Trans}) \enskip
    \inferrule{E \vdash G \rweq_{i,j} H \\ E \vdash H \rweq_{k,l} K
             }{E \vdash G \rweq_{p,q} K}
    \and
    (\textsc{Leibniz}) \enskip
    \inferrule{E \vdash G \rweq_{i,j} H}{E \vdash G' \rweq_{i',j'} H'}
    \and
    (\textsc{Homeo}) \enskip
    \inferrule{E \vdash G \rweq_{i,j} H}{E \vdash G^* \rweq_{i^*,j^*} H^*}
\end{mathpar}
where $p := i$ and $q := l \circ k^{-1} \circ j$,
\[ \begin{tikzcd}
    G \arrow[swap,right hook->]{d}
    & I \arrow[swap]{l}{i} \arrow{r}{j} \arrow[right hook->]{d}
    & H \arrow[right hook->]{d}
    \\ G' \NEbracket
    & D \arrow[left hook->]{l} \arrow[right hook->]{r}
    & H' \NWbracket
    \\
    & I' \arrow{ul}{i'} \arrow[swap]{ur}{j'}
         \arrow[right hook->]{u} &
\end{tikzcd} \]
is a $!$-graph rewrite (where, in particular, all morphisms reflect $!$-box
containment), and
\[ G \xleftarrow{i} I \xrightarrow{j} H \qquad \textrm{and} \qquad
   G^* \xleftarrow{i^*} I \xrightarrow{j^*} H^* \]
are wire-homeomorphic.

To these, we add inference rules for instantiations of $!$-graph equations:
\begin{mathpar}
    (\textsc{Copy}) \enskip
    \inferrule{E \vdash G \rweq_{i,j} H}{E \vdash \COPY_b(G \rweq_{i,j} H)}
    \and
    (\textsc{Drop}) \enskip
    \inferrule{E \vdash G \rweq_{i,j} H}{E \vdash \DROP_b(G \rweq_{i,j} H)}
    \and
    (\textsc{Kill}) \enskip
    \inferrule{E \vdash G \rweq_{i,j} H}{E \vdash \KILL_b(G \rweq_{i,j} H)}
\end{mathpar}

Since our rewriting process implements $\textsc{Axiom}$ and $\textsc{Leibniz}$
up to wire homeomorphism and up to instantiation of $!$-graph rewrite rules,
the above logic is equivalent to the relation $\rewriteequivin{E}$.

For a set of $!$-graph equations $E$, let $\overset{\sim}{E}$ be the set of
concrete instances of the equations of $E$.  Then the semantics of $E \vdash G
\rweq H$ we use is that, for any concrete instance $G' \rweq H'$ of $G \rweq
H$,
\[ \overset{\sim}{E} \vdash G' \rweq H' \]
in the equational logic of string graphs.  Soundness of the above logic with
respect to these semantics follows from corollary \ref{cor:rw-semantics} and
the fact that string graph rewriting implements the equational logic for
string graphs.

\section{Regular Forms of Instantiations}
\label{sec:bbox-op-seq-reg-forms}

A lot of the proofs in this chapter rely on reasoning about instantiations.
However, multiple instantiations can produce the same instance of a $!$-graph,
making them harder than necessary to reason about.  In particular, copying a
$!$-box and then killing one of the copies results in the original graph (up
to isomorphism).

In this section, we will demonstrate that it is possible to, if not completely
remove these redunancies from the set of allowed instantiations, at least
require a certain structure that is easy to reason about without affecting the
set of possible instances of a $!$-graph.

We will concentrate on instantiations of $!$-graphs, but the results extend to
$!$-graph equations.

\subsection{Depth-Ordered Form}

\begin{definition}[Depth]
    Let $b$ be a $!$-vertex of a $!$-graph, and $P(b) = \textrm{pred}(b)
    \graphminus b$ the set of parent $!$-vertices of $b$.  Then the
    \textit{depth} of $b$ is defined by
    \[ \delta(b) =
        \begin{cases}
            0 & P(b) = \varnothing \\
            1 + \max\{\delta(c) | c \in P(b)\} & \textrm{otherwise}
        \end{cases}
    \]

    The depth of a $!$-box operation is the depth of the $!$-vertex it
    operates on.
\end{definition}

Another way of viewing the depth of a $!$-vertex is that it is the longest
ancestor-path to a top-level $!$-vertex.

We can now define a \textit{depth-ordered form} for instantiations.

\begin{definition}[Depth-ordered form]
    A instantiation is \textit{depth-ordered} if no $!$-box operation of depth
    $n$ is preceded by an operation of depth greater than $n$.
\end{definition}

So a depth-ordered instantiation has some number of $!$-box operations of
depth $0$, followed by some number of depth $1$, then of depth $2$ and so on.

We will demonstrate that any instantiation has a depth-ordered form that is
equivalent in the sense that the resulting graphs are isomorphic.  To do this,
we will present an algorithm that converts any instantiation into a
depth-ordered one, where each change that the algorithm makes preserves the
resulting graph.

We will need some results about what changes we can make, but first we need a
way to talk coherently about those changes.  For example, most of the changes
we make will be commuting two operations in an instantiation.

An instantiation is a sequence of operations that progressively transforms a
graph.  The operations in that sequence cannot be divorced from the graphs
they operate on.  Consider an intermediate stage of a hypothetical
instantiation; the operations up to this point have produced the graph
\[ \input{dnf-ex-1.tikz} \]
Suppose that the next operation in the sequence is $\COPY_b$.  Then the next
intermediate graph will be
\[ \input{dnf-ex-1-copy.tikz} \]
The depth of $b$ in $G_1$, and hence of $\COPY_b$, is $1$.  If the next
operation is $\DROP_c$, it will have depth $0$, so the sequence cannot be
depth-ordered.

What we would like to do is exchange the order of $\COPY_b$ and $\DROP_c$,
while still reaching the same graph $G_3$:
\[ \input{dnf-ex-1-copy-drop.tikz} \]
However, $c$ is a vertex of $G_2$, and does not actually exist in $G_1$.  $c$
\textit{is} derived from a $!$-vertex of $G_1$, though; which $!$-vertex is
obvious from the pictures of the graphs.

More generally, we can construct a morphism from $G_2$ to $G_1$ that maps each
vertex or edge to the one it was derived from.
\begin{definition}[Origin Map]
    Let $G_1$ be a $!$-graph, $b \in\;!(G_1)$ and $G_2 = \COPY_b(G_1)$.  Then
    $f : G_2 \rightarrow G_1$, as defined by the following pushout:
    \begin{equation}
        \begin{tikzcd}
            G_1 \graphminus B(b) \rar[right hook->] \dar[right hook->] &
            G_1 \dar \ar[bend left,double,-]{ddr} & \\
            G_1 \rar \ar[bend right,swap,double,-]{drr} &
            G_2 \ar{dr}{f} \NWbracket & \\
            && G_1
        \end{tikzcd}
        \label{eq:copy-origin-map}
    \end{equation}
    is called the \textit{origin map} for $\COPY_b$.

    The origin maps for $\KILL_b(G_1)$ and $\DROP_b(G_1)$ are the natural
    inclusion morphisms into $G_1$ (which exist because these operations are
    graph subtractions).
\end{definition}

Armed with the origin map $f : G_2 \rightarrow G_1$, we can apply
$\DROP_{f(c)}$ to $G_1$, obtaining another graph $G'_2$:
\[ \input{dnf-ex-1-drop.tikz} \]
Now we need to find the $!$-vertex of $G'_2$ that corresponds to $b$.  In this
case, since the origin map of a $\DROP$ operation is injective and $b$ is in
its image, it has a unique preimage $b'$ in $G'_2$
\[ \input{dnf-ex-1-drop-labelled.tikz} \]
and we can apply $\COPY_{b'}$ to $G'_2$ to get a graph $G'_3$
\[ \input{dnf-ex-1-drop-copy.tikz} \]
which is clearly isomorphic to $G_3$, and the origin maps for both sets of
sequences are the same.  $\COPY_b$ and $\DROP_c$ are an example of
\textit{commutable} operations.

\begin{definition}[Commutable operations]
    We say a sequence of two operations $\OP^1_b;\OP^2_c$ (with origin maps
    $f_1$ and $f_2$, respectively) is \textit{commutable} when $c$ is the
    only vertex that maps to $f_1(c)$ under $f_1$ and $b$ has a unique
    preimage under $f'_2$, the origin map of $\OP^2_{f_1(c)}$.
\end{definition}

Note that whether two operations are commutable depends on the $!$-vertices
they operate on.  Of course, if we say that the operations are commutable, we
expect them to have the same result in either order.

\begin{proposition}\label{prop:ops-commute}
    Let $H_1$ be a $!$-graph, $b \in\;!(H_1)$, $H_2 = \OP^1_b(H_1)$ with
    origin map $f_1$, $c \in\;!(H_2)$ and $H_3 = \OP^2_c(H_2)$ with origin map
    $f_2$.  If $\OP^1_b$ and $\OP^2_c$ are commutable with commuted form
    $\OP^2_{c'};\OP^1_{b'}$ (with origin maps $f'_2$ and $f'_1$ respectively)
    resulting in $H'_3$, then $H_3 \cong H'_3$, and this commutes $f_1 \circ
    f_2$ and $f'_2 \circ f'_1$.
\end{proposition}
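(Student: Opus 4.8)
The plan is to reduce every $!$-box operation to a construction in $\catGraph/\mathcal{G}_{T!}$ — a pushout of open-subgraph inclusions for $\COPY$ (equation \eqref{eq:copy-pushout}), and subtraction of a full subgraph for $\DROP$ and $\KILL$ — and then to use the commutability hypotheses to show that the two constructions can be carried out in either order with the same result. Since all the origin maps are the canonical legs of these (co)limit data, the coherence claim about $f_1 \circ f_2$ and $f'_2 \circ f'_1$ should drop out of the universal properties rather than requiring separate verification. I would keep the whole argument inside $\catBGraph_T$, using theorem \ref{thm:instantiation} to guarantee that every intermediate graph really is a $!$-graph.

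First I would unpack the commutability hypothesis into usable geometric statements. Writing $\bar c = f_1(c)$, the first condition says $c$ is the unique $f_1$-preimage of $\bar c$; when $\OP^1$ is $\COPY$ this forces $\bar c \notin B(b)$ (so $c$ is not one of a duplicated pair), while for $\DROP$ and $\KILL$ it holds automatically since their origin maps are injective inclusions. In every case $\bar c$ is then a genuine $!$-vertex of $H_1$ at which $\OP^2$ can be performed, giving $H'_2 = \OP^2_{\bar c}(H_1)$ with origin map $f'_2$. The second condition says $b$ has a unique $f'_2$-preimage $b'$, which in the $\COPY$ case is exactly the statement $b \notin B(\bar c)$, so $\OP^1_{b'}(H'_2) = H'_3$ is defined. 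The upshot is that neither operated-on $!$-vertex lies in the other's $!$-box, though (as in the discussion of overlapping boxes in section \ref{sec:nesting-overlapping}) the boxes $B(b)$ and $B(\bar c)$ may still share descendants; controlling this overlap is the heart of the matter.

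The argument then splits on the ordered pair $(\OP^1,\OP^2)$. When both operations are $\DROP$ or $\KILL$, each is subtraction of a full subgraph ($b$, or $B(b)$) determined only by its $!$-vertex, and since neither vertex lies in the other's box, the total removed set is $B(b) \cup B(\bar c)$ either way; the origin maps are inclusions whose composites are visibly equal, so $H_3 = H'_3$ on the nose. Whenever a $\COPY$ is involved I would instead argue by universal property: by \eqref{eq:copy-pushout} the copy is a pushout along an open-subgraph inclusion, which $U$ and $\beta$ preserve (proposition \ref{prop:slice-cat-func-pb}), so the comparison can be checked separately on the string-graph part and on the $!$-vertex poset. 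Exhibiting $H'_3$ as a cocone over the span defining $H_3$ induces a unique morphism $\phi : H_3 \to H'_3$, and the symmetric construction gives its inverse; uniqueness in the universal property then forces $f'_2 \circ f'_1 \circ \phi = f_1 \circ f_2$, which is precisely the asserted origin-map coherence.

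The main obstacle is the $\COPY$--$\COPY$ case with overlapping boxes. Copying $b$ duplicates the overlap region $B(b) \cap B(\bar c)$, so after the second copy $H_3$ may carry up to four copies of that region, and these must be matched bijectively against the four copies produced in the opposite order; the two commutability conditions are exactly what guarantee a consistent labelling. Concretely, I would present both $H_3$ and $H'_3$ as iterated pushouts over the doubly-deleted graph $H_1 \graphminus (B(b) \cup B(\bar c))$ — which is open by proposition \ref{prop:open-subgraph-props} — and apply the pasting law for pushouts to identify them. Verifying that the induced isomorphism respects the origin maps is then a further uniqueness argument, but keeping the fourfold copy bookkeeping coherent, rather than any single categorical step, is where the real work lies.
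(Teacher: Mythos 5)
Your proposal is essentially correct and handles the easy cases (both operations $\DROP$/$\KILL$, and the mixed cases) the same way the paper does, by computing the subtracted sets and pushouts directly; but on the hard $\COPY$--$\COPY$ case you take a genuinely different route. The paper constructs an explicit vertex-level bijection $\phi_v$: it traces each $v_3 \in H_3$ back through the origin maps to $v_1 \in H_1$, pushes $v_1$ forward along $p'_1$ or $p'_2$ according to which pushout leg ($q_1$ or $q_2$, then $p_1$ or $p_2$) contains the relevant vertex, checks well-definedness using exactly the observation you isolate (a vertex in the image of both legs cannot lie in the copied box, so the two forward images agree), and then gets edge-preservation from simplicity of $!$-graphs and the origin-map coherence for free from the construction. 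Your alternative -- presenting both $H_3$ and $H'_3$ as the colimit of the same ``grid'' of four copies of $H_1$ glued along $H_1 \graphminus B(b)$, $H_1 \graphminus B(\bar c)$ and $H_1 \graphminus (B(b) \cup B(\bar c))$, with the comparison isomorphism and the identity $f'_2 \circ f'_1 \circ \phi = f_1 \circ f_2$ both falling out of uniqueness of maps from a colimit (checked on the four copies of $H_1$, where every composite is the identity) -- is sound and arguably cleaner, since the fourfold-copy matching becomes a symmetry of the diagram rather than a case analysis. What it buys is conceptual economy; what it costs is that the paper's pointwise work does not disappear but relocates: to apply the pasting law you must first identify $H_2 \graphminus B_{H_2}(c)$ with the pushout of $H_1 \graphminus B(\bar c) \leftarrow H_1 \graphminus (B(b) \cup B(\bar c)) \rightarrow H_1 \graphminus B(\bar c)$, i.e.\ prove that copying $b$ duplicates the overlap $B(b) \cap B(\bar c)$ and that the shared vertex $c$ acquires edges into \emph{both} copies, which is precisely the bookkeeping the paper's $\phi_v$ makes explicit -- you correctly flag this as where the real work lies, so it is an unfinished step rather than a gap in the plan. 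Two small corrections: in the all-subtraction cases the removed set is not always $B(b) \cup B(\bar c)$, since $\DROP_b$ removes only the vertex $b$ (the paper's example is $H_3 = H_1 \graphminus (\{b\} \cup B(f_1(c)))$); and the appeal to proposition \ref{prop:slice-cat-func-pb} to split the check over $U$ and $\beta$ is harmless but unnecessary once you work with the colimit presentation directly in $\catGraph/\mathcal{G}_{T!}$, with theorem \ref{thm:instantiation} already guaranteeing (as you note) that the intermediate objects are $!$-graphs.
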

\begin{proof}
    If both operations are $\DROP$ or $\KILL$, we just note that the net
    effect is to remove the union of the subgraphs, regardless of the order.
    For example, if $\OP^1$ is $\DROP$ and $\OP^2$ is $\KILL$,
    \[ H_3 = H'_3 = H_1 \graphminus (\{b\} \cup B(f_1(c))) \]

    If $\OP^1$ is $\COPY$, the conditions on commutable operations mean that
    $f_1(c) \notin B(b)$.  Suppose $\OP^2$ is $\DROP$.  Consider the effect of
    $\DROP_{f_1(c)};\COPY_b$ on $H_1$:
    \[
        \begin{tikzcd}
            (H_1 \graphminus \{f_1(c)\}) \graphminus B(b) \rar \dar &
            H_1 \graphminus \{f_1(c)\} \dar \\
            H_1 \graphminus \{f_1(c)\} \rar &
            H'_3 \NWbracket
        \end{tikzcd}
    \]
    Since $(H_1 \graphminus \{f_1(c)\}) \graphminus B(b) = (H_1 \graphminus
    B(b)) \graphminus \{f_1(c)\}$, $H'_3$ must be the same as $H_3$, ie:\\
    ${\COPY_b(H_1) \graphminus \{\iota(f_1(c))\}}$, where $\iota$ is the
    inclusion of $H_1 \graphminus B(b)$ into $\COPY_b(H_1)$.

    A similar argument works when $\OP^2$ is $\KILL$.  We have
    \[
        \begin{tikzcd}
            (H_1 \graphminus B(b)) \graphminus B(f_1(c)) \rar \dar &
            H_1 \graphminus B(f_1(c)) \dar \\
            H_1 \graphminus B(f_1(c)) \rar &
            H'_3 \NWbracket
        \end{tikzcd}
    \]
    and, because every vertex in $B(\iota(f_1(c)))$ in $\COPY_b(H_1)$ must be
    in $B(f_1(c))$ in all copies of $H_1$ it is contained in, $H'_3$ must
    again be the same as ${\COPY_b(H_1) \graphminus B(\iota(f_1(c)))}$.

    These arguments also work when $\OP^2$ is $\COPY$ and $\OP^1$ is $\DROP$
    or $\KILL$.  That just leaves the case where both operations are $\COPY$.

    In their original order, we have two pushouts:
    \[
        \begin{tikzcd}
            H_1 \graphminus B(b) \rar \dar &
            H_1 \dar{p_1} \\
            H_1 \rar[swap]{p_2} &
            H_2 \NWbracket
        \end{tikzcd}
        \qquad
        \begin{tikzcd}
            H_2 \graphminus B(c) \rar \dar &
            H_2 \dar{q_1} \\
            H_2 \rar[swap]{q_2} &
            H_3 \NWbracket
        \end{tikzcd}
    \]
    Once commuted, we have
    \[
        \begin{tikzcd}
            H_1 \graphminus B(f_1(c)) \rar \dar &
            H_1 \dar{p'_1} \\
            H_1 \rar[swap]{p'_2} &
            H'_2 \NWbracket
        \end{tikzcd}
        \qquad
        \begin{tikzcd}
            H'_2 \graphminus B(b') \rar \dar &
            H'_2 \dar{q'_1} \\
            H'_2 \rar[swap]{q'_2} &
            H'_3 \NWbracket
        \end{tikzcd}
    \]
    Note that the requirements for commutability of operations mean that $c$
    is in the image of both $p_1$ and $p_2$, and $b'$ is likewise in the image
    of both $p'_1$ and $p'_2$.  We construct the isomorphism from $H_3$ to
    $H'_3$ by initially constructing a bijective map $\phi_v$ from the
    vertices of $H_3$ to the vertices of $H'_3$.

    Let $v_3$ be a vertex of $H_3$, $v_2 = f_2(v_3)$ and $v_1 = f_1(v_2)$.  We
    then choose $v'_2$ to be $p'_1(v_1)$ if $v_3$ is in the image of $q_1$ and
    $p'_2(v_1)$ if $v_3$ is in the image of $q_2$.  Likewise, $v'_3 =
    q'_1(v'_2)$ if $v_2$ is in the image of $p_1$ and $v'_3 = q'_2(v'_2)$ if
    $v_2$ is in the image of $p_2$.  We then set $\phi_v(v_3) = v'_3$.

    First, we need to check that we have defined a coherent function.  If
    $v_3$ is in the image of both $q_1$ and $q_2$, $v_2$ cannot be in $B(c)$.
    Then $v_1$ is not in $B(f_1(c))$, since if it were, whichever of $p_1$ and
    $p_2$ maps $v_1$ to $v_2$ must also map the edge from $f_1(c)$ to $v_1$ to
    an edge from $c$ to $v_2$.  Thus $p'_1(v_1) = p'_2(v_1)$.  Likewise, if
    $v_2$ is in the image of both $p_1$ and $p_2$, $v'_2$ cannot be in $B(b')$
    and $q'_1(v'_2) = q'_2(v'_2)$.  So $\phi_v$ is a valid function.

    Note that the definition of an origin map means that $f'_1(v'_3) = v'_2$
    and $f'_2(v'_2) = v_1$.  Thus the same construction works in reverse,
    allowing us to construct the inverse map $\phi^{-1}_v$, and so $\phi_v$ is
    a bijection.

    Since $!$-graphs are simple, we just need to show that, for any two
    vertices $v_3$ and $w_3$ of $H_3$, there is an edge from $v_3$ to $w_3$ if
    and only if there is one from $\phi_v(v_3)$ to $\phi_v(w_3)$.  We
    construct a $w$ family of vertices in the same way we constructed the $v$
    family.

    If there is an edge $e_3$ from $v_3$ to $w_3$, this edge is mapped to an
    edge $e_1$ from $v_1$ to $w_1$ by $f_1 \circ f_2$.  The existence of $e_3$
    means that $v_3$ and $w_3$ must both be in the image of $q_1$ or both be
    in the image of $q_2$.  Thus $v'_2$ and $w'_2$ must both be in the image
    of $p'_1$ or both be in the image of $p'_2$, and that morphism must map
    $e_1$ to an edge $e'_2$ from $v'_2$ to $w'_2$.  Similarly, either $q'_1$
    or $q'_2$ must map $e'_2$ to an edge from $v'_3$ to $w'_3$.

    The converse case is symmetric, and hence $\phi_v$ extends to an
    isomorphism $\phi : H_3 \cong H'_3$.
\end{proof}

We have generalised commutable operations beyond just being on entirely
separate parts of the graph, like in the example above, but we still have to
deal with non-commutable operations.  If the operations are already
depth-ordered, such as if the next operation were $\DROP_e$
\[ \input{dnf-ex-1-copy-a.tikz} \]
we can safely ignore them.

Instead, we will look at operations where the preimage of the $!$-vertex the
second operation acts on is a parent of the $!$-vertex the first acts on.  For
example, suppose the next operation on $G_2$ were $\KILL_d$. Then the
operations would not be depth ordered ($\KILL_d$ has depth $0$, like $\DROP_c$
did, while $\COPY_b$ has depth $1$), but they cannot be commuted; once again,
$b$ is not in the image of the origin map of $\KILL_{f(d)}$.
\[ \input{dnf-ex-1-kill.tikz} \]
In this case, we can simply discard the $\COPY_c$ operation, since $\KILL_d$
is clearly isomorphic to $\KILL_{f(d)}$.  The same would be true if the first
operation were $\DROP_c$ or $\KILL_c$.

\begin{lemma}\label{lem:kill-cancel}
    Let $H_1$ be a $!$-graph, $b \in\;!(H_1)$, $H_2 = \OP_b(H_1)$ with
    origin map $f$, $c \in\;!(H_2)$ and $H_3 = \KILL_c(H_2)$ with origin map
    $k$, where $b \in B(f(c))$.  Then $H_3$ is isomorphic to $\KILL_{f(c)}$
    (with origin map $k'$), and this isomorphism commutes $k'$ and $f \circ
    k$.
\end{lemma}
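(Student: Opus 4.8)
The plan is to proceed by case analysis on which $!$-box operation $\OP_b$ is, since the structure of the origin map $f$ differs substantially between $\COPY$ (given by a pushout) and $\DROP$/$\KILL$ (given by inclusions). In every case the key fact I will exploit is the nesting condition \ref{def-item:bg-bbox-nesting} of definition \ref{def:bang-graph}: from $b \in B(f(c))$ it follows that $B(b) \subseteq B(f(c))$, so the subgraph that $\OP_b$ either duplicates or removes sits entirely inside the box of $f(c)$ that $\KILL_{f(c)}$ will ultimately delete. Throughout I will treat the monic origin maps as subgraph inclusions where possible.

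First I would dispatch the $\DROP$ and $\KILL$ cases together. Here $f$ is the inclusion $H_2 \hookrightarrow H_1$, so $f(c) = c$ and $B_{H_2}(c) = B_{H_1}(c) \cap H_2$. For $\DROP_b$ we have $H_2 = H_1 \graphminus b$ with $b \in B_{H_1}(c)$, and the two subtractions telescope: $(H_1 \graphminus b) \graphminus (B_{H_1}(c) \graphminus b) = H_1 \graphminus B_{H_1}(c)$. For $\KILL_b$ we have $H_2 = H_1 \graphminus B(b)$, and $B(b) \subseteq B_{H_1}(c)$ gives $(H_1 \graphminus B(b)) \graphminus (B_{H_1}(c) \graphminus B(b)) = H_1 \graphminus B_{H_1}(c)$. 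In both cases $H_3$ is literally equal, as a subgraph of $H_1$, to $\KILL_{f(c)}(H_1)$, so the witnessing isomorphism $\phi$ is the identity and the required triangle $k' \circ \phi = f \circ k$ holds because all three maps are just the inclusion into $H_1$.

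The main work, and the main obstacle, is the $\COPY$ case. Here $H_2 = \COPY_b(H_1)$ is the pushout of two copies of $H_1$ glued along $H_1 \graphminus B(b)$. Since $f(c)$ is a strict ancestor of $b$ it lies in the shared part $H_1 \graphminus B(b)$, so it has a single image $c$ with a well-defined unique origin. The crucial computation is to identify $B_{H_2}(c)$: because the edge $f(c) \to b$ has its source in the shared region but its target $b$ in the duplicated region $B(b)$, the pushout produces \emph{two} edges $c \to b^0$ and $c \to b^1$, so $B_{H_2}(c)$ contains both copies of $B(b)$ together with the shared remainder $B_{H_1}(f(c)) \graphminus B(b)$. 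Killing $c$ therefore deletes both duplicated copies of $B(b)$ outright, collapsing $H_2$ back to $(H_1 \graphminus B(b)) \graphminus (B_{H_1}(f(c)) \graphminus B(b)) = H_1 \graphminus B_{H_1}(f(c)) = \KILL_{f(c)}(H_1)$. I expect the fiddly part to be justifying rigorously that $B_{H_2}(c)$ is the union of \emph{both} copies rather than just one, which rests on tracing edges through the explicit \catSet-based pushout of definition \ref{def:bbox-ops} and on the fact that $!$-graphs are simple; once that is pinned down, the origin maps commute because $H_3$ sits inside the shared region, on which $f$ acts as the identity, so both $f \circ k$ and $k'$ reduce to the inclusion $H_1 \graphminus B_{H_1}(f(c)) \hookrightarrow H_1$.
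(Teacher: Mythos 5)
Your proposal is correct and follows essentially the same route as the paper's proof: a case split on $\OP_b$, with $b \in B(f(c))$ giving $B(b) \subseteq B(f(c))$, so that the $\DROP$/$\KILL$ cases telescope as graph subtractions and the $\COPY$ case only duplicates material inside the box that $\KILL_{f(c)}$ removes. Your explicit computation of $B_{H_2}(c)$ (both copies of $B(b)$ plus the shared remainder) merely spells out what the paper compresses into the single observation that $H_1 \graphminus B(f(c)) \subseteq H_1 \graphminus B(b)$, so the copy has no effect on the region the kill preserves.
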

\begin{proof}
    If $\OP$ is $\DROP$ or $\KILL$, this has the same net effect, since $b \in
    B(f(c))$ and hence $B(b) \subseteq B(f(c))$.  If it is $\COPY$, it is
    sufficient to note that
    \[ H_1 \graphminus B(f(c)) \subseteq H_1 \graphminus B(b) \]
    and hence the $\COPY$ operation has no effect on the part of the graph
    preserved by the $\KILL$ operation.
\end{proof}

If the next operation is $\COPY_d$, $b$ will have \textit{two} preimages under
the origin map of $\COPY_{f(d)}$, which we will call $b^0$ and $b^1$.  Then
the replacement sequence will be
\[ \COPY_{f(d)};\COPY_{b^0};\COPY_{b^1} \]

\begin{lemma}
    Let $H_1$ be a $!$-graph, $b \in\;!(H_1)$, $H_2 = \OP_b(H_1)$ with
    origin map $f$, $c \in\;!(H_2)$ and $H_3 = \COPY_c(H_2)$ with origin map
    $g$, where $b \in B(f(c))$.

    Let $H_2' = \COPY_{f(c)}(H_1)$, with origin map $g'$.  Then $b$ has two
    preimages under $g'$, $b^0$ and $b^1$.  Let $H_3' = \OP_{b^0}$, with
    origin map $f'_0$.  $b^1$ has a unique preimage under $f'_0$, which we
    call $b'^1$.  Let $H_4' = \OP_{b'^1}$, with origin map $f'_1$.  Then
    $H_4'$ is isomorphic to $H_3$, and this isomorphism commutes $f \circ g$
    and $g' \circ f'_0 \circ f'_1$.
\end{lemma}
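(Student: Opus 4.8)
The plan is to proceed by cases on the operation $\OP$, exactly as in proposition \ref{prop:ops-commute}, splitting into the set-theoretic cases ($\DROP$ and $\KILL$) and the genuinely combinatorial case ($\COPY$). Throughout I will use that $b \in B(f(c))$ forces $B(b) \subseteq B(f(c))$ by the nesting condition (item \ref{def-item:bg-bbox-nesting} of definition \ref{def:bang-graph}), and that $f(c) \notin B(b)$ by antisymmetry of $\beta(H_1)$, so that $f(c)$ indeed has the single preimage $c$ in $H_2$ and the setup of the lemma is well-formed.

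For the $\DROP$ and $\KILL$ cases I would argue that $H_3$ and $H_4'$ are literally the same graph (not merely isomorphic), by computing the effect of each operation as a graph subtraction and commuting these subtractions. For $\KILL$, the left-hand side first forms $H_2 = H_1 \graphminus B(b)$, so that $B_{H_2}(c) = B(f(c)) \graphminus B(b)$, and then $\COPY_c$ glues two copies of $B(f(c)) \graphminus B(b)$ along the common part $H_1 \graphminus B(f(c))$ (using $B(b) \subseteq B(f(c))$). The right-hand side first forms two copies of $B(f(c))$ via $\COPY_{f(c)}$ and then deletes $B(b^0)$ and $B(b^1)$; since each $B(b^i)$ sits inside its own copy of $B(f(c))$, the result is again two copies of $B(f(c)) \graphminus B(b)$ glued along $H_1 \graphminus B(f(c))$. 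The $\DROP$ case is identical, with the single vertex $b$ in place of the subgraph $B(b)$. The required commutation of the origin maps is immediate, as all the maps involved are inclusions collapsing copies to their $H_1$-origin.

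The $\COPY$ case is the main obstacle and requires an explicit isomorphism, built as in proposition \ref{prop:ops-commute}. Here both graphs decompose as $H_1 \graphminus B(f(c))$ together with two outer copies of $B(f(c))$, each of which contains two inner copies of $B(b)$; in total there are four copies of $B(b)$, two of $B(f(c)) \graphminus B(b)$ and one of $H_1 \graphminus B(f(c))$. I would set up the four defining pushouts (the two for $H_3$, with maps $p_1,p_2,q_1,q_2$, and the three for $H_4'$, with maps $p'_1,p'_2$ and the two inner copy pushouts) and define a bijection $\phi_v$ on vertices by tracking, for each vertex $v_3 \in H_3$, its $H_1$-origin $v_1 = f(g(v_3))$ together with an outer index recording whether $v_3$ lies in the image of $q_1$ or $q_2$ and an inner index recording whether its intermediate image $g(v_3)$ lies in the image of $p_1$ or $p_2$. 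The vertex $\phi_v(v_3)$ of $H_4'$ is the unique vertex with the same $H_1$-origin and the matching pair of copy-indices, read off from $p'_1$, $p'_2$ and the inner $\COPY_{b^0}$/$\COPY_{b'^1}$ pushouts.

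The real work, as in proposition \ref{prop:ops-commute}, is to check that $\phi_v$ is well-defined and a bijection --- i.e.\ that when $v_3$ lies in the image of both outer (resp.\ inner) pushout maps the two candidate images coincide, which happens exactly because such a $v_1$ lies outside $B(f(c))$ (resp.\ outside $B(b)$) and is therefore not duplicated by the corresponding copy --- and that $\phi_v$ preserves edges, using that $!$-graphs are simple and that an edge forces its two endpoints to share both copy-indices. Since $\phi_v$ preserves $H_1$-origins by construction, the resulting isomorphism $\phi : H_3 \cong H_4'$ automatically commutes $f \circ g$ with $g' \circ f'_0 \circ f'_1$, which is the last clause of the statement. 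I expect the bookkeeping of the four copies and the coherence check for $\phi_v$ to be the most delicate part; everything else reduces to the pushout-as-union description of copying already used throughout this section.
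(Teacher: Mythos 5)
Your proposal is correct and takes essentially the same route as the paper's own proof: the $\DROP$ and $\KILL$ cases are settled by commuting graph subtractions with the pushout-as-union description of copying (the paper phrases the $\KILL$ case as the preimage of $B(b)$ under $g'$ being exactly the disjoint union of $B(b^0)$ and $B(b^1)$), and the $\COPY$ case is reduced to the explicit origin-and-copy-index isomorphism of proposition \ref{prop:ops-commute}. Your write-up simply spells out in more detail the four-copy bookkeeping that the paper compresses into a one-line appeal to that proposition.
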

\begin{proof}
    Intuitively, whatever we did to $b$ before the $\COPY$ operation, we do
    instead to both copies of $b$ after it.

    If $\OP$ is $\DROP$, we remove both $b^0$ and $b^1$, and this is the same
    as removing just $b$ before applying the $\COPY$.  If $\OP$ is $\KILL$,
    the preimage of $B(b)$ under $g'$ is exactly the union of $B(b^0)$ and
    $B(b^1)$, and hence the net effect is likewise maintained.  We note that
    $B(b^0) \cap B(b^1) = \varnothing$, and then the $\COPY$ case follows from
    a similar proof to that of two commutable $\COPY$ operations (in
    proposition \ref{prop:ops-commute}).
\end{proof}

We have not dealt with the second operation being $\DROP_d$, because this
commutes with $\COPY_c$, and indeed $\DROP$ commutes with any preceding
operation on a distinct $!$-vertex, including child $!$-vertices.

When we are dealing with rewrite sequences longer than two operations,
explicitly referring to the origin maps will become extremely cumbersome.
Therefore, if we have an origin map $f: G_2 \rightarrow G_1$ and a $!$-vertex
$b$ of $G_1$ with a unique preimage under $f$, we will also refer to this
preimage as $b$.  If it has two preimages under $f$, we will refer to one as
$b^0$ and the other as $b^1$.

We can now set out algorithm \ref{alg:depth-ordering}, which takes an
arbitrary instantiation of a $!$-graph and gives a depth-ordered one that
produces the same graph.

\begin{algorithm}[h]
    \caption{Order $!$-box operations by depth}
    \label{alg:depth-ordering}
\begin{algorithmic}
    \WHILE{has unmarked operations}
        \STATE $\mathit{op} \leftarrow$ rightmost unmarked operation
        \STATE $\mathit{b} \leftarrow$ $!$-box $\mathit{op}$ operates on
        \STATE look at $\mathit{op}'$, immediately to right of $\mathit{op}$, operating on $\mathit{b}'$
        \IF[case $1$]{there is no such $\mathit{op}'$}
            \STATE mark $\mathit{op}$
        \ELSIF[case $2$]{$\delta(b)$ $\leq$ $\delta(b')$}
            \STATE mark $\mathit{op}$
        \ELSIF[case $3$]{$b'$ is not a parent of $b$}
            \STATE commute $\mathit{op}$ and $\mathit{op}'$
        \ELSIF[case $4$]{$\mathit{op}'$ is $\KILL$}
            \STATE discard $\mathit{op}$
        \ELSIF[case $5$]{$\mathit{op}'$ is $\DROP$}
            \STATE commute $\mathit{op}$ and $\mathit{op}'$
        \ELSIF[case $6$]{$\mathit{op}'$ is $\COPY$}
            \STATE remove $\mathit{op}$ and add two copies of $\mathit{op}$ to the
                right of $\mathit{op}'$, acting on the two copies of $\mathit{b}$
        \ENDIF
    \ENDWHILE
\end{algorithmic}
\end{algorithm}

The preceding discussion gives rise to the following proposition, stating that
the algorithm does not alter the effect of the instantiation:
\begin{proposition}
    Applying algorithm \ref{alg:depth-ordering} to an instantiation that
    results in a graph $G$ yields (if it terminates) an instantiation that
    results in a graph isomorphic to $G$.
\end{proposition}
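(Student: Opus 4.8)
The plan is to prove the statement by induction on the number of iterations of the while-loop in Algorithm~\ref{alg:depth-ordering}. Fix the starting $!$-graph $G_0$, and for an instantiation $S$ write $S(G_0)$ for the graph it produces. It suffices to show that a single iteration of the loop transforms the current instantiation $S$ into an instantiation $S'$ with $S'(G_0) \cong S(G_0)$; since termination is assumed, the final instantiation is reached after finitely many iterations, and transitivity of $\cong$ gives the result. In fact I would establish the slightly stronger invariant that the isomorphism $S'(G_0) \cong S(G_0)$ can be chosen to commute the two composite origin maps down to the common intermediate graph on which the modified operations act; this strengthening is exactly what lets the change propagate past the untouched tail of the instantiation.

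For one iteration, I would split along the six cases of the algorithm. Cases~$1$ and~$2$ only \emph{mark} an operation and leave the sequence untouched, so $S' = S$ and there is nothing to prove. The four remaining cases each alter a contiguous fragment $\OP^1_b;\OP^2_{b'}$ of $S$, and each corresponds directly to one of the preceding results. Cases~$3$ and~$5$ are instances of commuting two operations, justified by Proposition~\ref{prop:ops-commute}, once the pair is checked to be commutable. Case~$4$ (discarding $\OP^1_b$ when $\OP^2=\KILL$ and $b'$ is a parent of $b$) is exactly Lemma~\ref{lem:kill-cancel}, whose hypothesis $b \in B(f_1(b'))$ is supplied by $b'$ being an ancestor of $b$. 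Case~$6$ (replacing $\OP^1_b$ by two copies placed after $\COPY_{b'}$) is exactly the lemma immediately following Lemma~\ref{lem:kill-cancel}, which handles the $\COPY$ case. In each of these four cases the cited result supplies an isomorphism between the graph obtained after the original fragment and that obtained after the replacement fragment, and states that it commutes the relevant composite origin maps.

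It then remains to argue that preserving the result of the two- or three-operation fragment preserves the result of the whole instantiation. Here I would run a secondary induction on the length of the tail of $S$ following the modified fragment. The operations in the tail act on $!$-vertices of the current intermediate graph, and because the isomorphism $\phi$ produced by the cited result commutes the origin maps, it carries each such $!$-vertex to the corresponding one on the other branch. Applying the same operation on the two $\phi$-corresponding $!$-vertices then yields isomorphic graphs by Lemma~\ref{lem:bbox-f-ops-iso} (the $!$-box operations preserve isomorphisms), again via an isomorphism commuting the origin maps; iterating through the tail yields $S'(G_0) \cong S(G_0)$ as required.

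The main obstacle, and the step deserving the most care, is the verification of the commutability hypotheses in Cases~$3$ and~$5$, because commutability of $\OP^1_b;\OP^2_{b'}$ really does depend on the depth information. The point is that to reach these cases Case~$2$ must have failed, so $\delta(b) > \delta(b')$; this rules out $b'$ being a descendant of $b$ (a descendant has strictly greater depth). In Case~$3$, ``$b'$ not a parent of $b$'' additionally rules out $b'$ being an ancestor, so $b$ and $b'$ are incomparable and hence $b' \notin B(b)$; in Case~$5$, $b'$ is an ancestor of $b$, but antisymmetry of the $!$-vertex poset still forces $b' \notin B(b)$. In either case, when $\OP^1_b = \COPY_b$ this gives $f_1(b') \notin B(b)$, so $b'$ has a unique preimage under the origin map of $\OP^1_b$; the $\DROP$ and $\KILL$ subcases are immediate since those origin maps are injective, and the second commutability condition (that $b$ has a unique preimage under the origin map of $\OP^2_{f_1(b')}$) follows likewise. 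Thus Proposition~\ref{prop:ops-commute} genuinely applies. The remaining bookkeeping---threading the origin-map compatibility through the tail induction---is routine given the functoriality statements of Lemmas~\ref{lem:bbox-f-ops-composition} and~\ref{lem:bbox-f-ops-iso}.
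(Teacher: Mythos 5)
Your proposal is correct and takes essentially the same approach the paper intends: the paper in fact gives no explicit proof, stating only that ``the preceding discussion gives rise to'' the proposition, and your iteration-by-iteration induction assembles exactly those ingredients (Proposition~\ref{prop:ops-commute} for cases~3 and~5, Lemma~\ref{lem:kill-cancel} for case~4, and the unlabelled $\COPY$-splitting lemma for case~6). Your explicit verification of the commutability hypotheses via the depth comparison and antisymmetry, and the tail induction transporting the isomorphism through the remaining marked operations via Lemmas~\ref{lem:bbox-f-ops-composition} and~\ref{lem:bbox-f-ops-iso}, merely fill in details the paper leaves implicit.
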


The main invariants of the algorithm are that the marked operations are
depth-ordered relative to each other, and that no unmarked operation of depth
$d$ is ever to the right of a marked operation of depth $> d$.  This can be
seen by noting that $\mathit{op}$ is always unmarked and everything to the
right of it, including $\mathit{op}'$, is marked.  The first part of the
invariant is preserved because the algorithm never alters the order of the
marked operations, and an operation is only marked when the marked operations
to its right have the same or greater depth (and the second part of the
invariant requires that the marked operations to the left have the same or
lesser depth).  The second part of the invariant is maintained because we only
ever move $\mathit{op}$ to the right of $\mathit{op}'$ when
$\delta(\mathit{op}) > \delta(\mathit{op}')$ and while this can reduce the
depth of $\mathit{op}$, it can only reduce it to $\delta(\mathit{op}')$.

\begin{proposition}
    Algorithm \ref{alg:depth-ordering} terminates.
\end{proposition}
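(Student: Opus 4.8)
The plan is to exhibit a measure on configurations (a sequence of $!$-box operations, each tagged marked or unmarked, together with the intermediate graphs they act on) that lives in a well-founded order and strictly decreases at every pass of the while-loop of Algorithm~\ref{alg:depth-ordering}; since a strictly descending chain in a well-founded order is finite, this forces termination. First I would record two structural facts the measure rests on. One is that the maximum depth $D=\max_b \delta(b)$ occurring among the operations is bounded and never increases: copying reproduces the ancestor structure of a $!$-box verbatim, so $\delta$ is preserved on the two copies, while $\DROP$ and $\KILL$ only delete $!$-vertices and hence can only lower depths. The other is the pair of invariants already isolated in the preceding discussion — the marked suffix is depth-ordered, so its depths are non-decreasing read left to right, and no unmarked operation sits to the right of a strictly deeper marked one.

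Next I would dispatch the five ``easy'' transitions. Cases $1$ and $2$ mark the active operation, case $4$ discards it, and cases $3$ and $5$ commute it one step rightward (in case $5$ its depth may drop, but only to that of the operation it passes). None of these creates a new operation, and each either removes the active operation from the unmarked pool or advances it strictly rightward through the fixed, depth-sorted marked region toward the position where case $2$ will eventually mark it. I would capture this progress with a component recording, for each unmarked operation, its \emph{remaining rightward travel}: the number of marked operations of strictly smaller depth to its right, which by the depth-ordering invariant is exactly the number of commutes it has left before it must be marked. Propositions~\ref{prop:ops-commute} and Lemma~\ref{lem:kill-cancel} supply the graph-isomorphism facts certifying that each of these rewrites of the instantiation is legitimate, so only the ordinal bookkeeping remains.

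The crux — and the step I expect to be the main obstacle — is case $6$, the $\COPY$ transition, the only one that \emph{increases} the number of operations: it deletes the active operation and inserts two copies to the right of the marked $\COPY$ it could not commute past. Not only does a naive operation count grow, but duplicating a shallow operation lengthens the remaining travel of any \emph{deeper} unmarked operation still lying to its left, since that operation now has two shallower operations to pass instead of one. This interaction is precisely what defeats a single flat multiset or lexicographic measure, and reconciling it with the depth-lowering of case $5$ is delicate. The resolution I would pursue exploits the bounded, non-increasing $D$ through a depth-indexed measure processed from the greatest unmarked depth downward: a split at depth $d$ replaces one operation by two strictly \emph{closer} copies at the same depth $d$ and disturbs only strictly deeper levels, while the invariants guarantee that operations at the current greatest unmarked depth are marked or discarded before any shallower split can perturb them. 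Packaging ``(depth, remaining travel)'' into a transfinite value for which all six cases — including the duplicating case $6$ and the depth-lowering case $5$ — are simultaneously strict decreases is the heart of the proof, and verifying that last point is where I expect the real work to lie.
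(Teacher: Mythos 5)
Your proposal stops exactly where the proof has to start: you never construct the well-founded measure, and the candidate ingredients you name provably cannot be assembled in either of the two obvious ways. Take ``remaining rightward travel'' (marked operations of strictly smaller depth to the right) organised by depth. If deeper levels are more significant, case $6$ breaks it: the split removes one unmarked operation at depth $d$ but adds two at depth $d$, and when those copies are later marked they \emph{increase} the travel of every strictly deeper unmarked operation to their left, i.e.\ the dominant coordinates grow. If shallower levels are more significant, case $5$ breaks it: commuting past a $\DROP$ lowers the active operation's depth, moving mass into a more significant coordinate. You notice this tension yourself (``reconciling it with the depth-lowering of case $5$ is delicate'') but resolve it only by assertion. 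Worse, the invariant you lean on is false: unmarked operations are \emph{not} depth-ordered (only the marked suffix is), and the algorithm selects the \emph{rightmost} unmarked operation, not the deepest, so a shallow operation can be marked -- via cases $1$--$2$, which also inflate the travel of deeper unmarked operations to its left -- long before deeper ones are touched. So ``operations at the current greatest unmarked depth are marked or discarded before any shallower split can perturb them'' does not hold, and the depth-indexed packaging you propose has no foundation.

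The paper sidesteps measures entirely by exploiting the one structural fact your sketch never uses: because the algorithm always works on the rightmost unmarked operation, a run on $\OP_b;T$ begins with a complete run on $T$. This stack discipline supports an induction on the length of the input instantiation, with an inner induction showing that an unmarked operation with $n$ operations of the processed suffix to its right is marked or discarded within $2^n - 1$ iterations of cases $3$--$6$ (the doubling in case $6$ gives $1 + (2^k-1) + (2^k-1) = 2^{k+1}-1$). That yields explicit bounds via $s_{i+1} = s_i + 2^{s_i}$ on both the output length and the iteration count -- strictly more information than termination. If you want to rescue your measure-theoretic route, you would at minimum need the auxiliary fact (which the paper proves along the way) that adjacent copies of the same original operation share a depth, and you would still have to confront the case-$5$/case-$6$ conflict head-on, likely by an ordinal or nested-multiset construction; as it stands, your proof's ``heart'' is an IOU precisely at the point where the naive candidates demonstrably fail.
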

\begin{proof}
    We will place a (tight) upper bound on the number of iterations, as well
    as on the length of the final instantiation.  Specifically, given the
    recurrence relations
    \begin{align*}
        s_0 &= 0 \\
        s_{i+1} &= s_i + 2^{s_i} \\
        c_0 &= 0 \\
        c_{i+1} &= c_i + 2^{s_i} - 1
    \end{align*}
    we will show that if the input instantiation is of length $i$, the output
    instantiation is no longer than $s_i$ operations and the algorithm
    takes no more than $t_i = c_i + s_i$ iterations.  We will do this by
    induction on $i$.  The $i = 0$ case is trivial.

    The main thing to note for the step case is that if we have the
    instantiation $S = \OP_b;T$, because the algorithm always deals with the
    rightmost unmarked operation (and the operation to its right), it will not
    consider $\OP_b$ until it is the last remaining unmarked operation.  This
    means that the first part of a run of the algorithm on $S$ is the same as
    a run of the algorithm on $T$.

    So suppose the length of $S$ is $i + 1$.  By the inductive hypothesis,
    after at most $t_i$ iterations, the instantiation will look like $\OP_b;T'$,
    where $T'$ consists of at most $s_i$ marked (and no unmarked) operations,
    and $\OP_b$ remains unmarked.

    Let $l$ be the actual length of $T'$.  We will show that after at most
    $2^l - 1$ iterations that hit cases 3-6, the instantiation will be
    depth-ordered.  Since only case 6 iterations can introduce new operations,
    and only one at that, there can be at most $2^l$ case 1 or 2 iterations
    before the entire sequence is marked.  So there are at most $l + 2^l \leq
    s_i + 2^{s_i}$ operations in the final sequence, and at most
    \[ t_i + 2^l + 2^l - 1 = s_i + 2^l + c_i + 2^l -1 \leq s_{i+1} + c_{i+1} =
    t_{i+1} \]
    iterations in total.

    We need to note here that any operations in the sequence, marked or
    unmarked, at any subsequent point that are not part of $T'$ must be
    $\OP_b$ or a copy of $\OP_b$.  If two copies of $\OP_b$ are adjacent, they
    must have the same depth.  This is because the depth of a copy of $\OP_b$
    can only be changed by being commuted with a $\DROP_c$.  But any other
    copy of $\OP_b$ adjacent to it must have been commuted with the same
    $\DROP_c$, and its depth will have been changed in the same way.
    It follows that cases 3-6 can only be triggered if $\mathit{op}'$ is a
    member of $T'$.

    Suppose we are at some iteration of the algorithm (at or after the point
    where we have $\OP_b;T'$).  Consider $\mathit{op}$, the rightmost unmarked
    operation in the sequence.  Call the subsequence to the left of
    $\mathit{op}$ $S_1$, and the subsequence to its right $S_2$, so the whole
    sequence is $S_1;\mathit{op};S_2$.  Let $n$ be the number of operations of
    $T'$ that are in $S_2$.

    We will show by induction on $n$ that, in no more than $2^n - 1$ case 3-6
    iterations (and some number of case 1 or 2 iterations), the instantiation
    will be $S_1$ followed by marked operations.

    For $n = 0$, either $S_2$ is empty, in which case we trigger case 1 and
    are done, or $S_2$ consists only of operations derived from $\OP_b$, which
    must therefore have the same depth as $\mathit{op}$.  This means we
    trigger case 2, and again are done.

    Suppose $n = k + 1$.  If cases 1, 2 or 4 are triggered, we are done.
    Cases 3 and 5 require the inductive hypothesis, but are straightforward.
    So suppose we hit case 6.  We now get two copies of $\mathit{op}$, each of
    which has $k$ operations from $T'$ to the right of it.  We apply the
    inductive hypothesis to the right hand one to get that in at most $2^k -
    1$ case 3-6 iterations, the left hand one will be the rightmost unmarked
    operation.  As already discussed, this will still have $k$ operations from
    $T'$ to the right of it, so after no more than $2^k - 1$ further case 3-6
    iterations, we will have the sequence we want ($S_1$ followed by only
    marked operations).  Then we see that
    \[ 1 + (2^k - 1) + (2^k - 1) = 2^{k+1} - 1 \]
    which is what we wanted.
\end{proof}

\begin{corollary}
    \label{cor:depth-ordered-instantiation}
    Any instance $H$ of a $!$-graph $G$ has a depth-ordered instantiation.
\end{corollary}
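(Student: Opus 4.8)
The plan is to obtain the desired depth-ordered instantiation by feeding an arbitrary instantiation into Algorithm \ref{alg:depth-ordering} and reading off its output. First I would invoke Definition \ref{def:instantiation}: since $H$ is an instance of $G$, there is by definition at least one instantiation $S$ with $S(G) \cong H$. This $S$ is the raw material; it need not be depth-ordered, but it is a finite sequence of $!$-box operations that produces (a graph isomorphic to) $H$.

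Next I would apply Algorithm \ref{alg:depth-ordering} to $S$. By the termination proposition established immediately above, the algorithm halts on any input, so it returns a finite instantiation $S'$. By the preceding correctness proposition (that applying the algorithm to an instantiation yields, if it terminates, an instantiation producing an isomorphic graph), we have $S'(G) \cong S(G) \cong H$, so $S'$ is genuinely an instantiation of $H$ from $G$.

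It then remains only to argue that $S'$ is depth-ordered, and this follows from the loop invariant already identified for the algorithm: the marked operations are always depth-ordered relative to one another. The \textbf{while} loop terminates precisely when there are no unmarked operations left, so on exit every operation of $S'$ is marked; consequently the entire sequence $S'$ satisfies the defining condition of a depth-ordered instantiation, namely that no operation of depth $n$ is preceded by one of depth greater than $n$. Combining these observations, $S'$ is a depth-ordered instantiation of $H$ from $G$, as required.

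The substance of the argument really lies in the two propositions it invokes---termination and correctness---whose proofs rest on the commutability and cancellation results (Proposition \ref{prop:ops-commute}, Lemma \ref{lem:kill-cancel} and their companions). I would therefore expect the only point needing care in the corollary itself to be the explicit link between the loop's exit condition and the depth-ordering invariant; once one notes that termination forces every operation to be marked, the depth-ordered property is immediate.
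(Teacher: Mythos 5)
Your proposal is correct and follows exactly the paper's route: the corollary is an immediate consequence of the two propositions preceding it (correctness of Algorithm \ref{alg:depth-ordering} up to isomorphism, and its termination) together with the loop invariant that marked operations are depth-ordered among themselves, so that on exit --- when all operations are marked --- the whole sequence is depth-ordered. Your explicit note linking the exit condition to the invariant is precisely the small gap the paper leaves implicit, and you fill it correctly.
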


The upper bound for this algorithm may appear worryingly high, but it is not
an algorithm that would ever be implemented.  The fact that such an algorithm
exists, however, allows us to assume that any instantiation is depth-ordered;
if it is not, we can produce an equivalent one that is.  In fact, in chapter
\ref{ch:implementation} we will demonstrate matching techniques that directly
produce depth-ordered instantiations.

\subsection{Expansion-Normal Form}
\label{sec:expn-normal-form}

A depth-ordered concrete instantiation has only operations of depth $0$.  To
see this, note that any operation of depth greater than $0$ must be on a
$!$-box that has one or more parents, and at least one of those parents must
be of depth $0$ (by posetality of the subgraph of $!$-vertices).  However,
since the instantiation is depth-ordered, there can be no further operations
on that $!$-box, and so it must exist in the final graph, which cannot then be
a concrete graph.  So the instantiation cannot be concrete.

We will introduce $\EXP_b$ as a shorthand for $\COPY_b;\DROP_{b^1}$.
As a notational convenience, we will consider $b^0$ to be the same as $b$,
since it is the unique preimage of $b$ under the composition of the origin
maps.
\[ \input{enf-ex-1.tikz} \]

\begin{definition}[Expansion-normal form]
    A concrete instantiation is in \textit{expansion-normal form} when it is
    composed entirely of $\EXP$ and $\KILL$ operations, and the depth of every
    operation is $0$.
\end{definition}

\begin{example}
    A sequence in expansion normal form:
    \[ \input{enf-ex-2.tikz} \]
    Every operation is on a top-level $!$-box (depth $0$), and the only
    operations are $\EXP$ and $\KILL$.
\end{example}

We generally identify the ``constant'' parts of the graph for each operation
(in a manner consistent with the origin maps), and view $\EXP$ operations as
adding to the graph and $\KILL$ operations as removing part of the graph.

\begin{theorem}
    \label{thm:enf}
    Any concrete instance $H$ of a $!$-graph $G$ has an instantiation in
    expansion-normal form.
\end{theorem}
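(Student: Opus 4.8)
The plan is to start from the instantiation supplied by Corollary~\ref{cor:depth-ordered-instantiation}: any concrete instance $H$ of $G$ has a depth-ordered instantiation $S$, and by the observation opening this subsection every operation of $S$ then has depth $0$. It therefore suffices to rewrite an arbitrary depth-ordered, depth-$0$ instantiation into one using only $\EXP$ and $\KILL$ (still at depth $0$) without changing the resulting graph up to isomorphism. The three identities I will lean on, all immediate from the pushout/subtraction definitions of the operations, are: $\EXP_b = \COPY_b;\DROP_{b^1}$ (by definition); $\DROP_b \cong \EXP_b;\KILL_b$ (copying, dropping one copy and killing the other leaves exactly the dropped copy's content, which is $\DROP_b(G)$); and the cancellation $\COPY_b;\KILL_{b^1} \cong \mathrm{id}$ already noted at the start of Section~\ref{sec:bbox-op-seq-reg-forms}.

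The main argument is an induction on the length of $S$. If $S$ is empty then $G$ is concrete and we are done. If the first operation is $\KILL_b$ (with $b$ top-level, hence depth $0$), then the tail is a depth-ordered depth-$0$ instantiation of $H$ from $\KILL_b(G)$ --- removing $B(b)$ changes no remaining depth --- so the inductive hypothesis gives an expansion-normal tail, and prepending $\KILL_b$ keeps it expansion-normal. If the first operation is $\DROP_b$, I apply the inductive hypothesis to the tail as an instantiation from $\DROP_b(G)$ and then replace the leading $\DROP_b$ by $\EXP_b;\KILL_b$, using $\DROP_b \cong \EXP_b;\KILL_b$. The only problematic case is a leading $\COPY_b$: the inductive hypothesis produces an expansion-normal instantiation $T$ of $H$ from $\COPY_b(G)$, but I am left to normalise $\COPY_b;T$, i.e. to absorb a bare copy sitting in front of an already expansion-normal sequence.

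This absorption is the heart of the proof and I would isolate it as a lemma: if $b$ is top-level in $G$ and $T$ is expansion-normal from $\COPY_b(G)$ with fresh copies $b^0,b^1$, then $\COPY_b;T$ has an expansion-normal form. Because $T$ contains only $\EXP$ and $\KILL$ and ends in a concrete graph, the copy $b^1$ (which no $\EXP$ can delete) must ultimately be removed by a $\KILL_{b^1}$, preceded by some number $n$ of expansions $\EXP_{b^1}$. I would induct on $n$. When $n=0$, I commute that $\KILL_{b^1}$ back past the intervening operations --- all on $!$-vertices distinct from $b^1$, so they commute by Proposition~\ref{prop:ops-commute} --- until it meets $\COPY_b$, and cancel via $\COPY_b;\KILL_{b^1}\cong\mathrm{id}$, leaving a strictly shorter expansion-normal sequence acting on $G$. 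When $n>0$, I commute the first $\EXP_{b^1}$ adjacent to $\COPY_b$ and use the identity $\COPY_b;\EXP_{b^1}\cong\EXP_b;\COPY_{b^0}$ to pull one expansion out in front as a legitimate $\EXP_b$, reducing the expansion count on the fresh copy and appealing to the inner inductive hypothesis to finish $\COPY_{b^0};(\text{rest})$.

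The delicate point --- and the step I expect to cost the most work --- is exactly the identity $\COPY_b;\EXP_{b^1}\cong\EXP_b;\COPY_{b^0}$ together with the claim that all the commutations preserve both expansion-normality and the isomorphism class of the final graph. Establishing it means tracking the origin maps of two composite pushouts and checking that the two copies of $B(b)$ and the single exposed concrete copy produced on either side correspond, with the nested $!$-boxes exposed by the dropped copy being handled independently on each branch; this is the $!$-graph analogue of the commutation bookkeeping carried out in Proposition~\ref{prop:ops-commute}. Once this identity and the two elementary identities above are in hand, the outer length-induction and the inner induction on $n$ close, yielding an instantiation of $H$ from $G$ built solely from depth-$0$ $\EXP$ and $\KILL$ operations, as required.
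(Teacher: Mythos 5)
Your proof is correct, but it takes a genuinely different route from the paper's. The paper does not induct on the length of the instantiation: it runs a right-to-left sweep that eliminates $\COPY$ operations one at a time, first proving a move-back lemma --- depth-ordering guarantees no operation on a parent or child of $b^0$ or $b^1$ can intervene, so by proposition \ref{prop:ops-commute} every operation on the two copies can be commuted to sit immediately after $\COPY_b$ --- then relabelling each $\EXP_{b^1}$ as $\EXP_{b^0}$ (asserted to be ``an equivalent operation in this context''), so that the first surviving operation on $b^1$, necessarily $\KILL_{b^1}$ or $\DROP_{b^1}$, lands adjacent to $\COPY_b$ and is cancelled or fused into $\EXP_b$; only after all $\COPY$s are gone does it eliminate the $\DROP$s in one final pass via $\DROP_b \cong \EXP_b;\KILL_b$. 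Your double induction (outer on sequence length, inner on the number $n$ of $\EXP_{b^1}$ operations preceding $\KILL_{b^1}$) achieves the same normalisation recursively, with your interchange law $\COPY_b;\EXP_{b^1}\cong\EXP_b;\COPY_{b^0}$ playing precisely the role of the paper's relabelling step --- and since the paper justifies that step with the same informality you flag for the interchange identity, you are at no rigour deficit; your supporting observations (that $b^1$, being top-level, can only be removed by an explicit $\KILL_{b^1}$, and that the intervening commutations are covered by proposition \ref{prop:ops-commute} because no operation can touch a parent or child of $b^1$ while it is alive) are all sound. What the paper's organisation buys is that a single depth-ordering argument licenses all the commutations at once and keeps $\DROP$-elimination decoupled from $\COPY$-elimination; what yours buys is a well-founded recursive structure in which each case is discharged locally, at the price of having to verify that expansion-normality and the isomorphism class of the result are preserved through the interchange and through each inductive call, which you correctly identify as the bookkeeping core of the argument.
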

\begin{proof}
    We know from corollary \ref{cor:depth-ordered-instantiation} that there
    must be a depth-ordered instantiation of $H$ from $G$; we will transform
    this into an equivalent instantiation (ie: resulting in the same graph, up
    to isomorphism) that is in expansion-normal form.

    In order to do this, we will need to be able to, for a given $\COPY_b$
    operation, move any operation on $b^0$ or $b^1$ to immediately after
    $\COPY_b$.  We will demonstrate this for arbitrary depth-ordered
    instantiations, not just concrete ones.

    Consider an operation $\OP_{b^0}$ (recalling that $b^0$ and $b^1$ are
    arbitrary labels, and hence interchangable).  The depth of $\OP_{b^0}$
    must be the same as the depth of $\COPY_b$.  The only way the depth of
    $\OP_{b^0}$ could deviate from that of $\COPY_b$ is if there were a
    $\DROP$ operation between them on a parent of $b^0$.  But the depth of
    this $\DROP$ operation would have to be strictly less than the depth of
    $\COPY_b$, which contradicts the depth-ordering.  Indeed, there can be no
    operations on any parent of $b^0$ after $\COPY_b$.

    Similarly, there can be no operations on any child of $b^0$ between
    $\COPY_b$ and $\OP_{b^0}$, since that operation would have a depth
    strictly greater than that of $b^0$, and hence of $\OP_{b^0}$, which
    violates depth-ordering.  Therefore, proposition \ref{prop:ops-commute}
    allows $\OP_{b^0}$ to be freely exchanged with the operation preceeding it
    until it reaches $\COPY_b$, and this will maintain the depth-ordering of
    the sequence.

    Note that this also allows us to move $\EXP_{b^0}$ back, since we can move
    $\COPY_{b^0}$ back to $\COPY_b$, and then move $\DROP_{b^{01}}$ back to
    $\COPY_{b^0}$.  Likewise, given $\EXP_c$, we can move any subsequent
    $\EXP_c$ back to join it, so we can move all occurrences of $\EXP_{b^0}$
    and $\EXP_{b^1}$ to immediately follow $\COPY_b$.

    Starting with a depth-ordered concrete instantiation, we can eliminate
    $\COPY$ (in favour of $\EXP$) with the following procudure (note that we
    treat $\EXP$ as an opaque operation in its own right, not a composite of
    $\COPY$ and $\DROP$), and so obtain a concrete instantiation in
    expansion-minimal form.

    We eliminate $\COPY$ operations starting from the right.  At each stage,
    we consider the rightmost $\COPY$ operation, $\COPY_b$.  If the next
    operations on $b^0$ and $b^1$ are both $\EXP$, we first move all the
    $\EXP_{b^0}$ and $\EXP_{b^1}$ operations to immediately after $\COPY_b$.
    We then replace all the $\EXP_{b^1}$ operations by $\EXP_{b^0}$, which is
    an equivalent operation in this context.

    This done, we know that, after $\COPY_b$, the first operation on $b^1$
    must be either $\KILL_{b^1}$ or $\DROP_{b^1}$.  In both cases, we move it
    back so that it immediately follows $\COPY_b$.  In the case of
    $\DROP_{b^1}$, we can just replace both operations with $\EXP_b$ (and
    replace all later references to $b^0$ with $b$).  Since $\COPY_b$ was
    depth $0$, $\EXP_b$ must also be depth $0$.  In the case of $\KILL_{b^1}$,
    we can eliminate both operations (and, again, replace all later references
    to $b^0$ with $b$).  Then we will have eliminated this occurrence of
    $\COPY$, and we can proceed onto the next rightmost occurrence, until none
    remain.

    Once we have eliminated all occurrences of $\COPY$, we can eliminate
    $\DROP$ operations by replacing each $\DROP_b$ with $\EXP_b;\KILL_b$,
    which both have depth $0$ since $\DROP_b$ does.  $\EXP_b;\KILL_b$ is
    $\COPY_b;\DROP_{b^0};\KILL_{b^1}$, and this is equivalent to
    $\COPY_b;\KILL_{b^1};\DROP_{b^0}$, which is equivalent to $\DROP_b$.

    So we now have a concrete instantiation containing only $\EXP$ and $\KILL$
    operations, and all the operations are of depth $0$, so it is in
    expansion-normal form.
\end{proof}

Once we have an instantiation in expansion-normal form, if $b$ is a top-level
$!$-vertex in the starting graph $G$, we can pull all the operations on $b$ to
the start of the instantiation.  So the instantiation will start with some
number of $\EXP_b$ operations, followed by $\KILL_b$ (the argument for this is
identical to the one that allowed us to move operations on $b^0$ back to
$\COPY_b$).

\section{$!$-box Introduction}
\label{sec:bbox-intro}

We will demonstrate a simple inference rule that allows us to wrap an entire
$!$-graph equation in a $!$-box.

\begin{definition}[$\BOX$]
    Suppose $G$ is a $!$-graph.  Then $\BOX(G)$ is the
    $\mathcal{G}_{T!}$-typed graph consisting of $G$ together with a fresh
    $!$-vertex $b$ and an edge from $b$ to every vertex in the graph
    (including itself).
\end{definition}

\begin{example}
    If we consider the graph $G$:
    \[ \input{bbox-intro-ex.tikz} \]
    then $\BOX(G)$ is
    \[ \input{bbox-intro-ex-boxed.tikz} \]
\end{example}

We will sometimes write $\BOX_b(G)$ to specify a name for the fresh
$!$-vertex; in contrast to the other $!$-box operations, $b$ is not in $!(G)$
but is in $!(\BOX_b(G))$.

$\BOX(G)$ is easily seen to be a $!$-graph: $U(\BOX(G)) \cong U(G)$ and hence
is a string graph; extending a poset with a fresh element that is $\leq$ every
element of the poset and to itself yields another poset; $U(B_{\BOX(G)}(c))
\cong U(B_G(c))$ for every $c \in\:!(G)$ and $U(B(b)) \cong U(G)$, so these
are all trivially open subgraphs of $U(\BOX(G))$; and $c' \in B(c) \Rightarrow
B(c') \subseteq B(c)$ follows from the fact this is true for $G$ and $B(b) =
\BOX(G)$.

As with the other operations, we can apply $\BOX$ to monomorphisms that
reflect $!$-box containment:

\begin{proposition}
    Let $f : G \rightarrow H$ be a monomorphism in $\catBGraph_T$ that
    reflects $!$-box containment.  Then $f$ can be uniquely extended to a
    $!$-box-reflecting monomorphism $\BOX(f) : \BOX(G) \rightarrow \BOX(H)$
    that commutes with the inclusions of $G$ into $\BOX(G)$ and $H$ into
    $\BOX(H)$.

    Further, if $g : H \rightarrow K$ is another such morphism, $\BOX(g \circ
    f) = \BOX(g) \circ \BOX(f)$.
\end{proposition}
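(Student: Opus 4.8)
The plan is to construct $\BOX(f)$ explicitly and verify the three claimed properties (existence as a $!$-box-reflecting monomorphism, commutativity with the inclusions, and functoriality) in turn. Recall that $\BOX(G)$ adds a single fresh $!$-vertex $b_G$ to $G$ together with an edge from $b_G$ to every vertex of $\BOX(G)$ (including the self-loop on $b_G$), and similarly $\BOX(H)$ adds a fresh $b_H$. Since $f$ is already defined on all of $G$, the only freedom is where to send $b_G$ and the new edges out of it; the obvious (and forced) choice is $\BOX(f)(b_G) = b_H$, with each edge $b_G \to v$ mapped to the edge $b_H \to f(v)$, and the self-loop to the self-loop.

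First I would check that this $\BOX(f)$ is a well-defined $\mathcal{G}_{T!}$-typed graph morphism: it agrees with $f$ on $G$, and the new edges all have type $p_{\tau(v)}$ in the typegraph, matching the required images, so typing is preserved. Commutativity with the inclusions $G \hookrightarrow \BOX(G)$ and $H \hookrightarrow \BOX(H)$ is immediate by construction, since $\BOX(f)$ restricts to $f$ on $G$. Injectivity follows because $f$ is injective and $b_G$ is the unique preimage of $b_H$ (no vertex of $G$ maps to the fresh $b_H$, which is not in the image of $f$), so $\BOX(f)$ is monic by lemma \ref{lemma:spg-mono-inj}. Uniqueness is similarly forced: any extension must send $b_G$ to a $!$-vertex whose $!$-box is all of $\BOX(H)$, and $b_H$ is the only such vertex, while the edges out of $b_G$ must map to edges out of $b_H$ respecting the targets fixed by $f$.

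The main step requiring care is showing that $\BOX(f)$ reflects $!$-box containment. Suppose we have an edge $e$ in $\BOX(H)$ from a $!$-vertex $c$ to a vertex in the image of $\BOX(f)$. If $c \neq b_H$, then $c \in \,!(H)$ and its target is in the image of $f$, so $e$ is in the image of $f$ (hence of $\BOX(f)$) because $f$ reflects $!$-box containment and $e$ lies in $H$. If $c = b_H$, then $e$ is one of the newly-added edges out of $b_H$; its target is some $f(v)$ (or $b_H$ itself), and by construction the edge $b_G \to v$ (resp. the self-loop) maps onto $e$, so $e$ is in the image of $\BOX(f)$. Thus containment is reflected in both cases.

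Finally, functoriality $\BOX(g \circ f) = \BOX(g) \circ \BOX(f)$ follows because both sides restrict to $g \circ f$ on $G$ and both send $b_G$ to $b_K$ with the new edges mapped accordingly; by the uniqueness just established, they must coincide. I do not expect any genuine obstacle here — the construction is canonical and each verification is routine — with the containment-reflection check for the fresh $!$-vertex being the only place where one must argue rather than merely read off the definitions.
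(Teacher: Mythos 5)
Your construction, and every verification except one, is sound and matches the paper's (very terse) proof in substance: the paper likewise defines $\BOX(f)$ by sending the fresh $!$-vertex to the fresh $!$-vertex and the new edges accordingly, and then asserts that reflection, monicity and functoriality are immediate. Your reflection check for the fresh vertex is correct and is indeed the only step the paper's one-line proof leaves genuinely implicit.

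The gap is in your uniqueness argument. You claim that any extension must send $b_G$ to a $!$-vertex whose $!$-box is all of $\BOX(H)$. That is not forced: commuting with the inclusions only requires the image vertex $c$ to be a $!$-vertex of $\BOX(H)$ with $f[G] \subseteq B(c)$ (plus a self-loop, which every $!$-vertex has by posetality of $\beta$). When $f$ is not surjective, $H$ may already contain a $!$-vertex $c$ with $f[G] \subseteq B(c)$, and then $b_G \mapsto c$ gives a second injective typed-graph extension commuting with the inclusions, so your stated criterion fails to single out $b_H$. What actually rules out such a $c$ is the reflection hypothesis in the statement: in $\BOX(H)$ there is an edge $e$ from $b_H$ to $c = \tilde f(b_G)$, whose target lies in $\im(\tilde f)$ and whose source is a $!$-vertex, so reflection demands $e \in \im(\tilde f)$; but $b_H \notin \im(\tilde f)$ unless $c = b_H$, since $\tilde f$ maps $G$ into $H$. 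With this one-line repair uniqueness holds \emph{among $!$-box-reflecting extensions} (which is all the proposition claims), and your functoriality argument, which invokes uniqueness, then goes through unchanged.
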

\begin{proof}
    If we have $\BOX_b(G)$ and $\BOX_c(H)$, the only possible extension of $f$
    is the one that maps $b$ to $c$, and maps each edge between $b$ and some
    vertex $v$ in $\BOX_b(G)$ to the edge from $c$ to $f(v)$ in $\BOX_c(H)$.
    This then trivially reflects $!$-box containment and is monic, since $f$
    is.  It also trivially respects composition.
\end{proof}

Applying $\BOX$ to every morphism in definition \ref{def:bg-rw-rule} gives us
the following:

\begin{corollary}
    If $G \xleftarrow{i} I \xrightarrow{j} H$ is a $!$-graph equation, then so
    is \[ \BOX(G) \xleftarrow{\BOX(i)} \BOX(I) \xrightarrow{\BOX(j)} \BOX(H)
    \].
\end{corollary}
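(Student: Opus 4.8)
The plan is to verify the four conditions of definition \ref{def:bg-rw-rule} for the span $\BOX(G) \xleftarrow{\BOX(i)} \BOX(I) \xrightarrow{\BOX(j)} \BOX(H)$, following closely the template of the proof of theorem \ref{thm:rw-patt-preserved}. First I would note that, since $G \xleftarrow{i} I \xrightarrow{j} H$ is a $!$-graph equation, both $i$ and $j$ reflect $!$-box containment (as observed just before definition \ref{def:bg-rw-rule-ops}, where we noted that $i$, $j$ are isomorphisms onto the boundaries). Hence the preceding proposition guarantees that $\BOX(i)$ and $\BOX(j)$ are monomorphisms of $\catBGraph_T$ that again reflect $!$-box containment, so we genuinely have a span in $\catBGraph_T$. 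Condition~1 is then immediate: $U(\BOX(G)) \cong U(G)$ and $U(\BOX(H)) \cong U(H)$, so neither contains isolated wire-vertices.

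The key ingredient I would establish next is a $\BOX$-analogue of theorem \ref{thm:pg-ops-commute-bounds}, namely that for any $!$-graph $G$ there are natural isomorphisms
\[ \Bound_!(\BOX_b(G)) \cong \BOX_b(\Bound_!(G)), \quad \In_!(\BOX_b(G)) \cong \BOX_b(\In_!(G)), \quad \Out_!(\BOX_b(G)) \cong \BOX_b(\Out_!(G)). \]
This is a direct unwinding of the definitions: adjoining the fresh $!$-vertex $b$ with an edge to every vertex leaves $U(G)$ unchanged, hence leaves $\In(U(G))$, $\Out(U(G))$ and $\Bound(U(G))$ untouched; and in forming $\Bound_!$ the only edges out of $b$ that survive are precisely those to $!(G) \cup \{b\}$ and to the boundary wire-vertices, which are exactly the edges $\BOX_b$ adds to $\Bound_!(G)$. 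I would also record that $\BOX$ preserves isomorphisms and respects composition — the $\BOX$-analogues of lemmas \ref{lem:bbox-f-ops-iso} and \ref{lem:bbox-f-ops-composition} — both of which follow immediately from the uniqueness and composition clauses of the preceding proposition (an isomorphism, being invertible, trivially reflects $!$-box containment, and $\BOX$ applied to it and its inverse composes to the identity).

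With these tools, conditions~2 and~3 follow exactly as in theorem \ref{thm:rw-patt-preserved}: the original equation supplies $!$-box-containment-reflecting isomorphisms $\In_!(G) \cong \In_!(H)$, $\Out_!(G) \cong \Out_!(H)$ and $\Bound_!(G) \cong I \cong \Bound_!(H)$; applying $\BOX$ and composing with the natural isomorphisms above yields $\In_!(\BOX(G)) \cong \In_!(\BOX(H))$, $\Out_!(\BOX(G)) \cong \Out_!(\BOX(H))$ and $\Bound_!(\BOX(G)) \cong \BOX(I) \cong \Bound_!(\BOX(H))$, as required. The remaining work, and the part I expect to be the most tedious, is condition~4: that the diagram \eqref{eq:bg-rw-rule-diag} commutes for the boxed span. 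This is a diagram chase identical in shape to the one carried out for $\COPY_b$ in theorem \ref{thm:rw-patt-preserved}, using that $\BOX$ respects composition together with the $\BOX$-commutes-with-boundary isomorphisms to reduce each triangle (via the intermediate objects $\BOX(\In_!(G))$, $\BOX(\Out_!(G))$, etc.) to the corresponding already-commuting triangle of the original equation. Since $\BOX$ merely adjoins a single vertex rather than taking a pushout, each step of this chase is lighter than its counterpart there, so the only genuine obstacle is bookkeeping rather than any new mathematical content.
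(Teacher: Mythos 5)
Your proof is correct and takes essentially the same route as the paper, whose entire argument is the single observation that one may apply $\BOX$ to every morphism in definition \ref{def:bg-rw-rule}, leaving the verification of the four conditions implicit. The details you supply — most importantly the isomorphisms $\Bound_!(\BOX(G)) \cong \BOX(\Bound_!(G))$ and their $\In_!$/$\Out_!$ analogues, plus the fact that $\BOX$ preserves isomorphisms and respects composition — are precisely the facts the paper's one-line proof tacitly relies on.
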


We can now state the inference rule
\[
    \textsc{Box} \enskip
    \inferrule{
        E \vdash G \rweq H
    }{
        E \vdash \BOX(G \rweq H)
    }
\]

\begin{theorem}
    \textsc{Box} is sound.  In particular, if for all concrete instances $G'
    \rweq H'$ of $G \rweq H$,
    \[ \overset{\sim}{E} \vdash G' \rweq H' \]
    then for any concrete instance $G'' \rweq H''$ of $\BOX_b(G \rweq H)$,
    \[ \overset{\sim}{E} \vdash G'' \rweq H'' \]
\end{theorem}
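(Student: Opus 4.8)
The plan is to characterise the concrete instances of $\BOX_b(G \rweq H)$ explicitly as disjoint unions of concrete instances of $G \rweq H$, and then to assemble the desired derivation from the component derivations supplied by the hypothesis, using the \textsc{Leibniz} and \textsc{Trans} rules of the string-graph equational logic of section \ref{sec:graph-eqs}.

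First I would fix a concrete instance $G'' \rweq H''$ of $\BOX_b(G \rweq H)$, witnessed by some concrete instantiation $S$. By theorem \ref{thm:enf} I may assume $S$ is in expansion-normal form, so that it consists entirely of $\EXP$ and $\KILL$ operations, all of depth $0$. Since $\BOX_b(G \rweq H)$ adds an edge from $b$ to every vertex, every other $!$-vertex $c$ has $b$ as a parent, so $b$ is the unique depth-$0$ $!$-vertex and is top-level; the remark at the end of section \ref{sec:expn-normal-form} then lets me pull every operation acting on $b$ to the front. As $\EXP_b$ retains $b$ while a concrete instance must eliminate it, there must be a $\KILL_b$, and so $S = \EXP_b^{\,n}; \KILL_b; S'$ for some $n \geq 0$, where $S'$ acts only on $!$-vertices arising from the interior of the copies of $G$.

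The key structural observation is that, because $B(b) = \BOX_b(G)$, the complement $\BOX_b(G) \graphminus B(b)$ is empty, so the pushout defining $\COPY_b$ (and hence each $\EXP_b$) glues two copies along the empty graph, i.e. produces a disjoint union. Consequently the graph obtained after $\EXP_b^{\,n}; \KILL_b$ is a coproduct of $n$ fresh, pairwise-disjoint copies of $G$, each still carrying the interior $!$-boxes of $G$. As these copies are disjoint, the residual instantiation $S'$ partitions into concrete instantiations $S'_1, \ldots, S'_n$ of the individual copies, whence $G'' \cong G'_1 + \ldots + G'_n$ with each $G'_k := S'_k(G)$ a concrete instance of $G$. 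Because the $!$-box operations on equations (definition \ref{def:bg-rw-rule-ops}) act on all three graphs of the span simultaneously, the same analysis applied to $I$ and $H$ shows $H'' \cong H'_1 + \ldots + H'_n$ and that each $G'_k \rweq H'_k$ is a concrete instance of $G \rweq H$, with matched boundaries.

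Finally I would assemble the derivation. For each $k$ the hypothesis gives $\overset{\sim}{E} \vdash G'_k \rweq H'_k$. Embedding this equation into the context of the remaining summands --- a legitimate context since a disjoint union is a pushout over the empty graph --- the \textsc{Leibniz} rule yields
\[ \overset{\sim}{E} \vdash H'_1 + \ldots + H'_{k-1} + G'_k + \ldots + G'_n \;\rweq\; H'_1 + \ldots + H'_k + G'_{k+1} + \ldots + G'_n, \]
and chaining these $n$ steps with \textsc{Trans} produces $\overset{\sim}{E} \vdash G'' \rweq H''$; the degenerate case $n = 0$ is the empty equation, discharged by \textsc{Refl}. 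The main obstacle is the middle step: rigorously justifying the decomposition of an arbitrary concrete instance into a disjoint union of $n$ instances of the span. This leans essentially on the regular-form machinery (theorem \ref{thm:enf} together with the front-loading of the operations on $b$) and on the fact that $B(b)$ is the entire graph, so that copying disconnects; the delicate point is to track the span structure through the instantiation so that the summands are genuine instances of the \emph{equation} $G \rweq H$, rather than merely unrelated instances of $G$ and of $H$.
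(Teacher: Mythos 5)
Your proposal is correct and follows essentially the same route as the paper's own proof: both put the instantiation into expansion-normal form via theorem \ref{thm:enf}, front-load the operations on $b$ as $\EXP_b^{\,n};\KILL_b$, observe that since $B(b)$ is the whole graph the copies are glued along the empty graph and hence $G'' \rweq H''$ decomposes as a disjoint union of concrete instances $G'_k \rweq H'_k$ of $G \rweq H$ (with the span structure preserved because the $!$-box operations act on all three graphs of the span at once), and then discharge each summand with the hypothesis and \textsc{Leibniz}. Your write-up is in fact slightly more explicit than the paper's at the final assembly step (the \textsc{Trans} chaining and the $n=0$ case), but the substance is identical.
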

\begin{proof}
    By theorem \ref{thm:enf}, there is an instantiation $S$ of $G'' \rweq H''$
    from $\BOX_b(G \rweq H)$ in expansion-normal form, and we can further
    require that $S$ decomposes into $T;U$ where $T$ is a sequence of $\EXP_b$
    operations, followed by $\KILL_b$.

    Applying $\EXP_b$ to $\BOX_b(G)$ produces the disjoint union of
    $\BOX_b(G)$ and $G$.  The next $\EXP_b$ adds another copy of $G$, and so
    on.  Finally, $\KILL_b$ removes the copy of $\BOX_b(G)$, leaving zero or
    more disjoint copies of $G$.  So applying $T$ to $G \rweq H$ will produce
    $G_n \rweq H_n$, where $G_n$ is $n$ disjoint copies of $G$ and $H_n$ is
    $n$ disjoint copies of $H$.

    Now if we apply $U$ to $G_n \rweq H_n$, we know that each operation can
    only affect a single copy of $G$ in $G_n$ and the corresponding copy of $H$
    in $H_n$.  There are thus instantiations $U_1,\ldots,U_n$ of $G \rweq H$
    such that $U(G_n)$ is $U_1(G) \uplus \cdots \uplus U_n(G)$ and similarly
    for $U(H_n)$ (where we are using the notation from remark
    \ref{rem:instantiation-notation}).

    We can then use the assumption to construct proofs of $\overset{\sim}{E}
    \vdash U_i(G \rweq H)$ for each $1 \leq i \leq n$, and use
    \textsc{Leibniz} to apply these to the decomposed parts of $U(G_n \rweq
    H_n)$.
\end{proof}

This is mostly useful in the construction of larger proofs.  We will
demonstrate its use in section \ref{sec:gen-bialg-prf}.

\section{$!$-box Induction}
\label{sec:bbox-induction}

In this section, we introduce an analogue of induction for $!$-graphs.  This
will follow the usual induction scheme of having a base case and a step case
that must be proved for a particular $!$-vertex $b$.  However, in the absence
of the named variables of terms, we need some way to link a $!$-vertex in the
inductive hypothesis to one in the equation we need to prove for the step
case.

We will introduce the possibility of \textit{fixing} a $!$-vertex, which
prevents any $!$-box operations being performed on it during the matching
process.  The idea is that if we take the following graph
\[ \input{fix-ex-1.tikz} \]
and fix $b$, it should match itself and
\[ \input{fix-ex-2.tikz} \]
but not
\[ \input{fix-ex-3.tikz} \]
or
\[ \input{fix-ex-4.tikz} \]
as it normally would.  The result of this is that, given a rule like
\[ \input{fix-ex-rule.tikz} \]
where $b$ is fixed, the only possible rewrites are ones like
\[ \input{fix-ex-rewrite.tikz} \]
where the $!$-box operations that can subsequently be performed on $b$ and $c$
(and hence the possible concrete instantiations) are linked.

In order to deal with multiple applications of the induction rule, we will tag
these fixed $!$-vertices with arbitrary symbols drawn from a (countable) set
$\mathcal{A}$.  So we could tag $b$ above with $x \in \mathcal{A}$; in this
case, we say that $b$ is \textit{$x$-fixed}, and we only allow $b$ to match
$c$ in
\[ \input{fix-ex-2-labelled.tikz} \]
if $c$ is also $x$-fixed.

To store this information, we augment a $!$-graph $G$ with a partial map
\[ \fix_G : !(G) \nrightarrow \mathcal{A} \]
that maps each fixed $!$-vertex to its tag.  We require $!$-graph morphisms to
preserve these tags, and the morphisms of a $!$-graph equation to also reflect
them.  In particular, if we have the $!$-graph equation $G \rweq H$, the fact
we can consider the $!$-vertices of $G$ and $H$ to be the same means we can
view $\fix_G$ as being the same as $\fix_H$, and call these $\fix_{G\rweq H}$.

This will have an impact on the $!$-graph logic, and the notion of soundness
we are considering.  Specifically, we will consider what we mean for $E \vdash
G \rweq H$ to hold.  Let $A = \im(\fix_{G \rweq H})$.  For each $x \in A$,
each non-negative integer $n \in \mathbb{N}_0$, and each $!$-graph equation $L
\rweq R$, we define $X^x_n(L \rweq R)$ to be $L \rweq R$ with $n$ $\EXP_b$
operations and one $\KILL_b$ operation applied to it for each $x$-fixed
$!$-vertex $b$ in the equation.  Then for each function $\lambda \in
\mathbb{N}_0^A$ from $A$ to the non-negative integers, we define $X_\lambda(L
\rweq R)$ to be $L \rweq R$ with $X^x_{\lambda(x)}$ applied to it for each $x
\in A$ (the order is irrelevant, as these are all top-level $!$-vertices).
Then the semantics of $E \vdash G \rweq H$ is that for each $\lambda \in
\mathbb{N}_0^A$ and each instance $G' \rweq H'$ of $X_\lambda(G \rweq H)$, if
we let
\[ E_\lambda = \{X_\lambda(L \rweq R) | L \rweq R \in E. L \rweq R \textrm{ has no
$x$-fixed $!$-vertices for $x \notin A$}\} \]
then
\[ \overset{\sim}{E_\lambda} \vdash G' \rweq H' \]
in the equational logic of string graphs.

We need to adjust the $!$-graph logic and rewriting to keep it sound under
these modified semantics.  We place a side constraint on \textsc{Copy},
\textsc{Drop} and \textsc{Kill} that the $!$-vertex they operate on cannot be
fixed (hence the term ``fixed''), and on \textsc{Box} that the equation
contains no fixed $!$-vertices.  Likewise, when rewriting, we do not allow
any $!$-box operations to be applied to fixed $!$-vertices.

Our previous requirement that morphisms preserve fixing tags means that the
matching morphisms of a $!$-graph rewrite can only map $x$-fixed $!$-vertices
to other $x$-fixed $!$-vertices (although an unfixed $!$-vertex may match an
$x$-fixed one).  This also constrains the \textsc{Leibniz} rule.

We also introduce a $!$-box operation $\FIX^x_b(G)$ that produces a $!$-graph
identical to $G$ but where the $!$-vertex $b$ is $x$-fixed (and similarly for
$!$-graph equations), and a rule
\[
    (\textsc{Fix}) \enskip
    \inferrule{E \vdash G \rweq_{i,j} H}{E \vdash \FIX^x_b(G \rweq_{i,j} H)}
\]
with the condition that $b$ is a top-level $!$-vertex of $G \rweq_{i,j} H$
that is not fixed.

In the pictoral representations of $!$-graphs, we will mark $x$-fixed
$!$-vertices with the label $F_x$.

The following is the inference rule for $!$-box induction, based on one
originally suggested as a possibility by Kissinger in his
thesis (\cite{KissingerDPhil}, pp 179-181).  It holds providing $b$ is a
top-level, unfixed $!$-vertex in $G \rweq H$ and $x$ is a fresh tag (that does
not already occur in $G \rweq H$ or any of the equations in $E$).
\[
    (\textsc{Induct}) \enskip
    \inferrule{
        E \vdash \KILL_b(G \rweq H) \\
        E\cup\{\FIX^x_b(G \rweq H)\} \vdash \FIX^x_b(\EXP_b(G \rweq H))
    }{
        E \vdash G \rweq H
    }
\]

\begin{theorem}
    \textsc{Induct} is sound.
\end{theorem}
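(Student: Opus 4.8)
The plan is to prove soundness by induction on the number of times the $!$-box $b$ is expanded in a concrete instantiation. First I would fix an assignment $\lambda \in \mathbb{N}_0^A$, where $A = \im(\fix_{G \rweq H})$, and an arbitrary concrete instance $G' \rweq H'$ of $X_\lambda(G \rweq H)$; the goal is to show $\overset{\sim}{E_\lambda} \vdash G' \rweq H'$ in the equational logic of string graphs. Since $b$ is a top-level, unfixed $!$-vertex, it is untouched by $X_\lambda$ and remains top-level and unfixed in $X_\lambda(G \rweq H)$. By Theorem \ref{thm:enf} together with the closing remark of Section \ref{sec:expn-normal-form}, the instantiation producing $G' \rweq H'$ may be taken in expansion-normal form with every operation on $b$ moved to the front, i.e.\ of the shape $\EXP_b^{\,n};\KILL_b;U$ where $U$ performs no operation on $b$. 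The integer $n$ is the induction parameter.

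For the base case $n = 0$ the instantiation begins with $\KILL_b$, so $G' \rweq H'$ is an instance of $\KILL_b(X_\lambda(G \rweq H))$, which equals $X_\lambda(\KILL_b(G \rweq H))$ since these operations act on distinct top-level $!$-vertices and hence commute (Proposition \ref{prop:ops-commute}). Because killing the unfixed $b$ leaves the fixing image equal to $A$, the set $E_\lambda$ is unchanged, and the first premise $E \vdash \KILL_b(G \rweq H)$ delivers $\overset{\sim}{E_\lambda} \vdash G' \rweq H'$ directly from the semantics. For the step, assuming the claim for expansion count $k$, I would take $n = k+1$ and apply the second premise with the assignment $\mu \in \mathbb{N}_0^{A \cup \{x\}}$ given by $\mu|_A = \lambda$ and $\mu(x) = k$. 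The key computation is that $X_\mu(\FIX^x_b(\EXP_b(G \rweq H)))$ has exactly the structure of $X_\lambda(G \rweq H)$ after $k+1$ expansions and a kill of $b$: the single $\EXP_b$ baked into the premise's conclusion contributes one concrete copy, and the $k$ expansions applied by $X^x_k$ to the now $x$-fixed $b$ supply the rest. Hence $G' \rweq H'$ is an instance of it (apply the residual $U$), and the second premise's semantics yields $\overset{\sim}{(E \cup \{\FIX^x_b(G \rweq H)\})_\mu} \vdash G' \rweq H'$.

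It then remains to cut away the auxiliary hypothesis. Here I would invoke the freshness of $x$: since $x$ occurs in no equation of $E$, the defining side-condition distinguishing index set $A \cup \{x\}$ from $A$ excludes no equation of $E$, and moreover $X_\mu$ acts as $X_\lambda$ on each such equation; so the $E$-part of $(E \cup \{\FIX^x_b(G \rweq H)\})_\mu$ coincides with $E_\lambda$, leaving $X_\mu(\FIX^x_b(G \rweq H))$ as the only extra axiom. But that equation is precisely $X_\lambda(G \rweq H)$ with $b$ expanded $k$ times and killed, so each of its concrete instances is a concrete instance of $X_\lambda(G \rweq H)$ of expansion count $k$, and the inductive hypothesis supplies an $\overset{\sim}{E_\lambda}$-derivation of each. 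Splicing these sub-derivations in place of every \textsc{Axiom}-use of the auxiliary equation — admissible because the equational logic is closed under \textsc{Trans} and string-graph rewriting implements it — gives $\overset{\sim}{E_\lambda} \vdash G' \rweq H'$, completing the induction.

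I expect the main obstacle to be the bookkeeping of the three distinct semantic index sets: reconciling $E_\lambda$ (over $A$) with $(E \cup \{\FIX^x_b(G \rweq H)\})_\mu$ (over $A \cup \{x\}$), and verifying that the $\FIX^x_b$-tagged expansions of the goal, of the inductive hypothesis, and of the extra axiom all align on the \emph{same} expansion count. The freshness condition on $x$ and the commutation of top-level $!$-box operations are exactly what make these identifications valid, and some care is needed to confirm that expansion-normal form genuinely isolates a single, well-defined induction parameter $n$ for each concrete instance.
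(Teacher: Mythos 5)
Your proposal is correct and follows essentially the same route as the paper's own proof: induction on the number $n$ of $\EXP_b$ operations in an expansion-normal-form instantiation with all $b$-operations pulled to the front, using the first premise (after commuting $\KILL_b$ past $X_\lambda$) for the base case and the second premise with the extended assignment $\mu(x) = k$ for the step, then cutting the auxiliary axiom by splicing in derivations supplied by the inductive hypothesis. The bookkeeping identifications you flag at the end, namely $E_\mu = E_\lambda \cup \{X_\lambda(X^x_k(G \rweq H))\}$ via freshness of $x$ and the alignment of expansion counts, are precisely the ones the paper makes.
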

\begin{proof}
    Let $A = \im(\fix_{G \rweq H})$.  We need to show that for each
    $\lambda \in \mathbb{N}_0^A$ and each concrete instance $G' \rweq H'$ of
    $X_\lambda(G \rweq H)$,
    \[ \overset{\sim}{E_\lambda} \vdash G' \rweq H' \]
    in the equational theory of string graphs.

    We can choose an instantiation of $G' \rweq H'$ in expansion-normal form
    with all the operations on $b$ at the start (since $b$ is top-level).
    Denote this $T;U$, where $T$ contains all the operations on $b$, and $U$
    all the other operations.  $T$ is comprised of $n$ expansions, for some
    $n$, followed by $\KILL_b$.

    We show by induction on $n$ that, for all concrete instances $G_V \rweq
    H_V$ of $X_\lambda(G \rweq H)$ with an instantiation of the form $T;V$,
    \[ \overset{\sim}{E_\lambda} \vdash G_V \rweq H_V \]

    $n=0$: $T$ is just $\KILL_b$.  It follows from the definition of
    $X_\lambda$ that there is an instantiation $W$ of $X_\lambda(G \rweq H)$
    from $G \rweq H$ such that $W$ only operates on top-level $!$-vertices of
    $G \rweq H$.  Then $W;T;V$ is equivalent to $T;W;V$, and so $G_V \rweq
    H_V$ is an instance of $\KILL_b(G \rweq H)$.  The assumption $E \vdash
    \KILL_b(G \rweq H)$ then gives us that
    \[ \overset{\sim}{E_\lambda} \vdash G_V \rweq H_V \]

    $n=k+1$: let $T'$ be $T$ with the first expansion removed (so $T =
    \EXP_b;T'$).  We know that $T'$ is equivalent to $X^x_k$, and so $V$ is an
    instantiation of $G_V \rweq H_V$ from $X_k^x(\EXP_b(X_\lambda(G \rweq
    H)))$.  But
    \[ X_k^x(\EXP_b(X_\lambda(G \rweq H))) = X_\lambda(X_k^x(\EXP_b(G \rweq H))) \]
    since all the operations are on top-level $!$-vertices of $G \rweq H$.
    Further, we can define
    \[ \mu(y) =
        \begin{cases}
            \lambda(y) & y \in A \\
            k & x = y
        \end{cases}
    \]
    and then
    \[ X_\lambda(X_k^x(\EXP_b(G \rweq H))) = X_\mu(\FIX_b(\EXP_b(G \rweq H))) \]
    So now we have that $V$ is an instantiation of $G_V \rweq H_V$ from
    $X_\mu(\FIX_b(\EXP_b(G \rweq H)))$, and then the step case assumption
    means we can construct a proof of
    \[ \overset{\sim}{E_\mu} \vdash G_V \rweq H_V \]

    Now the way $\mu$ was defined means that
    \[ E_\mu = E_\lambda \cup \{X_\lambda(X_k^x(G \rweq H))\} \]
    and so the proof may contain invocations of \textsc{Axiom} with a concrete
    instance of ${X_\lambda(X_k^x(G \rweq H))}$.  Consider one such
    invocation, where the axiom ${G_W \rweq H_W}$ is a (concrete) instance of
    ${X_\lambda(X_k^x(G \rweq H))}$, which is the same as ${X_k^x(X_\lambda(G
    \rweq H))}$.  Then ${G_W \rweq H_W}$ is an instance of ${X_\lambda(G \rweq
    H)}$ and there is an instantiation of the form $T';W$ witnessing this.
    The inductive hypothesis then allows us to construct a proof of
    \[ \overset{\sim}{E_\lambda} \vdash G_W \rweq H_W \]
    We can thus replace all such \textsc{Axiom} invocations with proof trees
    using only the axioms of $E_\lambda$, and so discard $X_\lambda(X_k^x(G
    \rweq H))$ from $E_\mu$, leaving us with a proof of
    \[ \overset{\sim}{E_\lambda} \vdash G_V \rweq H_V \]
    as required, and the induction is complete.

    Now we just let $V = U$, and we have
    \[ \overset{\sim}{E_\lambda} \vdash G' \rweq H' \]
    as required.
\end{proof}

This can be viewed as a sort of $!$-box introduction rule, more powerful than
\textsc{Box} (although relying on correspondingly stronger assumptions).  It
can be used to produce $!$-versions of concrete equations, such as those
generated by an automated tool like QuantoCosy\cite{KissingerDPhil}.

\begin{example}
    Suppose we have the following three graph equations, drawn from the Z/X
    calculus presented in section \ref{sec:quantum}:
    \begin{equation}
        \label{rule:z-unit-x-counit-cancel}
        \input{z-unit-x-counit-cancel-graph.tikz}
    \end{equation}
    \begin{equation}
        \label{rule:x-copies-z}
        \input{x-copies-z-graph.tikz}
    \end{equation}
    \begin{equation}
        \label{rule:x-spider}
        \input{x-spider-law-graph.tikz}
    \end{equation}
    The first two are easily found by QuantoCosy, and the latter is the spider
    theorem arising from commutative Frobenius algebras.  We wish to use
    $!$-box induction to show
    \begin{equation}
        \label{rule:x-copies-z-spider}
        \input{x-copies-z-spider-graph.tikz}
    \end{equation}

    The base case is just \eqref{rule:z-unit-x-counit-cancel}, and so is
    trivially satisfied.  Now we need to show that adding
    \[ \input{x-copies-z-spider-graph-fixed.tikz} \]
    to the other equations allows us to derive
    \[ \input{x-copies-z-spider-graph-exp.tikz} \]

    This follows using $!$-graph rewriting:
    \[ \input{x-copies-z-spider-graph-prf.tikz} \]
\end{example}

\subsection{The Generalised Bialgebra Law}
\label{sec:gen-bialg-prf}

We now demonstrate a more ambitious application of $!$-box induction, where we
derive the generalised bialgebra law for the Z/X calculus
\[ \input{gen-bialg-graph-labelled.tikz} \]

Recall the axioms of the Z/X calculus:
\begin{mathpar}
    \textsc{XSp} \enskip \input{x-spider-law-graph.tikz}
    \and
    \textsc{XId} \enskip \input{x-spider-id-graph.tikz}
    \and
    \textsc{XLoop} \enskip \input{x-spider-loop-graph.tikz}
    \and
    \textsc{ZSp} \enskip \input{z-spider-law-graph.tikz}
    \and
    \textsc{ZId} \enskip \input{z-spider-id-graph.tikz}
    \and
    \textsc{ZLoop} \enskip \input{z-spider-loop-graph.tikz}
    \and
    \textsc{ZCpX} \enskip \input{z-copies-x-graph.tikz}
    \and
    \textsc{XCpZ} \enskip \input{x-copies-z-graph.tikz}
    \and
    \textsc{Bialg} \enskip \input{bialg-graph.tikz}
    \\
    \textsc{Scalar} \enskip \input{z-x-mult-comult-cancel-graph.tikz}
    \and
    \textsc{Dual} \enskip \input{z-x-duals-coincide-graph.tikz}
\end{mathpar}
We have omitted the commutativity laws, as these are inherent in the fact that
all the nodes are variable-arity, and hence the inputs (resp. outputs) of a Z
or X node are indistinguishable from each other.  In particular,
\[ \input{z-x-duals-coincide-mirror-graph.tikz} \]
is actually the same string graph equation as \textsc{Dual}.

We will start by proving some symmetries of these equations.
Note that, for brevity, we will tend to condense multiple applications of the
same rewrite rule (or of rules that differ only by colour).
\begin{lemma}
    \begin{equation}
        \label{bialg-eq:dual-swap}
        \input{z-x-duals-coincide-swap-mirror-graph.tikz}
    \end{equation}
\end{lemma}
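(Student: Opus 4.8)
The plan is to derive \eqref{bialg-eq:dual-swap} directly from the \textsc{Dual} axiom, using only the commutativity of the X node and cocommutativity of the Z node that are already built into the variable-arity encoding (following the pattern sketched for the dual-law symmetries in section~\ref{sec:quantum}, where it was noted that the remaining symmetries ``rely on commutativity'').

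First I would record the observation made immediately above the lemma: the left--right mirror of the \textsc{Dual} law is \emph{literally} the same string graph equation as \textsc{Dual}, because mirroring only relabels which boundary wire is which, and the (co)commutativity of the nodes means the two presentations coincide as graphs. It therefore suffices to bridge the gap between the \emph{swap-mirror} configuration appearing in \eqref{bialg-eq:dual-swap} and the plain mirror configuration. The difference between these two is precisely a braiding $\gamma_{A,A}$ inserted on the pair of wires that leave the Z-cap (equivalently, that meet the X-cup).

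The central step is then to slide this braiding through the node structure and absorb it. Since the Z node is cocommutative, a swap on its two outgoing legs can be absorbed into the node; dually, since the X node is commutative, the same holds at the multiplication end. Graphically this is the familiar ``slide a swap along a wire'' manipulation, realised formally as a short chain of \textsc{ZSp} and \textsc{XSp} rewrites together with the commutativity laws. Carrying the braiding from one end of the cup--cap structure to the other and absorbing it leaves exactly the mirror of \textsc{Dual}, which we have just identified with \textsc{Dual} itself; transitivity of $\rweq$ then yields \eqref{bialg-eq:dual-swap}.

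The main obstacle I anticipate is bookkeeping the orientation of the braiding as it travels around the ``bent'' wire of the dual structure: in the traced and compact-closed reading, a swap on the two legs of a cap corresponds, after yanking, to a swap on the legs of the opposing cup, so one must check that sliding it all the way through contributes a \emph{single} application of commutativity rather than two cancelling ones (which would instead collapse the statement to the trivial mirror identity). Exhibiting the explicit rewrite sequence and verifying at each step that the move is a legitimate instance of (co)commutativity, and not an illegal deformation, is the delicate part; the remaining manipulations are routine applications of the spider laws.
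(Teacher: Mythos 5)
You have misidentified what the lemma asserts, and the misreading is fatal to the proposed argument. The ``swap'' in \eqref{bialg-eq:dual-swap} is not a braiding $\gamma_{A,A}$ inserted on the legs of the cup and cap; it is the \emph{colour} swap: the equation states that the cap built from the Z spider (two inputs, no outputs) and the cup built from the X spider (no inputs, two outputs) also satisfy the snake identity --- it is the mirror of the symmetry whose long symbolic derivation occupies section \ref{sec:quantum}, in which X and Z exchange their roles relative to \textsc{Dual}. Under your reading the lemma would be vacuous rather than merely easy: string graphs carry no planar embedding and no ordering on the edges entering a variable-arity node, and crossings are not part of the syntax at all. That is precisely why the paper remarks, just before the lemma, that the plain mirror of \textsc{Dual} is \emph{literally the same string graph equation} as \textsc{Dual}, and why the commutativity laws are omitted from the axiom list --- there are no commutativity rewrite rules for your ``short chain of \textsc{ZSp} and \textsc{XSp} rewrites together with the commutativity laws'' to invoke. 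A swap on the two out-edges of the Z node yields the identical graph, so your bridge from the swap-mirror configuration to the mirror configuration connects a graph to itself and establishes only an instance of \textsc{Refl}, not the lemma.

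The genuine content needs the Frobenius structure, which your proposal never touches. The colour-swapped snake is a different directed string graph from \textsc{Dual} (the Z node now has two in-edges and no out-edges, the X node two out-edges and no in-edges), and no application of (co)commutativity --- inherent or otherwise --- can exchange the colours. The paper's proof is the string-graph rendering of the calculation in section \ref{sec:quantum}: introduce spiders on the wire via the identity laws (\textsc{XId}, \textsc{ZId}), insert instances of \textsc{Dual} to bend the wire, and merge with the spider laws (\textsc{XSp}, \textsc{ZSp}), so that \textsc{Dual} and the spider (i.e.\ Frobenius) laws are used essentially --- the symbolic version makes this explicit by invoking unitality, the Frobenius law, and \textsc{Dual}, but never commutativity. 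Your own closing worry (that the sliding argument might ``collapse the statement to the trivial mirror identity'') is in fact what happens: in this formalism the braiding-absorption step is a no-op, and what remains to be proved is exactly the part your plan omits.
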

\begin{proof}
    \[ \input{z-x-duals-swap-mirror-prf.tikz} \]
\end{proof}

\begin{lemma}
    \begin{equation}
        \label{bialg-eq:z-cp-x-upside-down}
        \input{z-copies-x-upside-down-graph.tikz}
    \end{equation}
\end{lemma}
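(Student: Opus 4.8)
The plan is to derive this upside-down form of \textsc{ZCpX} by exactly the diagrammatic strategy used for the preceding lemma \eqref{bialg-eq:dual-swap}: a single chain of rewrites that bends the relevant wires, applies the copying axiom in its standard orientation, and then unbends them again. Since this is a concrete graphical equation in the Z/X calculus, no $!$-box machinery is involved; the work is entirely a sequence of applications of the listed axioms together with the symmetry just established.

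First I would use the dual law \textsc{Dual} (together with the mirror symmetry \eqref{bialg-eq:dual-swap}) to attach the appropriate cup/cap pairs to the legs of the upside-down configuration. Because \textsc{Dual} identifies the duals arising from the Z and X structures, these caps and cups can be freely introduced and slid along the wires, effectively rotating the offending node so that the diagram matches the left-hand side of \textsc{ZCpX} in its original (right-side-up) orientation. Next I would apply \textsc{ZCpX} directly, copying the node through as in the standard law, and then reverse the bending: pull the cups and caps back out using \textsc{Dual} and \eqref{bialg-eq:dual-swap} once more, absorbing any stray wires or identity loops with the spider and identity laws (\textsc{ZSp}, \textsc{XSp}, \textsc{ZId}, \textsc{XId}). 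The commutativity and cocommutativity inherent in the variable-arity nodes guarantee that the order in which the copied legs reattach is immaterial, so the diagram closes up to exactly the desired equation.

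The main obstacle will be the bookkeeping of the cups and caps, ensuring that the rotation lands precisely on the standard \textsc{ZCpX} pattern and then unwinds cleanly rather than leaving residual twists or a mismatched boundary. The crucial ingredient here is the self-duality packaged in \textsc{Dual} and its mirror \eqref{bialg-eq:dual-swap}: without those symmetries the bent wires would not re-close into the correct orientation, and the two sides of the claimed equation would fail to align. Provided the duals are tracked carefully through the bending and unbending, the remaining steps are routine applications of the spider laws, and the derivation can be presented as a single \texttt{tikzfig} chain of equalities in the manner of the previous proof.
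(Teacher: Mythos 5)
Your proposal matches the paper's proof in both substance and presentation: the paper's derivation is exactly a single chain of diagrammatic rewrites that bends the wires using \textsc{Dual} together with the mirrored form \eqref{bialg-eq:dual-swap}, applies \textsc{ZCpX} in its standard orientation, and then unbends, tidying up with the spider and identity laws. Your observation that this is a purely concrete equation (no $!$-box operations needed, since the spiderised upside-down version is handled separately by induction) is also consistent with how the paper organises the argument.
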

\begin{proof}
    \[ \input{z-copies-x-upside-down-prf.tikz} \]
\end{proof}

We have already derived
\begin{equation}
    \label{bialg-eq:x-copies-z-spider}
    \input{x-copies-z-spider-graph.tikz}
\end{equation}
and
\begin{equation}
    \label{bialg-eq:z-copies-x-spider-upside-down}
    \input{z-copies-x-upside-down-spider-graph.tikz}
\end{equation}
follows from \eqref{bialg-eq:z-cp-x-upside-down} by essentially the same
proof.

\begin{lemma}
    \begin{equation}
        \label{bialg-eq:lem-1}
        \input{gen-bialg-lem-1.tikz}
    \end{equation}
\end{lemma}
\begin{proof}
    Proof by \textsc{Induct} on $b$.  The base case is
    \[ \input{gen-bialg-lem-1-base-prf.tikz} \]
    and the inductive step is
    \[ \input{gen-bialg-lem-1-prf.tikz} \]
    \[ \input{gen-bialg-lem-1-prf-2.tikz} \]
    \[ \input{gen-bialg-lem-1-prf-3.tikz} \]
\end{proof}

\begin{lemma}
    \begin{equation}
        \label{bialg-eq:lem-2}
        \input{gen-bialg-lem-2.tikz}
    \end{equation}
\end{lemma}
\begin{proof}
    Proof by \textsc{Induct} on $b$.  The base case is
    \[ \input{gen-bialg-lem-2-base-prf.tikz} \]
    and the inductive step is
    \[ \input{gen-bialg-lem-2-prf.tikz} \]
    \[ \input{gen-bialg-lem-2-prf-2.tikz} \]
    \[ \input{gen-bialg-lem-2-prf-3.tikz} \]
\end{proof}

\begin{theorem}
    \[ \input{gen-bialg-graph-labelled.tikz} \]
\end{theorem}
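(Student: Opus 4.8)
The plan is to prove the generalised bialgebra law by \textsc{Induct} on the outer top-level $!$-vertex $b_1$, peeling off one ``row'' of the bipartite structure at a time. Because $b_2$ is nested inside $b_1$, each expansion of $b_1$ replicates the inner $!$-box as well, so a single inductive step on $b_1$ introduces one additional X-spider together with its full complement of connecting wires to the entire Z-spider side. This matches the two-parameter structure: $b_1$ controls the X-dimension while $b_2$ controls, within each X-row, the connections to the Z-dimension. First I would fix the framing and set up the induction exactly as in the proofs of \eqref{bialg-eq:lem-1} and \eqref{bialg-eq:lem-2}.

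For the base case I would apply $\KILL_{b_1}$, which eliminates the X-dimension entirely and collapses the complete bipartite graph. On the left-hand side each Z-spider is then left with no X-neighbours, so after \textsc{Scalar} removes the resulting scalar fragments and \textsc{ZId}/\textsc{ZSp} normalise the Z-side, both sides should reduce to the same concrete $!$-graph. I expect this case to be short and to follow from the axioms together with \eqref{bialg-eq:dual-swap}.

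For the step case I would assume the law holds for the $x$-fixed $b_1$ (the inductive hypothesis, now available as an axiom via $\FIX^x_{b_1}$) and prove $\FIX^x_{b_1}(\EXP_{b_1}(\cdots))$. The idea is to expand $b_1$ once, separating a single X-spider (carrying its copied inner $b_2$-box of wires into the Z-side) from the remaining bipartite block, which is still governed by the fixed $b_1$. The lone X-spider attached across the whole Z-row is precisely the configuration handled by \eqref{bialg-eq:lem-1} and \eqref{bialg-eq:lem-2}; rewriting with these (whose real content is the copying spider laws \eqref{bialg-eq:x-copies-z-spider} and \eqref{bialg-eq:z-copies-x-spider-upside-down}) pushes the new X-spider through the Z-side. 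I would then apply the inductive hypothesis to the remaining, smaller bipartite block and reassemble with \textsc{XSp} and \textsc{ZSp} to reach the target normal form. Each of these is a $!$-graph rewrite of the kind justified in Chapter~\ref{ch:rw-bang-graphs}, and soundness of the whole derivation follows from corollary~\ref{cor:rw-semantics}.

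The hard part will be the nested-$!$-box bookkeeping rather than any individual rewrite. When $\EXP_{b_1}$ fires, the inner $b_2$-box is duplicated along with $b_1$, so I must track precisely which copies of $b_2$ attach to the freshly separated X-spider and which remain with the fixed block, and then align these copies with the interfaces demanded by \eqref{bialg-eq:lem-1} and \eqref{bialg-eq:lem-2}. Ensuring that the boundary correspondences and wire homeomorphisms line up so that \textsc{Leibniz} (and, where needed, \textsc{Homeo}) genuinely applies at each stage is where the delicacy lies; this is the same subtlety that made the informal graphical argument at the end of Chapter~\ref{ch:diag-reas} hard to verify, and making it rigorous is the principal obstacle.
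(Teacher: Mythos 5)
There is a genuine gap, and it stems from a structural misreading of the $!$-graph. You assert that $b_2$ is nested inside $b_1$, so that a single \textsc{Induct} on $b_1$ handles both dimensions. In fact $b_1$ and $b_2$ are \emph{both top-level} $!$-vertices that \emph{overlap} on the nested wire-boxes: each contains its own inner $!$-box around the connecting wires, but neither contains the other. This is not incidental -- the paper makes exactly this point in section \ref{sec:bg-related}, observing that tree-structured (purely nested) repetition cannot express arbitrary bipartite graphs; the overlap is what encodes the $(s,t)$-bipartite structure, with $b_1$ and $b_2$ independently controlling the two parameters. (The instantiation example in section \ref{sec:nesting-overlapping} confirms this: it applies $\COPY_{b_1}$, then $\COPY_{b_2}$, then drops \emph{all the top-level} $!$-vertices.) Consequently your induction on $b_1$ alone does not discharge $b_2$: your ``base case'' $\KILL_{b_1}$ is not a concrete reduction but a $!$-graph equation still containing $b_2$, and your step-case goal (the analogue of \eqref{bialg-eq:ind-1-aim}) still quantifies over $b_2$. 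The paper's proof therefore performs a \emph{second, nested} application of \textsc{Induct} on $b_2$ inside the step case of the first induction, and it is that inner induction which does the work you hope to get from \eqref{bialg-eq:lem-1} and \eqref{bialg-eq:lem-2} alone: after $\EXP_{b_1}$, the new X-spider's connections remain governed by $b_2$ (its copied inner wire-box is still inside $b_2$), so pushing it through the Z-side and merging it into the remaining block is an operation along the whole $b_2$-indexed family, not a finite rewrite.

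A second, related omission: your plan to ``reassemble with \textsc{XSp} and \textsc{ZSp}'' cannot be carried out with the plain spider laws, because merging a single spider into a $!$-boxed family of spiders requires the spider law \emph{under a $!$-box}. The paper's first move is to apply \textsc{Box} to \textsc{ZSp} and \textsc{XSp}, obtaining \eqref{bialg-eq:z-sp-boxed} and \eqref{bialg-eq:x-sp-boxed}, and these boxed laws are used repeatedly in the inner step case. Your use of the lemmas \eqref{bialg-eq:lem-1} and \eqref{bialg-eq:lem-2} is consonant with the paper, and your identification of the $!$-box bookkeeping as the delicate point is fair; but as written the proposal proves at most the analogue of the outer induction's step reduction and leaves the $b_2$ dimension -- the second induction and the \textsc{Box}-ed spider laws it relies on -- unaccounted for.
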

\begin{proof}
    First we use \textsc{Box} on \textsc{ZSp} to get
    \begin{equation}
        \input{z-spider-law-boxed.tikz}
        \label{bialg-eq:z-sp-boxed}
    \end{equation}
    and on \textsc{XSp} to get
    \begin{equation}
        \input{x-spider-law-boxed.tikz}
        \label{bialg-eq:x-sp-boxed}
    \end{equation}

    We will need to use \textsc{Induct} twice.  First, we use it on $b_1$.
    The base case is
    \[ \input{gen-bialg-prf-ind-1-base.tikz} \]
    The inductive hypothesis is
    \begin{equation}
        \input{gen-bialg-prf-ind-1-ih.tikz}
        \label{bialg-eq:ih-1}
    \end{equation}
    and we need to show
    \begin{equation}
        \input{gen-bialg-prf-ind-1-step-aim.tikz}
        \label{bialg-eq:ind-1-aim}
    \end{equation}

    We apply \textsc{Induct} again, this time on $b_2$.  The base case is
    \[ \input{gen-bialg-prf-ind-2-base.tikz} \]
    The new inductive hypothesis is
    \begin{equation}
        \input{gen-bialg-prf-ind-2-ih.tikz}
        \label{bialg-eq:ih-2}
    \end{equation}
    and we need to show
    \begin{equation}
        \input{gen-bialg-prf-ind-2-step-aim.tikz}
        \label{bialg-eq:ind-2-aim}
    \end{equation}
    This is done as follows:
    \[ \input{gen-bialg-prf-ind-2-step-prf.tikz} \]
    \[ \input{gen-bialg-prf-ind-2-step-prf-2.tikz} \]
    \[ \input{gen-bialg-prf-ind-2-step-prf-3.tikz} \]

    Thus we have demonstrated that \eqref{bialg-eq:ind-2-aim} holds, so
    \textsc{Induct} gives us that \eqref{bialg-eq:ind-1-aim} holds, and
    \textsc{Induct} again gives us that the theorem holds.
\end{proof}

This example, while a demonstration of the power of \textsc{Induct}, still
requires the rewrite system to initially contain the spider laws.  Appendix
\ref{apdx:spider-law} demonstrates a possible way of using $!$-box induction
to introduce $!$-boxes into rewrite systems that previously contained only
concrete rewrite rules.

\section{Merging $!$-boxes}
\label{sec:bbox-merge}

The $\MERGE$ operation presented in this section was included as one of the
core $!$-box operations in \cite{Kissinger2012a} (pp 6-7,10-11).  Theorem
\ref{thm:merge-sound} demonstrates that this is unnecessary.

Suppose we have the equation
\[ \input{mergable-rule.tikz} \]
and the graph
\[ \input{merged-graph.tikz} \]
It is easy to see that for any concrete instance of the graph, there will be a
concrete instance of the equation whose LHS matches the graph: we just need to
do the same things to both $b$ and $c$ that we do to $d$.  However, the LHS of
the equation as it stands will not match the graph.

In order to allow this, we will introduce a new $!$-box operation that will
merge two disjoint $!$-boxes in a $!$-graph.  The two $!$-boxes will need to
have the same parents in order to ensure that combining them will not affect
the contents of any other $!$-box in the graph.

\begin{definition}[$\MERGE$; \cite{Kissinger2012a}, pp 7]\label{def:merge-op}
  Suppose $G$ is a $!$-graph and $b,c\! \in\, !(G)$,
  with $B^\uparrow(b)\graphminus b = B^\uparrow(c)\graphminus c$
  and $B(b) \cap B(c) = \{ \}$, then $\MERGE_{b,c}(G)$ is a
  quotient of $G$ where $B^\uparrow(b)$ and $B^\uparrow(c)$ are
  identified. More explicitly, this is the coequaliser
  \begin{equation}
    \label{eq:merge-diagram}
    \begin{tikzpicture}
      \matrix(m)[cdiag]{
      B^\uparrow(b) & G & \MERGE_{b,c}(G) \\};
      \path [arrs] (m-1-1) edge [arrow above] node {$\widehat b$} (m-1-2)
                   (m-1-1) edge [arrow below] node [swap] {$\widehat{c}$} (m-1-2)
                   (m-1-2) edge (m-1-3);
    \end{tikzpicture}
  \end{equation}
  in $\catGraph/\mathcal{G}_{T!}$ where $\widehat b$ is the normal inclusion
  map and $\widehat{c}$ is the inclusion of $B^\uparrow(c)$ into $G$
  composed with the obvious isomorphism from $B^\uparrow(b)$ to
  $B^\uparrow(c)$.
\end{definition}

This construction identifies $b$ and $c$, and leaves all the other vertices
of $G$ untouched.  The preconditions ensure that every incoming edge of $b$
corresponds exactly to an incoming edge of $c$ (and vice versa); each such
edge to $b$ is identified with its corresponding edge to $c$.  If a pair of
$!$-vertices satisfy the preconditions of $\MERGE$, we say they are
\textit{mergable}.

\begin{proposition}\label{prop:instantiation-of-merge}
    Let $G$ be a $!$-graph with mergable $!$-vertices $b$ and $c$.  Then
    $\MERGE_{b,c}(G)$ is a $!$-graph and $U(G) \cong U(\MERGE_{b,c}(G))$.
\end{proposition}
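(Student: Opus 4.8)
The plan is to work from the explicit description of the coequaliser \eqref{eq:merge-diagram} and then verify the four conditions of definition \ref{def:bang-graph} directly. First I would make the quotient concrete. Since $B^\uparrow(b)\graphminus b = B^\uparrow(c)\graphminus c$, the two legs $\widehat b$ and $\widehat c$ agree on every proper parent of $b$ and on all edges between parents, so the only non-trivial identifications performed by the coequaliser are $b\sim c$, the self-loop on $b$ with the self-loop on $c$, and, for each common parent $p$, the edge $p\to b$ with the edge $p\to c$. Writing $d$ for the merged $!$-vertex and $q:G\to\MERGE_{b,c}(G)$ for the quotient map, every identified vertex or edge is therefore a $!$-vertex, a $p_!$-self-loop, or a $!$-to-$!$ edge.

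Given this, $U(G)\cong U(\MERGE_{b,c}(G))$ is immediate: $U$ discards exactly the $!$-vertices and all edges incident to them, and these are precisely the data the quotient touches. Concretely I would check that $U(q)$ is a bijection on node- and wire-vertices and on the edges between them (injective because $q$ only identifies $!$-data, surjective because $q$ is onto) and that it preserves sources, targets and types, so it is an isomorphism. This yields condition~1 of definition \ref{def:bang-graph} for free, since $U(G)$ is already a string graph.

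The substantive work is condition~2, that $\beta(\MERGE_{b,c}(G))$ is posetal. Reflexivity is inherited ($d$ keeps a self-loop), and simplicity follows from the bookkeeping above: the two parent-edges $p\to b$ and $p\to c$ are identified into a single edge $p\to d$, while $B(b)\cap B(c)=\varnothing$ forbids a common child and hence a doubled out-edge. For antisymmetry and transitivity I would lift each edge incident to $d$ back to $G$ (an edge into $d$ comes from $q\to b$ or $q\to c$, an edge out of $d$ from $b\to q$ or $c\to q$) and argue case by case. The genuinely new cases are the mixed ones, and these are exactly where both mergability hypotheses are needed: a configuration $b\to q$ together with $q\to c$ would force $c\in B(b)$ via the nesting condition of $G$, contradicting $B(b)\cap B(c)=\varnothing$; and a configuration $x\to b$ together with $c\to r$ is resolved because $x$, being a proper parent of $b$, is also a proper parent of $c$, so nesting gives $r\in B(c)\subseteq B(x)$. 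I expect assembling these cases cleanly to be the main obstacle, since it is the point where the two preconditions genuinely interact.

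Conditions~3 and~4 then fall out. For condition~4, if $e'\in B(e)$ in $\MERGE_{b,c}(G)$, then the transitivity just established gives $B(e')\subseteq B(e)$. For condition~3, the box of any $!$-vertex other than $d$ has the same concrete content as in $G$ and so remains open under the isomorphism of the concrete parts; for $d$ itself, the concrete part of its box corresponds to $U(B(b))\cup U(B(c))$ in $U(G)$, and proposition \ref{prop:open-subgraph-props} (unions of open subgraphs are open) shows this is open in $U(G)\cong U(\MERGE_{b,c}(G))$. With all four conditions verified, $\MERGE_{b,c}(G)$ is a $!$-graph.
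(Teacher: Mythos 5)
Your proof is correct in substance but reaches the key structural facts by a genuinely different route than the paper. The paper never computes the coequaliser explicitly: it shows $U(h)$ is invertible by factoring a map $G \rightarrow \BOX(U(G))$ through the coequaliser map $h$ (making $U(h)$ a split monomorphism as well as a strong epimorphism), and it establishes that only $b$ and $c$ are identified among $!$-vertices by a separating-map argument, mapping $G$ onto an auxiliary graph with two $!$-vertices and invoking the universal property of the coequaliser. You instead front-load a concrete computation of the quotient --- valid because colimits in the slice $\catGraph/\mathcal{G}_{T!}$ are computed as in $\catGraph$, i.e.\ pointwise in $\catSet$, and worth a sentence of justification --- observing that each element of $G$ occurs in at most one nontrivial pair $(\widehat{b}(z),\widehat{c}(z))$, so the generated equivalence has exactly the classes you list. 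The $U$-isomorphism and the identification bookkeeping then come for free, and your downstream case analyses for simplicity, antisymmetry, transitivity and openness parallel the paper's almost exactly: your two ``mixed'' configurations are precisely the ones the paper resolves using the shared-parents hypothesis $B^\uparrow(b)\graphminus b = B^\uparrow(c)\graphminus c$ and the disjointness $B(b)\cap B(c) = \varnothing$. Your route is more elementary and makes the $U$-isomorphism transparent; the paper's buys independence from the concrete presentation of the quotient. One compression you should repair: condition 4 of definition \ref{def:bang-graph} does not follow from transitivity of $\beta(\MERGE_{b,c}(G))$ alone, since $B(e')$ contains wire- and node-vertices, not only $!$-vertices. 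You need to rerun your mixed-case lifting argument for an edge $e \rightarrow e'$ followed by an edge $e' \rightarrow v$ with $v$ an \emph{arbitrary} vertex, exactly as the paper does in the final paragraph of its proof; the same two mergability hypotheses close the new cases, so this is a routine extension of machinery you have already set up rather than a missing idea.
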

\begin{proof}
    Let $H = \MERGE_{b,c}(G)$, and let $h$ be the coequaliser map.
    \[
        \begin{tikzcd}
            B^\uparrow(b)
            \rar[arrow above]{\widehat b}
            \rar[swap,arrow below]{\widehat{c}} &
            G \rar{h} & H
        \end{tikzcd}
    \]
    $H$ exists in $\catGraph/\mathcal{G}_{T!}$, as
    $\catGraph/\mathcal{G}_{T!}$ has coequalisers of monomorphisms.  Note that
    $h$ is surjective by lemma \ref{lemma:spg-epi-surj} since it is a regular,
    and hence strong, epimorphism of $\catGraph/\mathcal{G}_{T!}$.

    We can construct a morphism $f : G \rightarrow \BOX(U(G))$ that is the
    identity on $U(G)$ (and collapses all the other $!$-boxes), and this
    trivially coequalises $\widehat{b}$ and $\widehat{c}$.
    So there is a unique $g$ making
    \[
        \begin{tikzcd}
            B^\uparrow(b)
            \rar[arrow above]{\widehat b}
            \rar[swap,arrow below]{\widehat{c}} &
            G \rar{h} \ar[swap]{dr}{f} & H \dar[dashed]{g} \\
            && \BOX(G)
        \end{tikzcd}
    \]
    commute.  Since $U(f)$ is an isomorphism, $U(h)$ is a (split) monomorphism
    as well as being a strong epimorphism, and hence is invertible.  So $U(G)
    \cong U(H)$ and hence $U(H)$ must be a string graph, since $U(G)$ is.

    Let $x,y$ be vertices in $!(G)$ such that $h(x)=h(y)$.  We can show that
    either $x = y$ or $\{x,y\} = \{b,c\}$.  Suppose $x \neq y$, let $G'$ be
    $U(G)$ together with the graph of $!$-vertices
    \begin{equation*}
        \input{full-graph-g2.tikz}
    \end{equation*}
    and edges from $a$ and $b$ to each vertex of $U(G)$.  Then let
    $f:G \rightarrow G'$ be the identity on $U(G)$ and take every vertex in
    $!(G)$ to $a$ except $y$, which it takes to $b$.  This also fixes the edge
    maps.  If $y \neq b$ and $y \neq c$, $g$ will coequalise $\widehat b$ and
    $\widehat{c}$, and so there must be a unique map $g:H \rightarrow G'$ such
    that $f = g \circ h$.  But then $g(h(x)) \neq g(h(y))$ even though $h(x) =
    h(y)$, which is impossible.  So we must have $y = b$ or $y = c$.
    Repeating with the roles of $x$ and $y$ switched, we get that $x = b$ or
    $x = c$.  Thus if $x \neq y$, then $\{x,y\}=\{b,c\}$.

    $\beta(H)$ is reflexive as it is the image of $\beta(h)$, the domain of
    which is reflexive.

    Let $x \neq y$ be vertices in $\beta(H)$ with $e_1,e_2$ edges between them
    such that $s(e_1)=t(e_2)=x$ and $t(e_1)=s(e_2)=y$.
    \[ \input{merge-posetal-prf-antisym-1.tikz} \]
    $\beta(G)$ is anti-symmetric, and so the preimages of $e_1$ and $e_2$ in
    under $h$ cannot be in the same loop construction.  Thus either $x$ or $y$
    must be mapped to by more than one vertex of $G$, and hence must have the
    preimage $\{b,c\}$.  This cannot be true of both, since $x \neq y$.
    WLOG, suppose it is true of $x$.  Then $y'$, the preimage of $y$, must be
    in $B(b)$ and $B^\uparrow(c)$ (or vice versa).  But
    $B^\uparrow(b)\graphminus b = B^\uparrow(c)\graphminus c$ and (since $y'
    \neq c$) $y'$ must be in $B^\uparrow(b)$, which can only be true if $y' =
    b$, which we assumed was not the case.  So there can be no such $e_1$ and
    $e_2$.

    Simplicity of $\beta(H)$ follows a similar argument: if $e_1$ and $e_2$
    are edges from $x$ to $y$ where $x \neq y$, then if the preimage of $y$
    is $\{b,c\}$, $x$ must be in the image of $\widehat{b}$
    and $\widehat{c}$ and so the preimages of $e_1$ and $e_2$ must be
    coequalised by $h$ and hence $e_1 = e_2$, and if the preimage of $x$ is
    $\{b,c\}$ then the fact that $B(b) \cap B(c) = \varnothing$ means the
    preimages of $e_1$ and $e_2$ must be the same, and hence $e_1 = e_2$.

    For transitivity, let $x,y,z$ be vertices in $\beta(H)$ with $e_1$ an edge
    from $x$ to $y$ and $e_2$ from $y$ to $z$.  Let $e_1',e_2'$ be arbitrary
    preimages of the respective edges under $h$, as before.  Then either
    $t(e_1')=s(e_2')$, in which case the existence of an edge from $x$ to $z$
    is immediate from transitivity of $G$, or $\{t(e_1'),s(e_2')\} =
    \{b,c\}$.  WLOG, assume $t(e_1') = b$ and $s(e_2') = c$.  Then
    $B^\uparrow(b)\graphminus b = B^\uparrow(c)\graphminus c$ gives us that
    $s(e_1')$ must be a predecessor of $c$, and so $x$ must be a predecessor
    of $z$, as required.

    Let $d \in\:!(H)$.  Then, because $h$ is surjective and $U(h)$ is
    bijective, $U(B(d))$ is the union of $U(h)[U(B(d'))]$ for each preimage
    $d'$ of $d$ under $h$.  But $U(h)$ is an isomorphism, so $U(B(d))$ is the
    union of open subgraphs of $U(H)$, and hence is open in $U(H)$.

    For the final requirement of a $!$-graph, suppose $d,d' \in\:!(H)$ with
    $d' \in B(d)$.  We need to show, for any vertex $v$ of $H$ with $v \in
    B(d')$, that $v \in B(d)$.  If $v$ is a $!$-vertex, this follows from
    transitivity of $\beta(H)$.  Otherwise, the edge $e$ from $d'$ to $v$ has
    a preimage $e_0$ under $h$, as does the edge $e'$ from $d$ to $d'$.  If
    $t(e'_0) = s(e_0)$, there must be an edge from $s(e'_0)$ to $t(e_0)$ in
    $G$, which will map to an edge from $d$ to $v$ in $H$.  Otherwise, we must
    have $t(e'_0) = b$ and $s(e_0) = c$ (or vice versa, which is equivalent).
    But then the preconditions for $\MERGE$ require an edge from $s(e'_0)$ to
    $c$ as well, which also induces an edge from $s(e'_0)$ to $t(e_0)$, and
    hence from $d$ to $v$.
\end{proof}

\begin{proposition}\label{prop:merge-on-maps}
    Let $f : G \rightarrow H$ be a morphism of $\catBGraph_T$ that
    reflects $!$-box containment, and let $b$ and $c$ be mergable $!$-vertices
    of $G$ such that $f(b)$ and $f(c)$ are mergable in $H$.  Then there is a
    unique morphism
    \[ \MERGE_{f(b),f(c)}(f) : \MERGE_{b,c}(G) \rightarrow
    \MERGE_{f(b),f(c)}(H) \]
    making the following diagram commute
    \[\begin{tikzcd}[column sep=18ex]
        G \rar{f} \dar & H \dar \\
        \MERGE_{b,c}(G) \rar[swap]{\MERGE_{f(b),f(c)}(f)} &
        \MERGE_{f(b),f(c)}(H)
    \end{tikzcd}\]
    where the down arrows are the coequaliser maps from the definition of
    $\MERGE$, and this morphism reflects $!$-box containment.  Further, this
    construction preserves monomorphisms and strong epimorphisms and respects
    composition.
\end{proposition}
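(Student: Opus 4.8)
The plan is to build $\MERGE_{f(b),f(c)}(f)$, which I will abbreviate $\bar f$, from the universal property of the coequaliser defining $\MERGE_{b,c}(G)$ (definition \ref{def:merge-op}). Write $q_G : G \to \MERGE_{b,c}(G)$ and $q_H : H \to \MERGE_{f(b),f(c)}(H)$ for the two coequaliser maps; both are surjective strong epimorphisms (lemma \ref{lemma:spg-epi-surj}, as coequalisers are regular hence strong epis), and by proposition \ref{prop:instantiation-of-merge} their $U$-parts are isomorphisms, so their only effect is to identify the merged $!$-vertices and their parent edges. To produce $\bar f$ it suffices to show that $q_H \circ f$ coequalises the parallel pair $\widehat b, \widehat c : B^\uparrow(b) \to G$; then $\bar f$ is the \emph{unique} morphism satisfying $\bar f \circ q_G = q_H \circ f$, which is exactly the required commuting square, and uniqueness is part of the universal property.

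First I would note that $f$ restricts to $B^\uparrow(b) \to B^\uparrow(f(b))$ and $B^\uparrow(c) \to B^\uparrow(f(c))$, since an edge $b' \to b$ is carried to $f(b') \to f(b)$, so predecessors map to predecessors and (by fullness) the restriction is a morphism. The coequalising check then reduces to a case analysis on a vertex or edge $x$ of $B^\uparrow(b)$: on the common parent part $B^\uparrow(b)\graphminus b = B^\uparrow(c)\graphminus c$ the maps $\widehat b$ and $\widehat c$ agree outright; on the vertex $b$ we need $q_H(f(b)) = q_H(f(c))$, which holds because $q_H$ merges $f(b)$ and $f(c)$; and on an edge $e : p \to b$ (or the self-loop, $p = b$) we need $q_H$ to identify $f(p) \to f(b)$ with $f(p) \to f(c)$.

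The main obstacle is this last edge case. For $q_H$ to identify those two edges through the merging isomorphism $B^\uparrow(f(b)) \cong B^\uparrow(f(c))$, the vertex $f(p)$ must lie in the common parent part of $f(b)$ and $f(c)$, i.e. be distinct from both. This is precisely where the hypothesis that $f(b), f(c)$ are mergable in $H$ does the work: if $f(p) = f(b)$, then applying $f$ to the edge $p \to c$ (which exists, as $p$ is a parent of $c$) yields an edge $f(b) \to f(c)$ in $H$, forcing $f(c) \in B(f(b)) \cap B(f(c))$ and contradicting $B(f(b)) \cap B(f(c)) = \varnothing$; the case $f(p) = f(c)$ is symmetric. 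So the degenerate collapses cannot occur, $q_H \circ f$ coequalises the pair, and $\bar f$ exists and is unique.

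The remaining claims I expect to be routine. That $\bar f$ reflects $!$-box containment follows by lifting: given an edge $e$ of $\MERGE_{f(b),f(c)}(H)$ with $!$-vertex source and target in $\im(\bar f) = q_H(\im f)$, surjectivity of $q_H$ gives a preimage $e'$ with $!$-vertex source, and proposition \ref{prop:instantiation-of-merge} shows the target of $e'$ either already lies in $\im f$ or equals one of $f(b), f(c)$ (both in $\im f$), so $f$'s reflection of containment puts $e'$ in $\im f$ and hence $e \in \im(\bar f)$. Preservation of monomorphisms and strong epimorphisms follows from their characterisations as injections and surjections (lemmas \ref{lemma:spg-mono-inj} and \ref{lemma:spg-epi-surj}): for injectivity, $\bar f(q_G(\zeta_1)) = \bar f(q_G(\zeta_2))$ gives $q_H(f(\zeta_1)) = q_H(f(\zeta_2))$, so by proposition \ref{prop:instantiation-of-merge} either $f(\zeta_1) = f(\zeta_2)$ or $\{f(\zeta_1), f(\zeta_2)\} = \{f(b), f(c)\}$, and in both cases injectivity of $f$ together with $q_G(b) = q_G(c)$ forces $q_G(\zeta_1) = q_G(\zeta_2)$; surjectivity is a direct lift through $q_H$ and then $f$. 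Finally, respect for composition is immediate from uniqueness: both $\MERGE_{g(f(b)),g(f(c))}(g \circ f)$ and $\MERGE_{g(f(b)),g(f(c))}(g) \circ \MERGE_{f(b),f(c)}(f)$ satisfy the defining square $(-) \circ q_G = q_K \circ (g \circ f)$, so they coincide.
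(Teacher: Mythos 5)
Your proposal is correct, and its skeleton is the paper's: existence and uniqueness of $\MERGE_{f(b),f(c)}(f)$ come from the universal property of the coequaliser defining $\MERGE_{b,c}(G)$, composition follows from that same uniqueness, and strong epimorphisms are handled by lifting surjectivity through $q_H \circ f = \bar f \circ q_G$. Where you diverge is in the supporting steps, mostly to your advantage. The paper verifies the coequalising condition by restricting $f$ to a map $f_b : B^\uparrow(b) \rightarrow B^\uparrow(f(b))$ and asserting that $f \circ \widehat{b} = \widehat{f(b)} \circ f_b$ and $f \circ \widehat{c} = \widehat{f(c)} \circ f_b$; your element-level case analysis makes explicit the point on which that second square silently depends, namely that $f(p) \notin \{f(b), f(c)\}$ for a parent $p$ of $b$, and your derivation of this from mergability of $f(b)$ and $f(c)$ (via reflexivity $f(c) \in B(f(c))$ and disjointness of the merged boxes) is precisely where that hypothesis earns its keep. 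For preservation of monomorphisms the paper takes a more categorical route: it uses surjectivity of $f_b$ (its one invocation of the fact that $f$ reflects $!$-box containment) to upgrade the commuting square to a pushout, then appeals to preservation of monos under pushouts; your direct injectivity argument from the identification analysis ($q_H(u) = q_H(v)$ forces $u = v$ or $\{u,v\} = \{f(b),f(c)\}$) is more elementary and avoids that machinery, though note you are leaning on facts established inside the \emph{proof} of proposition \ref{prop:instantiation-of-merge} rather than in its statement. Finally, your lifting argument that $\bar f$ reflects $!$-box containment for arbitrary $f$ is actually more complete than the printed proof, which only remarks that reflection is trivial in the surjective case.
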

\begin{proof}
    Let $G' = \MERGE_{b,c}(G)$ and $H' = \MERGE_{f(b),f(c)}(H)$.  We will
    construct the following commuting diagram
    \[\begin{tikzcd}
        B^\uparrow(b) \rar{f_b}
            \dar[arrow left,swap]{\widehat{b}}
            \dar[arrow right]{\widehat{c}} &
        B^\uparrow(f(b))
            \dar[arrow left,swap]{\widehat{f(b)}}
            \dar[arrow right]{\widehat{f(c)}} \\
        G \rar{f} \dar[swap]{g} &
        H \dar{h} \\
        G' \rar[swap,dashed]{f'} &
        H'
    \end{tikzcd}\]
    where $f'$ will be uniquely determined by universality of coequalisers.
    This will then be $\MERGE_{f(b),f(c)}(f)$.

    The top square is two separately-commuting squares: $f \circ \widehat{b} =
    \widehat{f(b)} \circ f_b$ and $f \circ \widehat{c} = \widehat{f(c)} \circ
    f_b$.  If $d$ is a vertex in $B^\uparrow(b)$, then the edge witnessing
    this is mapped by $f$ to one witnessing $f(d) \in B^\uparrow(f(b))$.  So
    if we let $f_b$ be $f \circ \widehat{b}$, the restriction of $f$ to
    $B^\uparrow(b)$, reinterpreted with codomain $B^\uparrow(f(b))$ then we
    have the first commuting square.  The second follows from the fact that
    $B^\uparrow(b) \graphminus b = B^\uparrow(c) \graphminus c$ and similarly
    for $f(b)$ and $f(c)$.

    Now $h$ coequalises $\widehat{f(b)} \circ f_b$ and $\widehat{f(c)} \circ
    f_b$, and hence coequalises $f \circ \widehat{b}$ and $f \circ
    \widehat{c}$.  So $h \circ f$ coequalises $\widehat{b}$ and $\widehat{c}$.
    Then universality of $g$ means that there is a unique $f'$ making the
    bottom square of the diagram commute.  This universality also means that
    $\MERGE$ respects composition.

    The fact that $f$ reflects $!$-box containment means that $f_b$ is
    surjective, and hence epic.  Then if we have morphisms $a : G' \rightarrow
    K$ and $b : H \rightarrow K$ such that $a \circ g = b \circ f$, we know
    \begin{align*}
        &a \circ g \circ \widehat b = a \circ g \circ \widehat c \\
        &\Rightarrow b \circ \widehat{f(b)} \circ f_b = b \circ \widehat{f(c)}
        \circ f_b \\
        &\Rightarrow b \circ \widehat{f(b)} = b \circ \widehat{f(c)}
    \end{align*}
    Then, since $b$ equalises $\widehat{f(b)}$ and $\widehat{f(c)}$, there is
    a unique arrow $k : H' \rightarrow K$ such that $b = k \circ h$.  Since
    $g$ is epic, we also have $a = k \circ f'$, and hence the bottom square of
    the diagram is a pushout.  $\catBGraph_T$ preserves monomorphisms under
    pushouts, and hence $f'$ is monic whenever $f$ is.  Also, if $f$ is
    strongly epic, and hence surjective, the same must be true of $h \circ f$
    and hence of $f' \circ g$.  So $f'$ is surjective, and hence strongly
    epic.  Since it is surjective, it trivially reflects $!$-box containment.
\end{proof}

Since $U(G) \cong U(\MERGE(G))$ for $!$-graphs, $U(f)$ is the same as
$U(\MERGE(f))$ (up to isomorphism) and so we can simply apply $\MERGE$ to the
morphisms of definition \ref{def:bg-rw-rule} to get the following:

\begin{corollary}[\cite{Kissinger2012a}, pp 10-11]\label{cor:merge-op-rweq}
    Let $L \xleftarrow{i} I \xrightarrow{j} R$ be a $!$-graph equation, and
    let $b,c$ be mergable $!$-vertices of $I$.  Then
    \[
        \MERGE_{i(b),i(c)}(L)
        \xleftarrow{\MERGE_{i(b),i(c)}(i)}
        \MERGE_{b,c}(I)
        \xrightarrow{\MERGE_{j(b),j(c)}(j)}
        \MERGE_{j(b),j(c)}(R)
    \]
    is also a $!$-graph equation.
\end{corollary}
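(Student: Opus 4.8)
The plan is to mirror the proof of Theorem~\ref{thm:rw-patt-preserved}, replacing the single-$!$-box operations by $\MERGE$ and using Propositions~\ref{prop:instantiation-of-merge} and~\ref{prop:merge-on-maps} in place of Lemma~\ref{lem:bbox-m-rbc-ops-in-image} and Theorem~\ref{thm:pg-ops-commute-bounds}. Concretely, I would apply $\MERGE$ to every morphism of the defining diagram~\eqref{eq:bg-rw-rule-diag} and then verify the four conditions of Definition~\ref{def:bg-rw-rule} for the resulting span $\MERGE_{i(b),i(c)}(L) \leftarrow \MERGE_{b,c}(I) \rightarrow \MERGE_{j(b),j(c)}(R)$.

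Before doing so, I would settle the well-definedness of the merged span. Proposition~\ref{prop:merge-on-maps} requires the $!$-vertices to be mergable in both source and target of each morphism, so I first need $b,c$ mergable in $I$ (given), $i(b),i(c)$ mergable in $L$, and $j(b),j(c)$ mergable in $R$; this triple of conditions is exactly what it means to apply $\MERGE$ to the equation component-wise (since, unlike $\COPY$, $\DROP$ and $\KILL$, $\MERGE$ is a partial operation), and I would take it as the standing hypothesis. Mergability then transfers automatically to each boundary subgraph: since $\beta(\In_!(G)) = \beta(\Out_!(G)) = \beta(\Bound_!(G)) = \beta(G)$, the parent condition $B^\uparrow(b)\graphminus b = B^\uparrow(c)\graphminus c$ is a $\beta$-level fact that is unchanged, while $B_{\In_!(G)}(b) = B_G(b)\cap \In_!(G)$ shows the box-intersection only shrinks, so disjointness is inherited. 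This lets me form $\MERGE$ of every object appearing in~\eqref{eq:bg-rw-rule-diag}.

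The heart of the construction is then largely formal. Each merged object is a $!$-graph by Proposition~\ref{prop:instantiation-of-merge}, which also gives $U(\MERGE(G)) \cong U(G)$, so condition~1 (no isolated wire-vertices in $U$ of the new left and right sides) follows immediately from the same property of $L$ and $R$. For conditions~2 and~3 I would first establish a $\MERGE$-analogue of Theorem~\ref{thm:pg-ops-commute-bounds}, namely $\In_!(\MERGE_{b,c}(G)) \cong \MERGE_{b,c}(\In_!(G))$ and similarly for $\Out_!$ and $\Bound_!$; because $U(\MERGE(G))\cong U(G)$ and $\beta(\MERGE(G))$ is just the quotient identifying $b$ and $c$, both the $U$-part and the $\beta$-part of the two sides agree, so this commutation is routine. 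Pushing the isomorphisms $\phi_I,\phi_O,\phi_L,\phi_R$ of the original equation through $\MERGE$ — which preserves isomorphisms, since by Proposition~\ref{prop:merge-on-maps} it preserves both monomorphisms and strong epimorphisms and respects composition — yields the required isomorphisms $\In_!\cong\In_!$, $\Out_!\cong\Out_!$ and $\Bound_!\cong I'\cong\Bound_!$ for the merged graphs. Finally, condition~4 (commutativity of~\eqref{eq:bg-rw-rule-diag} for the primed data) follows because $\MERGE$ respects composition, sending each commuting triangle of the original diagram to the corresponding one, exactly as in the middle of the proof of Theorem~\ref{thm:rw-patt-preserved}.

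I expect the main obstacle to be bookkeeping around the $!$-vertices rather than any deep difficulty. The genuinely delicate points are (i) confirming that mergability is available throughout the span — note that mergability in $I$ alone does \emph{not} force it in $L$, since a node-vertex of $L$ lying outside $\Bound_!(L)$ could sit in both boxes even when those boxes are disjoint on the boundary — and (ii) proving the boundary-commutation isomorphisms carefully enough that the four structural isomorphisms and the commuting triangles line up. The $U$-invariance $U(\MERGE(G))\cong U(G)$ is what keeps (ii) tractable, collapsing what was a substantial argument for $\COPY$ in Theorem~\ref{thm:pg-ops-commute-bounds} into a short comparison of $U$-parts and $\beta$-parts.
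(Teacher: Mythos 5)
Your proposal is correct and follows essentially the paper's own route: the paper dispatches this corollary in one line, observing that $U(\MERGE(G)) \cong U(G)$ (so $U(\MERGE(f))$ agrees with $U(f)$ up to isomorphism) and then ``simply applying'' $\MERGE$ to every morphism of Definition~\ref{def:bg-rw-rule} via Proposition~\ref{prop:merge-on-maps} --- which is exactly your argument, fleshed out in the style of Theorem~\ref{thm:rw-patt-preserved}. Your explicit standing hypothesis that $i(b),i(c)$ and $j(b),j(c)$ are mergable in $L$ and $R$ is left implicit in the paper (its statement presupposes it by writing $\MERGE_{i(b),i(c)}(L)$ at all), and your remark that mergability in $I$ does not by itself force mergability in $L$ is a correct observation the paper glosses over.
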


In the case that $b$ and $c$ are mergable $!$-vertices of $G \rweq H$,
we posit the following rule:
\[
    \textsc{Merge} \enskip
    \inferrule{
        E \vdash G \rweq H
    }{
        E \vdash \MERGE_{b,c}(G \rweq H)
    }
\]

\begin{theorem}
    \label{thm:merge-sound}
    \textsc{Merge} is sound.
\end{theorem}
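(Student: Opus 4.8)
The plan is to unfold the semantics and reduce the whole statement to a single combinatorial fact about instances. Writing $A = \im(\fix_{G \rweq H})$, note first that $A$ is unchanged by the merge: mergable vertices must carry compatible fixing tags (both unfixed, or both $x$-fixed for the same $x$) for $\MERGE_{b,c}(G \rweq H)$ to have a well-defined fix map, and identifying them adds or removes no tag. So to establish $E \vdash \MERGE_{b,c}(G \rweq H)$ I must show that for every $\lambda \in \mathbb{N}_0^A$ and every concrete instance $G'' \rweq H''$ of $X_\lambda(\MERGE_{b,c}(G \rweq H))$ we have $\overset{\sim}{E_\lambda} \vdash G'' \rweq H''$ in the string-graph logic. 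Given the premise $E \vdash G \rweq H$, this follows immediately once I prove the \emph{instance-correspondence claim}: every concrete instance of $X_\lambda(\MERGE_{b,c}(G \rweq H))$ is also a concrete instance of $X_\lambda(G \rweq H)$.

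The heart of the proof is thus that merging only restricts, never enlarges, the set of concrete instances. Concretely I would take an instantiation $S$ of $\MERGE_{b,c}(G)$ that ends in a concrete graph $K$ and construct an instantiation $\widetilde{S}$ of $G$ producing a graph isomorphic to $K$. The translation is read off from the coequaliser map $h : G \rightarrow \MERGE_{b,c}(G)$ of definition \ref{def:merge-op}, which identifies $b$ and $c$ to one vertex $d$ while $U(h)$ is an isomorphism (proposition \ref{prop:instantiation-of-merge}). Every $!$-vertex of $\MERGE_{b,c}(G)$ except $d$ has a unique $h$-preimage, whereas $d$ has the two preimages $b$ and $c$. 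An operation of $S$ acting away from the subtree rooted at $d$ is translated to the identical operation on the corresponding vertex of $G$; an operation acting on $d$ or a descendant of $d$ is translated into the \emph{same} operation performed simultaneously on the $b$-side and the $c$-side in $G$ --- ``doing the same things to both $b$ and $c$ that we do to $d$''. The two mergability hypotheses are exactly what make this well-defined: $B^\uparrow(b)\graphminus b = B^\uparrow(c)\graphminus c$ ensures that instantiating a common ancestor of $b$ and $c$ acts on the two subtrees uniformly, so the doubling stays synchronised, while $B(b)\cap B(c) = \{\}$ ensures the two subtrees never interact, so doubling is consistent. Because $U(h)$ is an isomorphism the underlying string graph is untouched throughout, so the result agrees with $K$. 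The fixed-vertex prefix $X_\lambda$ is itself a sequence of $!$-box operations, and the translation sends it to $X_\lambda$ applied to $G \rweq H$ (the fixed vertices correspond bijectively, and expanding a fixed common parent of $b,c$ doubles onto both, matching $X^x_n$ on $G$); hence $\widetilde{S}$ factors as $X_\lambda$ followed by a concrete instantiation, as required.

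To keep the bookkeeping manageable I would first put $S$ into a regular form using section \ref{sec:bbox-op-seq-reg-forms}: by theorem \ref{thm:enf} I may assume $S$ is in expansion-normal form, built only from $\EXP$ and $\KILL$ at depth $0$, with all operations on a given top-level $!$-vertex pulled to the front. The translation can then be verified one top-level block at a time, and proposition \ref{prop:merge-on-maps} together with the preservation of the mergability conditions shows that $b$ and $c$ (and all their copies) remain mergable after each synchronised step, so $\widetilde{S}$ is genuinely a legal instantiation. At the level of equations I would carry the same instantiation across the span $L \leftarrow I \rightarrow R$, invoking corollary \ref{cor:merge-op-rweq} to know that $\MERGE_{b,c}(G \rweq H)$ is a bona fide $!$-graph equation and that the boundary and morphism data agree, so that a concrete instance of the merged equation really is a concrete instance of $G \rweq H$.

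I expect the main obstacle to be the nested case, when $d$ is not a top-level $!$-vertex. There the synchronisation between the $b$-subtree and the $c$-subtree must be threaded through every copy generated by instantiating their shared ancestors, and the delicate point is to show that the doubling never demands copying $b$'s descendants a different number of times than $c$'s. The expansion-normal-form reduction is what tames this, since it forces the operations to be organised depth-by-depth and confines each block to a single top-level box; the remaining work is the careful verification, using the same-parents and disjointness hypotheses, that the translated instantiation stays well-defined and yields precisely $K$.
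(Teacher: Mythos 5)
Your proposal is correct and takes essentially the same route as the paper: reduce soundness to the claim that every concrete instance of $\MERGE_{b,c}(G \rweq H)$ is a concrete instance of $G \rweq H$, then translate an expansion-normal-form instantiation of the merged equation into one of the original by doubling each $\EXP$/$\KILL$ on a merged vertex onto both preimages, and track fresh copies of the merged pair created when a shared ancestor is expanded (the paper formalises this bookkeeping as a ``composite'' $\MERGE$ with a set of pairs that is pushed rightward through the sequence until empty). Your additional $X_\lambda$/fixing-tag bookkeeping is a harmless refinement the paper's proof elides.
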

\begin{proof}
    We will show that every concrete instance $G' \rweq H'$ of $\MERGE_{b,c}(G
    \rweq H)$ is also a concrete instance of $G \rweq H$.  Suppose $S$ is a
    concrete instantiation of $G' \rweq H'$ from $\MERGE_{b,c}(G \rweq H)$ in
    expansion-normal form.

    While instantiations may only contain $\COPY$, $\DROP$ or $\KILL$
    operations ($\EXP$ being a composite of these), for the purposes of this
    argument we will relax this to allow $\MERGE$ operations as well.  This is
    purely a convenience to allow us to call our intermediate steps
    ``instantiations'' before we finally arrive at a genuine instantiation
    containing only those three operations.  So we start with the
    ``instantiation'' $\MERGE_{b,c};S$ of $G' \rweq H'$ from $G \rweq H$, and
    transform it into a genuine instantiation with no $\MERGE$ operations.

    The intermediate instantiations may have multiple $\MERGE$ operations, but
    we will always keep these together, and they will not interfere with each
    other: if $H = \MERGE_{b,c}(G)$ and $K = \MERGE_{b',c'}(H)$ are two
    adjacent operations in the instantiation, where $G \xrightarrow{g} H
    \xrightarrow{h} K$ are the coequaliser maps, we will have that $b' \notin
    B^\uparrow(g(b))$ and $g(b) \notin B^\uparrow(b')$, and similarly
    $h(g(b))$ will not interfere with the $!$-vertices acted on by any
    $\MERGE$ operation on $K$ and so on.  This allows us to freely reorder the
    $\MERGE$ operations.

    We will use this to collate them into a ``composite'' $\MERGE$ that
    simultaneously operates on multiple pairs of mergable $!$-vertices, where
    each pair has distinct parents to any other pair.  We will only ever have
    one such operation in the sequence; initially, it will be at the start and
    will have a single pair, $\langle b,c \rangle$.  When it has no pairs,
    there are no $\MERGE$ operations left in the sequence and we are done.

    We will move the $\MERGE$ right one operation at a time until the set of
    pairs it operates on is empty.  So consider the operation to its right.
    This is either $\KILL_d$ or $\EXP_d$ for some $!$-vertex $d$.  $d$ cannot
    be a child $!$-vertex of any of the merged vertices, since $d$ must be
    a top-level $!$-vertex.  If none of the merged vertices are in $B(d)$, the
    operations are independent, and we can move the $\MERGE$ one place to the
    right without changing the resulting graph.

    Suppose the operation is $\KILL_d$.  If $d$ is a merged vertex, call its
    preimages under the coequaliser $d_0$ and $d_1$.  If we remove the
    $\KILL_d$, remove the pair $\langle d_0,d_1 \rangle$ from the $\MERGE$ and
    put $\KILL_{d_0};\KILL_{d_1}$ before the $\MERGE$, we will get the same
    graph.  Otherwise, suppose $d$ is not a merged vertex, but $B(d)$ contains
    a merged vertex.  Then we can swap $\KILL_d$ with the $\MERGE$ and remove
    any pairs in $B(d)$ from the $\MERGE$.

    Suppose instead the operation is $\EXP_d$.  If $d$ is a merged vertex, we
    can remove $\EXP_d$ and put $\EXP_{d_0};\EXP_{d_1}$ before the $\MERGE$,
    leaving the set of merged vertices untouched.  Otherwise, suppose $d$ is
    not a merged vertex but $B(d)$ contains one, and say it results from the
    pair $\langle x,y \rangle$.  Then we can swap $\EXP_d$ and the $\MERGE$,
    providing we add $\langle x',y' \rangle$ to the set of merged vertices,
    where $x'$ is the copy of $x$ under the expansion and $y'$ is the copy of
    $y$.  $x'$ and $y'$ have no edges to or from $x$ and $y$, or the
    $!$-vertices of any other merged pair.  If there are any other such pairs,
    do the same to them as well.

    Since the number of operations to the right of the $\MERGE$ always
    decreases, this will eventually terminate, and we will have a concrete of
    instantiation of $G' \rweq H'$ from $G \rweq H$.
\end{proof}

\section{A More Traditional Logic}
\label{sec:trad-logic}

Those familiar with formal logics will have noticed something strange about
the one presented at the start of this chapter, as well as the one in section
\ref{sec:graph-eqs}: the equality predicate $\rweq$ is not really a predicate
at all.  What we have presented is a logic of ``string graph equations'' and
``$!$-graph equations'', where those ``equations'' are actually spans in a
category.

In this section, we will sketch out an equational logic of $!$-graphs, where
the equality predicate is just a formal symbol indicating that the objects on
either side of it should be considered equal.  As we noted in section
\ref{sec:sg-framed}, we need to somehow encode the correlation between
boundaries of string graphs; we also need to encode the correlation between
$!$-vertices in both graphs.  $!$-graph equations are one way of doing this,
but to get a more traditional equality predicate, we will extend the idea of a
framed cospan (definition \ref{def:framed-cospan}) to $!$-vertices.

\begin{definition}[$!$-Framed Cospan]
    A \textit{$!$-graph frame} is a triple $(X,<,\textrm{sgn})$ where
    $X$ is a $!$-graph consisting only of isolated wire-vertices, $!$-vertices
    and edges whose sources are $!$-vertices; $<$ is a total order on $V_X$,
    the vertices of $X$; and $\textrm{sgn} : V_{U(X)} \rightarrow \{+,-\}$ is
    the \textit{signing map}.

    A cospan of $!$-graph monomorphisms $X \xrightarrow{d} G \xleftarrow{c} Y$
    is called a \textit{$!$-framed cospan} if
    \begin{enumerate}
        \item $X$ and $Y$ are $!$-graph frames
        \item $G$ contains no isolated wire-vertices
        \item the following is a pushout square:
        \[
            \begin{tikzcd}
                \beta(G) \rar[right hook->] \dar[right hook->] &
                \im(d) \dar[right hook->] \\
                \im(c) \rar[right hook->] &
                \Bound_!(G) \NWbracket
            \end{tikzcd}
        \]
        \item for every $v \in V_{U(X)}$, $d(v) \in \In(G) \Leftrightarrow
        \textrm{sgn}(v) = +$
        \item for every $v \in V_{U(Y)}$, $c(v) \in \Out(G) \Leftrightarrow
        \textrm{sgn}(v) = +$
    \end{enumerate}
\end{definition}

As with the framed cospans of string graphs, we can frame a $!$-graph
equation:
\begin{definition}
    A \textit{framing} of a $!$-graph equation $L \rweq_{i_1,i_2} R$ is a pair
    of $!$-framed cospans $X \xrightarrow{a} L \xleftarrow{b} Y$ and $X
    \xrightarrow{c} R \xleftarrow{d} Y$ such that there are morphisms $j_1$
    and $j_2$ forming the coproduct (ie: disjoint union) $X \xrightarrow{j_1}
    I \xleftarrow{j_2} Y$ and making the following diagram commute:
    \[
        \begin{tikzcd}[ampersand replacement=\&]
            \& X
                \arrow[swap]{dl}{a}
                \arrow{d}{j_1}
                \arrow{dr}{c}
                \&
            \\
            L \& I \arrow[swap]{l}{i_1} \arrow{r}{i_2} \& R
            \\
            \& Y
                \arrow{ul}{b}
                \arrow[swap]{u}{j_2}
                \arrow[swap]{ur}{d}
                \&
        \end{tikzcd}
    \]
\end{definition}

Just as with framings of $!$-graph equation, such a framing is equivalent to a
$!$-graph equation, in the sense that any $!$-graph equation can be framed,
and the $!$-graph equation can be recovered from the two $!$-framed cospans.
If two $!$-framed cospans have isomorphic left and right (or domain and
codomain) $!$-graph frames, they induce a $!$-graph equation, and we call them
\textit{compatible}.

With this is mind, we can construct a logic of $!$-framed cospans that is
equivalent to the logic presented in this chapter.  We have a choice in how we
represent the set of axioms: we can use a set of $!$-graph equations and the
rule
\begin{mathpar}
    (\textsc{Axiom}) \enskip
    \inferrule{G \rweq_{i,j} H \in E}{E \vdash \hat G \rweq \hat H}
\end{mathpar}
where $(\hat G,\hat H)$ is a framing of $G \rweq_{i,j} H$, or we can use a set
of pairs of compatible $!$-framed cospans and have the rule
\begin{mathpar}
    (\textsc{Axiom}) \enskip
    \inferrule{(\hat G, \hat H) \in E}{E \vdash \hat G \rweq \hat H}
\end{mathpar}

The equivalence relation rules are simple:
\begin{mathpar}
    (\textsc{Refl}) \enskip
    \inferrule{ }{E \vdash \hat G \rweq \hat G}
    \and
    (\textsc{Sym}) \enskip
    \inferrule{E \vdash \hat G \rweq \hat H}{E \vdash \hat H \rweq \hat G}
    \and
    (\textsc{Trans}) \enskip
    \inferrule{E \vdash \hat G \rweq \hat H \\ E \vdash \hat H \rweq \hat K
             }{E \vdash \hat G \rweq \hat K}
\end{mathpar}

For the others, we will need to introduce some new notation, or at least
extend existing notation.  For example, if $f : G \rightarrow H$ is a wire
homeomorphism of $!$-graphs, and $\hat G = X \rightarrow G \leftarrow Y$ is a
framed cospan, then there is a framed cospan $\hat H = X \rightarrow H
\leftarrow Y$ where the inclusions of the framed cospans commute with $f_!$
and $f_B$.  We can then view $f$ as a wire homeomorphism from $\hat G$ to
$\hat H$.

\textsc{Homeo} can then be expressed as
\begin{mathpar}
    (\textsc{Homeo}) \enskip
    \inferrule{E \vdash \hat G \rweq \hat H}{E \vdash \hat G \rweq f(\hat H)}
\end{mathpar}
where $f$ is a wire homeomorphism (note that the presence of \textsc{Sym}
means that this is equivalent to the version of \textsc{Homeo} on $!$-graph
equations).

\textsc{Leibniz} requires a bit more work.  We will borrow the notion of
\textit{contexts} from term-rewriting, where they are terms with ``holes'' in
them: placeholders that can be substituted with other terms to produce a
valid term overall.

The idea with $!$-graphs will be to provide both an ``external'' interface to
the graph, as a $!$-framed cospan does, and an ``internal'' interface.  An
simple example (without any $!$-boxes) would be
\[ \input{graph-context.tikz} \]
We could then take a $!$-framed cospan whose frames match the internal
interface of the context, such as
\[ \input{graph-context-inner.tikz} \]
and embed it in the context by merging the images of the frames
\[ \input{graph-context-filled.tikz} \]
producing the $!$-framed cospan
\[ \input{graph-context-complete.tikz} \]

When there are $!$-vertices to account for, the $!$-graph frames of the
external interface must contain all the $!$-vertices (and edges between them)
in order for the final construction to be a $!$-framed cospan.  The internal
interface need not contain all the $!$-vertices of the graph, though.

\begin{definition}
    A \textit{$!$-graph context} $C$ is a pair of $!$-graph cospans
    \[ X_I \xrightarrow{d_I} G \xleftarrow{c_I} Y_I \]
    \[ X_O \xrightarrow{d_O} G \xleftarrow{c_O} Y_O \]
    where
    \begin{itemize}
        \item $G$ contains no isolated wire-vertices
        \item $X_I$, $X_O$, $Y_I$ and $Y_O$ are $!$-graph frames
        \item $\beta(X_I) \cong \beta(Y_I)$
        \item $\beta(X_O) \cong \beta(Y_O) \cong \beta(G)$
        \item the morphisms are all monic and reflect $!$-box containment
        \item $\Bound(U(G))$ is the disjoint union of the images of $U(d_I)$,
            $U(c_I)$, $U(d_O)$ and $U(c_O)$
        \item for every $v \in V_{U(X_I)}$, $d_I(v) \in \In(G) \Leftrightarrow
            \textrm{sgn}(v) = +$, and similarly for $d_O$
        \item for every $v \in V_{U(Y_I)}$, $c_I(v) \in \Out(G)
            \Leftrightarrow \textrm{sgn}(v) = +$, and similarly for $c_O$
    \end{itemize}
\end{definition}

Given such a $!$-graph context and another $!$-framed cospan
\[ \hat H \; = \; Y_I \xrightarrow{d_I'} H \xleftarrow{c_I'} X_I \]
we can merge the context with $\hat H$ (we say that $C$ \textit{accepts} $\hat
H$).  We first merge $G$ and $H$, using $Y_I$ as the overlap, in a manner
similar to the composition of framed cospans from section \ref{sec:sg-framed}:
\[
    \begin{tikzcd}[ampersand replacement=\&]
        \&\&Y_I \arrow[swap]{dl}{c_I} \arrow{dr}{d_I'}\&\& \\
        X_I \arrow{r}{d_I} \& G \arrow[swap]{dr}{i_1} \&\&
        H \arrow{dl}{i_2} \& X_I \arrow[swap]{l}{c_I'} \\
        \&\& K \Nbracket \&\&
    \end{tikzcd}
\]
$K$ is a $!$-graph by a similar argument to the one that shows composition of
framed cospans to be well-defined (note the reversal of the frames for $\hat
H$ means that the interpretation of the signing map on the frames is reversed,
so inputs of $H$ will be merged with outputs of $G$ and vice versa).

Next, we merge the two images of $X_I$ using the following coequaliser:
\[
    \begin{tikzcd}
        X_I \rar[arrow above]{i_1 \circ d_I}
            \rar[arrow below,swap]{i_2 \circ c_I'} &
        K \rar[dashed]{f} &
        G+H
    \end{tikzcd}
\]
Then we define $C[\hat H]$ to be the framed cospan
\[ X_O \xrightarrow{f \circ i_1 \circ d_O} G+H \xleftarrow{f \circ i_1 \circ c_O} Y_O \]

Now we can write \textsc{Leibniz} as
\begin{mathpar}
    (\textsc{Leibniz}) \enskip
    \inferrule{E \vdash \hat G \rweq \hat H
        }{E \vdash C[\hat G] \rweq C[\hat H]}
\end{mathpar}
where $C$ is a $!$-graph context that accepts $\hat G$ and $\hat H$ (note that
the compatibility of $\hat G$ and $\hat H$ means that if $C$ accepts one then
it must accept the other).

In this version of \textsc{Leibniz}, the context graph takes the place of $D$
in the $!$-graph rewrite
\[ \begin{tikzcd}
    G \dar
    & I \lar \rar \dar
    & H \dar
    \\ G' \NEbracket
    & D \lar \rar
    & H' \NWbracket
\end{tikzcd} \]
and the internal frames ($X_I$ and $Y_I$) take the place of $I$.

The final piece of this logic of $!$-graph frames is the $!$-box operations.
We extend them to $!$-graph cospans in the same way that we extended them to
$!$-graph equations: if
\[ \hat G \; = \; X \xrightarrow{d} G \xleftarrow{c} Y \]
then
\[ \OP(\hat G) \; = \; \OP(X) \xrightarrow{\OP(d)} \OP(G) \xleftarrow{\OP(c)} \OP(Y) \]

Now we can write, for example,
\begin{mathpar}
    (\textsc{Copy}) \enskip
    \inferrule{E \vdash \hat G \rweq \hat H
        }{E \vdash \COPY_b(\hat G) \rweq \COPY_b(\hat H)}
\end{mathpar}
(since the compatibility of $\hat G$ and $\hat H$ means that we consider $b$
to be in both if it is in either), and similarly for the other $!$-box
operation rules.

Thus we have a logic with a normal equality predicate based on $!$-framed
cospans.  What is more, by considering only concrete graphs, the same
construction can be used to produce a logic based on framed cospans from the
logic presented in section \ref{sec:graph-eqs}.

In practice, a theorem prover would probably use names to store the equivalent
of the $!$-graph frames; the signing maps and the division between the domain
and codomain are not particularly important for the construction of proofs (as
can be seen from the fact that the $!$-graph equation structures do not store
this information).


\chapter{Computability of $!$-Graph Matching}
\label{ch:implementation}

Suppose we have a $!$-graph (or string graph) $G$ and a $!$-graph rewrite rule
$L \rewritesto R$ that we wish to use to rewrite $G$; as previously stated, we
want rewriting to be up to wire homeomorphism.  If we can find a rewrite rule
$L' \rewritesto R'$ that is wire-homeomorphic to an instance of $L \rewritesto
R$ such that there is a local isomorphism $m : L' \rightarrow G$ that reflects
$!$-box containment, theorem \ref{thm:bg-rw-exists} tells us that the rewrite
\begin{equation}
    \begin{tikzcd}
        L' \arrow[swap]{d}{m}
        & I' \arrow[swap]{l}{i'} \arrow{r}{j'} \arrow{d}{d}
        & R' \arrow{d}{r}
        \\
        G \NEbracket
        & D \arrow{l}{g} \arrow[swap]{r}{h}
        & H \NWbracket
    \end{tikzcd}
\end{equation}
exists.  In fact, it is easy to find this rewrite: if we remove from $G$
the image of everything in the \textit{interior} of $L$ (ie: the image of
everything in $L$ that is not in the image of $i$), we get $D$; $g$ is the
subgraph relation and $d$ is a restriction of $m$.  $H$, being a pushout of
monomorphisms, is just a graph union.  In a practical implementation, where
vertices and edges are likely to be named components, the main difficulty is
managing the names to make sure there are no clashes when building $H$.

That still leaves finding suitable instances, finding suitable rules wire
homeomorphic to those instances and finding a local isomorphism that reflects
$!$-box containment.  Section \ref{sec:sg-match-wire-homeo} discussed the
latter two (except reflecting $!$-box containment, which can be implemented by
filtering the results); in this chapter, we will demonstrate that, providing
$L \rewritesto R$ is ``well behaved'' in terms of the instances it can
produce, we can finitely enumerate all instances of the rule that can result
in a matching onto a particular $G$.  We will also describe how Quantomatic
implements these steps.

\section{Enumerating Instances}
\label{sec:enum-inst}

In order to find all possible matchings, we need a way of searching the space
of graphs $H$ such that $G \succeq H$; while this space will almost always be
infinite, it will usually be the case that only a finite number of them will
be candidates for matchings.

Suppose we wish to match a $!$-graph $G$ onto another $!$-graph $H$; we need
to find an instance $G'$ of $G$ and a monomorphism from $G'$ to $H$ (or,
rather, from a graph wire homeomorphic to $G'$ to a graph wire homeomorphic to
$H$).  The monomorphism must satisfy additional constraints (definition
\ref{def:pg-matching}), but these are not important to this discussion.

As long as $G$ has any (unfixed) $!$-vertices, it will have an infinite number
of instances (to see this, consider repeatedly applying $\COPY$ to one of the
$!$-vertices).  However, in most cases, only a finite number of these will
have a monomorphism to $H$.

In particular, we can safely ignore any instance of $G$ that has more
node-vertices, circles or fixed $!$-vertices than $H$, as the number of each
of these is constant under wire homeomorphism.  We cannot, however, restrict
the search based on the number of wire-vertices.  Consider, for example, the
following pattern and target graphs:
\[ \input{wild-match.tikz} \]
No matter how many times the $!$-box in the pattern is expanded, we can always
introduce enough wire-vertices in the target graph's wire to accommodate them
all.  So there are an infinite number of matchings in this case.  Empty
$!$-boxes do not even need a wire in the target graph to exhibit this
behaviour.

We call these (unfixed) $!$-boxes, containing no node-vertices or circles,
\textit{wild} $!$-boxes.  In order for it to be possible to enumerate all
candidate instances of $G$, we will require that no instance of $G$ contains
any wild $!$-boxes.  To show this for any given $G$, it is sufficient to show
that for each $b \in\:!(G)$, $B(b)$ is not wild after killing all $!$-vertices
not in $B^\uparrow(b)$.  The rest of this discussion will assume no instance
of $G$ contains wild $!$-boxes.

The first step is to consider how to explore the space of instances in an
ordered way.  There may be an infinite number of witnesses of even a single
instance, since applying $\COPY_b;\KILL_{b^1}$ to a graph $G$ with a
$!$-vertex $b$ yields a graph isomorphic to $G$, and so we cannot simply try
to build every possible instantiation.  At the same time, we cannot
depend on expansion-normal form as we need to deal with arbitrary instances,
not just concrete ones.

That said, we can produce something similar enough to expansion-normal form
for our purposes.  We will use the $\FIX$ operation used in section
\ref{sec:bbox-induction}, and we will adjust our definition of depth to ignore
$!$-vertices that have been fixed (so a $!$-vertex with only fixed parents has
depth $0$).  Since we are really using this as a notational convenience, we
will relax the constraints on $\FIX$ to allow the fixing of $!$-vertices whose
parents are all fixed.  The important thing to note is that, in the context of
instantiations, $\FIX$ behaves just like $\DROP$: after $\FIX_b^x$, there can
be no further operations on $b$, and it can potentially reduce the depth of
any children of $b$.

If we take an instantiation $S$ of $G'$ from $G'$ and a fresh fixing tag $x$,
and append a $\FIX^x_b$ operation for every $b \in\:!(G')$, then the result,
call it $S'$, is enough like a concrete instantiation for the arguments from
section \ref{sec:bbox-op-seq-reg-forms} to hold with minimal adjustment (just
read ``$\FIX^x$ or $\DROP$'' wherever $\DROP$ is mentioned).  We also need an
equivalent of $\EXP_b$; we will call this $\CFIX^x_b$, which will be a
shorthand for $\COPY_b;\FIX^x_{b^1}$.

So we can transform $S'$ into a witness for $G \succeq G'$ that is
depth-ordered (according to our revised definition of depth) and is composed
entirely of $\EXP$, $\KILL$ and $\CFIX^x$ operations.  As a result, we only
need to search for instantiations of this form in order to find all possible
instances.

\begin{remark}\label{rem:x-fixed-matching}
    Using this trick comes with a caveat: when we attempt to find a morphism
    from $G'$ to $H$, an $x$-fixed $!$-vertex would ordinarily only be allowed
    to match another $x$-fixed vertex.  However, we want $x$-fixed vertices to
    behave like unfixed vertices when it comes to building matching morphisms,
    and allow them to match \textit{any} vertex of the target graph.  This can
    be acheived by either removing all $x$-fixing tags from $G'$ before
    finding the matching morphism (which is the approach taken by the
    algorithm in section \ref{sec:quanto-matching}) or by relaxing the
    constraint with a special case for the $x$ tag.
\end{remark}

We will call the area of the pattern graph not in any $!$-boxes or only in
fixed $!$-boxes the \textit{match surface}.  This is the graph that would
result from killing all the unfixed $!$-boxes.  Importantly, no $\EXP$,
$\KILL$ or $\CFIX^x$ operation can reduce the match surface.  Furthermore, if
every $!$-box of every instance of $G$ contains at least one node-vertex or
circle, $\EXP$ and $\CFIX^x$ must always increase the number of node-vertices
or circles in the match surface.

We will take an inductive approach to finding instances of $G$ that are
candidates for a matching onto $H$, starting with the zero-length
instantiation.  To build the instantiations of length $n+1$, for each
instantiation $S$ of length $n$ (resulting in a $!$-graph $G_S$) and each
unfixed $b \in\:!(G_S)$ with depth $0$, we will create three new instantiations
by appending $\KILL_b$, $\EXP_b$ and $\CFIX^x_b$ to $S$.  If the resulting
graph has more node-vertices, circles or fixed $!$-vertices in its match
surface than are in $H$, we will discard that instantiation (and hence not
consider it as a prefix for the instantiations of length $n+2$).

\begin{remark}
    Note that there are more efficient versions of the algorithm;
    for example, it is sufficient to choose an arbitrary depth $0$ $!$-vertex
    at each stage rather than generating new instantiations for each such
    $!$-vertex.  However, it is easier to see that the algorithm here is
    correct, and this version suffices to show computability.
\end{remark}

It should be clear, given the preceeding discussion, that the algorithm will
enumerate all instances of $G$ that have a monomorphism to $H$.  It remains
for us to show that it will terminate, providing every $!$-box of every
instance of $G$ contains at least one node-vertex or circle.

Let $x_H$ be the number of node-vertices in $H$ and $y_H$ the number of
circles.  Suppose $S$ is an instantiation produced by the algorithm at some
iteration step, and let $x_S$ and $y_S$ be the corresponding properties for
the match surface of $G_S$.  Every $\EXP$ and $\CFIX^x$ operation must
increment either $x_S$ or $y_S$ (and $\KILL$ cannot decrease either), and the
algorithm will discard any instantiation where
\[ x_S + y_S > x_H + y_H \]
So there can be at most $x_H + y_H$ $\EXP$ or $\CFIX^x$ operations in $S$.

Each of these operations can, at most, double the number of unfixed
$!$-vertices in the graph, and $\KILL$ can only reduce that number.  Thus if
$z_G = |!(G)|$, $S$ cannot contain more than $z_G \cdot 2^{x_H + y_H}$ $\KILL$
operations.  $|S|$, the length of $S$, is therefore bounded by
\[ x_H + y_H + z_G \cdot 2^{x_H + y_H} \]
which is a finite number fixed by $G$ and $H$.

The same bound also places a limit on $|!(G_S)|$, limiting the number of
ways $S$ can be extended.  The total number of steps in the algorithm is then
bounded, and hence it terminates.

Note that this approach encompasses rewriting both string graphs and
$!$-graphs with $!$-graph rewrite rules; the former is simply a special case
of the latter, without any $\CFIX^x$ operations.

\section{Matching in Quantomatic}
\label{sec:quanto-matching}

\begin{figure}
    \[\input{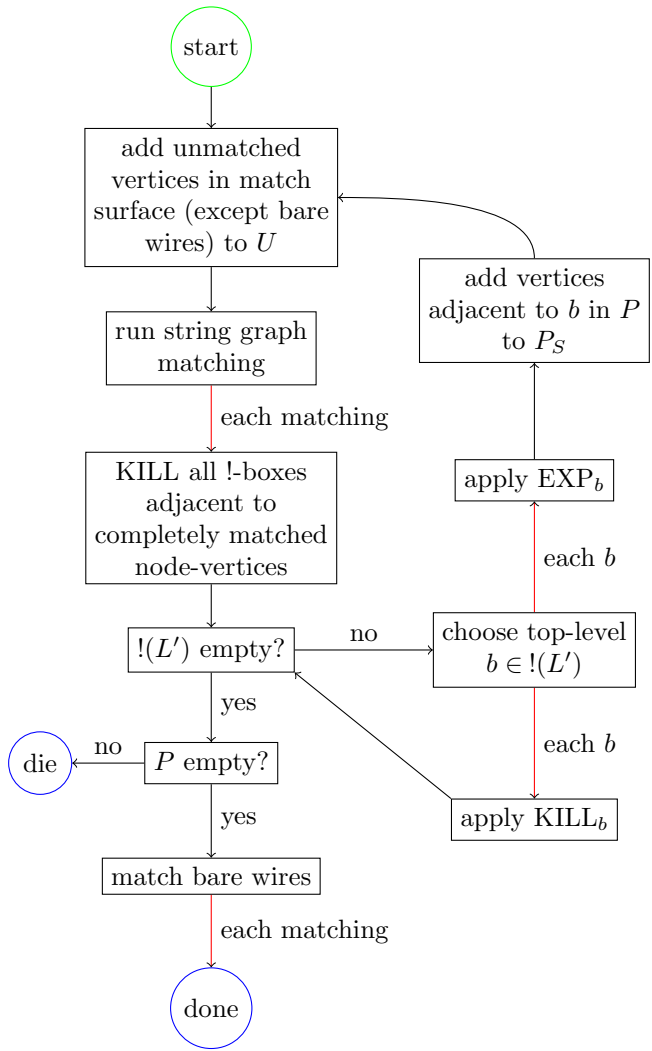}\]
    \caption{$!$-graph matching onto string graphs}
    \label{fig:bg-on-sg-match}
\end{figure}

\begin{figure}
    \[\input{string-graph-matching-subroutine.tikz}\]
    \caption{String graph matching subroutine}
    \label{fig:sg-match-sub}
\end{figure}

Quantomatic interleaves the three steps of $!$-graph matching (instantiation,
wire homeomorphism and building a suitable morphism) in a branching algorithm.
In this section, we will describe this approach.  We will start with the
algorithm as currently implemented, which assumes the target graph has no
$!$-vertices, and then describe how we plan to extend it to allow arbitrary
$!$-graphs as the target.

The algorithm maintains a partial graph monomorphism from the pattern graph to
the (concrete) target graph.  It terminates when this is a complete morphism,
and is successful if it is a local isomorphism.

The overall approach is to extend the graph morphism to the match surface
(those parts of the pattern graph not contained in any unfixed $!$-boxes),
then choose a top-level $!$-vertex and try killing it or expanding it.  In the
expansion branch, the match surface will have increased, so we repeat the
graph morphism extension step, before trying another $!$-vertex.  In the
killing branch, we proceed straight to choosing another $!$-vertex.  At each
choice that could affect the final match, the algorithm branches, so as to
explore all options.  If it fails to extend the graph morphism to the match
surface at any point, that branch terminates with failure.

Circles are matched first in each round of extending the graph morphism.
Circle matching involves taking each unmatched circle in $U_C$ and matching it
to an arbitrary unmatched circle of the same type in $T$, and removing the
vertex from $T$.  Since circles are indistinguishable under rewriting (being
wire-homeomorphic to each other), no branching needs to occur.  Doing this
first therefore reduces the amount of work duplicated across branches of the
algorithm.

Bare wires are left until the very end of the matching process, after all the
instantiation has been done.  This is because (as noted in the discussion on
wild $!$-boxes) bare wire matching does little to constrain the possible
instantiations and can generate a lot of branches.  For each bare wire $s
\xrightarrow{e} t$ in the match surface, the bare wire matching step branches
for each unmatched edge $e'$ in $G$ connected to a wire-vertex of the same
type as the bare wire.  It then replaces $e'$ with
three edges and two wire-vertices (of the appropriate type):
\[ \input{match-edge-expand.tikz} \]
and matches $s \xrightarrow{e} t$ to the new wire-vertices and the edge
between them.

To deal with wire homeomorphism, the algorithm starts with a variant of the
normalisation scheme from section \ref{sec:sg-match-wire-homeo}: every
interior wire has two wire-vertices; bare wires have only the input and output
vertices; circles have a single wire-vertex; and input and output wires have
only the input or output vertex in the pattern graph but an extra wire-vertex
in the target graph.  This ensures that a wire that can accept a match from a
bare wire (ie: any wire other than an interior wire already matched by an
interior wire or a circle matched by a circle) always has at least one
unmatched edge when the bare wire matching step is reached.

During a run of the algorithm to match a $!$-graph $L$ onto a string graph
$G$, Quantomatic maintains the following state:
\begin{itemize}
    \item $L$ : a normalised (as described above) $!$-graph
    \item $L'$ : an instance of $L$
    \item $S$ : a partial instantiation of $L$, resulting in $L'$
    \item $G$ : a normalised (as described above) string graph
    \item $m : V_{L'} \rightarrow V_G$ : a partial injective function
        describing the matching so far
    \item $U \subseteq V_{L'}$ : a set of unmatched vertices of $L'$
    \item $P \subseteq N(L')$ : a set of partially-matched node-vertices
    \item $P_S \subseteq P$ : a set of scheduled partially-matched
        node-vertices
    \item $T \subseteq V_G$: a set of vertices that can be matched onto
\end{itemize}

Note that we do not maintain a map of edges; this is because $!$-graphs are
simple, so if a function from vertices of one graph to vertices of another
extends to a graph morphism, it does so uniquely.

$U$ is partitioned into $U_C$, containing wire-vertices in circles (of which
there can only be one for each circle), $U_W$, containing other wire-vertices,
and $U_N$, containing node-vertices.

A note on terminology: a vertex is \textit{matched} if it is in either the
domain or image of $m$.  If $n$ is a matched node-vertex of $L'$, we call it
(and its image $m(n)$) \textit{partially matched} if some of the neighbourhood
of $m(n)$ is not in the image of $m$.  Otherwise, it is \textit{completely
matched}.  The same terminology applies to $m(n)$.  For example, consider the
following incomplete matching:
\[ \input{partial-match.tikz} \]
$n$ has been matched to a vertex in the target graph, but we will need to
expand the $!$-box twice before we are done with $n$.  This is an example of a
partially matched node-vertex.

$P$ is (a superset of) the set of partially matched vertices of $L'$, and
$P_S$ is intended to contain those node-vertices of $P$ that may be able to
become completely matched due to the application of a $!$-box operation.

We split the algorithm description into two parts.  Figure
\ref{fig:bg-on-sg-match} contains the part of the algorithm that explores the
instantiation space, and figure \ref{fig:sg-match-sub} has the part that
extends the match morphism.  This latter part can be used in a simpler wrapper
to match string graphs onto string graphs.

Red wires (with labels starting ``each'') indicate a branching point.
A branch is created for each possibility.  If there are no possibilities, the
current branch dies.  So, for example, if the algorithm takes a vertex from
$U_N$ and there are no possible matchings for it in the target graph, that
branch will be killed off.

The initial match state has $L' = L$ and $G$ normalised (in the manner
described above), $S = m = U = P = P_S = \varnothing$, and $T$ populated with
all the vertices of $G$ not in bare wires.

A proof of the correctness of this algorithm can be found in appendix
\ref{apdx:sg-matching}.

\subsection{Matching Onto $!$-Graphs}
\label{sec:quanto-match-bg}

\begin{figure}
    \[\input{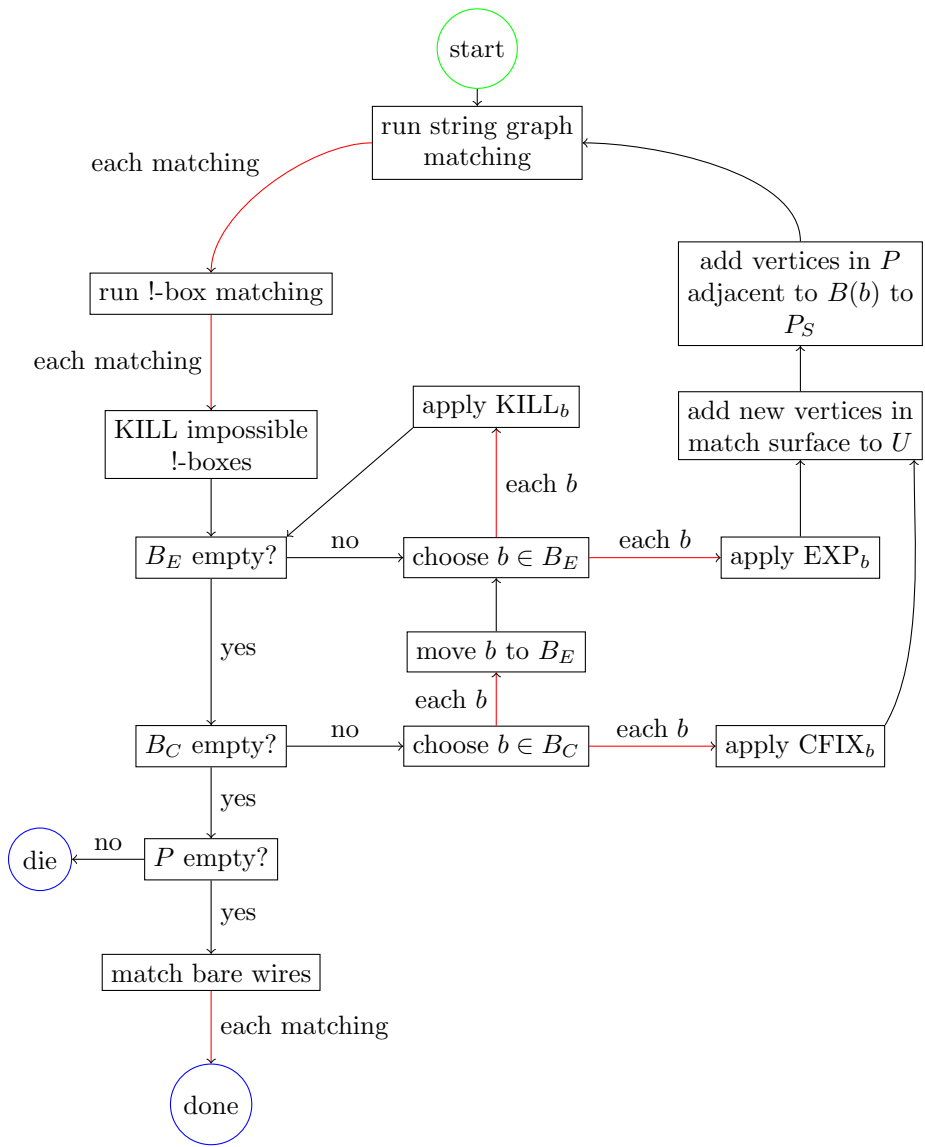}\]
    \caption{$!$-graph matching onto $!$-graphs}
    \label{fig:bg-on-bg-match}
\end{figure}

\begin{figure}
    \[\input{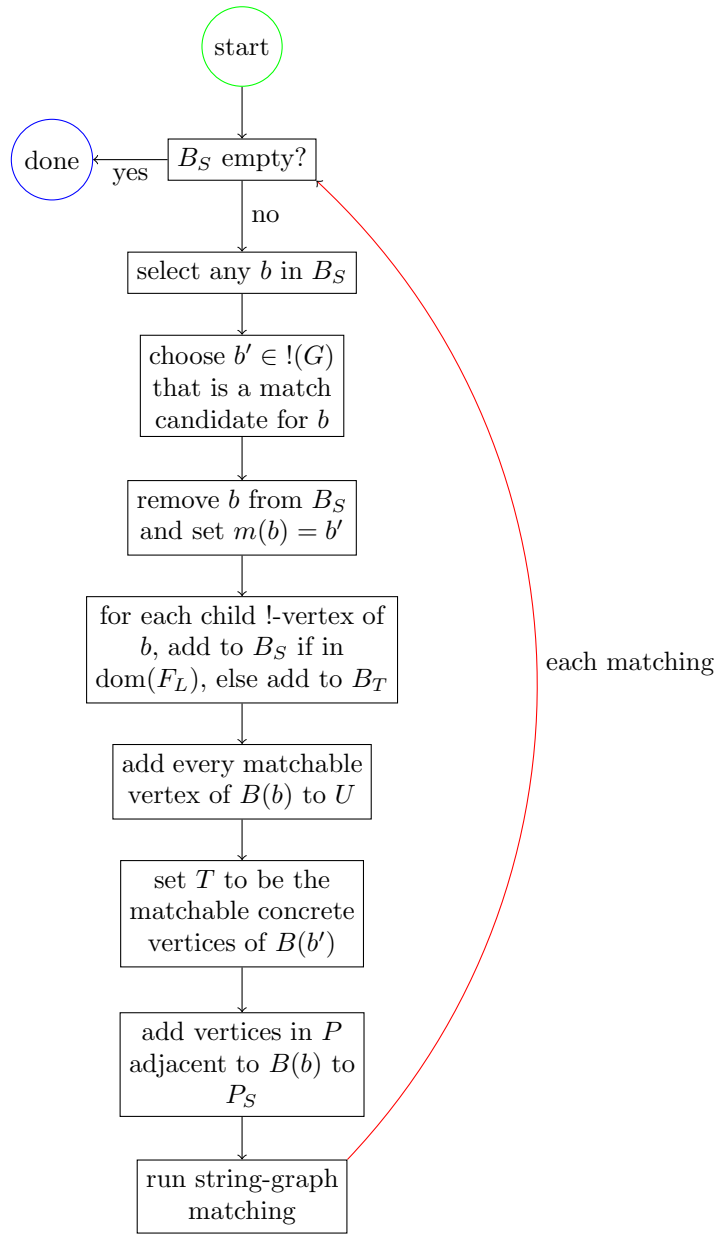}\]
    \caption{$!$-box matching subroutine}
    \label{fig:bb-match-sub}
\end{figure}

In this section, we describe how the previous algorithm can be extended to
match arbitrary $!$-graphs.  This algorithm has been implemented for
Quantomatic (albeit not in a released version yet).

Firstly, we need to maintain information about which $!$-vertices are fixed.
Since $!$-vertices are only fixed at intermediate stages of a proof (when
using \textsc{Induct} or during the matching process itself), we
maintain the information about which $!$-vertices are fixed with which tags
separately to the graphs.  Thus we extend the match state (and algorithm
inputs) with two partial maps, $F_L :\:!(L') \nrightarrow \mathcal{A}$ and
$F_G :\:!(G) \nrightarrow \mathcal{A}$.

We also add two further sets to the match state to keep track of $!$-vertices:
\begin{itemize}
    \item $B_S \subseteq \dom(F_L)$: a set of unmatched fixed $!$-vertices
    \item $B_C \subseteq !(L')$: a set of $!$-vertices that can be copied
    \item $B_E \subseteq !(L')$: a set of $!$-vertices that can be expanded
\end{itemize}

The initial match state is mostly empty as before, with $G$ and $L = L'$ again
normalised.  We also populate $U$ with the vertices of $L$ that are not in any
$!$-box, and $T$ with all the vertices of $G$ that are not in any $!$-box. The
new parts of the match state are populated as follows:
\begin{itemize}
    \item $B_S = \dom(F_L)$
    \item $B_C = \{b \in !(L) \setminus B_S : \delta(b) = 0\}$
    \item $B_E = \varnothing$
\end{itemize}

The string graph matching subroutine is extended (via ``hooks'' in the
implementation) to allow the matching to be constrained by $!$-box membership.
In particular, we only allow a vertex $v \in V_L$ to match a vertex $v' \in
V_G$ if $m[B_v] = B_{v'}$, where $B_v = \{ b \in\;!(L) : v \in B(b) \}$ and
$B_{v'} = \{ b \in\;!(G) : v' \in B(b) \}$.

Also implicit in the algorithm description is that when we kill a $!$-vertex
$b$, we always add any vertices in $P$ that were adjacent to $B(b)$ to $P_S$,
so that the next string graph matching run can remove them from $P$ if they
are now completely matched.

Additionally, one part of figure \ref{fig:bb-match-sub} refers to a
``candidate for $m(b)$''.  If $b \in\;!(L) \setminus \dom(m)$, we say that $b'
\in\;!(G) \setminus \im(m)$ is a \textit{match candidate} for $b$ if
\begin{itemize}
    \item it shares the same parents (ie:
        $m[B^\uparrow(b)\setminus b] = B^\uparrow(b')\setminus b'$), and
    \item if $b$ is fixed, $b'$ is fixed with the same tag (ie: $(b \in
        \dom(F_L) \Rightarrow F_L(b) = F_G(b'))$)
\end{itemize}

The major difference between this algorithm and the previously-presented one
is that $\CFIX_b^x$ is provided as an alternative choice to $\EXP_b$ and
$\KILL_b$ when the algorithm deals with the next top-level $!$-vertex of $L'$.


\chapter{Conclusions and Further Work}
\label{ch:conclusions}

In this dissertation, we have demonstrated how the string graph formalism for
the diagrammatic languages of traced symmetric monoidal categories and compact
closed categories can be extended to allow the finitary representation of
infinite families of string graphs and string graph equations, and how these
$!$-graphs and $!$-graph equations can be used in conjunction with
double-pushout graph rewriting to do equational reasoning both with and on
infinite families of morphisms in these categories.  We have also presented
some inference rules for a nascent logic of $!$-graphs.

We started by extending the language of monoidal signatures to allow for nodes
with variable-arity edges, representing families of morphisms such as the
spiders induced by commutative Frobenius algebras.

We then further extended string graphs with a language, internal to the graph,
describing potentially infinite families of string graphs.  We extended the
notion of string graph equations to these $!$-graphs, constructing $!$-graph
equations (and rewrite rules) to represent infinite familes of string graph
equations (and rewrite rules).

We demonstrated how these $!$-graph rewrite rules can be used to both rewrite
string graphs and to rewrite $!$-graphs, producing new $!$-graph rewrite
rules.  We showed that this rewriting is sound with respect to the
interpretation of $!$-graphs and $!$-graph rewrite rules as families of string
graphs and string graph rewrite rules.

We described an equational logic of $!$-graphs implemented by $!$-graph
rewriting and built on this with further inference rules, forming a logic of
$!$-graphs that is again sound with respect to the interpretation of
$!$-graphs as families of string graphs.  This included a graphical analogue
of induction, which we used to derive the spider law for commutative Frobenius
algebras.

We rounded off by showing how $!$-graph rewriting can be implemented.  We
demonstrated that, providing a $!$-graph rewrite rule has no instantiations
with wild $!$-boxes, it is possible to determine the finite set of
instances of the rule that apply to a given $!$-graph, and gave an example of
an algorithm to find matchings, with their associated instantiations, from the
LHS of a $!$-graph rewrite rule to a string graph or $!$-graph.

\section{Further Work}
\label{sec:further-work}

The most obvious next step would be to remove the requirement that
variable-arity edges of the same type commute by placing an order on the
(variable-arity) edges of a node and using this order when determining the
value of the elementary subgraph containing that node.  This would allow us to
have spiders for non-commutative Frobenius algebras.

A visual way to represent this could be to place a ``starting mark'' on a
node, and count clockwise from there:
\[ \input{noncom-spider.tikz} \]
Then we would have that the following two graphs are not the same:
\[ \input{noncom-spider-ex.tikz} \]
and, in particular, we would not allow a graph homomorphism from one to the
other.  The implications of this need working out in detail to ensure there
are no unexpected side-effects; for example, there is a potential difference
between expanding a $!$-box clockwise and anti-clockwise.  It would allow more
expressivity, though, and the commutative version presented in this thesis
should be recoverable by placing constraints on the allowed valuations.

Another idea that could be investigated is ``defined'' generators.  For
example, suppose we have a commutative monoid $(A,\whitemult,\whiteunit)$.  We
could define a ``spidered'' version of this in the following manner:
\[ \input{sp-def-mult.tikz} \]
This essentially defines two $!$-graph rewrite rules
\[
    \input{sp-mult-def-base-rw.tikz}
    \qquad \textrm{and} \qquad
    \input{sp-mult-def-step-rw.tikz}
\]
Some restrictions would need to be placed on such a rewrite system, such as
termination and preservation of value, in order for it to be considered a
definition, but this could be a powerful tool for proving theorems.  Appendix
\ref{apdx:spider-law} uses it to prove the spider law for a commutative
Frobenius algebra.

The above definition has a major drawback, though, in that it implicitly
requires that the monoid is commutative.  For example, suppose we have the
graph
\[ \input{sp-mult-three.tikz} \]
This can be expanded to
\[ \input{sp-mult-three-expn-1.tikz} \]
using the rewrite system implied by the definition, but it can also be
expanded to
\[ \input{sp-mult-three-expn-2.tikz} \]
which, in the absense of commutativity and associativity laws, could have a
different value.  This is because all the incoming edges of the variable-arity
multiplication have the same type, and so we have a choice of matches when
rewriting.

However, if we combine this idea of definitions with the non-commutative
generators already mentioned, we can make the definition
\[ \input{sp-def-mult-noncom.tikz} \]
which will produce a unique expansion for each possible starting graph.  In
particular, it forces the incoming edges to this node to be expanded in
clockwise order, so
\[ \input{mult-noncom.tikz} \]
has the unique expansion
\[ \input{mult-noncom-expn.tikz} \]

There are undoubtably more inference rules to be discovered.  For example, a
variant on $\BOX(G)$ (section \ref{sec:bbox-intro}) that adds a fresh
$!$-vertex $b$ to $G$ but only adds edges to itself and to vertices in $U(G)$
is likely to give rise to a similar inference rule to \textsc{Box}, but
this has yet to be proved sound.

Kissinger introduced a notion of critical pairs for graphs in
\cite{Kissinger2008}.  The rewriting of $!$-graphs laid out in this thesis
could potentially be used, in combination with Kissinger's definition of
critical pairs, to compute confluent extensions to rewrite systems of
$!$-graphs using a Knuth-Bendix completion algorithm\cite{Huet1981}.

Of course, one of the most important steps to take is to actually implement
the ideas presented in this thesis.  In particular, we are planning to
implement and prove the correctness of the algorithm in section
\ref{sec:quanto-match-bg}.  This algorithm could no doubt be improved,
however, to make it more efficient.  For example, many graphs involving
commutative Frobneius algebras have a lot of internal symmetry, which results
in multiple matches being found that all produce the same
rewrite; it would be desirable to eliminate this duplication, preferably at
the matching stage.\footnote{Some initial work on this was done by Matvey Soloviev when
the theory of $!$-graphs and string graphs was still in flux, but nothing was
ever published; this could be revisited now that there is a secure
foundation.}

The implementation of the ideas in this thesis is work that is currently
underway in the Quantomatic project\cite{quanto}, which aims to be a useful
proof assistant for string diagrams in compact closed and traced symmetric
monoidal categories.  This encompasses QuantoCosy, a tool for generating rules
from models (as set out in \cite{Kissinger2012}), QuantoDerive, a tool for
constructing and checking graphical proofs, and QuantoTactic, an
Isabelle\cite{isabelle} tactic implementation using Quantomatic's
graph-matching engine to direct proofs and checking their correctness in
Isabelle.

\appendix
\settocdepth{chapter}

\chapter{Correctness of Quantomatic's Matching}
\label{apdx:sg-matching}

We prove the correctness of the algorithm presented in section
\ref{sec:quanto-matching} for matching $!$-graphs onto string graphs.

Recall the match state:
\begin{itemize}
    \item $L$ : the pattern graph (normalised)
    \item $L'$ : an instance of $L$
    \item $G$ : the target graph (normalised)
    \item $S$ : a partial instantiation of $L$, resulting in $L'$
    \item $m : V_{L'} \rightarrow V_G$ : a partial injective function
        describing the matching so far
    \item $U \subseteq V_{L'}$ : a set of unmatched vertices of $L'$
    \item $P \subseteq N(L')$ : a set of partially-matched node-vertices
    \item $P_S \subseteq P$ : a set of schedules partially-matched
        node-vertices
    \item $T \subseteq V_G$: a set of vertices that can be matched onto
\end{itemize}

Throughout the entire procedure, in addition to the invariants implicit in the
typing of the components of the match state, we maintain the following global
invariants:
\begin{enumerate}[label=(\arabic*)]
    \item \label{enum:g-inv-inj} $m$ is injective and respects vertex types
    \item \label{enum:g-inv-edge-match} For any $v,w \in \dom(m)$, if there is
        an edge from $v$ to $w$ in $L'$, then there is an edge from $m(v)$ to
        $m(w)$ in $G$
    \item \label{enum:g-inv-local-iso} $P \subseteq \dom(m)$ and for any
        node-vertex $v \in \dom(m)\setminus P$, all wire-vertices adjacent to
        $m(v)$ in $G$ are in the image of $m$
\end{enumerate}

Given that $!$-graphs are simple, invariants \ref{enum:g-inv-inj} and
\ref{enum:g-inv-edge-match} ensure that when $m$ is total, it uniquely extends
to a $!$-graph monomorphism $\hat m : L' \rightarrow G$.  Invariant
\ref{enum:g-inv-local-iso} ensures that, if $P$ is empty, $U(\hat m)$ will be
a local isomorphism.  Thus $\hat m$ will be a matching of $L'$ onto $G$ under
$S$.

\section{The String Graph Matching Subroutine}

The algorithm is shown in figure \ref{fig:appdx-sg-match-algo}.

\begin{figure}
    \input{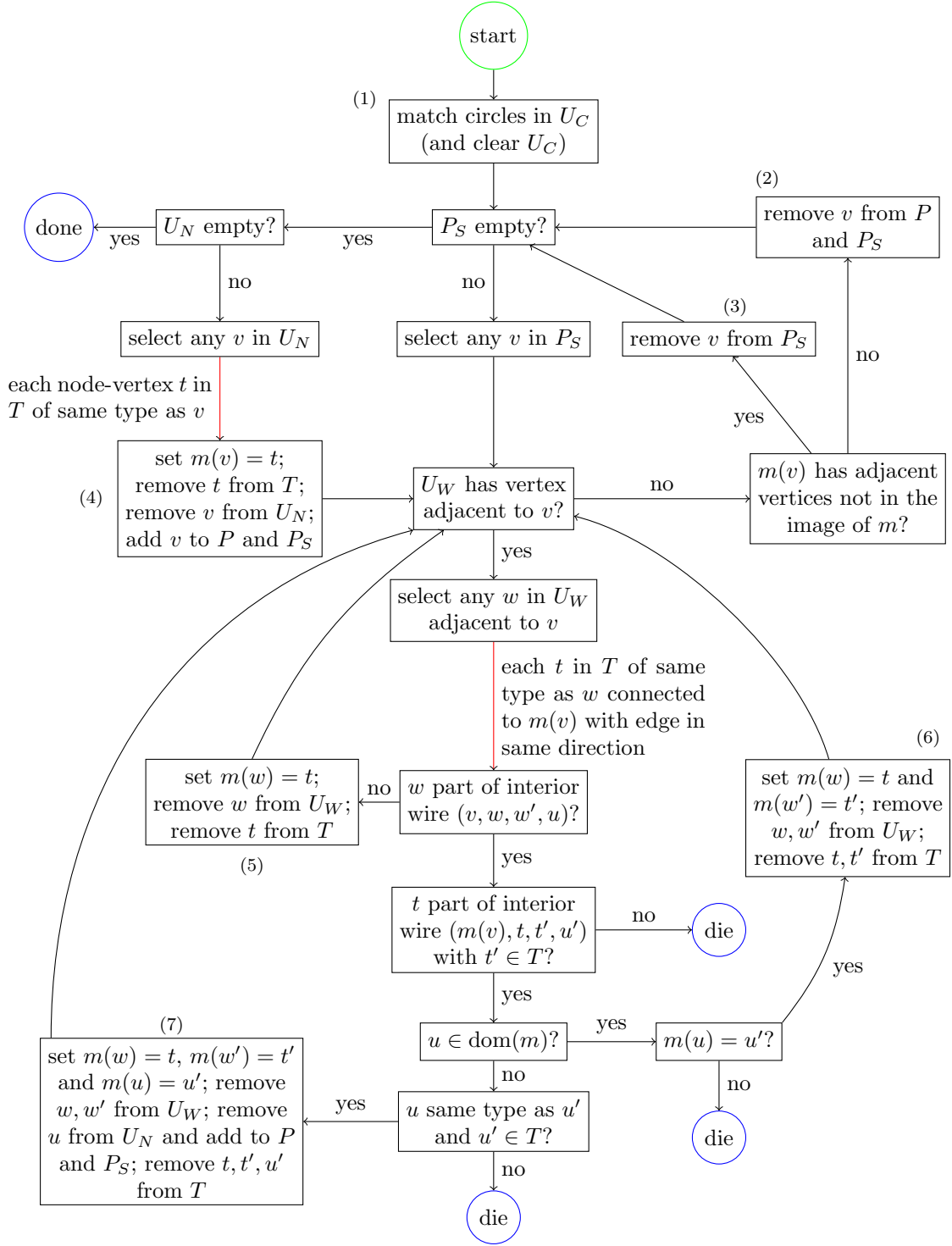}
    \caption{String Graph Matching Subroutine}
    \label{fig:appdx-sg-match-algo}
\end{figure}

\subsection{Preconditions/Invariants}
\label{sec:il-precond}

In addition to the global invariants, the string graph matching subroutine has
the following requirements on the match state.  What is more, these are
maintained throughout the procedure.
\begin{enumerate}[label=(\roman*)]
    \item \label{enum:il-inv-norm} $L'$ and $G$ are both normalised
    \item \label{enum:il-inv-unmatched} $U\cap\dom(m) = \varnothing$ (ie: the
        vertices in the unmatched set are not part of the existing matching)
    \item \label{enum:il-inv-reachable} Each vertex in $U_W$ is adjacent to
        something in $U_N\cup P_S$ (ensures we can always reach wire vertices
        in $U_W$ by starting from a node-vertex in $U_N$ or $P_S$)
    \item \label{enum:il-inv-Ps} If $v \in P$ and $N_G(m(v)) \subseteq
        \im(m)$, $v \in P_S$ (everything that is completely matched and in $P$
        is also in $P_S$)
    \item \label{enum:il-inv-wire-uw} If $v \in U_W$ and $w$ is a wire-vertex
        adjacent to $v$ in $L'$, $w \in U_W$ ($U_W$ contains complete wires)
    \item \label{enum:il-inv-wire-connected} If $v \in \dom(m)$ is a
        wire-vertex, then any vertices it is adjacent to are also in $\dom(m)$
    \item \label{enum:il-inv-t-wire-connected} If $v \in T$ is a
        wire-vertex, then any vertices it is adjacent to are in either $T$ or
        $\im(m)$
    \item \label{enum:il-inv-t-img} $T \cap \im(m) = \varnothing$
\end{enumerate}

\subsection{Postconditions}
These are true for the match state that results from each successful branch:
\begin{enumerate}[label=(\Roman*)]
    \item \label{enum:il-postcond-no-unmatched} $U = P_S =
        \varnothing$ (every vertex marked for matching was handled)
    \item \label{enum:il-postcond-all-matched} $\dom(m)$ is the union of the
        initial states of $U$ and $\dom(m)$ (we have matched exactly what we
        were asked to)
    \item \label{enum:il-postcond-T} $\im(m)$ contains only vertices that were
        in either the initial state of $\im(m)$ or the initial state of $T$
        (matches were only made against vertices in $T$)
    \item \label{enum:il-postcond-P} $P$ is exactly the set of vertices in
        $\dom(m)$ whose image is adjacent to a wire-vertex not in $\im(m)$ ($P$
        contains exactly the partially-matched vertices)
    \item \label{enum:il-postcond-untouched} $L$, $L'$, $G$ and $S$ are
        identical to their starting states
\end{enumerate}

Postcondition \ref{enum:il-postcond-all-matched} (in conjunction with the
global invariants) states that the algorithm has completed the
requested work.  Postcondition \ref{enum:il-postcond-no-unmatched} just states
that we left the match state ``clean''; it still satisfies all the
preconditions and repeating the algorithm without altering the state further
would do nothing.  Postcondition \ref{enum:il-postcond-P} says that $P$ is the
smallest set that satisfies global invariant \ref{enum:g-inv-local-iso}.

\subsection{Termination}
\label{sec:sg-matching-termination}

For termination of the inner loop exploring the wires incident to a
node-vertex $v$, we consider the variant $n = |U_W\cap N_{L'}(v)|$, where
$N_{L'}(v)$ is the neighbourhood of $v$ in $L'$.  In each of these branches,
$n$ decreases by either $1$ or $2$, and nothing is ever added to $U_W$ (and
$N_{L'}(v)$ is fixed).  So this loop must terminate.

The loop working through $P_S$ has two branches: when it is not empty and when
it is not.  Either may add more things to $P_S$.  We will take as variant $m =
2|U_N| + |P_S|$.  When this is $0$, the algorithm will terminate.  Otherwise,
if $P_S$ is empty, $U_N$ will have one vertex removed and added to $P_S$,
decreasing $m$ by one.  If $P_S$ is not empty, one vertex will be removed from
$P_S$, decreasing $m$ by one.  Every vertex that is added to $P_S$ in this
loop is taken from $U_N$, further decreasing $m$.  As a result, $m$ will
decrease by at least one with every iteration of the loop.

\subsection{Correctness}
\label{sec:sg-matching-correctness}

There are seven steps of the algorithm that modify the match state; we will
show that each of them satisfies the following properties, under the
assumption that the preconditions (section \ref{sec:il-precond}), global
invariants and match state type constraints hold at the start of the step
(note that when we say that a vertex is \textit{taken from} a set, we mean
that it was in the set at the start of the step, and is removed from the set
by the end of the step):

\begin{enumerate}[label=(\alph*)]
    \item \label{enum:step-constr-unchanged} $L$, $L'$, $G$ and $S$
        are unmodified
    \item \label{enum:step-constr-m} If $U$, $T$ or $m$ are
        modified, the only modifications are to take vertices from $U$ and
        extend $m$ to map those vertices to ones of the same type taken from
        $T$
    \item \label{enum:step-constr-Uw} If a wire-vertex is added to $\dom(m)$,
        then so is the entire wire, and for any pair of vertices $u,v$ in
        the wire with an edge from $u$ to $v$, there is an edge from $m(u)$ to
        $m(v)$ (at the end of the step) in $G$
    \item \label{enum:step-constr-Un} Any vertex taken from $U_N$ is added to
        both $P$ and $P_S$, and these are the only vertices added to either
        $P$ or $P_S$
    \item \label{enum:step-constr-P} If $P$ or $P_S$ are modified, the only
        modifications are the one from \ref{enum:step-constr-Un}, removing a
        partially matched vertex with no adjacent vertices in $U_W$ from $P_S$
        or removing a completely matched vertex with no adjacent vertices in
        $U_W$ from $P$ and $P_S$
\end{enumerate}

Given that these properties hold, we can show that all steps of the algorithm
preserve the match state type constraints, the global invariants and the
preconditions.

We can see that \ref{enum:step-constr-unchanged} holds for all seven steps by
inspection.  It is clear that the other three properties hold for steps
(1)-(3), again by inspection.  Only steps (5)-(7) need extra
justification.  We first note that the selection of $w$ demands that, since
$L$ is in normal form, it is part of an interior wire with two wire-vertices
or is the input on an input wire or the output on an output wire (and the sole
wire-vertex on the wire in both cases).

For step (5), the only non-obvious property is \ref{enum:step-constr-Uw}.
However, by what we have already noted, since $w$ is not on an interior wire
it must be an input or output.  Since $w$ and $v$ are the only vertices on the
wire and $v$ is already in $\dom(m)$, and the edge between $v$ and $w$ is the
same direction as the edge between $m(v)$ and $t$, property
\ref{enum:step-constr-Uw} holds for (5).

In step (6), it is clear that $t$ and $t'$ are in $T$ and $w$ is in $U_W$.
Precondition \ref{enum:il-inv-wire-uw} ensures that $w'$ is in $U_W$.  Since
$L$ and $G$ are $!$-graphs, $w'$ must have the same type as $w$, which has the
same type as $t$, which has the same type as $t$.  So property
\ref{enum:step-constr-m} clearly holds.  Property \ref{enum:step-constr-Uw}
can be seen to hold by noting that the edge between $u$ and $w$ is in the same
direction as the edge between $m(u)$ and $t$, and all the edges in a wire must
be in the same direction.  Properties \ref{enum:step-constr-Un} and
\ref{enum:step-constr-P} trivially hold, as $U_N$, $P$ and $P_S$ are not
modified.

For step (7), we note that precondition \ref{enum:il-inv-reachable} (together
with global invariant \ref{enum:g-inv-local-iso} and the fact that $P_S
\subseteq P$) requires that, since $w' \in U_W$ and $u \notin \im(m)$, $u$
must be in $U_N$.  Then properties \ref{enum:step-constr-m} and
\ref{enum:step-constr-Uw} follow in the same manner as for (6), and properties
\ref{enum:step-constr-Un} and \ref{enum:step-constr-P} clearly hold.

We now show that these properties mean that the match state type constraints,
global invariants and preconditions are all maintained.

\paragraph{Type constraints} The typing constraints for $U$ and $T$ are
trivially maintained as \ref{enum:step-constr-m} demands they are never added
to.  Likewise, the constraints on $L$, $L'$, $G$ and $S$ are maintained as
\ref{enum:step-constr-unchanged} means they are never changed.  Properties
\ref{enum:step-constr-Un} and \ref{enum:step-constr-P} ensure that $P$ is only
ever added to from $U_N$ and that anything added to $P_S$ is also added to
$P$, and anything removed from $P$ is also removed from $P_S$, so those
constraints hold.  Finally, the constraint on $m$ holds by property
\ref{enum:step-constr-m}.

\paragraph{Global Invariants} Injectivity of $m$ is preserved as property
\ref{enum:step-constr-m} demands that only vertices drawn from $T$ may be used
for images of additions to $m$, and precondition \ref{enum:il-inv-t-img} gives
us that these must not already be in the image of $m$.  The rest of global
invariant \ref{enum:g-inv-inj} follows from the same property.

Suppose there is a an edge in $L$ from a vertex $v$ to another vertex $w$, and
both are in $\dom(m)$ at the end of a step, but one was not at the start of
the step.  We know they cannot both be node-vertices, and so one or both are
wire-vertices.  Since at least one was not in the image of $m$ at the start of
the step, none of the wire-vertices can have been by precondition
\ref{enum:il-inv-wire-connected}.  Then property \ref{enum:step-constr-Uw}
ensures that there is an edge from $m(v)$ to $m(w)$.
So global invariant \ref{enum:g-inv-edge-match} is preserved.

Global invariant \ref{enum:g-inv-local-iso} is preserved when vertices are
removed from $P$ due to property \ref{enum:step-constr-P} and when vertices
are added to $m$ due to property \ref{enum:step-constr-m}.

\paragraph{Preconditions/Invariants}
\begin{enumerate}[label=(\roman*)]
    \item $L'$ and $G$ are both in normal form

        By property \ref{enum:step-constr-unchanged}, which ensures $L'$ and
        $G$ are never changed.
    \item $U\cap\dom(m) = \varnothing$

        Property \ref{enum:step-constr-m} requires that every vertex added to
        $\dom(m)$ is removed from $U$.
    \item Each vertex in $U_W$ is adjacent to something in $U_N\cup P_S$

        Given that property \ref{enum:step-constr-m} prevents anything being
        added to $U_W$, this precondition can only be broken by removing
        something from $U_N$ or $P_S$.  Property \ref{enum:step-constr-Un}
        ensures that anything removed from $U_N$ is added to $P_S$, and
        property \ref{enum:step-constr-P} ensures that anything removed from
        $U_N$ has no adjacent vertices in $U_W$.
    \item If $v \in P$ and $N_G(m(v)) \subseteq \im(m)$, $v \in P_S$

        Property \ref{enum:step-constr-P} ensures that when we add anything to
        $P$, we also add it to $P_S$, and we only remove completely matched
        vertices from $P_S$.
    \item If $v \in U_W$ and $w$ is a wire-vertex adjacent to $v$ in $L'$, $w
        \in U_W$

        Property \ref{enum:step-constr-Uw} (in conjunction with property
        \ref{enum:step-constr-m}) ensures that only entire wires are removed
        from $U_W$ at once (and nothing is ever added to $U_W$).
    \item If $v \in \dom(m)$ is a wire-vertex, then any vertices it is
        adjacent to are also in $\dom(m)$

        This is preserved due to property \ref{enum:step-constr-Uw}.
    \item If $v \in T$ is a wire-vertex, then any vertices it is adjacent to
        are in either $T$ or $\im(m)$

        This is maintained by property \ref{enum:step-constr-m}, which ensures
        that anything removed from $T$ is added to $\im(m)$ and, due to
        precondition \ref{enum:il-inv-unmatched}, also ensures that nothing is
        ever removed from $\im(m)$.
    \item $T \cap \im(m) = \varnothing$

        This is preserved by property \ref{enum:step-constr-m}, which ensures
        that nothing is ever added to $T$, and anything added to $\im(m)$ is
        removed from $T$.
\end{enumerate}

\paragraph{Postconditions}
\begin{enumerate}[label=(\Roman*)]
    \item $U = P_S = \varnothing$

        The conditions that lead to the ``done'' step ensure that $U_N$ and
        $P_S$ are empty.  Then precondition \ref{enum:il-inv-reachable} means
        that $U_W$ must be empty.  $U_C$ is empty at the end because it is
        empty when step (1) is complete, and property \ref{enum:step-constr-m}
        ensures it is never added to.

    \item $\dom(m)$ is the union of the initial states of $U$ and $\dom(m)$

        Property \ref{enum:step-constr-m} ensures that everything removed from
        $U$ is added to $m$, that nothing else is added to $m$ and that
        nothing is removed from $m$.  Combining this with precondition
        \ref{enum:il-inv-unmatched} gives us that no entry of $m$ is every
        overwritten, and so postcondition \ref{enum:il-postcond-no-unmatched}
        provides us with what we need.

    \item \label{enum:il-postcond-T} $\im(m)$ contains only vertices that were
        in either the initial state of $\im(m)$ or the initial state of $T$

        By property \ref{enum:step-constr-m}.

    \item $P$ is exactly the set of vertices in $\dom(m)$ whose image is
        adjacent to a vertex not in $\im(m)$

        Global invariant \ref{enum:g-inv-local-iso} gives us most of what we
        need.  We just need that if $v$ is completely matched, it is not in
        $P$.  But this is guaranteed by invariant \ref{enum:il-inv-Ps} and the
        fact that $P_S = \varnothing$.

    \item $L$, $L'$, $G$ and $S$ are identical to their starting states

        This follows immediately from property
        \ref{enum:step-constr-unchanged}.
\end{enumerate}

\subsubsection{Completeness}
\label{sec:il-complete}

Let $f$ be a matching of the full subgraph of $L'$ given by the vertices $U
\cup \dom(m)$ into $G$ that restricts to $m$ on the vertices in $\dom(m)$.
Then we need to show that, after completion of the algorithm, one of the match
states produced will agree with $f$ on vertices, up to a choice of circles.
Note that, by the argument at the start of section \ref{apdx:sg-matching},
this will then uniquely extend to $f$.

Since $f$ is a matching with $U_C$ in its domain (and $U_C\cap\dom(m) =
\varnothing$ and $f$ restricts to $m$), there must be at least $|U_C|$
unmatched circles in $G$, and so the step where we match circles must succeed.

Invariant: the vertex map part of $f$ restricts to $m$.  $m$ is extended in
two places (after circle-matching).  Each of these follows a branching point.
We need to show that in both of these places, there is a branch that will
maintain this invariant.  We also need to show that none of the ``die''
branches are taken.

When adding something from $U_N$, it is clear that $f(v)$ is a valid matching
for $v$, and so this is a branch that will maintain the invariant.

When adding something from $U_W$, the wire starting $f(w)$ must be a valid
matching for the wire starting with $w$, as the entire wire must be in the
domain of $f$: if $w$ is an input or output, this is trivial.  Otherwise,
there must be a wire-vertex $w'$ adjacent to $w$, and a node-vertex $n$
adjacent to $w'$.  Then $w'$ is in $U_W$, and hence in $\dom(f)$, by
precondition \ref{enum:il-inv-wire-uw}, and then $n$ must be in either $P_S$
or $U_N$, by precondition \ref{enum:il-inv-reachable}, and so in $\dom(m)$ or
$U_N$, and hence $n$ is in $\dom(f)$.  Once we have matched $w$ against
$f(w)$, the only possible matching for $w'$ is $f(w')$, and similarly $n$ must
match against $f(n)$.  If $n$ is already in $\dom(m)$, we know it maps to
$f(m)$, so the matchings must agree.  So this branch maintains the invariant
(and does not die).

Thus, regardless of how vertices are chosen from $P_S$, $U_N$ or $U_W$, there
must be a trace that returns a match where $m$ agrees with $f$ on vertices.

\section{The Outer Loop}

The algorithm is shown in figure \ref{fig:appdx-bg-on-sg-match}.  The steps
that ``apply'' a $!$-box operation implicitly apply the operation to $L'$ and
record it in $S$.

\begin{figure}[h]
    \[\input{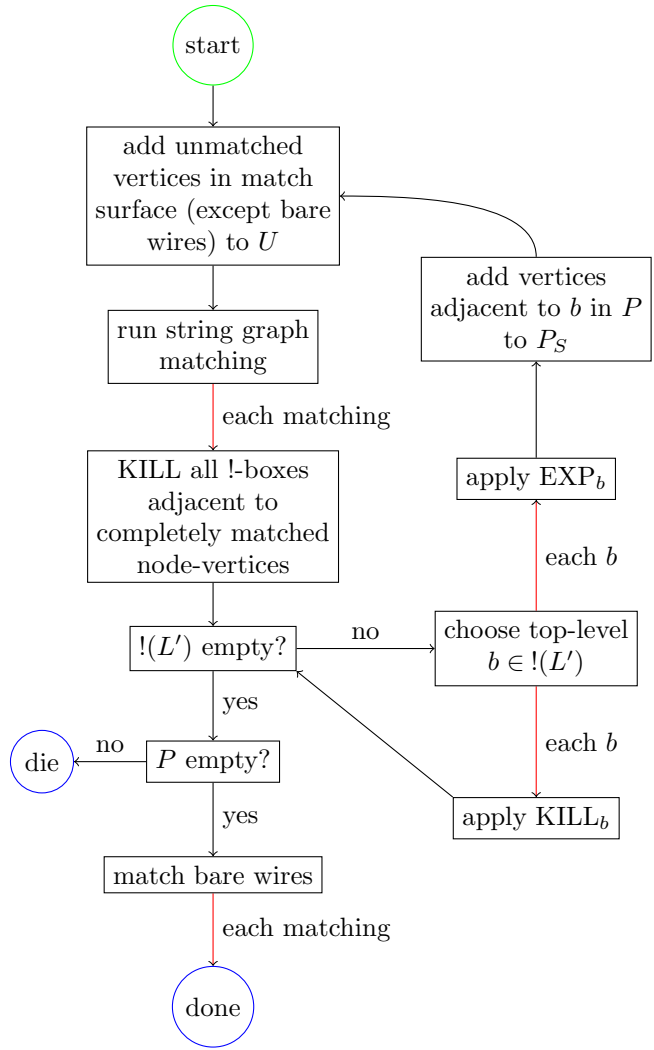}\]
    \caption{$!$-graph matching onto string graphs}
    \label{fig:appdx-bg-on-sg-match}
\end{figure}

Note that the global invariants are trivially satisfied at the start of the
algorithm, since $m$ and $P$ are empty.  They are maintained thereafter as
only the string graph matching subroutine and the bare wire matching routines
alter $m$ or $P$, and the $!$-box operations (which are the only things to
alter $L'$) can only add edges where either the source or the target is not in
$\dom(m)$.

\subsection{Termination}

The algorithm terminates, on the assumption that no sequence of $!$-box
operations will ever produce a wild $!$-box, by the following argument.

Each step terminates, and there is only one loop.  As a variant, we take
\[ n = |V_G| + |V_L\setminus\dom(m)| - |\dom(m)| \]

We calculate the variant at the point where we consider terminating the loop,
after the string graph matching step.  If the previous iteration involved
killing one or more $!$-boxes, $V_L\setminus\dom(m)$ will decrease (by the
combined size of the $!$-boxes, less bare wires --- this amount is positive
due to the assumption about wild $!$-boxes).  If the previous iteration
involved expanding a $!$-box, $|\dom(m)|$ will increase (by the size of the
$!$-box, excluding any bare wires).  $V_L\setminus\dom(m)$ is clearly bounded
below by $0$, and $|\dom(m)|$ is bounded above by $|V_G|$, by global invariant
\ref{enum:g-inv-inj}, so $n$ is bounded below by $0$ and decreases with every
iteration.  Hence the algorithm terminates.

Note that Quantomatic's implementation ensures termination for all pattern
graphs by killing any wild $!$-boxes as soon as they are produced.

\subsection{String Graph Matching Preconditions}

We need to show that the preconditions in section \ref{sec:il-precond} are
satisfied whenever the string graph matching step is run.

\begin{enumerate}[label=(\roman*),ref=(\roman*)]
  \item $L'$ and $G$ are both normalised

  We normalise the graphs at the start.  We then do not alter $G$ until the
  bare wire matching step, at which point the string graph matching routine
  can no longer be called.  Because $!$-boxes are open subgraphs, operations
  on them always remove or copy an entire wire at once, which ensures that
  $L'$ remains normalised throughout.
  \item $U\cap\dom(m) = \varnothing$

  We only ever add unmatched vertices to $U$.
  \item Each vertex in $U_W$ is adjacent to something in $U_N\cup P_S$

  This follows from openness of $!$-boxes, which (together with normalisation)
  ensures that any wire-vertex in the match surface that is not on a circle or
  bare wire must be adjacent to a node-vertex in the match surface.  The first
  time string graph matching happens, all such node-vertices are put in $U_N$.

  Since all non-bare-wire, non-circle vertices in the match surface are added
  to $U_W$ each time, and so added to $\dom(m)$ by the string graph matching
  step, on subsequent iterations of string graph matching any wire-vertex $v$
  in $U_W$ must have arisen from the expansion of a $!$-box (since this is the
  only way the match surface can be extended).  If $v$ is adjacent to a
  node-vertex that came from the $!$-box expansion, that node-vertex will be
  added to $U_N$.  Otherwise, it must be adjacent to a node-vertex in
  $\dom(m)$, and hence in $P$ (by the postconditions of the string graph
  matching step).  Then this node-vertex (being adjacent to the expanded
  $!$-box) will be added to $P_S$.  \item If $v \in P$ and $N_G(m(v))
  \subseteq \im(m)$, $v \in P_S$ (everything that is completely matched and in
  $P$ is also in $P_S$)

  The first time we run the string graph matching step, $P$ is empty.  After
  each string graph matching step, the postconditions ensure that there is no
  completely matched vertex in $P$.  As the rest of the algorithm does not add
  anything to $P$ or alter $m$, this is maintained throughout.
  \item If $v \in U_W$ and $w$ is a wire-vertex adjacent to $v$ in $L'$, $w
  \in U_W$.

  This is guaranteed by openness of $!$-boxes: we only ever add entire wires
  to $U_W$.
  \item If $v \in \dom(m)$ is a wire-vertex, then any vertices it is adjacent
  to are also in $\dom(m)$

  This is guaranteed by the postcondition \ref{enum:il-postcond-all-matched}
  in combination with precondition \ref{enum:il-inv-wire-uw} and the fact that
  $m$ is only altered by the string graph matching step.

  \item \label{enum:il-inv-t-wire-connected} If $v \in T$ is a wire-vertex,
  then any vertices it is adjacent to are in either $T$ or $\im(m)$

  $T$ and $m$ are only modified by the string graph matching subroutine, which
  maintains this invariant.  Initially, $m$ is empty, causing it to hold
  trivially.
  \item \label{enum:il-inv-t-img} $T \cap \im(m) = \varnothing$

  As above.
\end{enumerate}

\subsection{Correctness}

At the end of the algorithm, everything that was ever in the match surface
except for bare wires has been added to $m$ by the string graph matching step
(postcondition \ref{enum:il-postcond-all-matched}, together with the fact that
we always add everything unmatched in the match surface to $U_W$, $U_C$ and
$U_N$, except bare wires).  We have no more $!$-boxes, so the entire graph is
part of the match surface.  $P$ is also empty.  We matched bare wires at the
last stage, so $m$ is in fact total on $L$.  So, by the argument at the start
of section \ref{apdx:sg-matching}, $m$ is a valid match from the final state
of $L$ to $G$.  Since we have only ever performed valid $!$-box operations on
$L$, this is a valid instance of $L$, and hence we have found a valid matching
from $L$ to $G$.

\subsection{Completeness}

Providing no instance of $L$ contains a wild $!$-box (see section
\ref{sec:enum-inst}), completeness follows from the fact that any concrete
instantiation has an equivalent in expansion-normal form (section
\ref{sec:expn-normal-form}).


\chapter{The Spider Law}
\label{apdx:spider-law}

The following is joint work with Aleks Kissinger.  It is work in progress,
as we are still working out the details of the definition style used in this
section and intend to extend this to non-commutative generators, as mentioned
in section \ref{sec:further-work}.

Suppose we have a commutative Frobenius algebra in string graph form.  So we
have the following generators (node-vertices of the typegraph, with their
adjacent edges)
\[ \input{cfa-graph-gens.tikz} \]
where all the edges are fixed-arity, and all the wire-vertices are of the same
type.  Note that, in the sequel, we will implicitly identify distinguished
fixed-arity edges by the position in our presentation of the graph.  This is
unambiguous with these particular generators, as we can simply count clockwise
from the single output of the multiplication node, or the single input of the
comultiplication node.

We also have the following equations, where we identify inputs and outputs by
position:
\begin{mathpar}
    \input{cfa-assoc-law.tikz}
    \and
    \input{cfa-left-unit-law.tikz}
    \and
    \input{cfa-com-law.tikz}
    \\
    \input{cfa-coassoc-law.tikz}
    \and
    \input{cfa-left-counit-law.tikz}
    \and
    \input{cfa-cocom-law.tikz}
    \and
    \input{cfa-frob-law.tikz}
\end{mathpar}
It is trivial to use the commutativity laws to derive
\begin{mathpar}
    \input{cfa-right-unit-law.tikz}
    \and
    \input{cfa-right-counit-law.tikz}
    \and
    \input{cfa-frob-law2.tikz}
\end{mathpar}

Suppose we now define a $!$-boxed version of the multiplication operation in
the following manner:
\[ \input{sp-def-mult.tikz} \]
Essentially, we have added a new generator with a variable-arity input and a
fixed-arity output, together with rules that allow us to rewrite any
instance of it (with any number of incoming edges) into a graph containing
only our original set of generators.  As noted in section
\ref{sec:further-work}, this definition assumes associativity and
commutativity in order to be consistent.

We now show that a partial form of the spider law holds for variable-arity
multiplications.

\begin{lemma}
    \[ \input{sp-lem-mult-merge.tikz} \]
\end{lemma}
\begin{proof}
    We proceed by $!$-box induction.  The base case:
    \[ \input{sp-lem-mult-merge-pf-base.tikz} \]
    and the step case:
    \[ \input{sp-lem-mult-merge-pf-step.tikz} \]
\end{proof}

We define a variable-arity comultiplication in a similar manner:
\[ \input{sp-def-comult.tikz} \]
and prove a similar result about it:

\begin{lemma}
    \[ \input{sp-lem-comult-merge.tikz} \]
\end{lemma}
\begin{proof}
    We proceed by $!$-box induction.  The base case:
    \[ \input{sp-lem-comult-merge-pf-base.tikz} \]
    and the step case:
    \[ \input{sp-lem-comult-merge-pf-step.tikz} \]
\end{proof}

Now we are ready to define our spider:
\[ \input{sp-def-spider.tikz} \]

And the spider law:
\begin{theorem}
    \[ \input{sp-thm-spider.tikz} \]
\end{theorem}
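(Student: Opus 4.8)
The plan is to reduce the spider law to the two merge lemmas already established, using a single application of the Frobenius law at the wire joining the two spiders. First I would expand both spiders according to the definition of a spider, so that an $m$-to-$n$ spider becomes a variable-arity multiplication of its $m$ inputs into a single internal wire, followed by a variable-arity comultiplication of that wire into its $n$ outputs, and similarly for the $p$-to-$q$ spider joined to it. The only place the two expanded graphs interact is the single wire connecting one output leg of the first spider's comultiplication to one input leg of the second spider's multiplication.

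Next I would isolate this junction. Using coassociativity (equivalently, the comultiplication merge lemma read backwards) I would split the first spider's comultiplication into a binary comultiplication producing the connecting leg together with one further wire, the latter carrying the remaining $n-1$ outputs; dually, associativity splits the second spider's multiplication into a binary multiplication consuming the connecting leg together with one wire carrying the merged $p-1$ remaining inputs. The connecting wire is now flanked by a binary comultiplication on one side and a binary multiplication on the other -- exactly the $\delta$-then-$\mu$ pattern to which the Frobenius law applies. Applying the Frobenius law flips this into a $\mu$-then-$\delta$ pattern, whose effect is to multiply the first spider's internal wire together with the merged remaining inputs of the second spider, and then to comultiply the result into two legs, one feeding the second spider's internal wire and one feeding the comultiplication of the first spider's remaining outputs.

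After this single rewrite the two internal wires have been merged, and the graph consists of: on the input side, the multiplication of the first spider's $m$ inputs combined, through a binary multiplication, with the merged $p-1$ remaining inputs of the second spider; and on the output side, a binary comultiplication whose two legs feed respectively the second spider's full output comultiplication and the first spider's remaining-output comultiplication. I would then close the argument with the two merge lemmas: the multiplication merge lemma collapses the input side into a single variable-arity multiplication of all $m+p-1$ inputs, and the comultiplication merge lemma collapses the output side into a single variable-arity comultiplication of all $n+q-1$ outputs. The resulting graph is, by the definition of a spider, an $(m+p-1)$-to-$(n+q-1)$ spider, which is the right-hand side of the law.

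The main obstacle is bookkeeping with the $!$-boxes rather than the core manipulation, which is a single Frobenius rewrite. Because the spiders have arbitrary arity, their inputs and outputs live inside $!$-boxes, so the peeling-off of one leg in the second step and the two collapsing steps at the end must be justified as $!$-graph rewrites that respect $!$-box containment; this is precisely what the merge lemmas (each proved by $!$-box induction via \textsc{Induct}) provide, so the inductive work is already discharged and no fresh induction is needed. Care is required to separate exactly one fixed-arity leg from each variable-arity node before invoking Frobenius -- so that the Frobenius step itself matches only the concrete, non-$!$-boxed part of the junction -- and then to re-absorb the leftover binary multiplication and comultiplication into the variable-arity nodes when applying the merge lemmas. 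The degenerate cases, where a spider has no inputs or no outputs and the definition falls back on the unit or counit, would be handled separately using the (co)unit laws, but the connection forces at least one output on the first spider and one input on the second, so the junction always exists.
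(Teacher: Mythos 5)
Your proposal is correct and takes essentially the same route as the paper's proof: after the two merge lemmas (which carry all the inductive work via \textsc{Induct}), the theorem is discharged by a direct derivation that expands the spiders by definition, applies the Frobenius law once at the connecting wire to flip the $\delta$-then-$\mu$ junction, and collapses both sides with the multiplication and comultiplication merge lemmas. Your separate treatment of degenerate arities is unnecessary but harmless -- in the $!$-graph formulation the $!$-boxed legs instantiate to zero automatically with no case split, and, as you note, the concrete connecting wire guarantees the junction always exists.
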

\begin{proof}
    \[ \input{sp-thm-spider-pf.tikz} \]
\end{proof}

Of course, we need to ensure that our defined spider coincides with the
original Frobenius algebra operations:
\begin{theorem}
    \[ \input{sp-thm-lift.tikz} \]
\end{theorem}
\begin{proof}
    \[ \input{sp-thm-lift-pf-mult.tikz} \]
    \[ \input{sp-thm-lift-pf-unit.tikz} \]
    \[ \input{sp-thm-lift-pf-comult.tikz} \]
    \[ \input{sp-thm-lift-pf-counit.tikz} \]
\end{proof}

\addcontentsline{toc}{chapter}{Bibliography}
\bibliographystyle{akbib}
\bibliography{bibfile}

\end{document}